\theoremstyle{plain}
\newtheorem{thm}{\protect\theoremname}[section]
\theoremstyle{plain}
\newtheorem{prop}[thm]{\protect\propositionname}
\newenvironment{proof}[1][\protect\proofname]{\par
	\normalfont\topsep6\p@\@plus6\p@\relax
	\trivlist
	\itemindent\parindent
	\item[\hskip\labelsep\scshape #1]\ignorespaces
}{%
	\endtrivlist\@endpefalse
}
\providecommand{\proofname}{Proof}
\theoremstyle{remark}
\theoremstyle{definition}
\newtheorem{defn}[thm]{\protect\definitionname}
\theoremstyle{plain}
\newtheorem{lem}[thm]{\protect\lemmaname}
\newtheorem{hp}{Assumption}
\providecommand{\definitionname}{Definition}
\providecommand{\lemmaname}{Lemma}
\providecommand{\propositionname}{Proposition}
\providecommand{\remarkname}{Remark}
\providecommand{\theoremname}{Theorem}
\newcommand{\cvd}{\mbox{$\stackrel{d}{\longrightarrow}\,$}}
\newcommand{\Var}{\operatorname{Var}}
\def\1{\mathbb{I}}
\def\Real{\mathbb{R}}
\renewcommand{\baselinestretch}{1.2}
\begin{document}
\title{A Bootstrap Specification Test for Semiparametric Models with Generated Regressors 
}
\author{Elia Lapenta\thanks{\emph{CREST and ENSAE. Email: elia.lapenta@ensae.fr}. Address correspondence: CREST, 5 Avenue Le Chatelier, 91120 Palaiseau, FRANCE. 
} \hspace{.1cm}  }
\date{First version: November 2019. This version: October 2023}

\maketitle
\thispagestyle{empty}
\vspace*{-.5cm}
\normalsize

\begin{abstract}
This paper provides a specification test for semiparametric models with nonparametrically generated regressors. Such variables are not observed by the researcher but are nonparametrically identified and estimable. Applications of the test include models with endogenous regressors identified by control functions, semiparametric sample selection models, or binary games with incomplete information. The statistic is built from the residuals of the semiparametric model. A novel wild bootstrap procedure is shown to provide valid critical values. We consider nonparametric estimators with an automatic bias correction that makes the test implementable without undersmoothing. In simulations the test exhibits good small sample performances, and an application to women's labor force participation decisions shows its implementation in a real data context.  
\end{abstract}
\textbf{Keywords}: Hypothesis Testing, Bootstrap, Generated Regressors, Semiparametric Model, Control Function, Bias Correction.
\vspace{0.25cm}\\
\textbf{JEL Classification}: C01, C12, C14

\newpage
\setcounter{page}{2}

\section{Introduction}
Checking the correct specification of a model is empirically relevant, as a misspecified model can yield biased and inconsistent estimates and provide a misleading counterfactual analysis. In this paper we contribute to the literature by providing a specification test for semiparametric models with nonparametrically generated regressors. Such regressors are not observed by the researcher but are nonparametrically identified and estimable. Examples of semiparametric models with generated variables are common in empirical frameworks. They include endogenous models with control function  \citep{rivers_limited_1988, blundell_endogeneity_2004, newey_nonparametric_1999}, semiparametric sample selection models, extension of tobit models \citep{escanciano_identification_2016}, or semiparametric empirical games with incomplete information  \citep{aradillas-lopez_pairwise-difference_2012, lewbel_identification_2015}.  \\
Let $Y\in \mathbb R $, $Z\in \mathbb R ^p$, and $X\in \mathbb R^{p_X} $. Our goal is to test the null hypothesis
\begin{equation}\label{eq: null hypothesis}
    \mathcal H _0\,:\, \mathbb E \{Y|\nu(X,H(Z))\}=\mathbb E \{Y|q(\beta_0,X,H(Z))\}\text{ for some }\beta_0 \in B
\end{equation} 
against its logical complement $\mathcal H _1=\mathcal H _0 ^c$, where $\nu:\mathbb{R}^{p_X+1}\mapsto\mathbb{R}^{p_\nu}$ and $q:B\times\mathbb{R}^{p_X+1}\mapsto\mathbb{R}^d$ are known vector-valued functions, the $\sigma$-field of $q(\beta,X,H(Z))$ is contained in the $\sigma$-field of $\nu(X,H(Z))$ for all $\beta\in B$, $B$ is a real set, and
\begin{equation}\label{eq: 1st step NP regression}
    H(Z)=\mathbb E \{D|Z\}\, 
\end{equation}
is a nonparametric function, with $D\in \mathbb{R}$. The null hypothesis is featured by the presence of the generated variable $H(Z)$. This means that $H(Z)$ is latent, but it is identified and estimable in the first-step nonparametric regression (\ref{eq: 1st step NP regression}). The conditional moment restriction in (\ref{eq: null hypothesis}) arises from the above mentioned empirical models, see Section \ref{sec: Main Applications} for details. 

Our main contributions are twofold. First, we develop a test for the null hypothesis in (\ref{eq: null hypothesis}) featured by the presence of generated variables. Due to the presence of generated variables, the null hypothesis in (\ref{eq: null hypothesis}) cannot be tested by using existing methods. Second, we construct and show the validity of a novel wild bootstrap procedure to compute the critical values. Our wild bootstrap procedure (i) uses the information under the null hypothesis and (ii) differently from existing procedures for inference in the presence of generated variables does not require estimating a nonparametric derivative. 

Our first main contribution is thus to propose a test for the conditional moment restriction in (\ref{eq: null hypothesis}). Such a moment restriction is semiparametric, in the sense that the conditional expectations of $Y$ on both sides of the equation are not restricted to have a specific functional form. If this moment condition did not contain the generated variable $H(Z)$, a specification test could be based on procedures already available in the literature for models where all regressors are observed, see e.g. \citet{delgado_significance_2001}, \citet{xia_goodness--fit_2004}. However, since the generated variable  $H(Z)$ is not observed, these tests cannot be implemented in the setting considered here.\\
To construct our test statistic we proceed in three steps: in a first step we estimate the generated variable $H(Z)$, in a second step we estimate the parameter  $\beta_0$, and in a third step we estimate the right hand side of Equation (\ref{eq: null hypothesis}). This allows us to compute the residuals of the semiparametric model. Our test statistic is then based on an {\itshape empirical process} involving the estimated residuals, see Section \ref{sec: The Test} for details. A specific feature of this setting is that replacing the generated regressor $H(Z)$ with its estimate in the first step introduces an estimation error that impacts on the asymptotic behavior of our test statistic. Hence, it needs to be taken into account. A similar result is obtained in \citet{hahn_asymptotic_2013}, \citet{mammen_semiparametric_2016}, and  \citet{hahn_nonparametric_2018} in an estimation context. Our context however is different from theirs because we have to deal with an empirical process. Moreover, compared to existing specification tests based on empirical processes \citep{delgado_significance_2001, xia_goodness--fit_2004}, when establishing the asymptotic behavior of our test statistic we are faced with the technical challenge of dealing with an estimation error coming from the generated variables that is not present in existing specification tests.\\
A distinguishing feature of our test is the presence of bias corrections for the nonparametric estimators of the first and third step. Bias corrections have been proposed for the purpose of nonparametric estimation in, e.g., \citet{buhlmann_boosting_2003}, \citet{di_marzio_boosting_2008}, and \citet{park_l2_2009}. We show that in our setting these bias corrections guarantee a {\itshape small bias property} of our test statistic. This means that the bias of our test statistic converges to zero faster than the bias of the nonparametric estimators on which it is based, see also \citet{newey_twicing_2004}. To the best of our knowledge, {\itshape  this is a novel approach in semiparametric models with generated variables}. The small bias property of our statistic is both theoretically appealing and practically relevant. It is theoretically appealing, as it avoids undersmoothing. This means that the bandwidths ``optimal" for estimation can be employed in our specification test.
The small bias property is also practically relevant: in our simulation experiment we show that, thanks to the bias corrections and the small bias property, our test is more stable to the selection of smoothing parameters than a test that does not use bias corrections and does not have a small bias property. Alternative approaches developed in \citet{chernozhukov_locally_2016} or \citet{escanciano_asymptotic_2018-2} could be used in our context to obtain a small bias property of the test statistic. However, as we explain in Section \ref{sec:Assumptions and Theory}, implementing these approaches in our context would be difficult: {\itshape due to the presence of generated variables}, they would require estimating a nonparametric derivative. Since nonparametric derivatives have slow convergence rates, such an estimation would complicate the practical implementation of the specification test. Differently, our bias corrections do not require estimating nonparametric derivatives and are easy to implement in the presence of generated variables. In other words, a further contribution of this paper is to show how to obtain a small bias property in a semiparametric context with generated variables, without estimating nonparametric derivatives.

Our second main contribution is to construct a novel wild bootstrap procedure to compute valid critical values for our test. We show that asymptotically our statistic converges to an intricate distribution depending on unknown features of the data generating process. So, the asymptotic distribution cannot be directly employed to obtain the critical values. This problem also arises in the semiparametric specification tests of, e.g., \citet{delgado_significance_2001} and \citet{xia_goodness--fit_2004} who develop bootstrap procedures to obtain the critical values. However, their bootstrap procedure are not valid in our context due to the presence of generated regressors. Since their bootstrap procedures are developed for settings {\itshape without} generated variables, they cannot replicate the estimation error arising from the generated regressors. Differently, in our setting the generated variables introduce an estimation error that impacts on the asymptotic behavior of the test statistic, and hence it needs to be taken into account. \\
Other procedures are developed in the literature to adjust semiparametric methods for the presence of generated variables, see \citet[Chapter  6]{wooldridge_econometric_2010}, \citet{hahn_asymptotic_2013}, \citet{hahn_nonparametric_2018}. Such methods consist in estimating the adjustment terms due the generated variables in the influence function representation of the statistic, and then correct the standard errors of the test. Using this approach in our testing problem would be difficult for two reasons. First, it requires the statistic to be asymptotically pivotal, while this property does not hold in our setting. Second, in our framework estimating the adjustment terms coming from the generated variables requires estimating a nonparametric derivative. As highlighted earlier, this is not appealing in practice.\\
Thus, in this paper we construct a novel wild bootstrap procedure for obtaining valid critical values. In particular, our contribution is to develop a wild bootstrap method that does not require estimating a nonparametric derivative and can replicate the estimation error coming from the generated variables. To the best of our knowledge this is a novel contribution in the literature on generated regressors. The bootstrap we develop here imposes the null hypothesis when resampling the observations. This feature is attractive, as it means that we are using all the information available when constructing the critical values for testing. We develop the wild bootstrap test so that also the bootstrapped statistic has a small bias property, as well as the sample statistic.  

Our last contribution is to develop an empirical method for the bandwidth selection in our testing problem. The method we propose chooses the bandwidth minimizing the ``distance" between the estimated semiparametric model and the null hypothesis. In our simulation experiment, we show that this method provides a reliable inference in moderate samples.\\

\noindent \textbf{Related literature}. 
Beyond the studies already cited, there is a large literature analyzing the problem of estimation with generated variables. Early work includes \citet{pagan_econometric_1984}, \citet{ahn_distribution_1993}, \citet{ahn_semiparametric_1997}. More recent results are presented in \citet{li_semiparametric_2002}, \citet{chen_estimation_2003}, \citet{rothe_semiparametric_2009}, \citet{sperlich_note_2009}, \citet{mammen_nonparametric_2012}, \citet{gutknecht_testing_2016}, \citet{mammen_semiparametric_2016}, \citet{bravo_two-step_2020},
\citet{hahn_nonparametric_2018}, \citet{vanhems_estimation_2019}. The impact of generated regressors on the asymptotic distribution of a finite dimensional estimator is analyzed in \citet{newey_asymptotic_1994} and \citet{hahn_asymptotic_2013}. \citet{escanciano_uniform_2014} obtain an expansion of the residuals from a regression involving variables estimated in a preliminary step. These papers focus on estimation and do not address specification testing in the presence of generated regressors. \\
From a statistical point of view, our work is related to  \citet{escanciano_uniform_2014} with two important distinctions. First, in this paper we develop a wild bootstrap test, while  \citet{escanciano_uniform_2014} are not concerned with constructing a bootstrap procedure. Second, the residuals at the basis of our statistic involve bias correction terms which are not present in their context. This allows us to avoid undersmoothing and to obtain a small bias property of our statistic.\\
Finally, our work is related to the literature on specification testing in semiparametric models, see \citet{bierens_consistent_1982}, \citet{bierens_consistent_1990}, \citet{fan_consistent_1996}, \citet{bierens_asymptotic_1997}, \,\citet{stute_nonparametric_1997}\,, \citet{delgado_consistent_2006}, \citet{einmahl_specification_2008}, \citet{lavergne_breaking_2008}\,,\, \citet{delgado_distribution-free_2008}\,,\, \citet{escanciano_testing_2010}\,,\, \citet{neumeyer_estimating_2010}, \citet{lavergne_significance_2015}. We use an approach similar to \citet{bierens_asymptotic_1997}, \citet{stinchcombe_consistent_1998}, and \citet{delgado_significance_2001}.  The distinctive feature of our work is the presence of generated variables that introduce extra terms in the asymptotic expansion of our statistic.\\

\noindent \textbf{Organization of the paper}. Section \ref{sec: The Test} starts by constructing the test statistic. Then, it sets up the assumptions and obtains the asymptotic behavior of our statistic. Section \ref{sec: The Bootstrap Test} details the construction of our wild bootstrap procedure and  shows its validity for computing the critical values. The main applications of our test are in Section \ref{sec: Main Applications}. Section \ref{sec: Empirical Implementation and Applications} first describes the practical implementation of our test and our method for the bandwidth selection. Then, it provides evidence about the finite sample behavior of our test and finally shows its application  to a real data set. All proofs are relegated to Appendix \ref{sec:Asymptotic Analysis} and \ref{sec: Bootstrap Analysis}. The supplementary material contains auxiliary results.

\section{The Test}\label{sec: The Test}

To use a more compact notation, we define 
\begin{equation}
    W(\beta):=q(\beta,X,H(Z))\, \text{ and } \, G_{W(\beta)}(w):=\mathbb E \{Y|W(\beta)=w\}\, .
\end{equation}
Let us assume that $W(\beta)$ is continuously distributed for any $\beta \in B$, and let $f_{W(\beta)}$ be the density function of $W(\beta)$ with respect to the Lebesgue measure. Since the $\sigma$-field of $W(\beta_0)$ is a subset of the $\sigma$-field of $\nu(X,H(Z))$, the null hypothesis  in (\ref{eq: null hypothesis}) is equivalent to 
\begin{equation}\label{eq: null hypothesis II}
        \mathcal H _0\,:\, \mathbb E \,\left\{\,[Y-G_{W(\beta_0)}(W(\beta_0))]\,f_{W(\beta_0)}(W(\beta_0))\,|\,\nu(X,H(Z))\,\right\}=0
\end{equation}
for some $\beta_0\in B$.
We have introduced the density function $f_{W(\beta_0)}$ to avoid a random denominator in the estimation of $G_{W(\beta_0)}$, see Section \ref{subsec: The Test Statistic} for details.  Equation (\ref{eq: null hypothesis II}) is a conditional moment restriction. To test such an equation we consider an  {\itshape equivalent continuum} of {\itshape unconditional} moments. Let us assume that $\nu(X,H(Z))$ is a bounded random variable. Then, the null hypothesis in (\ref{eq: null hypothesis II}) is equivalent to \begin{equation}\label{eq: null hypothesis continuum}
    \mathbb E \,\,[Y-G_{W(\beta_0)}(W(\beta_0))]\,f_{W(\beta_0)}(W(\beta_0))\, \varphi_s(X,H(Z))\,=\,0\text{ for all }s\in \mathcal S\, ,
\end{equation}
where $\mathcal S$ is a set containing a neighborhood of the origin, $\varphi_s(X,H(Z)):=\varphi(s^T \nu(X,H(Z)))$, and $\varphi$ is a suitably chosen weighting function. In Assumption \ref{Assumption: iid and phi} we report the precise conditions that $\varphi$ must satisfy to guarantee the equivalence between (\ref{eq: null hypothesis II}) and (\ref{eq: null hypothesis continuum}). For example, \citet{bierens_consistent_1982} shows such an equivalence with $\varphi(u)=\exp(\,\sqrt{-1}\, u\,)$, \citet{bierens_consistent_1990} uses the real exponential $\varphi(u)=\exp(u)$, while \citet{stinchcombe_consistent_1998} use more general weighting functions. Now, the formulation of the null hypothesis into a continuum of unconditional moments is quite convenient, as it allows us to construct a test without estimating $\mathbb E \{Y|\nu(X,H(Z))\}$. Differently, constructing a test based on the conditional moments in (\ref{eq: null hypothesis II}) or (\ref{eq: null hypothesis}) would require estimating $\mathbb E \{Y|\nu(X,H(Z))\}$. From a practical point of view this would complicate the implementation of the test, as we should select additional smoothing parameters, say bandwidths. Moreover, in the presence of an important dimension of $\nu(X,H(Z))$ such a test would suffer from a curse of dimensionality. \\
Let us assume to observe an iid sample $\{Y_i,X_i,Z_i,D_i\}_{i=1}^n$ from $(Y,X,Z,D)$, and let us denote with $\mathbb P_n$ the empirical mean operator, i.e. $\mathbb P_n g(Y,X,Z,D)=n^{-1}\sum_{i=1}^n g(Y_i,X_i,Z_i,D_i)$ for any function $g$. Also, let $\mu$ be a finite measure supported on $\mathcal S $. Since the null hypothesis in (\ref{eq: null hypothesis continuum}) is equivalent to 
\begin{equation*}
    \int_{ } \left|\mathbb{E}[Y-G_{W(\beta_0)}(W(\beta_0))]\,f_{W(\beta_0)}(W(\beta_0))\, \varphi_s(X,H(Z))\right|^2\, d \mu(s)\, =0\, ,
\end{equation*}
if we knew $G_{W(\beta_0)}$ and $f_{W(\beta_0)}$ we could test (\ref{eq: null hypothesis continuum}) by using the Cramer-Von Mises functional
\begin{equation}\label{eq: Oracle Cramer Von Mises}
    n\,\int_{ }\left|\mathbb P _n [Y-G_{W(\beta_0)}(W(\beta_0))]\,f_{W(\beta_0)}(W(\beta_0))\,\varphi_s(X,H(Z))\right|^2 d \mu(s)\, .
\end{equation}
  However,
 such an  ``oracle" statistic cannot be used in practice, as $G_{W(\beta_0)}$ and $f_{W(\beta_0)}$ are unknown and must be estimated. Now, if we could observe the generated regressor $H(Z)$ a test for (\ref{eq: null hypothesis continuum}) could be constructed in two steps. In a first step, we could estimate $\beta_0$ by say $\widehat \beta$. In a second step, given $W(\widehat \beta)=q(\widehat \beta,X,H (Z))$, we could regress nonparametrically $Y$  on $W(\widehat \beta)$ to get an estimate of $G_{W(\beta_0)}$, and we could construct a nonparametric density estimator of $f_{W(\beta_0)}$ based on $W(\widehat \beta)$. Then, a statistic as in (\ref{eq: Oracle Cramer Von Mises}) based on such estimators could be used for testing (\ref{eq: null hypothesis continuum}). A test of this type is proposed in \citet{xia_goodness--fit_2004} and \citet{delgado_significance_2001} who develop bootstrap tests for equations similar to (\ref{eq: null hypothesis continuum}) {\itshape when all regressors are observed}. However, since the generated variable $H(Z)$ is not observed, such tests cannot be implemented in our context.\\
 Due to the presence of the generated variable $H(Z)$, we compute a {\itshape feasible} test statistic by a {\itshape three} step estimation. In a first step, on the basis of (\ref{eq: 1st step NP regression}), we regress nonparametrically $D$ on $Z$ so as to get an estimate of $H$, say $\widetilde H$. In a second step, we obtain an estimator of $\beta_0$, say $\widehat \beta$. Finally, in a third step, given an estimate of the generated regressor $\widetilde W(\widehat \beta)=q(\widehat \beta, X, \widetilde H (Z))$,  we regress nonparametrically $Y$ on $\widetilde W(\widehat \beta)$ to get an estimate of $G_{W(\beta_0)}$, and we construct a nonparametric density estimator of $f_{W(\beta_0)}$ based on $\widetilde W (\widehat \beta)$. Plugging these estimators into (\ref{eq: Oracle Cramer Von Mises}) gives a feasible test statistic. 
 In the next section we detail such a three-step estimation procedure.

\subsection{The Test Statistic}\label{subsec: The Test Statistic}

\textbf{First step estimation}. Let $K_H$ be a kernel on $\mathbb{R}^p$ and $h_H$ be a bandwidth.  We define 
\begin{align}\label{eq: That X e and fhat}
    \widehat T ^{D}_{Z}(z):=\frac{1}{n h_H^p}\sum_{i=1}^n D_i K_H\left(\frac{z-Z_i}{h_H}\right)\text{ and } \widehat{f}_Z(z):=\frac{1}{n h_H^p}\sum_{i=1}^n  K_H\left(\frac{z-Z_i}{h_H}\right)\, .
\end{align}
The estimation of the generated variable $H(Z)$ could be based on $\widehat{H}(z):=\widehat{T}^{D}_Z(z)/\widehat{f}_Z(z)$. However, a test based on the first-step estimator $\widehat H$ would necessitate some undersmoothing, see Section \ref{sec:Assumptions and Theory} for details. This would complicate the implementation of the test, as it would make tricky the choice of the first step bandwidth. To avoid these problems and allow for a {\itshape small-bias property} of our test statistic, we use an $L_2$ boosted or, equivalently, a bias corrected estimator in the first step. These types of estimators have been introduced in \citet{buhlmann_boosting_2003} for spline smoothing and \citet{di_marzio_boosting_2008} and \citet{park_l2_2009} for kernel smoothing. \citet{xia_goodness--fit_2004} and \citet{lapenta_encompassing_2022} used similar bias corrections for specification tests {\itshape when all regressors are observed}. The procedure works as follows. Given the initial estimator $\widehat H$, we compute the residuals $D_i-\widehat H (Z_i)$ for $i=1,\ldots,n$. Then, we construct a kernel estimator $\widehat{T}^{D-\widehat H}_Z(z)/\widehat f_Z (z)$, where $\widehat{T}^{D-\widehat H}_Z$ is defined similarly as in (\ref{eq: That X e and fhat}), with the difference that the projected variable is $D-\widehat{H}(Z)$ instead of $D$. Here $-\widehat{T}^{D-\widehat H}_Z(z)/\widehat f_Z (z)$ is an estimate of the bias of $\widehat H(z)$, see \citet{xia_goodness--fit_2004}. Finally, we consider the updated or bias-corrected estimator $\widehat H + \widehat{T}^{D-\widehat H}_Z/\widehat f_Z $. Such a procedure can be iterated a finite number of times, as proposed in \citet{buhlmann_boosting_2003}, \citet{di_marzio_boosting_2008}, and \citet{park_l2_2009}. Here, we restrict ourselves to a one-step boosting: this is enough for our purpose to avoid undersmoothing and guarantee a small-bias property of the test statistic.  \\
To precisely define the estimator, let use denote with $\mathbb I \{A\}$ the indicator function of the event $A$, and let us assume that $(X,Z)$ admits a joint density $f$ with respect to the Lebesgue measure. We define
\begin{align*}
    \widehat{T}^{D-\widehat H}_Z(z):=&\frac{1}{n h_H^p}\sum_{i=1}^n [D_i - \widehat H (Z_i)]\, \widehat t_i \, K_H\left(\frac{z-Z_i}{h_H}\right)\, \\
    \text{ with }\widehat t_i :=\mathbb{I}\{\widehat f (X_i,Z_i)\geq \tau_n\}& \text{ and }\widehat f(x,z):=\frac{1}{n h_0^{\dim(X,Z)}}\sum_{i=1}^n K_0\left(\frac{(x,z)-(X_i,Z_i)}{h_0}\right)\, .
\end{align*}
$(\tau_n)_n$ is a sequence of positive numbers converging to zero, whose features are specified in Section \ref{sec:Assumptions and Theory}. Indeed, $K_0$ and $h_0$ denote a kernel and a bandwidth, while $\widehat f$ is the kernel estimator of the joint density of $(X,Z)$. We introduce the trimming $\widehat t$  in $\widehat{T}^{D - \widehat H}_Z$ to control for the random denominator of $\widehat H$. Then, we estimate $H$ by 
\begin{align}\label{eq: definition of Htilde}
    \widetilde {H}(z):=\widehat H (z) + \widehat T^{D - \widehat H }_Z(z)/\widehat{f}_Z(z)\, .
\end{align}
As noticed in \citet{xia_goodness--fit_2004},  $-\widehat T ^{D-\widehat H}_Z/\widehat f_Z$ can be interpreted as an estimator of the bias of $\widehat H$. So, $\widetilde H$ can be thought of as a bias-corrected estimator. \citet{di_marzio_boosting_2008} show that when bias corrections or, equivalently, boosting iterations are applied to kernel estimators, the bias decreases exponentially fast while the variance increases exponentially slow. Thus, when applied for estimation purposes these estimators are practically useful, as a kernel bias corrected estimator displays a lower mean-squared error than a kernel estimator that is not bias corrected.\footnote{Precise conditions can be found in \citet{di_marzio_boosting_2008} and \citet{park_l2_2009}.} In our case, we do {\itshape not} apply these bias corrections for estimation purposes. As we argue in Section \ref{sec:Assumptions and Theory}, our motivation for using these bias corrections is to obtain a {\itshape small bias property} of the test statistic and to avoid undersmoothing.\\
\vspace{0.5cm}


\noindent \textbf{Second-step estimation}. In the second step, we estimate $\beta_0$ on the basis of the estimator of the generated variable $H(Z)$. Let $\widetilde W_i (\beta):=q(\beta,X_i, \widetilde{H}(Z_i))$ for $i=1,\ldots,n$, and let us recall that $d$ is the dimension of $\widetilde W(\beta)$. For a kernel $K$ and a bandwidth $h$,  we define  
\begin{align}\label{eq: definition of That Y Wtilde beta}
    \widehat{T}^Y_{\widetilde W (\beta)}(w):=\frac{1}{n h^d}\sum_{i=1}^n Y_i \,\widehat t _i \, K\left(\frac{w-\widetilde W _i(\beta)}{h}\right) \text{ and }\widehat{f}_{\widetilde W (\beta)}(w):=\frac{1}{n h^d}\sum_{i=1}^n \widehat t _i \, K\left(\frac{w-\widetilde W _i(\beta)}{h}\right)\, .
\end{align}
Once again, the trimming $\widehat t$ is used in the above expressions to control for a random denominator in $\widetilde W(\beta)$, i.e. in $\widetilde H$.
Then, an estimator of $\mathbb E \{Y|W(\beta)=w\}$ is 
\begin{equation*}
\widehat{G}_{\widetilde{W}(\beta)}(w):=\widehat{T}^Y_{\widetilde W (\beta)}(w)/\widehat{f}_{\widetilde W (\beta)}(w)\, .
\end{equation*} 
We estimate $\beta_0$ by a Semiparametric Least-Squares (SLS) principle: 
\begin{equation}\label{eq: definition of betahat}
    \widehat \beta := \arg \min _{\beta\in B}\,  \mathbb{P}_n\,  [Y-\widehat{G}_{\widetilde W (\beta)}(\widetilde W (\beta))]^2\,  \widehat t\, .
\end{equation}
We remark that in the above objective function only the first step estimator $\widetilde H$ is bias corrected, while $\widehat G _{\widetilde W (\beta)}$ does not contain bias corrections. As we show in Lemma \ref{lem: IFR for betahat}, this is enough to obtain a small-bias property of our test statistic. \\
A SLS estimator is also used in the specification test proposed by \citet{xia_goodness--fit_2004}, where all variables are observed. Differently from \citet{xia_goodness--fit_2004}, our context is featured by the generated variable $H(Z)$, so the SLS estimator used in our test is based on the first-step nonparametric estimator $\widetilde H$. This implies that the estimation error from $\widetilde H$ will appear in the influence-function representation of $\widehat \beta$, see Lemma \ref{lem: IFR for betahat}. A SLS estimator for index models with generated variables was also proposed in \citet{escanciano_identification_2016}. However, in our case $\widehat \beta$ is based on the bias-corrected estimator $\widetilde H$: this allows us to avoid undersmoothing in the first step estimation and to obtain a small-bias property of the test statistic. \\
\vspace{0.5cm}

\noindent \textbf{Third-step estimation}.
For $\widehat \varepsilon _i := Y_i - \widehat{G}_{\widetilde W (\widehat \beta)}(\widetilde W _i (\widehat \beta))$ we  let
\begin{equation}\label{eq: definition of That eps hat Wtilde betahat }
    \widehat T ^{\widehat \varepsilon}_{\widetilde W (\widehat \beta)}(w)=\frac{1}{n h^d}\sum_{i=1}^n \widehat \varepsilon _i \widehat t _i K\left(\frac{w-\widetilde W _i (\widehat \beta)}{h}\right) \, .
\end{equation}
The role of the trimming $\widehat t$ in the above display is to control for a random denominator in $\widehat G _{\widetilde W (\widehat \beta)}$ and $\widetilde W(\widehat \beta)$. Similarly as in the first step estimation, $-\widehat{T}^{\widehat \varepsilon}_{\widetilde W (\widehat \beta)}/\widehat f _{\widetilde W (\widehat \beta)}$ is an estimate of the bias of $\widehat G _{\widetilde W (\widehat \beta)}$.
So, we define the bias-corrected estimator of $G_{W(\beta_0)}$ as 
\begin{equation}\label{eq: definition of G tilde W tilde betahat}
    \widetilde G _{\widetilde W (\widehat \beta)}(w):=\widehat G _{\widetilde W (\widehat \beta)}(w)+\widehat T ^{\widehat \varepsilon}_{\widetilde W (\widehat \beta)}(w)/\widehat f _{\widetilde W (\widehat \beta)}(w)\, .
\end{equation}
Denoting the bias-corrected residuals as $\widetilde \varepsilon _i :=Y_i-\widetilde{G}_{\widetilde W (\widehat \beta)}(\widetilde W_i (\widehat \beta))$,
our feasible statistic is given by
\begin{equation}\label{eq: feasible test statistic}
    S_n:=n \int_{ }\,\left|\,\mathbb P _n\, \widetilde \varepsilon\,\,\widehat f_{\widetilde W (\widehat \beta)}(\widetilde W (\widehat \beta))\,\, \varphi_s(X,\widetilde H (Z))\,\,\widehat t\,\right|^2\,d \mu(s)\, .
\end{equation}
Let us comment on the above expression. The weighting by the estimated density  $\widehat f_{\widetilde W(\widehat \beta)}$ allows us to get rid of the random denominator in $\widetilde G _{\widetilde W (\widehat \beta)}$. This has a double role. First, it ``stabilizes" the behavior of $S_n$ by avoiding a denominator that can be close to zero if observations on the ``tails" are included in the computation of $S_n$. Second, avoiding a random denominator in $\widetilde G _{\widetilde W (\widehat \beta)}$ simplifies our proofs for obtaining the asymptotic behavior of $S_n$. However, {\itshape due to the presence of generated variables}, we still have to control for a random denominator in $\widetilde W (\widehat \beta)$, i.e. in $\widetilde H (Z)$. This is the reason why we include the trimming $\widehat t$ in the computation of $S_n$.\\
The asymptotic behavior of $S_n$ is determined by the empirical process $\sqrt{n} \, \mathbb P _n\, \widetilde \varepsilon$ $\,\widehat f_{\widetilde W (\widehat \beta)}(\widetilde W (\widehat \beta))$ $\widehat t\,\, \varphi_s(X,\widetilde H (Z))$.  
As mentioned earlier, a Cramer-Von Mises statistic similar to $S_n$ has already been used for testing equations similar to (\ref{eq: null hypothesis continuum}) {\itshape without generated variables}, see \citet{xia_goodness--fit_2004} and \citet{delgado_significance_2001}. Differently, our context is featured by the presence of the generated regressors $H(Z)$ that are replaced by the first-step estimates $\widetilde H (Z)$. Such a replacement introduces an estimation error that needs to be taken into account. So, when obtaining the influence function representation of the empirical process at the basis of $S_n$, we are faced with the challenge of controlling for an estimation error that is not present in \citet{xia_goodness--fit_2004} and \citet{delgado_significance_2001}. The estimation error from $\widetilde H (Z)$ is not negligible, as it will appear in the influence function representation of the empirical process at the basis of $S_n$, see the next section for details.

\subsection{Assumptions and Asymptotic Behavior of the Test Statistic}
\label{sec:Assumptions and Theory}

Let us start by introducing some definitions. We define the differential operator
\[
\partial^{l}g(w)  =
\frac{\partial^{|l|}}{\partial^{l_{1}}w_{1} \ldots
\partial^{l_{d}}w_{d}} g(w)\, ,
\qquad l=(l_{1},..,l_{d})'\, , \qquad |l| = l_{1}+..+l_{d}
\, .
\]
\begin{defn}\label{defn: class of functions}
(a)
$\mathcal{G}_\lambda(\mathcal{A}) = \left\{g:\mathcal{A}\mapsto\mathbb{R} : \quad \sup_{a\in\mathcal{A}}|\partial^l g(a)|< M \text{ for all } |l|\leq \lambda \right\}$.
(b) $\mathcal{K}_{\lambda}^{r}(\mathbb R ^d)$ $:=\{$ $(u_1,\ldots,u_d)\mapsto \prod_{j=1}^d k(u_j)$ :   $k$ is a univariate
kernel of order $r$, $\lambda$
times continuously differentiable with uniformly bounded derivatives,
symmetric about zero, and with bounded support $\}$.
\end{defn}
We next introduce the assumptions and then we comment on them.
\begin{hp}
\label{Assumption: iid and phi}
(i) $\{Y_i, X_i, Z_i, D_i\}_{i = 1}^n$ is an iid sample of bounded random variables. (ii) $\mathbb{E}\{Y|X,Z\}=\mathbb{E}\{Y|\nu(X,H(Z))\}$. (iii) $\mathcal S \subset \mathbb{R}^{\dim(\nu(X,Z))}$ is a compact set containing a neighborhood of the origin. (iv) $\varphi_s(x,H(z))=\varphi(s^T\,\nu(x,H(z)))$ with $s\in\mathcal{S}$ and $\varphi$ analytic non-polynomial function such that  $\partial^l \varphi(0)\neq 0$ for all $l\in \mathbb{N}$. 
\end{hp}

\begin{hp}
\label{Assumption: smoothness}
(i) $f_{W(\beta)}\, ,\, \partial_\beta f_{W(\beta)}\,,\, G_{W(\beta)}\, , \partial_\beta G_{W(\beta)}\, \in \mathcal{G}_{r}({\Real^d})$ uniformly in $\beta\in B$ with $r\geq \lceil (d+1)/2\rceil+1$.\footnote{$\lceil x \rceil$ denotes the smallest integer
 above $x$.}
(ii) $\mathbb{E}\{\varphi_s(X,H(Z))|W(\beta)=\cdot\}\in \mathcal{G}_{r}(\mathbb{R}^d)$ uniformly in $s
 \in {\cal S}$ and $\beta\in B$, with $r\geq \lceil (d+1)/2\rceil+1$.
 (iii) $f_Z\, , H\, \in \mathcal{G}_{r_H}(\mathbb{R}^p)$ for $r_H \geq \lceil (p+1)/2 \rceil $. (iv) $f \in \mathcal{G}_{r_0}(\Real ^{\dim(X,Z)})$ with $r_0\geq 2$. (v) $\nu(x,u)$, $\partial_u \nu(x,u)$, $q(\beta,x,u)$, $\partial_u q(\beta,x,u)$, $\partial_\beta q(\beta,x,u)$, $\partial_u \partial_\beta q(\beta,x,u)$, and $\partial^2_{\beta \beta^T} q(\beta,x,u)$ are Lipschitz in $(\beta,x,u)$. (vi) $\partial_w G_{W(\beta)}(w)$, $\partial_\beta G_{W(\beta)}(w)$, $\partial_\beta \partial_w G_{W(\beta)}(w)$, $\partial^2_{w w^T} G_{W(\beta)}(w)$, and $\partial^2_{\beta \beta^T}G_{W(\beta)}(w)$, $f_{W(\beta)}(w)$, $\mathbb{E}\{\varphi_s(X,H(Z))|W(\beta)=w\}$, and $\partial_w\mathbb{E}\{\varphi_s(X,H(Z))|W(\beta)=w\}$ are Lipschitz and uniformly bounded in $(\beta,w,s)$.\footnote{More in detail, since $G_{W(\beta)}(w)=\mathbb{E}\{Y|W(\beta)=w\}$,  $\partial_w G_{W(\beta)}(w)=\partial_w \mathbb{E}\{Y|W(\beta)=w\}$ and  $\partial_\beta G_{W(\beta)}(w)=\partial_\beta \mathbb{E}\{Y|W(\beta)=w\}$. }
\end{hp}

\begin{hp}\label{Assumption : kernels}
(i) $K\in \mathcal{K}^r_{\lambda}(\mathbb R ^d)$ with $\lambda \geq \lceil (d+1)/2\rceil +6 $. (ii) $K_H\in \mathcal{K}^{r_H}_\lambda(\mathbb R^p)$ with $\lambda \geq \lceil (p+1)/2 \rceil $. (iii) $K_0\in \mathcal{K}^{r_0}_0(\mathbb R ^{\dim(X,Z)})$.
\end{hp}
Let us define the following convergence rates
\begin{align*}
d_0:=\sqrt{\frac{\log n}{n h_0^{\dim(X,Z)}}}+h_0^{r_0}\, ,\,
 d_{H}&  := \sqrt{\frac{\log n}{n h_H^{p}}} + h_H^{r_H}\, , \text{ and } \, d_G:=\sqrt{\frac{\log n}{n h^d}}+h^r
 \, .
\end{align*}
The rate $d_0$ denotes the uniform convergence rate of the kernel density estimator $\widehat{f}$.
The rate $d_H$ is the uniform convergence rate of the first step kernel density estimator $\widehat{f}_Z$. Finally, $d_G$ is the uniform convergence rate of the third step kernel density estimator if $H(Z)$ was observed.

\begin{hp}
\label{Assumption: bandwidth}
(i) $d_G h^{-1} \tau_n^{-2}=o(n^{-1/4})$ and $d_G h^{-|l|}\tau_n^{-2|l|}=o(1)$ for $|l|=\lceil (d+1)/2 \rceil +1$.
(ii) $d_H \tau_n^{-5} h^{-2}=o(n^{-1/4})$ and $d_H h_H^{-|l|}\tau_n^{-2-|l|}=o(1)$ for $|l|=\lceil (p+1)/2 \rceil $\,.
\end{hp}

Since the framework is featured by generated regressors, we need some conditions on the rates at which the densities of the variables go to zero on the tails. So, let $p_n^{W(\beta_0)}:=P(f_{W(\beta_0)}(W(\beta_0))\leq 3 \tau_n /2)$ , $p_n^Z:=P(f_Z(Z)\leq 3 \tau_n /2)$, $p_n^{X,Z}:=P(f(X,Z)\leq 3 \tau_n /2)$, and $p_n:=\max(p_n^{W(\beta_0)}, p_n^Z, p_n^{X,Z})$ . Also, let us define the sets
\begin{align}\label{eq: definition of cal W, cal Z and cal U}
    \mathcal{W}^{\delta}_{n,\beta}:=\left\{w\,:\,f_{W(\beta)}(w)\geq \delta \tau_n\right\}\,,&\, \mathcal{Z}_{n}^\delta:=\left\{z\,:\, f_{Z}(z)\geq \delta \tau_n\right\}\,,\nonumber \\
    \text{ and }\mathcal{U}_{n}^\delta:=\{(x,z)\,:\,f&(x,z)\geq \delta \tau_n\}\, .
\end{align}

\begin{hp}
\label{Assumption: trimming}
(i) For any $\delta>0$ there exists $\eta(\delta)>0$ and $N(\delta)$ such that for all $n\geq N(\delta)$ $\mathcal{U}_n^\delta\subset \{(x,z)\,:\,q(\beta,x,H(z))\in \mathcal{W}_{n,\beta}^{\eta(\delta)}\}$ for all $\beta\in B$ and $\mathcal{U}_n^\delta\subset \{(x,z)\,:\, z\in \mathcal{Z}_n^{\eta(\delta)}\}$. 
(ii) $d_0 \tau_n^{-1}=o(1)$, $h_H \tau_n^{-1}=o(1)$, and $h \tau_n^{-1}=o(1)$.
(iii) $p_n=o(n^{-1/2})$, $p_n h^{-d-1}\tau_n^{-2}=o(n^{-1/4})$, $p_n h^{-|l|-d}\tau_n^{-1-|l|}=o(1)$, and $p_n h_H^{-p} h^{-1-|l|}\tau_n^{-3-|l|}=o(1) $ for $|l|=\lceil (d+1)/2 \rceil +1 $. (iv) $p_n h_H^{-p}h^{-2}\tau_n^{-4}=o(n^{-1/4})$  and $p_n h^{-p-|l|}_H\tau_n^{-1-|l|}=o(1)$ for $|l|=\lceil (p+1)/2 \rceil$.
(v) For any $\delta$ there exists $N(\delta)$ such that  $\mathcal{W}_{n,\beta_0}^\delta$ and $\mathcal{Z}_n^\delta$ are convex for all $n\geq N(\delta)$. 
\end{hp}

\begin{hp}\label{Assumption: parametric}
(i) $B$ is a compact convex set with non-empty interior and $\arg \min_{\beta\in B} \mathbb{E}\{Y-G_{W(\beta)}(W(\beta))\}^2=\{\beta^*\}$ for some $\beta^*\in Int(B)$. (ii) $\mathbb{E}\{G_{W(\beta_1)}(W(\beta_1))|W(\beta_2)=w\}$ is Lipschitz and uniformly bounded in $(\beta_1,\beta_2,w)$. 
\end{hp}

Assumption \ref{Assumption: iid and phi}(ii) is an index restriction satisfied in the applications of our test described in Section \ref{sec: Main Applications}. In general, such an index restriction is implied by distributional assumptions on unobservables that can be justified by economic arguments, see Section \ref{sec: Main Applications} for details. 

Assumptions \ref{Assumption: iid and phi}(iii)(iv) set the conditions that the weighting function $\varphi_s$ must satisfy to guarantee the equivalence between the conditional moment restriction in (\ref{eq: null hypothesis II}) and the continuum of unconditional moments in (\ref{eq: null hypothesis continuum}). By building on \citet{bierens_consistent_1982}, \citet[Theorem 2.2]{bierens_econometric_2017} shows that Assumptions \ref{Assumption: iid and phi}(iii)(iv) and the boundedness of $\nu(X,H(Z))$ (implied by Assumption \ref{Assumption: iid and phi}(i) and \ref{Assumption: smoothness}(iii)(v)) are sufficient for such an equivalence to hold. Several choices of $\varphi_s$ have already been discussed in Section \ref{sec: The Test}. More choices can be found in \citet{stinchcombe_consistent_1998}. Assumption \ref{Assumption: smoothness} imposes smoothing conditions that are common in the literature on semi and nonparametric testing. \\
Assumption \ref{Assumption: bandwidth} sets the main conditions on the bandwidths for the first step estimation of $H$ and the third step estimation of $G_{W(\beta_0)}$. These conditions have a twofold role. First, together with Assumptions \ref{Assumption: smoothness} and \ref{Assumption : kernels} they imply that the first step estimator of $H$ and the third step estimator of    $G_{W(\beta_0)}$ are contained in a smooth class of functions with a bounded entropy. This is needed to obtain the asymptotic stochastic equicontinuity of the empirical process at the basis of the statistic $S_n$. Second, the bandwidth conditions in Assumption \ref{Assumption: bandwidth} guarantee that the first step estimator $\widetilde H$, the third step estimator $\widehat G _{\widetilde W (\widehat \beta)}$, and its first-order derivative  $\partial \widehat G _{\widetilde W (\widehat \beta)}$ have an $n^{-1/4}$ convergence rate towards their targets. While the $n^{-1/4}$ consistency of the nonparametric estimators is common in the literature on semiparametric testing, the $n^{-1/4}$ consistency of the first-order derivative of $\widehat G _{\widetilde W (\widehat \beta)}$ is specific to our framework with generated regressors. Heuristically, to handle the estimation error from the generated regressor $\widetilde H(Z)$, we combine the $n^{-1/4}$ consistency of the first step estimator $\widetilde H$ with the $n^{-1/4}$ consistency of the third step derivative $\partial \widehat G _{\widetilde W (\widehat \beta)}$ to get a first-order approximation of the type 
\begin{align*}
    \sqrt{n} \mathbb P _n \widetilde G _{\widetilde W (\widehat \beta)}(\widetilde W (\widehat \beta))\, \varphi_s(X,\widetilde{H}(Z))\approx &    \sqrt{n} \mathbb P _n \widetilde G _{\widetilde W (\widehat \beta)}(W (\beta_0)) \, \varphi_s(X,H(Z))\\
    &+ \sqrt{n} \mathbb P _n \varphi_s(X,H(Z))\, \partial^T G_{W(\beta_0)}(W(\beta_0))\, [  \widetilde W (\widehat \beta) - W(\beta_0)]\, .
\end{align*}
The above expansion allows us to disentangle the estimation error coming from the third step and the estimation error due to the generated variable $\widetilde H(Z)$.

An important aspect is that Assumption \ref{Assumption: bandwidth} avoids undersmoothing and allows for a {\itshape small bias property} of the empirical process at the basis of $S_n$. To explain these features, let us abstract from the appearance of the trimming rate $\tau_n$. Then, Assumption \ref{Assumption: bandwidth} requires that $n h^{4(r-1)}=o(1)$ and $n h_H^{4 r_H}=o(1)$. These conditions have two consequences. First, we avoid undersmoothing in the sense that the test can be implemented with bandwidths that are ``optimal" in terms of estimation.\footnote{ Here by ``optimal" bandwidths we refer to the bandwidths minimizing the Mean Squared Errors of the nonparametric estimators. So, for the first step estimation the ``optimal" bandwidth is proportional to $n^{-1/(2r_H + p)}$, while for the third step estimation the ``optimal" bandwidth is proportional to $n^{-1/(2 r + d)}$. See \citet[Chapter  1]{li_nonparametric_2006}. } Second, 
the empirical process at the basis of the statistic has a {\itshape small bias property}. This means that such an empirical process remains $\sqrt{n}$ consistent although the biases of the nonparametric estimators on which it is based do not converge to zero at a $n^{-1/2}$ rate. See \citet{newey_twicing_2004}. This feature is possible thanks to the bias corrections (or equivalently the $L_2$ boosting corrections) used in the first step and the third step estimation. Without such bias corrections, the conditions $n h^{4(r-1)}=o(1)$ and $n h_H^{4 r_H}=o(1)$ should be ``augmented" by $n h^{2\,r}=o(1)$ and $n h_H^{2 r_H}=o(1)$ to guarantee that the bias of the empirical process disappears fast enough, see e.g. \citet{delgado_significance_2001} and \citet{escanciano_uniform_2014}. With such rates we would have to use undersmoothed bandwidths, and the empirical process at the basis of $S_n$ would no longer have a small-bias property. \\
The bias corrections used here are similar to those used in \citet{xia_goodness--fit_2004} and \citet{lapenta_encompassing_2022} who develop tests where all regressors are observed. Here, we show that these bias corrections (or equivalently the $L_2$ boosting corrections) guarantee a small bias property in a {\itshape semiparametric context} featured by the presence of {\itshape nonparametrically generated regressors}.\footnote{Due to the presence of nonparametrically generated regressors, in our framework the bias corrections of the first step estimator enter the bias corrections of the third step estimator. Thus, the way we handle the bias corrections to obtain the expansion of Proposition \ref{prop: Asymptotic Test}  is fundamentally different from \citet{xia_goodness--fit_2004} and \citet{lapenta_encompassing_2022} who consider tests where {\itshape all regressors are observed}.}\\
The small bias property of the empirical process does not only have a theoretical appeal, but it is also attractive from a practical point of view: as we show in Section \ref{sec: small sample behavior}, this property implies that when the bandwidth is perturbed our test is more stable in terms of size control than a test without a small bias property. See also \citet{newey_twicing_2004}. In other words, from a practical standpoint, thanks to the small bias property we avoid an excessive sensitivity of the test with respect to the smoothing parameter. \\
Assumption \ref{Assumption: trimming}(i)-(iv) gathers the conditions guaranteeing that the trimming disappears fast enough to avoid a bias coming from trimming. Similar conditions can be found in \citet{escanciano_uniform_2014}. Assumption \ref{Assumption: trimming}(v) implies a convexity feature needed to limit the entropy of the class of functions that asymptotically contains the nonparametric estimators. \\
Finally Assumption \ref{Assumption: parametric} imposes the existence of a unique pseudo-true value of the finite dimensional parameter $\beta_0$.
Such a condition is common to any specification test for nonlinear models where the finite dimensional parameter is estimated by optimizing a nonlinear objective function,
see \citet{bierens_consistent_1982}, \citet{lavergne_smooth_2013}, \citet{escanciano_asymptotic_2018-2}. Assumption \ref{Assumption: parametric} plays a double role: under $\mathcal H _0$ it ensures the identification of $\beta_0$, while under $\mathcal H _1$ it ensures that the estimator $\widehat \beta$ has a well defined limit in probability. Now, under $\mathcal H_0$ Assumption \ref{Assumption: parametric} can be shown by using normalization conditions on $B$ and support conditions on $X,Z$. A common normalization condition is that the first component of $\beta_0$, or equivalently the first coordinate of $B$, is set to one, see \citet{blundell_endogeneity_2004}, \citet{rothe_semiparametric_2009}, \citet{escanciano_identification_2016}. Differently, under $\mathcal H_1$ what is needed for our test to have power is a well defined limit in probability of $\widehat \beta$. We could directly impose these conditions, but we choose to keep Assumption \ref{Assumption: parametric} for presentation purposes.\\

To present the asymptotic behavior of the statistic, we need to define several objects. In the definitions below we drop the arguments of the functions when such  arguments are clear from the context.\footnote{In (\ref{eq: components of the IFR for the empirical process at the basis of Sn}) for notational simplicity we use: $\partial G_{W(\beta)}(W(\beta)):=\partial_w G_{W(\beta)}(w)|_{w=W(\beta)}$, $\nabla_{\beta^T} G_{W(\beta)}(W(\beta))=\partial_{\beta^T} G_{W(\beta)}(W(\beta))+\partial^T G_{W(\beta)}(W(\beta))\partial_{\beta^T} W(\beta) $, and $\partial_H q(\beta,X,H(Z)):=\partial_u q(\beta,X,u)|_{u=H(Z)}$. }
\begin{align}\label{eq: components of the IFR for the empirical process at the basis of Sn}
    \varphi_s^\perp (X,Z)=& \varphi_s(X,H(Z)) - \mathbb E \{\varphi_s(X,H(Z)) | W(\beta_0)\}\nonumber \\
    a_s(Z)=& \mathbb E \{\partial^T G_{W(\beta_0)}\, f_{W(\beta_0)}\, \varphi_s^\perp \partial_{H} q | Z\}\nonumber \\
    \Sigma=& \mathbb E \{\nabla_\beta G_{W(\beta_0)} \nabla_{\beta^T} G_{W(\beta_0)}\}\nonumber \\
    b_s^1(X,Z)=& \mathbb E \{\partial^T G_{W(\beta_0)}\, f_{W(\beta_0)}\,\varphi_s^\perp \, \partial_{\beta^T} q \} \Sigma\,\nabla_\beta G_{W(\beta_0)}(W(\beta_0))\nonumber \\
    b_s^2(Z)=& \mathbb E \{\partial^T G_{W(\beta_0)}\, f_{W(\beta_0)}\, \varphi_s^\perp \partial_{\beta^T} q\}\,\Sigma\, \mathbb E \{\nabla_\beta G_{W(\beta_0)}\, \partial^T G_{W(\beta_0)}\, \partial_{H} q | Z\}
\end{align}

The following proposition shows the influence function representation (IFR)
of the empirical processes at the basis of the statistic $S_n$.

\begin{prop}
\label{prop: Asymptotic Test}
Let Assumptions \ref{Assumption: iid and phi}-\ref{Assumption: parametric} hold. Then, under $\mathcal H _0$
\begin{align*}
    \sqrt{n}\, \mathbb P _n\, \widetilde \varepsilon \, \widehat f_{\widetilde W(\widehat \beta)}(\widetilde W(\widehat \beta))\, \varphi_s(X,\widetilde H(Z))\, \widehat t=& \sqrt{n} \mathbb P _n \varepsilon \, \left[f_{W(\beta_0)}\,\varphi^\perp_s - b_s^1(X,Z) \right]\\
    &- \sqrt{n} \mathbb P _n \left[a_s(Z) - b_s^2(Z)\right] \, (D - H(Z)) + o_P(1)
\end{align*}
uniformly in $s \in {\cal S}$.
\end{prop}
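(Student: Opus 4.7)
The plan is to exploit the identity $\widetilde G_{\widetilde W(\widehat\beta)}(w)\,\widehat f_{\widetilde W(\widehat\beta)}(w) = \widehat T^Y_{\widetilde W(\widehat\beta)}(w) + \widehat T^{\widehat\varepsilon}_{\widetilde W(\widehat\beta)}(w)$, so that the target process can be rewritten as
$\mathbb P_n\bigl[Y\,\widehat f_{\widetilde W(\widehat\beta)}(\widetilde W(\widehat\beta)) - \widehat T^Y_{\widetilde W(\widehat\beta)}(\widetilde W(\widehat\beta)) - \widehat T^{\widehat\varepsilon}_{\widetilde W(\widehat\beta)}(\widetilde W(\widehat\beta))\bigr]\,\varphi_s(X,\widetilde H(Z))\,\widehat t.$
I would then reverse the roles of summation indices in the two kernel-smoothed terms so they become sample means of $Y\,\widehat t\,\hat m_s(\widetilde W)$-type objects, where $\hat m_s$ is a kernel-type estimator of $w\mapsto f_{W(\beta_0)}(w)\,\mathbb E[\varphi_s(X,H(Z))\mid W(\beta_0)=w]$. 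The bias-correction piece $\widehat T^{\widehat\varepsilon}/\widehat f$ embedded in $\widetilde G$ is designed precisely so that the $O(h^r)$ bias of this kernel smoother is cancelled; this is the mechanism that delivers the \emph{small bias property} under Assumption \ref{Assumption: bandwidth} without undersmoothing, and gives the oracle leading term $\sqrt n\,\mathbb P_n \varepsilon\,f_{W(\beta_0)}(W(\beta_0))\,\varphi_s^\perp(X,Z)$ up to a remainder I would bound uniformly in $s$.

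Second, I would linearize in $\widehat\beta-\beta_0$ and in $\widetilde H-H$. For $\widehat\beta$, I would invoke Lemma \ref{lem: IFR for betahat} to insert its influence function representation; combined with a Taylor expansion of $q(\beta,X,H(Z))$ in $\beta$ and with the definitions in (\ref{eq: components of the IFR for the empirical process at the basis of Sn}), this produces the $-\sqrt n\,\mathbb P_n\,b_s^1(X,Z)\,\varepsilon$ contribution as well as the $\nabla_\beta G_{W(\beta_0)}$-mediated part of the $b_s^2$ term. For $\widetilde H$, I would Taylor-expand $q(\widehat\beta,X,\widetilde H(Z))$ in its last argument through $\partial_H q$, expand $\varphi_s(X,\widetilde H(Z))$ through $\partial_H\nu$, and use the one-step boosted expansion of $\widetilde H-H$ as a kernel average of $(D-H(Z))/f_Z(Z)$ plus a small-bias remainder controlled by $d_H$. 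Projecting the coefficient of $D-H(Z)$ onto $Z$ produces $a_s(Z)$, and combining with the $\widehat\beta$ channel yields the $-b_s^2(Z)(D-H(Z))$ contribution.

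Third, I would discard all second-order, trimming, and cross-product remainders. Uniformly in $s\in\mathcal S$, products of the $n^{-1/4}$-consistent objects $\widetilde H$, $\widehat G_{\widetilde W(\widehat\beta)}$, $\partial\widehat G_{\widetilde W(\widehat\beta)}$ and $\widehat\beta-\beta_0$ are $o_P(n^{-1/2})$ by Assumption \ref{Assumption: bandwidth}. Stochastic equicontinuity of the empirical process indexed by $(s,\widehat\beta,\widetilde H)$ follows from Assumptions \ref{Assumption: smoothness}--\ref{Assumption : kernels}, which confine the nonparametric estimators to a smooth, bounded-entropy class on $\mathcal U_n^\delta$; Assumption \ref{Assumption: trimming} then guarantees that the trimming $\widehat t$ vanishes fast enough ($p_n=o(n^{-1/2})$) and that the orthogonalization through $\mathbb E[\varphi_s\mid W(\beta_0)]$ becomes exact in the limit, yielding $\varphi_s^\perp$ in every term.

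The main obstacles will be (i) the uniform-in-$s$ control of the several remainder processes, which requires careful Donsker-type bookkeeping in the presence of the random argument $\widetilde W(\widehat\beta)$ and the estimated weighting $\varphi_s(X,\widetilde H(Z))$; and (ii) the intertwined nature of the bias corrections, since the first-step bias correction in $\widetilde H$ enters the construction of $\widetilde W(\widehat\beta)$ and, through $\widetilde W$, propagates into the third-step bias correction in $\widetilde G$. Unraveling these nested corrections so that the final remainder is $o_P(1)$ uniformly in $s$---and exhibiting the small-bias cancellation in a semiparametric setting with generated regressors, where the bias corrections are chained across steps---is the crux of the argument, and what distinguishes this proof from the analogous expansions in \citet{xia_goodness--fit_2004} and \citet{lapenta_encompassing_2022}.
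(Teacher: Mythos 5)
Your proposal follows the same route as the paper. Your identity $\widetilde G_{\widetilde W(\widehat\beta)}\,\widehat f_{\widetilde W(\widehat\beta)} = \widehat T^Y_{\widetilde W(\widehat\beta)} + \widehat T^{\widehat\varepsilon}_{\widetilde W(\widehat\beta)}$ is algebraically the same decomposition the paper uses in the first two steps of its proof: writing $\widetilde\varepsilon\,\widehat f_{\widetilde W(\widehat\beta)} = \widehat\varepsilon\,\widehat f_{\widetilde W(\widehat\beta)} - \widehat T^{\widehat\varepsilon}_{\widetilde W(\widehat\beta)}$, then exchanging the order of the double sum in the kernel-smoothed term to pull in the estimated centering $\widehat\iota^s(\widetilde W(\widehat\beta))$ and arrive at the process $\sqrt n\,\mathbb P_n\,\widehat\varepsilon\,\widehat f_{\widetilde W(\widehat\beta)}[\varphi_s - \widehat\iota^s]\,\widehat t$. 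Your $\hat m_s$ is exactly the paper's $\widehat T^{\varphi_s}_{\widetilde W(\widehat\beta)}$, and you correctly identify the mechanism whereby the sum-exchanged bias correction produces an estimate of $\varphi_s^\perp$ whose conditional orthogonality to $W(\beta_0)$ kills the leading bias and delivers the small-bias property. From there, the linearization program you lay out --- Taylor expansion of $\widetilde W(\widehat\beta) - W(\beta_0)$ through $\partial_\beta q$ and $\partial_H q$, insertion of the influence-function representation of $\widehat\beta$ from Lemma \ref{lem: IFR for betahat} to produce $b_s^1$ and the $\Sigma$-mediated part of $b_s^2$, and the first-step expansion of $\widetilde H - H$ projected onto $Z$ to produce $a_s$ --- is the paper's sequence of steps (its equations (\ref{eq: approximation of 1st term of 2nd decomposition of Asy Emp Process}), (\ref{eq: approximation  of W.tilde(betahat) - W(beta0)}), (\ref{eq: 2nd term RHS of 2nd decomposition of Asy Emp Process})), and the remainder-control ingredients you cite (the $n^{-1/4}$ rates, stochastic equicontinuity over a bounded-entropy class of estimated indices, and the $p_n = o(n^{-1/2})$ trimming rate) are precisely what the paper invokes via Lemmas \ref{lem: trimming and convergence rates for 1st step}, \ref{lem: convergence rates for Ghat fhat and iotahat}, \ref{lem: stochastic expansions}, \ref{lem: ASE}, and \ref{lem: ASE for 1st step}. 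You also correctly single out the two real difficulties --- uniform-in-$s$ bookkeeping with a random index $\widetilde W(\widehat\beta)$, and the chaining of the first-step bias correction into the third-step one --- as the substantive departures from the no-generated-regressor arguments of \citet{xia_goodness--fit_2004} and \citet{lapenta_encompassing_2022}.

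One small caution: your phrasing that the correction ``is designed so that the $O(h^r)$ bias of this kernel smoother is cancelled'' is a slight oversimplification of the mechanism. The cancellation does not happen term-by-term at the level of the smoother's bias; it happens because the exchanged-sum reorganization turns the weight on the residuals from $\varphi_s$ into (an estimate of) $\varphi_s^\perp$, and then the bias of $\widehat G_{\widetilde W(\widehat\beta)}$ enters the process only through $\mathbb E\{\mathcal B(W(\beta_0))\,f_{W(\beta_0)}\,\varphi_s^\perp\} = 0$. This is worth making explicit, as it is the step where the otherwise-fatal $\sqrt n\,h^r$ term drops out. But this is a matter of exposition rather than a gap --- your plan as stated would reach the same conclusion.
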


The above proposition is proved in Appendix \ref{sec:Asymptotic Analysis}. A similar IFR is obtained in \citet{escanciano_uniform_2014} but under different conditions on the bandwidths and using different estimators. In particular, compared to  \citet{escanciano_uniform_2014} our estimators contain bias corrections and our empirical process has a small bias property. Accordingly, the way we handle the bias terms appearing in the expansion of our empirical process is fundamentally different from the methods used in \citet{escanciano_uniform_2014}.  

The IFR obtained in Proposition \ref{prop: Asymptotic Test} allows us to get more insights on the connection between the small bias property of our empirical process and the presence of generated variables. The second term in the IFR of the empirical process is due to the generated variable: the fact that we replace the unobserved regressor $H(Z)$ with the generated regressor $\widetilde H (Z)$ introduces additional terms in the first-order asymptotics of our empirical process. Thus, the initial empirical process is not {\itshape locally robust} with respect to the first step estimation, in the sense that the Hadamard derivative of the moment condition in (\ref{eq: null hypothesis continuum}) with respect to $H$ is non null. In other words, such a moment condition is not Neyman-orthogonal with respect to the first step estimation, see \citet{chernozhukov_locally_2016} and \citet{escanciano_asymptotic_2018-2}. Now, as an alternative to the bias ($L_2$ boosting) corrections, we could get a small bias property of our empirical process by using a ``locally robust" approach, as proposed in \citet{chernozhukov_locally_2016} and \citet{escanciano_asymptotic_2018-2}. This would require estimating the second term appearing in the IFR of Proposition \ref{prop: Asymptotic Test} (due to the generated variable) and subtracting it to the empirical process at the basis of $S_n$. The new empirical process would be ``locally robust", in the sense that it would be the empirical counterpart of a moment condition whose Hadamard derivative with respect to $H$ is zero. Then, as a test statistic we could consider the Cramer-Von Mises functional of such a {\itshape locally robust empirical process}. This new empirical process would have a small bias property as well as ours, see \citet{chernozhukov_locally_2016} for more insights. However, the implementation of such a locally robust approach in our semiparametric framework with generated variables would be difficult. Constructing a locally robust empirical process would require estimating the term $a_s$. This would require (i) to select additional smoothing parameters and (ii) to estimate the {\itshape nonparametric} derivative $\partial G _{W(\beta_0)}$. Since in practice it is not convenient to estimate nonparametric derivatives, 
such an estimation could complicate the practical implementation of our test. Differently, our $L_2$ boosting procedure does not require estimating nonparametric derivatives. Thus, we are proposing an attractive and convenient method to obtain the small bias property in a semiparametric context with generated variables.\\

From Proposition \ref{prop: Asymptotic Test} we can get the asymptotic distribution of $S_n$ under $\mathcal H _0$.  Since $\varphi_s$ is Lipschitz in $s$, the IFR appearing in Proposition \ref{prop: Asymptotic Test} is Donsker. Hence, the empirical process at the basis of the statistic will converge weakly to a tight zero-mean Gaussian process $\mathbb G _s$ valued in $\ell^\infty (\mathcal S)$, with  $\ell^{\infty}(\mathcal S)$ denoting the space of functionals on $\mathcal S$ endowed with the uniform norm. The Gaussian process $\mathbb{G}_s$ is characterized by the collection of covariances 
\begin{align}\label{eq: collection of covariances}
    \mathbb E\, \Big\{\,\varepsilon [f_{W(\beta_0)}\,\varphi_s^\perp - b_s^1]& - (D-H(Z))[a_s(Z)-b^2_s(Z)]\,\Big\} \nonumber \\
    &\Big\{\,\varepsilon [f_{W(\beta_0)}\,\varphi_l^\perp - b_l^1] - (D-H(Z))[a_l(Z)-b^2_l(Z)]\,\Big\}\text{ : }s,l\in\mathcal S\, .
\end{align}
Hence, by the continuity of the Cramer-Von Mises functional we get that under $\mathcal H _0$ 
\begin{equation*}
    S_n  \cvd  \int_{ }|\mathbb G _s|^2 d \mu(s)\, .
\end{equation*}
From the previous two displays, the asymptotic distribution of $S_n$ depends on the unknown data generating process and we cannot use it for testing.\\
This problem is often encountered in semiparametric testing. As highlighted in the previous section, \citet{delgado_significance_2001} and \citet{xia_goodness--fit_2004} test an equation similar to (\ref{eq: null hypothesis continuum}) but without generated variables. Their test statistics  also converge to a functional of a zero-mean Gaussian process, and to obtain critical values they propose a bootstrap procedure. Differently from theirs, our context is featured by the presence of generated regressors. As shown in Proposition \ref{prop: Asymptotic Test} the generated variables introduce an estimation error that impacts on the asymptotic behavior of our test statistic. Thus, the bootstrap procedures proposed in \citet{delgado_significance_2001} and \citet{xia_goodness--fit_2004} cannot be applied in our testing problem, as they can't reproduce the estimation error due to the generated regressors.  
So, in the next section we motivate and develop a novel bootstrap procedure to get the critical values.

\section{The Bootstrap Test}\label{sec: The Bootstrap Test}
Before developing our bootstrap test, we will discuss some alternative methods that could be used to construct the critical values. This allows us to better motivate our bootstrap procedure. There are general methods available in the literature to adjust semiparametric tests for the presence of generated covariates, see \citet{pagan_econometric_1984}, \citet[Chapter  6]{wooldridge_econometric_2010}, \citet{hahn_asymptotic_2013}, \citet{hahn_nonparametric_2018}. They essentially work in two steps: (i) estimate the adjustment terms in the IFR of the statistic that are due to the generated variables; (ii) by using these estimated adjustment terms, correct the ``standard errors" of the test statistic and construct the critical values for the test. We could adapt such a procedure to our framework, but this would be difficult for two reasons. First, this general procedure holds for a ``studentized" statistic that is asymptotically pivotal, while in our empirical process framework constructing an asymptotically pivotal statistic is tricky, see \citet{song_testing_2010}. Second, given the IFR in Proposition \ref{prop: Asymptotic Test}, implementing such a procedure in our context requires estimating the weighting function $a_s$ defined in (\ref{eq: components of the IFR for the empirical process at the basis of Sn}). As noticed in the previous section, this requires selecting additional smoothing parameters and estimating a nonparametric derivative. So, such a procedure would be practically difficult to implement in our context for the reasons discussed earlier.\\
A second alternative would be to construct a weighted bootstrap procedure. See, e.g., \citet{huang_flexible_2016}. By letting $\{\zeta_i\,:\,i=1,\ldots,n\}$ be a sequence of iid weights with mean zero, unit variance, and with a known distribution, the weighted bootstrap would be based on 
\begin{align*}
    \sqrt{n} \mathbb P _n \zeta \widehat \varepsilon \, \left[\widehat f_{\widehat W(\widehat \beta)}\,\widehat \varphi^\perp_s - \widehat b_s^1(X,Z) \right]
    - \sqrt{n} \mathbb P _n \zeta\left[\widehat a_s(Z) - \widehat b_s^2(Z)\right] \, (D - \widehat H(Z))\, ,
\end{align*}
where the ``hatted" elements stand for estimated objects. The previous process is just a re-weighted version of the IFR in Proposition \ref{prop: Asymptotic Test}. Conditionally on the sample data, the above process is expected to mimic the null behavior of the empirical process at the basis of our statistic. So, the idea would be to use it to construct valid critical values. However, such a procedure is not attractive in our case: due to the presence of generated variables, $a_s$ involves a nonparametric derivative that needs to be estimated, see Equation (\ref{eq: components of the IFR for the empirical process at the basis of Sn}).\\

In view of the previous remarks, our goal in this section is to develop a bootstrap method that (i) does not require estimating a nonparametric derivative, (ii) that can reproduce/mimic the estimation error due to the generated regressors, (iii) that resamples the observations by imposing the null hypothesis in the resampling scheme, and (iv) that is based on an empirical process with a small bias property. \\
To this end, let $\widehat W_i (\beta):=q(\beta,X_i,\widehat H (Z_i))$, where $\widehat H$ has been introduced in Section \ref{subsec: The Test Statistic}. We define
\begin{align*}
    \widehat G _{\widehat W (\beta)}(w):= \widehat T ^Y _{\widehat W (\beta)}(w) / \widehat f _{\widehat W (\beta)}(w)\, ,
\end{align*}
where $\widehat T ^Y _{\widehat W (\beta)}$ and $\widehat f _{\widehat W (\beta)}$ are defined similarly as in (\ref{eq: definition of That Y Wtilde beta}) with $\widehat W (\beta)$ replacing $\widetilde W (\beta)$.\\
Let $\{\xi_i\,:\,i=1,\ldots,n\}$ be iid copies  of a random variable $\xi$ having a known distribution with $\mathbb E \xi=0$ and $\mathbb E \xi^2=1$, and let $\{\xi_i\,:\,i=1,\ldots,n\}$ be independent from the sample data $\{Y_i,X_i,Z_i,D_i\}_{i=1}^n$.  The bootstrap Data Generating Process (DGP) is
\begin{align}\label{eq: bootstrap resampling scheme}
    Y_i^*=&\widehat G _{\widehat W (\widehat \beta)}(\widehat W_i (\widehat \beta)) + \xi_i\,(\,Y_i-\widehat G _{\widehat W (\widehat \beta)}(\widehat W_i (\widehat \beta))\,)\nonumber \\
    D_i^*=& \widehat H (Z_i)+ \xi_i\,(D_i - \widehat H (Z_i))\, 
\end{align}
for $i=1,\ldots,n$. In the ``bootstrap world" only the weights $\{\xi_i\,:\,i=1,\ldots,n\}$ are random and the sample data is fixed.
The bootstrap sample is $\{Y_i^*,D_i^{*},X_i,Z_i\}_{i=1}^n$. Below we describe how to construct the bootstrap version of the statistic by using such a sample. For the moment, let us highlight some important features of the bootstrap DGP in (\ref{eq: bootstrap resampling scheme}). First, the bootstrap DGP is made by two equations. If there were no generated variables, the first equation alone would be enough to construct valid critical values. However, due to the presence of generated variables, a bootstrap DGP based only on the first equation would provide a misspecified inference, as it could not mimic the estimation error coming from the generated regressors. This is the reason for introducing the second equation in (\ref{eq: bootstrap resampling scheme}): its role is to reproduce the estimation error from the generated variable $\widetilde H (Z)$ and to mimic the behavior of the second term in the IFR of Proposition \ref{prop: Asymptotic Test}.\\
Second, in some of the applications described in Section \ref{sec: Main Applications}, the variable $D$ is a component of $X$. Thus, $D$ enters the true variable $W(\beta_0)=q(\beta_0,X,H(Z))$ in the population DGP. However, in the bootstrap DGP the variable $\widehat W (\widehat \beta)=q(\widehat \beta,X,\widehat H (Z))$ will not contain the bootstrap version of $D$, i.e. $D^*$, but only $D$. As we show in Proposition  \ref{prop: bootstrap test}, this is enough for providing a valid bootstrap inference. \\
Third, the observations $Y^*_i$ and $D^{*}_i$ cannot be resampled independently, but they must be generated by the {\itshape same} bootstrap weight $\xi_i$. This is an important feature, as it guarantees that the covariance  of the bootstrap errors $[Y^*_i-\widehat G _{\widehat W (\widehat \beta)}(\widehat W _i (\widehat \beta))$ $ , D^*_i - \widehat H (Z_i)]$ has the same structure as the covariance of the population errors $[\,Y_i-G_{W(\beta_0)}(W_i(\beta_0))\,,$ $D_i-H(Z_i)\,]$. Since the former errors determine the behavior of the bootstrap version of $S_n$ and the latter errors determine the null behavior of $S_n$, such a feature is fundamental to guarantee the validity of the bootstrap inference. See Proposition \ref{prop: bootstrap test} ahead.\\
Fourth, the bootstrap DGP in (\ref{eq: bootstrap resampling scheme}) does not involve the estimators based on bias corrections, i.e. $\widetilde H (Z)$ and $\widetilde G _{\widetilde W (\widehat \beta)}$, but only the non-corrected estimators $\widehat H(Z)$ and $\widehat G _{\widehat W (\widehat \beta)}$. As we show in Proposition \ref{prop: bootstrap test}, this guarantees that also the bootstrapped empirical process will enjoy a small bias property. Differently, if we included the bias corrected estimators in (\ref{eq: bootstrap resampling scheme}) the bootstrapped empirical process would not enjoy the small bias property.\\
Finally, we notice that the bootstrap DGP in (\ref{eq: bootstrap resampling scheme}) guarantees the validity of the null hypothesis in the ``bootstrap world". In particular, let us denote with $\mathbb E ^*$  the expectation in the bootstrap world, where only the weights $\{\xi_i\,:\,i=1,\ldots,n\}$ are random while the sample data is fixed. Then, (\ref{eq: bootstrap resampling scheme}) implies that $\mathbb E ^* \{Y^*|\nu(X,\widehat H (Z))\}$ $=\mathbb E ^* \{Y^*|q(\widehat \beta, X,\widehat H (Z))\}$. This equality represents bootstrap version of $\mathcal H _0$. The fact that we impose the null hypothesis in the bootstrap world is an attractive feature, as it means that we are using all the ``available information" when resampling our observations. \\
In what follows we show how to construct the bootstrap version of $S_n$. Similarly to Section \ref{sec: The Test}, this is done by a three steps procedure.\\

\noindent\textbf{First step estimation for the bootstrap}. Let
\begin{align}
    \widehat T _Z^{D^{*}}(z):=\frac{1}{n h_H^p}\sum_{i=1}^n D^{*}_i\,\widehat t _i K_H&\left(\frac{z-Z_i}{h_H}\right) \text{ , }\widehat H^*(z):= \widehat T _Z^{D^{*}}(z) / \widehat f_{Z}(z)\, ,\nonumber\\
    \widehat T ^{D^{*}-\widehat H ^*}_Z(z):=&\frac{1}{n h_H ^p}\sum_{i=1}^n [D_i^{*}-\widehat H ^* (Z_i)]\, \widehat t _i \, K_H\left(\frac{z-Z_i}{h_H}\right)\, .
\end{align}
We use the trimming $\widehat t$ in the above expressions to control for a random denominator in $D^{*}$, see (\ref{eq: bootstrap resampling scheme}). Then, the bootstrap counterpart of $\widetilde H$ is 
\begin{equation}
    \widetilde H ^* (z):=\widehat H ^* (z) + \widehat T ^{D^{*}-\widehat H ^*}_Z(z)/\widehat f _Z (z)\, . 
\end{equation}

\noindent\textbf{Second step estimation for the bootstrap}. Let $\widetilde W ^* _i (\beta):=q(\beta,X_i,\widetilde H ^* (Z_i))$. Then, we define 

\begin{equation}\label{eq: definition of That Ystar Wtilde star beta}
    \widehat T ^{Y^*}_{\widetilde W^* (\beta)}(w):=\frac{1}{n h^d}\sum_{i=1}^n Y_i^* \, \widehat t _i\,K\left(\frac{w-\widetilde W _i^*(\beta)}{h}\right)
\end{equation}
 and let $\widehat f_{\widetilde W ^* (\beta)}$ be defined similarly as in Equation (\ref{eq: definition of That Y Wtilde beta}), obviously with $\widetilde W ^* (\beta)$ replacing $\widetilde W (\beta)$. Again, we use the trimming $\widehat t$ to control for the random denominator in $Y^*$ and $\widetilde W ^*(\beta)$. Then, the bootstrap counterpart of $\widehat G _{\widetilde W (\beta)}$ is 
 \begin{equation}
     \widehat G ^*_{\widetilde W ^* (\beta)}(w):=\widehat T ^{Y^*}_{\widetilde W ^* (\beta)}(w)/\widehat f _{\widetilde W ^* (\beta)}(w)\, .
 \end{equation}
 So, we let the bootstrap counterpart of $\widehat \beta$ be 
 \begin{equation*}
     \widehat \beta ^* :=\arg \min _{\beta\in B}\mathbb P _n [Y^*-\widehat G ^*_{\widetilde W ^* (\beta)}(\widetilde W ^* (\beta))]^2\, \widehat t\, .
 \end{equation*}
 
 \noindent \textbf{Third step estimation for the bootstrap}. Let us define $\widehat \varepsilon ^* _i := Y_i^* - \widehat G ^* _{\widetilde W ^* (\widehat \beta ^*)}(\widetilde W ^* _i (\widehat \beta ^*))$ for $i=1,\ldots,n$. Let $\widehat T ^{\widehat \varepsilon^*}_{\widetilde W ^* (\beta)}$ be defined similarly as in (\ref{eq: definition of That Ystar Wtilde star beta}), obviously with $\widehat \varepsilon^*$ replacing $Y^*$. Then, the bootstrap counterpart of $\widetilde G _{\widetilde W (\widehat \beta)}$ is 
 \begin{equation}\label{eq: definition of G tilde star beta hat star}
     \widetilde G ^* _{\widetilde W ^* (\widehat \beta^*)}(w):= \widehat G ^*_{\widetilde W ^* (\widehat \beta ^*)}(w)+ \widehat T ^{\widehat \varepsilon^*}_{\widetilde W ^* (\widehat \beta^*)}(w)/\widehat f_{\widetilde W ^* (\widehat \beta^*)}(w)\, .
 \end{equation}
 The bootstrap counterpart of the bias corrected residual will be $\widetilde \varepsilon ^*_i:=Y_i^*-\widetilde G ^*_{\widetilde W ^* (\widehat \beta ^*)}(\widetilde W ^*_i(\widehat \beta ^*))$. So, the bootstrap counterpart of the statistic is 
 \begin{equation}
     S_n^* := n \int_{ }\,\left|\mathbb P _n \widetilde \varepsilon ^* \,\widehat f _{\widetilde W ^* (\widehat \beta ^*)}(\widetilde W ^*(\widehat \beta ^*)) \, \varphi_s(X,\widetilde H ^* (Z))\, \widehat t\right|^2\, d\mu(s)\, .
 \end{equation}
 
Let us denote with $\Pr^*$ the probability where only the bootstrap weights $\{\xi_i\,:\,i=1,\ldots,n\}$ are random while the sample data is fixed. A test at the $\alpha$ nominal level will be based on the quantile 
\begin{equation}
    \widehat c_{1-\alpha}=\inf\left\{\,c\,:\, \Pr \,^*\,(S_n^*\leq c)\geq 1-\alpha \,\right\}\, .
\end{equation}
In practice, one can compute $\widehat c_{1-\alpha}$ by a Monte Carlo procedure that goes as follows:
\begin{enumerate}
    \item Generate $n$ iid bootstrap weights $\{\xi_i\,:\,i=1,\ldots,n\}$ from the distribution of $\xi$ and compute the observations $\{Y_i^*,D^{*}_i\,:\,i=1,\ldots,n\}$ as in (\ref{eq: bootstrap resampling scheme}) 
    \item Given the bootstrap sample $\{Y_i^*,D^{*}_i,X_i,Z_i\}_{i=1}^n$ compute the test statistic $S_n^*$ according to the three steps procedure previously described
    \item Repeat steps 1-2 $B$ times, with $B$ large, and  obtain the collection of bootstrapped statistics $\{S^*_{n,b}\,:\,b=1,\ldots,B\}$
    \item Compute $\widehat c_{1-\alpha}$ as the $1-\alpha$ quantile of $\{S^*_{n,b}\,:\,b=1,\ldots,B\}$. 
\end{enumerate}
If the statistic $S_n$ is larger than $\widehat c_{1-\alpha}$, then the null $\mathcal H _0$ will be rejected at the $\alpha$ nominal level. 
We recommend choosing $B$ such that $(1-\alpha)B$ is an integer, see \cite{davison1997bootstrap}. \\

The proposition below shows the validity of the bootstrap test just proposed.

\begin{prop}\label{prop: bootstrap test}
Let $\{\xi_i\,:\,i=1\ldots,n\}$ be an iid sequence of bounded variables independent from the sample, with $\mathbb E \xi=0$ and $\mathbb E \xi^2=1$, and let Assumptions \ref{Assumption: iid and phi}-\ref{Assumption: parametric} hold. Then,
\begin{enumerate}[label=(\roman*)]
    \item under $\mathcal H _0$
    \begin{align*}
         \sqrt{n}\, \mathbb P _n\, \widetilde \varepsilon^* \, \widehat f_{\widetilde W^*(\widehat \beta^*)}(\widetilde W^*(\widehat \beta^*))\, \varphi_s(X,&\widetilde H^*(Z))\, \widehat t\\
         =& \sqrt{n} \mathbb P _n\, \xi\,\varepsilon \, \left[f_{W(\beta_0)}\,\varphi^\perp_s - b_s^1(X,Z) \right]\\
    &- \sqrt{n} \mathbb P _n \, \xi\,(D - H(Z))\,\left[a_s(Z) - b_s^2(Z)\right] + o_P(1)\, 
    \end{align*}
    uniformly in $s\in\mathcal S$, where $\varphi_s^\perp$, $a_s$, $b_s^1$, and $b_s^2$ are defined in (\ref{eq: components of the IFR for the empirical process at the basis of Sn}), and the probability space is the joint probability on the random bootstrap weights and the sample data.
    \item Under $\mathcal H _0$,  $\Pr(S_n > \widehat c_{1-\alpha})\rightarrow \alpha$ . 
    \item Under $\mathcal H _1$,  $\Pr (S_n > \widehat c_{1-\alpha})\rightarrow 1$ . 
\end{enumerate}
\end{prop}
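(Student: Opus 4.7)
My plan is to establish part (i) by mirroring the derivation of Proposition~\ref{prop: Asymptotic Test} in the bootstrap world, then to deduce parts (ii) and (iii) from a conditional multiplier CLT combined with the sample-level analysis already carried out. The central observation is that the DGP in (\ref{eq: bootstrap resampling scheme}) forces the bootstrap ``errors'' to equal the weighted sample residuals
\[
Y_i^*-\widehat G_{\widehat W(\widehat \beta)}(\widehat W_i(\widehat \beta)) = \xi_i\bigl(Y_i-\widehat G_{\widehat W(\widehat \beta)}(\widehat W_i(\widehat \beta))\bigr),\qquad D_i^*-\widehat H(Z_i) = \xi_i\bigl(D_i-\widehat H(Z_i)\bigr),
\]
with the \emph{same} $\xi_i$ appearing in both equations. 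Since $\widehat H$ and $\widehat G_{\widehat W(\widehat \beta)}$ are uniformly consistent on the trimming set, these residuals behave asymptotically as $\xi_i\varepsilon_i$ and $\xi_i(D_i-H(Z_i))$, preserving the covariance structure on which the sample expansion relies.

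\emph{Part (i): bootstrap IFR.} I would repeat, step by step, the bias-correction expansion of Section~\ref{subsec: The Test Statistic}, but applied to the bootstrap estimators. First, I would show that $\widetilde H^*(z)-\widehat H(z)$ admits an IFR analogous to that of $\widetilde H(z)-H(z)$, driven by the weighted innovations $\xi_i(D_i-\widehat H(Z_i))$. Second, I would show that $\widehat\beta^*-\widehat\beta=O_P(n^{-1/2})$ and admits a linear representation in which every sample summand is multiplied by $\xi_i$; this uses Assumption~\ref{Assumption: parametric} to identify the bootstrap minimizer around $\widehat\beta$, and the first-order conditions of the SLS problem in the bootstrap world. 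Third, I would derive the analogue of Lemma/Proposition~\ref{prop: Asymptotic Test} for $\widehat G^*_{\widetilde W^*(\widehat\beta^*)}$ and its bias correction $\widetilde G^*_{\widetilde W^*(\widehat\beta^*)}$. Because the bandwidth conditions of Assumption~\ref{Assumption: bandwidth} and the trimming conditions of Assumption~\ref{Assumption: trimming} are unchanged, the $n^{-1/4}$ rates and the small-bias property transfer directly; the fact that only the non-corrected estimators $\widehat H$, $\widehat G_{\widehat W(\widehat\beta)}$ enter (\ref{eq: bootstrap resampling scheme}) is exactly what guarantees that the bootstrap bias does not pick up spurious extra terms. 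Plugging the three expansions into the definition of $S_n^*$ and linearizing as in Proposition~\ref{prop: Asymptotic Test} yields the claimed IFR with $\xi$ factors.

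\emph{Part (ii): size control.} The class $\{\psi_s:s\in\mathcal S\}$ with $\psi_s=\varepsilon[f_{W(\beta_0)}\varphi^\perp_s-b^1_s]-(D-H(Z))[a_s-b^2_s]$ is Donsker (by the Lipschitz dependence on $s$ and boundedness, already used in Proposition~\ref{prop: Asymptotic Test}). By the conditional multiplier CLT applied with the iid weights $\xi_i$ satisfying $\E\xi=0$, $\E\xi^2=1$, the multiplier process $\sqrt n\,\mathbb P_n\xi\psi_s$ converges conditionally on the sample, in outer probability, to the same zero-mean Gaussian process $\mathbb G_s$ with the covariance structure in (\ref{eq: collection of covariances}). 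Combining this with part (i) and the continuous mapping theorem, $S_n^*\mid\text{data}\cvd\int|\mathbb G_s|^2\,d\mu(s)$ in probability, so $\widehat c_{1-\alpha}\cvp c_{1-\alpha}$, the continuity point quantile of $\int|\mathbb G_s|^2\,d\mu(s)$. Since $S_n$ has the same weak limit under $\mathcal H_0$, a standard Slutsky-type argument gives $\Pr(S_n>\widehat c_{1-\alpha})\to\alpha$.

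\emph{Part (iii): consistency under $\mathcal H_1$.} The bootstrap DGP in (\ref{eq: bootstrap resampling scheme}) forces $\mathcal H_0$ to hold in the bootstrap world regardless of whether $\mathcal H_0$ holds in the population, and parts (i)--(ii) do not rely on $\mathcal H_0$. Hence $\widehat c_{1-\alpha}=O_P(1)$ both under $\mathcal H_0$ and $\mathcal H_1$. Under $\mathcal H_1$, Assumption~\ref{Assumption: parametric} yields $\widehat\beta\cvp\beta^*$ and the underlying moment $\int|\mathbb E[Y-G_{W(\beta^*)}(W(\beta^*))]\,f_{W(\beta^*)}(W(\beta^*))\,\varphi_s(X,H(Z))|^2\,d\mu(s)$ is strictly positive, so $S_n/n\cvp c>0$, i.e.\ $S_n\to\infty$, and $\Pr(S_n>\widehat c_{1-\alpha})\to 1$.

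\emph{Main obstacle.} The delicate step is the bootstrap linearization in part (i): propagating the generated-regressor estimation error through the three-step procedure while maintaining uniformity in $s\in\mathcal S$, and in particular verifying that the interaction between the first-step bias correction inside $\widetilde H^*$ and the third-step bias correction inside $\widetilde G^*_{\widetilde W^*(\widehat\beta^*)}$ does not destroy the small-bias property. This requires a careful stochastic equicontinuity argument for the bootstrap empirical process, conditional on the data, and is the analogue in the bootstrap world of the most technical part of the proof of Proposition~\ref{prop: Asymptotic Test}.
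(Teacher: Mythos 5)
Your plan for parts (i) and (ii) follows the paper's route closely: mirror the bias-corrected three-step expansion in the bootstrap world, exploit that the same $\xi_i$ multiplies both residuals to preserve the covariance structure, and invoke a conditional multiplier CLT (the paper cites van der Vaart and Wellner, Theorem 2.9.6) together with the continuity of the quantile functional. That is indeed what the paper does, and the obstacle you flag (propagating the generated-regressor error through the two bias corrections while keeping the expansion uniform in $s$) is exactly the content of the sequence of lemmas (\ref{lem: trimming and convergence rates for 1st step})--(\ref{lem: ASE}).

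For part (iii) you assert that ``parts (i)--(ii) do not rely on $\mathcal H_0$'' and conclude $\widehat c_{1-\alpha}=O_P(1)$ under $\mathcal H_1$. This does not hold as stated. The bootstrap IFR of part (i) is written in terms of $\beta_0$, $\varepsilon=Y-G_{W(\beta_0)}(W(\beta_0))$, $f_{W(\beta_0)}$, etc., and its proof uses Lemma \ref{lem: IFR for betahat star}, which is established only under $\mathcal H_0$ (it gives $\sqrt n(\widehat\beta^*-\widehat\beta)=O_P(1)$ with a representation centered at $\beta_0$). Under $\mathcal H_1$ there is no $\beta_0$; one would have to rewrite the entire expansion around $\beta^{pt}$ and re-prove the $\widehat\beta^*$ representation in the pseudo-true setting, and your proposal does not do that. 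The paper instead takes a lighter and different route for (iii): it proves $\widehat\beta^*\cvp\beta^{pt}$ under $\mathcal H_1$ by identifying the unique minimizer of the limit of the bootstrapped SLS criterion (equation (\ref{eq: limit of the bootstrapped SLS criterion})), then shows only the crude rate $S_n^*/n=o_P(1)$ via $\mathbb E\xi=0$, together with $S_n/n\cvp c>0$ via the Bierens identification argument. That weaker rate already forces $\widehat c_{1-\alpha}/n=o_P(1)$, which is enough; it deliberately sidesteps the need to establish tightness of $S_n^*$ under $\mathcal H_1$. So your claim that $\widehat c_{1-\alpha}=O_P(1)$ under $\mathcal H_1$ is asserted rather than proved, and even though the intuition behind it (the bootstrap always imposes the null) is sound, the argument you give for it is circular relative to what part (i) actually establishes.
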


The above result is proved in Appendix \ref{sec: Bootstrap Analysis}. Let us comment on the bootstrap expansion obtained in $(i)$. The IFR of the bootstrapped empirical process is a reweighted version of the IFR obtained in Proposition \ref{prop: Asymptotic Test}. The weights are represented by the bootstrap weights $\{\xi_i\,:\,i=1,\ldots,n\}$. Notice that the covariance function of the bootstrapped IFR is the same as the covariance function in (\ref{eq: collection of covariances}). This guarantees that under $\mathcal H _0$ the bootstrapped IFR will have the same behavior as the population IFR of Proposition \ref{prop: Asymptotic Test}. Thus, the bootstrapped statistic $S_n^*$ will mimic the null behavior of $S_n$ and the bootstrap inference will be valid. For such a feature to hold, it is fundamental that the bootstrap weights in the two equations in (\ref{eq: bootstrap resampling scheme}) are the same. \\
Finally, we notice that Proposition \ref{prop: bootstrap test} is obtained under the bandwidth conditions of Assumption \ref{Assumption: bandwidth}. Thus, also the bootstrapped empirical process at the basis of $S_n^*$ has a small bias property, in the sense that it remains $\sqrt{n}$ consistent even though the bias of the nonparametric estimators on which it is based does not convergence to zero at a $n^{-1/2}$ rate.

\section{Main Applications}\label{sec: Main Applications}
In this section we present the main semiparametric models our test can be applied to.\\

\noindent \textbf{Binary choice models with control functions}. Control function methods are popular in econometrics for estimating binary choice models with endogenous regressors. They have been first introduced by \citet{rivers_limited_1988}. They consider 
\begin{equation}\label{eq: binary choice variable}
    Y=\mathbb I \{g_1(X)\geq u\}\, ,
\end{equation}
where $Y$ denotes the binary choice of an agent, $g_1(X)=\theta_0^T X$ for a parameter $\theta_0$, $X=(D,Z_1^T)^T$, and the unobserved error $u$ is independent from $Z_1$ but correlated with $D$. To handle such a correlation, \citet{rivers_limited_1988} assume that there exists an instrument $Z_2$ such that for $Z=(Z_1^T,Z_2^T)^T$
\begin{equation}\label{eq: CF assumption for single index}
u=g_2(V,\epsilon)\text{ , with }V=D-\mathbb E \{D|Z\}\text{ and }\epsilon\perp X,Z\, .
\end{equation}
The residual $V$ is called {\itshape control function}, as it allows controlling for the endogeneity of $D$, and $g_2(V,\epsilon)=-\gamma_0 V + \epsilon$.
In addition to this structure, \citet{rivers_limited_1988} assume that $(\epsilon,V)$ are jointly normal and $\mathbb E \{D|Z\}$ is a linear function of $Z$. See also Section 4 in \citet{wooldridge_control_2015}. Instead, one could consider a model given by (\ref{eq: binary choice variable}) and (\ref{eq: CF assumption for single index}), without imposing the normality of $(\epsilon,V)$ and the linearity of $\mathbb E \{D|Z\}$. This could be seen as a semiparametric version of \citet{rivers_limited_1988}'s setting. Equations (\ref{eq: binary choice variable}) and (\ref{eq: CF assumption for single index}) imply that $\mathbb E \{Y|X,Z\}=\mathbb E \{Y|X,V\}$ and that
\begin{align*}
    \mathbb E \{Y|X,V\}=\mathbb E \{Y|\theta^T_0 X + \gamma V\}\, .
\end{align*}
The equation $\mathbb E \{Y|X,Z\}=\mathbb E \{Y|X,V\}$ is an exclusion restriction that is a direct  consequence of (\ref{eq: binary choice variable}) and the control function assumption in (\ref{eq: CF assumption for single index}). It is obtained without imposing the parametric restrictions on $g_1$ and $g_2$, and it can be justified by economic arguments. 
Differently, the equation in the above display is implied by the parametric restrictions on $g_1$ and $g_2$ introduced to limit the curse of dimensionality. Our test can be applied to check the correct specification of such an equation. In particular, the equation in the previous display is a specific version of Equation (\ref{eq: null hypothesis}), where $\nu(X,H(Z))=(X,D-H(Z))$, $q((\theta_0,\gamma_0),X,H(Z))=(\theta_0^T X + \gamma_0(D-H(Z)))$, and the generated variable is $H(Z)=\mathbb E \{D|Z\}$. \\

In the same spirit, \citet{blundell_endogeneity_2004} consider a model given by Equation (\ref{eq: binary choice variable}), $X=(D,Z_1^T)^T$, $Z=(Z_1^T,Z_2^T)^T$, and the following exclusion restrictions
\begin{equation}\label{eq: CF assumption from Blundell and Powell}
    u|X,Z\sim u|X,V\sim u|V\, ,
\end{equation}
where $\sim$ stands for equality in distribution, and $V$ is as in (\ref{eq: CF assumption for single index}). Moreover, they assume that $g_1(X)=\theta_0^T X$ for a parameter $\theta_0$. The above display,  (\ref{eq: binary choice variable}), and the parametric structure of $g_1$ imply that $\mathbb E \{Y|X,Z\}=\mathbb E \{Y|X,V\}$ and that 
\begin{align}\label{eq: implication from Blundell and Powell}
    \mathbb E \{Y|X,V\}=\mathbb E \{Y|\theta^T_0 X , V\}\, .
\end{align}
The equation $\mathbb E \{Y|X,Z\}=\mathbb E \{Y|X,V\}$ is a direct consequence of (\ref{eq: binary choice variable}) and the conditional independence restrictions in (\ref{eq: CF assumption from Blundell and Powell}). It does not need the single index structure of $g_1$. Hence, it can be justified by economic arguments. Differently, the parametric restriction on $g_1$ implies Equation (\ref{eq: implication from Blundell and Powell}). 
Such a restriction is introduced for convenience (i.e. to limit the curse of dimensionality) and cannot be justified by economic arguments. Our test can be used to check the validity of such a restriction. In particular, this restriction is a particular case of (\ref{eq: null hypothesis}), with $\nu(X,H(Z))=(X,D-H(Z))$, $q(\theta_0,X,H(Z))=(\theta_0^T X,D-H(Z))$, and  $H(Z)=\mathbb E \{D|Z\}$ as the generated variable . \\

\noindent \textbf{Separable single index models with endogenous regressors}. Let $Y$ be a continuous variable, $X=(D,Z_1^T)^T$, and $Z=(Z_1^T,Z_2^T)^T$. \citet{newey_nonparametric_1999} consider the model 
\begin{align}\label{eq: implication from Newey, Powell, and Vella}
    Y=&G_0(X^T \beta_0)+\epsilon \nonumber \\
    \text{ with }& \mathbb E \{\epsilon |X,Z\}= \mathbb E \{\epsilon | V\}
\end{align}
and $V=D-\mathbb E \{D|Z\}$. Here $\epsilon$ is an unobserved error term, $G_0$ is a nonparametric function, $\beta_0$ is a vector of parameters of interest, and $D$ is an endogenous variable possibly correlated with $\epsilon$. The second part of the previous display is a mean-independence condition allowing to control for the endogeneity of $D$. Identification of $G_0$ and $\beta_0$ is discussed in \citet{newey_nonparametric_1999}. Similarly to the previous subsection, this model implies that $\mathbb E \{Y|X,Z\}=\mathbb E \{Y|X,V\}$ and that $\mathbb E \{Y|X,V\}=\mathbb E \{Y|X^T\beta_0,V\}$. The equation $\mathbb E \{Y|X,Z\}=\mathbb E \{Y|X,V\}$ does not depend on the index restriction on $G_0$ and is a direct consequence of the mean independence condition in (\ref{eq: implication from Newey, Powell, and Vella}). Hence, it can be justified by economic arguments. Differently, the equation $\mathbb E \{Y|X,V\}=\mathbb E \{Y|X^T\beta_0,V\}$ is a parametric restriction that can be checked by using our test, with $H(Z)=\mathbb E \{D|Z\}$ as the generated variable, $\nu(X,H(Z))=(X,D-H(Z))$, and $q(\beta_0,X,H(Z))=(\beta_0^T X,D-H(Z))$.\\

\noindent \textbf{Models with sample selection}. \citet{escanciano_identification_2016} consider a semiparametric model with sample selection and possibly truncation on the response variable. Let $\widetilde Y \in \mathbb R$ be a scalar random variable denoting an agent's decision, $(X,Z)$ be a vector of covariates, $D\in \{0,1\}$ be a selection variable, and $(\epsilon,u)$ be unobserved error terms. The model is 
\begin{align}\label{eq: sample selection model}
    \widetilde Y = \psi(g(X), \epsilon)\, , \, D=\mathbb I \{H(Z)\geq u\}\, ,\, (\epsilon,u)\perp (X,Z)\, ,
\end{align}
where $\psi$ is a known function, $g(X)=\beta_0^T X$, but $H$ is nonparametric. Without loss of generality we can assume $u\sim U[0,1]$, so $H$ is identified as $H(Z)=\mathbb E \{D|Z\}$ and denotes the propensity score. Because of selection, only $Y=\widetilde Y D$ is observed. Let us denote with $F$ the joint distribution of $(\epsilon,u)$. With $F$ nonparametric, this setup is a semiparametric generalization of Heckman's sample selection model. Putting $\psi(g(X) ,\epsilon)=\mathbb I \{g(X) \geq \epsilon\}$ gives a binary choice model with sample selection. If instead $\psi(g(X) ,\epsilon)=\max \{0,g(X) + \epsilon\}$ we get a truncated regression model with sample selection. This model is also known as double hurdle model, see \citet{escanciano_identification_2016} and \citet{cragg_statistical_1971}. By the independence between $(\epsilon,u)$ and $(X,Z)$ we have $\mathbb E \{Y|X,Z\}=\int \psi(g(X), \epsilon)\, $ $\mathbb I \{H(Z)\geq u\} d\,F(u,\epsilon)$. Hence, the model implies $\mathbb E \{Y|X,Z\}= \mathbb E \{Y|X,H(Z)\}$ and 
\begin{equation*}
    \mathbb E \{Y|X,H(Z)\}=\mathbb E \{Y|\beta_0^T X,H(Z)\}\, .
\end{equation*}
The equation  $\mathbb E \{Y|X,Z\}= \mathbb E \{Y|X,H(Z)\}$ is obtained from (\ref{eq: sample selection model}) without assuming the linear structure of $g$. Differently, the equation in the above display comes from the parametric restriction on $g$. Such an equation can be checked by using our test. In terms of the notation of Section \ref{sec: The Test}, our test can be applied to this framework by putting $\nu(X,H(Z))=(X,H(Z))$,  $q(\beta_0,X,H(Z))=(\beta_0^T X, H(Z))$, and setting the generated variable as $H(Z)=\mathbb E \{D|Z\}$.\\

\noindent \textbf{Semiparametric games with incomplete information}. \citet{aradillas-lopez_pairwise-difference_2012} and \citet{lewbel_identification_2015} study identification and estimation in binary games with incomplete information and linear payoffs. For simplicity, let us consider two players indexed by $j\in \{1,2\}$. Each must take a binary decision, say $a_j\in \{0,1\}$. Let $X_j$ be the vector of observed covariates entering agent $j$'s payoff function and let $Z=(X_1^T,X_2^T)^T$. Each player has a private information $u_j$ that neither the other player nor the researcher can observe. It is assumed that $(u_1,u_2)\perp Z$ and $u_1\perp u_2$. The payoff function of player $j$ is 
\begin{equation*}
    a_j\, [g_j(X_j, a_{-j})-u_j]\, .
\end{equation*}
with $g_j(X_j, a_{-j})=\gamma_j^T X_j + \alpha_j a_{-j} $.
We let $H_j(Z)=\mathbb E \{a_j|Z\}$. Under the hypothesis that a unique Bayesian-Nash equilibrium is played, the model implies that $\mathbb E \{a_j |X_1,X_2\}=\mathbb E \{a_j|X_j,$ $H_{-j}(Z)\}$ and that 
\begin{equation*}
    \mathbb E \{a_j|X_j,H_{-j}(Z)\}=\mathbb E \{a_j|\gamma_j^T X_j+ \alpha_j\,H_{-j}(Z)\}
\end{equation*}
for $j=1,2$. The equation $\mathbb E \{a_j |X_1,X_2\}=\mathbb E \{a_j|X_j,H_{-j}(Z)\}$ follows directly from the assumption that a unique Bayesian-Nash equilibrium is played and does not need the linearity of the profit function $g_j$. This can be justified by economic arguments. Differently, the  equation in the previous display follows from a parametric restriction on $g_j$. Our test can be used to check such a condition. In terms of the notation of Section \ref{sec: The Test}, our test can be applied to this model with $\nu(X,H(Z))=(X_j,H_{-j}(Z))$  $q((\gamma_j,\alpha_j), X_j,H_{-j}(Z))=\gamma_j^T X _j + \alpha_j \, H_{-j}(Z)$, $D=a_{-j}$, and $H_{-j}(Z)=\mathbb{E}\{a_{-j}|Z\}$ as a generated variable.

\section{Empirical Implementation and Applications}\label{sec: Empirical Implementation and Applications}
In the first part of this section we show how to implement our test in practice. In the second part, we study its behavior in small samples. Finally, in the third part we apply our test to a real data example. 

\subsection{Implementation of the test}\label{sec: Implementation of the test}
To show the practical implementation of our test, we provide the computational details for the three steps procedure described in Section \ref{subsec: The Test Statistic}. We suggest to prior trim the 1\% more extreme  the observations on $X,Z$.\footnote{ Specifically, we trim those observations falling beyond the  99\% quantile of the empirical distribution of $(|X|,|Z|)$. } Let us now describe the first step. The bandwidth for $\widetilde H $ is set according to the Silverman's rule of thumb, so $h_H=\widehat {\sigma}_Z\, n^{1/(2+2\,r_H)}$ with $\widehat \sigma _Z$ denoting the estimated standard deviations of the components of $Z$. The kernel $K_H$ is set to a second order Gaussian kernel. Then, we compute $\widetilde H $ as in (\ref{eq: definition of Htilde}). \\
For the second step of Section \ref{subsec: The Test Statistic}, we need to compute the estimator $\widehat \beta$ by minimizing the objective function in (\ref{eq: definition of betahat}). Notice that the estimator $\widehat G_{\widetilde W (\beta)}$ in (\ref{eq: definition of betahat}) depends on both $\beta$ and the bandwidth $h$. We follow the standard approach in semiparametric estimation, see \citet{delecroix_semiparametric_2006}, \citet{rothe_semiparametric_2009}, \citet{escanciano_identification_2016}, \citet{maistre_nonparametric_2018}, and in practice we choose $\widehat \beta$ by solving the joint minimization problem  
\begin{equation}\label{eq: leave one out SLS}
    (\widehat \beta , \overline h)=\arg \min_{\beta,h}\sum_{i=1}^n[Y_i - \widehat G _{\widetilde W (\beta)}^{-i}(\widetilde W _i(\beta))]^2\, \widehat t _i
\end{equation}
where $\widehat G_{\widetilde W (\beta)}^{-i}$ denotes the leave-$i$-out version of $\widehat G_{\widetilde W (\beta)}$ defined as\footnote{For notational convenience we are dropping the dependence of $\widehat G _{\widetilde W (\beta)}^{-i}$ from $h$. } 
\begin{equation}\label{eq: leave i out version of Ghat Wtilde beta}
    \widehat G _{\widetilde W (\beta)}^{-i}(w):=\frac{\sum_{j\neq i} Y_j K\left(\frac{w-\widetilde W _j(\beta)}{h}\right)}{\sum_{j\neq i}  K\left(\frac{w-\widetilde W _j(\beta)}{h}\right)}\, .
\end{equation}
The $\widehat \beta$ from (\ref{eq: leave one out SLS}) is the estimator of $\beta_0$. 
\\
Let us now describe the computational details of the third step of Section \ref{subsec: The Test Statistic}. To compute the statistic in (\ref{eq: feasible test statistic}) we need to select (i) a weighting function $\varphi_s$, a measure $\mu$ , and a set $\cal S$, and (ii)  a bandwidth $h$ and a kernel $K$ for $\widetilde G_{\widetilde W (\widehat \beta)}$ and $\widehat f _{\widetilde W (\widehat \beta)}$. We select $\varphi_s$, $\mu$, and $\cal S$ to make the computation of the integral in (\ref{eq: feasible test statistic}) fast and simple, see below for details. So, we set $\varphi_s(\nu(X,H(Z)))=\exp(\,\sqrt{-1}\,s^T\, \nu(X,H(Z))\,)$, $\mu$ to the standard multivariate Gaussian density, and $\mathcal {S}=\mathbb{R}^{\dim\nu(X,H(Z))}$. We set the kernel $K$ to the second order Gaussian kernel. Finally, to choose the bandwidth $h$ we minimize the distance between the semiparametric model and the null hypothesis, so as to compute $S_n$ in an ``optimistic" way. To this end, let us define the leave-$i$-out version of $\widetilde G_{\widetilde W (\widehat \beta)}$ as follows
\begin{align*}
    \widetilde G ^{-i}_{\widetilde W (\widehat \beta)}(w):=\widehat G ^{-i}_{\widetilde W (\widehat \beta)}(w)+\widehat T ^{\widehat \varepsilon,-i}_{\widetilde W (\widehat \beta)}(w)\, / \, \widehat f^{-i}_{\widetilde W (\widehat \beta)}(w)\, ,
\end{align*}
where $\widehat G ^{-i}_{\widetilde W (\widehat \beta)}$ is defined in (\ref{eq: leave i out version of Ghat Wtilde beta}), $\widehat f^{-i}_{\widetilde W (\widehat \beta)}$ is the leave-$i$-out version of $\widehat f_{\widetilde W (\widehat \beta)}$ (defined similarly as in (\ref{eq: leave i out version of Ghat Wtilde beta})), and 
\begin{equation*}
    \widehat T ^{\widehat \varepsilon,-i}_{\widetilde W (\widehat \beta)}(w):=\frac{1}{n h^d}\sum_{j\neq i} [Y_j - \widehat G^{-j}_{\widetilde W (\widehat \beta)}(\widetilde W _j(\widehat \beta))]\, \widehat t_j\, K\left(\frac{w-\widetilde W _j(\widehat \beta)}{h}\right)\, .
\end{equation*}
Then, the bandwidth $h$ for testing is selected as follows
\begin{align}\label{eq: bandwidth choice}
    \widehat h = & \arg \min_h \int_{ } \Big|\frac{1}{n}\sum_{i=1}^n [Y_i-\widetilde G_{\widetilde W (\widehat \beta)}^{-i}(\widetilde W _i(\widehat \beta))]\, \, \varphi_s(X_i,\widetilde H (Z_i))\, \widehat t_i\,\Big|^2 d\,\mu(s)\, .
\end{align}
The intuition is simple: we are choosing the bandwidth for testing that makes the model ``close" to $\mathcal H _0$, where the distance is measured by a leave-one-out version of the Cramer-Von Mises metric. The procedure can be seen as an adaptation of the Cross-validation principle to our testing problem: while the Cross Validation chooses the bandwidth minimizing the distance between the leave-one-out residuals and zero, in our testing problem we choose the bandwidth to minimize the distance between the semiparametric model and the null hypothesis. This distance is measured by a leave-one-out version of the Cramer-Von Mises functional. As we explain below, this way of choosing the bandwidth allows us to compare an ``optimistic" statistic computed with the sample data to an ``optimistic" statistic computed with the bootstrap data, see below for details.\footnote{As an alternative, we could choose the bandwidth to maximize the value of the statistic in the hope to increase the power. This approach is suggested in \citet{escanciano_uniform_2014} for an asymptotically pivotal test statistic. However, such a procedure would appear to be tricky in our case, since (i) our statistic is not asymptotically pivotal and (ii) we also have to choose the bandwidth in the bootstrap sample.}\\
Thanks to the choice of $\varphi_s$, $\mu$, and $\mathcal{S}$, the integral in the optimization problem (\ref{eq: bandwidth choice}) has a simple closed form expression given by 
\begin{align*}
    \frac{1}{n^2}\sum_{i,j}\left[Y_i-\widetilde G_{\widetilde W (\widehat \beta)}^{-i}(\widetilde W _i(\widehat \beta))\right]\widehat t _i\,&\left[Y_j-\widetilde G_{\widetilde W (\widehat \beta)}^{-j}(\widetilde W _j(\widehat \beta))\right]\widehat t_j\,\\
    & \cdot\, \exp (-\|\nu(X_i,\widetilde H (Z_i))-\nu(X_j,\widetilde H (Z_j))\|^2/2)\, ,
\end{align*}
where $\exp(-\|\cdot\|^2/2)$ is the characteristic function of the standard multivariate normal. Since the above double sum depends non linearly on $h$, the optimization in (\ref{eq: bandwidth choice}) must be carried out numerically.\\
Once $\widehat h$ has been obtained from the minimization in (\ref{eq: bandwidth choice}), we can use it to compute the third step estimators $\widetilde{G}_{\widetilde W (\widehat \beta)}$ and $\widehat f _{\widetilde{W}(\widehat \beta)}$ as in Section \ref{subsec: The Test Statistic}. Thus, given the residuals  $\widetilde \varepsilon_i =Y_i - \widetilde G_{\widetilde W (\widehat \beta)}(\widetilde W_i (\widehat \beta))$ with $i=1,\ldots,n$,
we get the test statistic:
\begin{align*}
    S_n=& \int_{ }\Big|\frac{1}{\sqrt{n}}\sum_{i=1}^n \widetilde \varepsilon _i \widehat f_{\widetilde W (\widehat \beta)}(\widetilde W_i (\widehat \beta))\, \widehat t _i \varphi_s(X_i,\widetilde H (Z_i))\Big|^2 d\, \mu(s)\\
    =& \frac{1}{n}\sum_{i,j} \widetilde \varepsilon _i \widehat f_{\widetilde W (\widehat \beta)}(\widetilde W_i (\widehat \beta))\,\widehat t _i\, \widetilde \varepsilon _j \widehat f_{\widetilde W (\widehat \beta)}(\widetilde W_j (\widehat \beta))\, \widehat t _j\, \exp(-\|\nu(X_i,\widetilde H (Z_i))-\nu(X_j,\widetilde H (Z_j))\|^2/2)\, .
\end{align*}
Let us now detail the computations for the bootstrapped statistic $S_n^*$ described in Section \ref{sec: The Bootstrap Test}. First, to generate bootstrap observations from (\ref{eq: bootstrap resampling scheme}) we need to construct the estimates $\widehat H$ and $\widehat G_{\widehat W (\widehat \beta)}$, and we need to define a distribution for the bootstrap weights $\{\xi_i\,:\,i=1,\ldots,n\,\}$. The estimators $\widehat H$ and $\widehat G_{\widehat W (\widehat \beta)}$ entering the bootstrap DGP in (\ref{eq: bootstrap resampling scheme}) are computed by using second order Gaussian kernels and the bandwidths $(h_H,\widehat h)$ obtained from the sample data. We suggest to generate the bootstrap weights from a two-points Rademacher distribution: $\Pr(\xi=1)=\Pr(\xi=-1)=1/2$. The choice of this distribution for the bootstrap weights is motivated by its good performance in other contexts, see \citet{davidson_wild_2008}, \citet{djogbenou_asymptotic_2019}.\footnote{ I thank an anonymous referee for having suggested the use of these weights. }
Then, given the bootstrap sample $\{Y_i^*,D_i^{*},X_i,Z_i\, :\, i=1,\ldots,n\}$ obtained from (\ref{eq: bootstrap resampling scheme}), we use {\itshape the same procedure as in the sample} to compute the bootstrap version of the test statistic, $S_n^*$. Notice that for each bootstrap sample/iteration we will have to minimize the bootstrap version of (\ref{eq: leave one out SLS}) to get $\widehat \beta^*$ and the bootstrap version of (\ref{eq: bandwidth choice}) to get $\widehat h ^*$.  Finally, the critical values are obtained by the Monte Carlo procedure described in Section \ref{sec: The Bootstrap Test}. \\
At this point, it is worth to spend some words on the bandwidth selection procedure we have proposed. As previously  highlighted, $\widehat h$ is selected to minimize the distance between the model and $\mathcal H _0$. This gives an ``optimistic" value for the sample statistic $S_n$. However, also in the ``bootstrap world" the bandwidth $h$ is chosen to minimize the distance between $\mathcal H _0$ and the semiparametric model. Thus, also in the bootstrap world the statistic $S_n^*$ is computed with an optimistic view. Now, the optimistic view of $S_n$ will reflect the reality only when the null is true. Differently, the optimistic view of  $S_n^*$ will {\itshape always} reflect the reality in the bootstrap world: since we are imposing the null $\mathcal H _0$ in the resampling scheme, the null is always satisfied in the bootstrap world. This implies that when $\mathcal H _0$ holds, the behavior of $S_n$ will reflect the behavior of $S_n^*$, allowing to control the size of the test. However, when the null hypothesis $\mathcal{H}_0$ does not hold, $S_n$ will be too large as compared to the distribution of $S_n^*$, and the test will reject $\mathcal H _0$.

\subsection{Small Sample Behavior}\label{sec: small sample behavior}
In this section we study the small sample performances of our test in a Monte Carlo experiment. 
We specify a Data Generating Process (DGP) in line with the binary choice model with a control function described in Section \ref{sec: Main Applications}, and in particular in Equations (\ref{eq: binary choice variable}) and (\ref{eq: CF assumption for single index}). So, 
\begin{align}\label{eq: DGP I}
    Y =& \mathbb{I}\{D + Z^{in}\geq u\}\,\quad \, \quad u=\epsilon+ V + \frac{p}{4}(V^2 - 1)\nonumber \\
    D =& H(Z^{in},Z^{ex}) + V\,\quad \, \quad  H(Z^{in},Z^{ex})=(1,Z^{in} , Z^{ex})\alpha \,.
\end{align}
 $(Z^{in}, Z^{ex} , V,\epsilon)$ are mutually independent. The error terms $(u,V,\epsilon)$ are not observed. $u$ is correlated with the endogenous regressor $D$ through $V$, so $V$ plays the role of a control function. The regressors $Z^{in}$ and $Z^{ex}$ are exogenous: the former is included in the equation for $Y$, while the latter is an excluded variable playing the role of an instruments. The control variable $V$ is unobserved but can be estimated by estimating $H$. The functional form of $H$ is unknown to the researcher, so $H$ must be estimated nonparametrically. \\
 We set $V\sim \mathcal{N}(0,1)$, $\alpha=(1,1,1)/\sqrt{2}$, $Z^{in}\sim \mathcal{N}(0,1)$,  and we draw $Z^{ex}$ from an exponential distribution truncated from above at 3 and standardized to have zero mean and unit variance. The parameter $p$ controls the departure from the null hypothesis, see below for details. \\
To check the robustness of our test  with respect to different DGPs, we consider three different specifications for the distribution of $\epsilon$:
\begin{enumerate}[leftmargin=4em]
    \item[DGP 1)] $\epsilon\sim \mathcal{N}(0,sd=\sqrt{7})$
    \item[DGP 2)] $\epsilon\sim \sqrt{7/(2\cdot 5)}\, (\chi_{5}^2 - 5)$
    \item[DGP 3)] $\epsilon\sim 0.8\cdot \mathcal{N}(-2.5 , sd=\sqrt{3.5})+0.2\cdot \mathcal{N}(2.5,sd=1)$ .
\end{enumerate}
 DGP 1 delivers a rescaled probit model with a distribution of $\epsilon$ that is unimodal and symmetric about zero. DGP 2 gives a unimodal distribution of $\epsilon$ with positive asymmetry and left skewness. Finally,  DGP 3 generates $\epsilon$ according to a mixture between two Gaussians, delivering a distribution of $\epsilon$  that is bimodal and left-skewed. The three distributions for $\epsilon$ and (\ref{eq: DGP I}) are built so as to guarantee  realistic features of the simulated data. Across  these DGPs it holds that $\Var (u)\approx 8$ , Corr$(u,V)\approx 0.35$ , Corr$(u,D)\approx 0.25$ , and $\Var (D + Z^{in})\approx 4.5$. \\
 We assume that the researcher specifies the model as 
 \begin{equation*}
     Y = \mathbb {I}\{D + \theta_0 Z^{in} +  \geq u\}\,,\,\text{ with }u=-\gamma_0 V + \epsilon\, , 
 \end{equation*}
 $V=D-\mathbb{E}\{D|Z\}$, and $\epsilon$ independent from all the other variables.
 The distribution of $(\epsilon,V)$ is nonparametric. Similarly to Section \ref{sec: Main Applications}, to check the correct specification of this model the researcher needs to test 
 \begin{equation}\label{eq: H0 in simulations}
     \mathcal{H}_0 : \mathbb{E}\{Y|X,D - H(Z)\}=\mathbb{E}\{Y|D + \theta_0 Z^{in} + \gamma_0 (D - H(Z))\}\, \text{ versus }\mathcal{H}_1:\mathcal{H}_0^c
 \end{equation}
 where $X=(D,Z^{in})$. Given the DGP in (\ref{eq: DGP I}), the null hypothesis holds if $p=0$. As long as  $p\neq 0 $ we are under the alternative $\mathcal{H}_1$. The magnitude of $p$ measures the departure from the null hypothesis. \\
 
 In the reminder of this section our goal is threefold. First, we want to analyze the practical advantages of the bias corrections in our statistic guaranteeing the small-bias properties of our test. Second, we want to check the capacity of our test to correctly control the size under $\mathcal H _0$. Third, we want to analyze its power properties. \\
 Let us start from the first task. To check the advantages of using bias corrections and having a small bias property of our statistic, we will compare the performances of two tests under different bandwidth choices. The first test is the one proposed in this paper that uses the {\itshape bias corrected} estimators $\widetilde H$ and $\widetilde G _{\widetilde W (\widehat \beta)}$. We will call it {\itshape BC Test}. The second test employs the procedure described in this paper, with the only exception that it does {\itshape not} use the bias corrected estimators, but it employs the {\itshape uncorrected} estimators $\widehat H$ and $\widehat G _{\widehat W (\widehat \beta)}$. We will call this test {\itshape UN Test}. We  compare the ability of these two tests to control the size under different bandwidth choices. So, in the third step estimation we will use the bandwidth rule $h=\,C\, \widehat \sigma(D + \widehat \theta Z^{in} + \widehat \gamma (D-\widetilde H (Z))) n^{-1/6}$, where $\widehat \sigma$ is the estimated standard deviation of its argument and $C$ is a constant. This allows us to study the stability of each test under different bandwidth choices by letting the constant $C$ vary.\footnote{We stress  that for this comparison the bandwidth $h$ in the third step is not computed as in (\ref{eq: bandwidth choice}), but according to the rule $h=C\, \widehat \sigma(D + \widehat \theta Z^{in} + \widehat \gamma (D-\widetilde H (Z)))\, n^{-1/6}$.} We have ran 10000 simulations under the null $\mathcal H _0$, i.e. by setting $p=0$, for $n=400$. Since the procedure is intense from a computational point of view, see Section \ref{sec: Implementation of the test}, we used the Warp-Speed method proposed by \citet{davidson_improving_2007} and studied in \citet{giacomini_warp-speed_2013}. This consists in drawing one bootstrap sample for each Monte Carlo simulation, and then use the entire set of bootstrapped statistics to compute the bootstrap p-values associated to each original statistic. We let the constant $C$ vary in the range of values $\{0.5,1,1.5,2,2.5\}$. For each of these values and for each simulated sample, the {\itshape BC Test} and the {\itshape UN Test} give two respective p-values and hence two respective decisions about the rejection of the null $\mathcal H _0$. The results are reported in Figures \ref{Fig 1} and \ref{Fig 2} for DGP 1. For a reason of space,  we omit the results for DGP 2 and DGP 3 since they are qualitatively similar to DGP 1. For each value of $C$, Figures \ref{Fig 1} and \ref{Fig 2}  contain the Error in Rejection Probability (ERP) of each test: this is the difference between the empirical rejection frequency and the nominal size of the test under the null $\mathcal H _0$. An ideal test would display a null ERP for each nominal size and for each value of $C$. This comparison allows us to evaluate (i) the ability of each test in controlling the size under different bandwidth choices, and (ii) the importance of employing bias corrections for estimating the null distribution of the statistic by wild bootstrapping. In other words, this gives us a measure of the stability/robustness of each bootstrap test with respect to the bandwidth. For $C=0.5$ up to $C=1$ both the {\itshape BC Test} and the {\itshape UN Test} display a good empirical size control for any nominal size. For $C=1.5$ and $C=2$ the size control starts deteriorating for the {\itshape UN Test} but remains quite stable for the {\itshape BC Test} at any nominal size. This means that the bootstrap does {\itshape not} manage to provide a good estimation of the null distribution of the statistic when bias corrections are {\itshape not} employed. This feature becomes even more clear for $C=2.5$: in this case the {\itshape UN Test} has a large ERP, while the {\itshape BC Test} shows a good size control for nominal levels up to 10 percent. These results show that employing bias corrections  enables  the bootstrap test to be stable with respect to the bandwidth choice. Also, these bias corrections allow for a satisfactory approximation of the null distribution of the statistic by our wild bootstrap procedure. Thus, such corrections do not only have the attractive theoretical feature of guaranteeing a small bias property of the test statistic, but they also have remarkable practical advantages.\\
 Let us now turn to our second and third task. Here, we analyze the performance of our test in terms of size control and power properties when the third step bandwidth $h$ is selected as in  (\ref{eq: bandwidth choice}). We label such a test as {\itshape BC Test ($\widehat h$)}. We compare the performance of this test with the {\itshape BC Test}  described previously in this subsection and applied with $C=1$. This bandwidth rule corresponds to the Silverman's Rule of Thumb. We label this test as {\itshape BC Test($C=1$)}. We ran 10000 Monte Carlo replications for $n=400$ and $n=800$. Also in this  case the procedure is intense from a computational point of view: we need to run a nonlinear optimization for obtaining $\widehat \beta$ and a nonlinear optimization for obtaining $\widehat h$, for both the sample and the bootstrap data. So, to speed up computations we use the warp speed method previously described. We compute the empirical rejection frequencies for the {\itshape BC Test ($\widehat h$)} and the {\itshape BC Test($C=1$)} under $\mathcal H _0$, i.e. by setting $p=0$. This allows us to evaluate the performance of the test in terms of size control for each sample size. To evaluate the performance  of the test in terms of power, we use $p=3$ and $p=4$. Indeed, the larger $p$ the more we depart from the null hypothesis. The results are reported in Tables \ref{table: DGP1}, \ref{table: DGP2}, and \ref{table: DGP3} for the nominal sizes of 5\% and 10\%. We notice that  the {\itshape BC Test ($\widehat h$)} shows a good size control under $\mathcal H _0$ across the different DGPs considered, also for the limited sample size of $n=400$. The test also displays a satisfactory power across the different DGPs. For a given sample size, the power of the test increases as we depart from the null hypothesis, i.e. when $p$ increases. Similarly, for a given level of $p$, the power of the test increases as the sample size grows. This simulation experiment shows that the bootstrap {\itshape BC Test ($\widehat h$)} based on the empirical bandwidth selection in (\ref{eq: bandwidth choice}) manages to control well the size under $\mathcal H _0$ and has a satisfactory power.

\begin{table}[ht]\label{table: DGP1}
\begin{center}
\begin{tabular}{rrrrrrrl}
\hline
& & \multicolumn{2}{c}{$\mathcal{H}_0$} & \multicolumn{2}{c}{$\mathcal{H}_1(p=3)$} & \multicolumn{2}{c}{$\mathcal{H}_1(p=4)$} \\
\hline
    &  & \textbf{0.05} & \textbf{0.10} & \textbf{0.05} & \textbf{0.10} & \textbf{0.05} & \textbf{0.10} \\ 
    \hline
    \\
   $n$=400 & {\itshape BC Test} ($\widehat h$)\hspace{0.85cm} & 0.0518 & 0.0884 & 0.3822 & 0.7886 & 0.7502 & 0.9558 \\ 
   
    & {\itshape BC Test} ($C=1$) & 0.0497 & 0.0868 & 0.3226 & 0.7874 & 0.7208 & 0.9592 \\ 
   \\
   $n$=800 & {\itshape BC Test} ($\widehat h$)\hspace{0.85cm} & 0.0407 & 0.0884 & 0.9833 & 0.9956 & 0.9996 & 1 \\ 
   
    & {\itshape BC Test} ($C=1$) & 0.0416 & 0.0883 & 0.9838 & 0.9944 & 0.9994 & 1 \\ 
   \hline
\end{tabular}
\begin{minipage}{\textwidth}
\linespread{1.1}\selectfont
\footnotesize
\vspace{0.2cm}
Notes: Empirical rejection frequencies of the tests for DGP 1 under $\mathcal{H}_0$ (corresponding to $p=0$ in (\ref{eq: DGP I})) and under the two alternatives (corresponding to $p=3$ and $p=4$ in (\ref{eq: DGP I})). {\itshape BC Test ($\widehat h$)} is the test with bias corrections implemented with the bandwidth from Equation (\ref{eq: bandwidth choice}), while {\itshape BC Test($C=1$)} is the test with bias corrections implemented with the Silverman's rule of thumb bandwidth. The nominal sizes of the tests are in bold. The number of Monte Carlo replication is 10.000.
\vspace{0.2cm}
\end{minipage}
\caption{Empirical rejection frequencies of the tests for DGP 1}
\end{center}
\end{table}

\begin{table}[ht]\label{table: DGP2}
\begin{center}
\begin{tabular}{rrrrrrrrr}
\hline
& & \multicolumn{2}{c}{$\mathcal{H}_0$} & \multicolumn{2}{c}{$\mathcal{H}_1(p=3)$} & \multicolumn{2}{c}{$\mathcal{H}_1(p=4)$} \\
\hline
    &  & \textbf{0.05} & \textbf{0.10} & \textbf{0.05} & \textbf{0.10} & \textbf{0.05} & \textbf{0.10} \\ 
    \hline
    \\
   $n$=400 & {\itshape BC Test} ($\widehat h$)\hspace{0.85cm} & 0.0447 & 0.0893 & 0.5352 & 0.886 & 0.8512 & 0.9814 \\ 
    & {\itshape BC Test} ($C=1$) & 0.043 & 0.0829 & 0.4708 & 0.8894 & 0.8268 & 0.9852 \\ 
   \\
   $n$=800 & {\itshape BC Test} ($\widehat h$)\hspace{0.85cm} & 0.038 & 0.0881 & 0.9956 & 0.9992 & 1 & 1 \\ 
    & {\itshape BC Test} ($C=1$) & 0.0405 & 0.0868 & 0.9974 & 0.999 & 1 & 1 \\ 
   \hline
\end{tabular}
\begin{minipage}{\textwidth}
\linespread{1.1}\selectfont
\footnotesize
\vspace{0.2cm}
Notes: Empirical rejection frequencies of the tests for DGP 2 under $\mathcal{H}_0$ (corresponding to $p=0$ in (\ref{eq: DGP I})) and under the two alternatives (corresponding to $p=3$ and $p=4$ in (\ref{eq: DGP I})). {\itshape BC Test ($\widehat h$)} is the test with bias corrections implemented with the bandwidth from Equation (\ref{eq: bandwidth choice}), while {\itshape BC Test($C=1$)} is the test with bias corrections implemented with the Silverman's rule of thumb bandwidth. The nominal sizes of the tests are in bold. The number of Monte Carlo replication is 10.000.
\vspace{0.2cm}
\end{minipage}
\caption{Empirical rejection frequencies of the tests for DGP 2}
\end{center}
\end{table}

\begin{table}[ht]\label{table: DGP3}
\begin{center}
\begin{tabular}{rrrrrrrrr}
\hline
& & \multicolumn{2}{c}{$\mathcal{H}_0$} & \multicolumn{2}{c}{$\mathcal{H}_1(p=3)$} & \multicolumn{2}{c}{$\mathcal{H}_1(p=4)$} \\
\hline
    &  & \textbf{0.05} & \textbf{0.10} & \textbf{0.05} & \textbf{0.10} & \textbf{0.05} & \textbf{0.10} \\ 
    \hline
    \\
    $n$=400 & {\itshape BC Test} ($\widehat h$)\hspace{0.85cm} & 0.0493 & 0.0858 & 0.3848 & 0.7804 & 0.7034 & 0.9534 \\ 
     & {\itshape BC Test} ($C=1$) & 0.0511 & 0.0819 & 0.2718 & 0.7612 & 0.5914 & 0.9418 \\ 
    \\
   $n$=800 & {\itshape BC Test} ($\widehat h$)\hspace{0.85cm} & 0.0366 & 0.0905 & 0.9798 & 0.996 & 0.9998 & 1 \\ 
    & {\itshape BC Test} ($C=1$) & 0.0372 & 0.0896 & 0.9856 & 0.9968 & 1 & 1 \\ 
   \hline
\end{tabular}
\begin{minipage}{\textwidth}
\linespread{1.1}\selectfont
\footnotesize
\vspace{0.2cm}
Notes: Empirical rejection frequencies of the tests for DGP 3 under $\mathcal{H}_0$ (corresponding to $p=0$ in (\ref{eq: DGP I})) and under the two alternatives (corresponding to $p=3$ and $p=4$ in (\ref{eq: DGP I})). {\itshape BC Test ($\widehat h$)} is the test with bias corrections implemented with the bandwidth from Equation (\ref{eq: bandwidth choice}), while {\itshape BC Test($C=1$)} is the test with bias corrections implemented with the Silverman's rule of thumb bandwidth. The nominal sizes of the tests are in bold. The number of Monte Carlo replication is 10.000.
\vspace{0.1cm}
\end{minipage}
\caption{Empirical rejection frequencies of the tests for DGP 3}
\end{center}
\end{table}

\begin{table}[ht]\label{table: empirical application}
\begin{center}
\begin{tabular}{rr}
  \hline
 $S_n$ & $\widehat c _{0.99}$  \\ 
  \hline
 &  \\ 
  1229 & 110 \\ 
   \hline
\end{tabular}
\begin{minipage}{\textwidth}
\linespread{1.1}\selectfont
\footnotesize
\vspace{0.2cm}
 Notes: The test statistic is $S_n$ and its 99 percent bootstrap quantile is $\widehat c _{0.99}$ for the real-data application. The test performed is the {\itshape BC test}($\widehat h$), where the bandwidths is computed according to Equation (\ref{eq: bandwidth choice}). The sample size is 2947 and the number of bootstrap replications is 999.
 \end{minipage}
\caption{Test statistic and its 99 percent bootstrap quantile for the data application}
\end{center}
\end{table}

\begin{figure}[ht]
\begin{center}
\includegraphics[width=7cm,height=7.5cm]{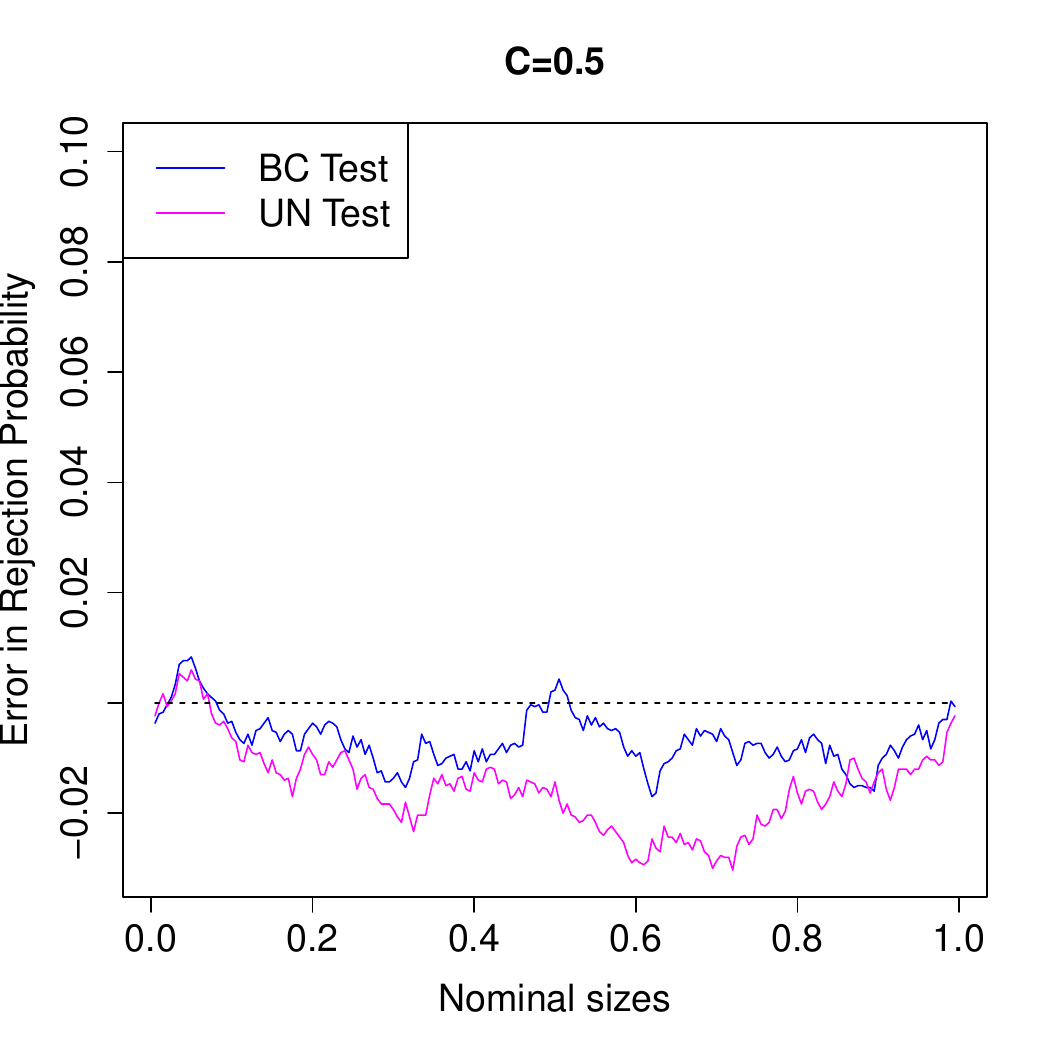}
\includegraphics[width=7cm,height=7.5cm]{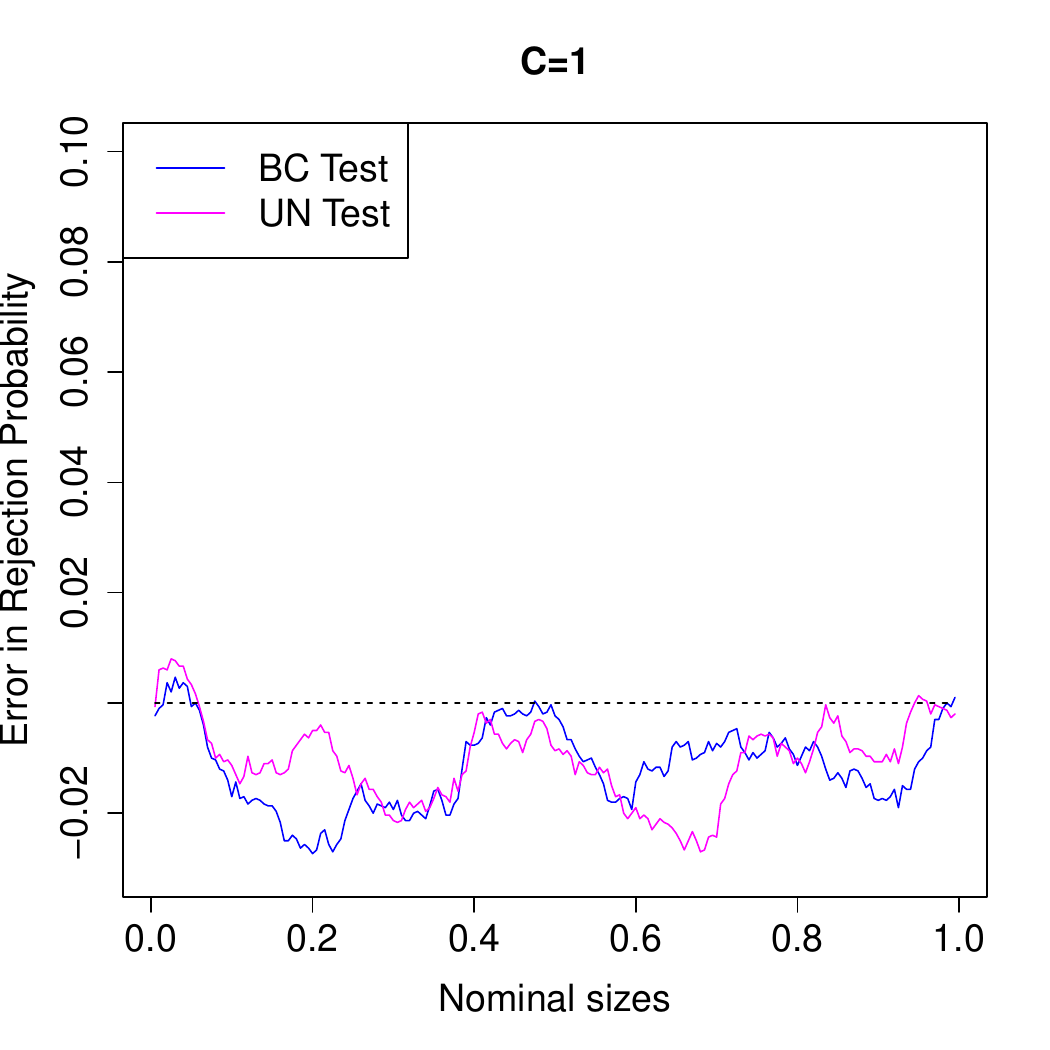}
\includegraphics[width=7cm,height=7.5cm]{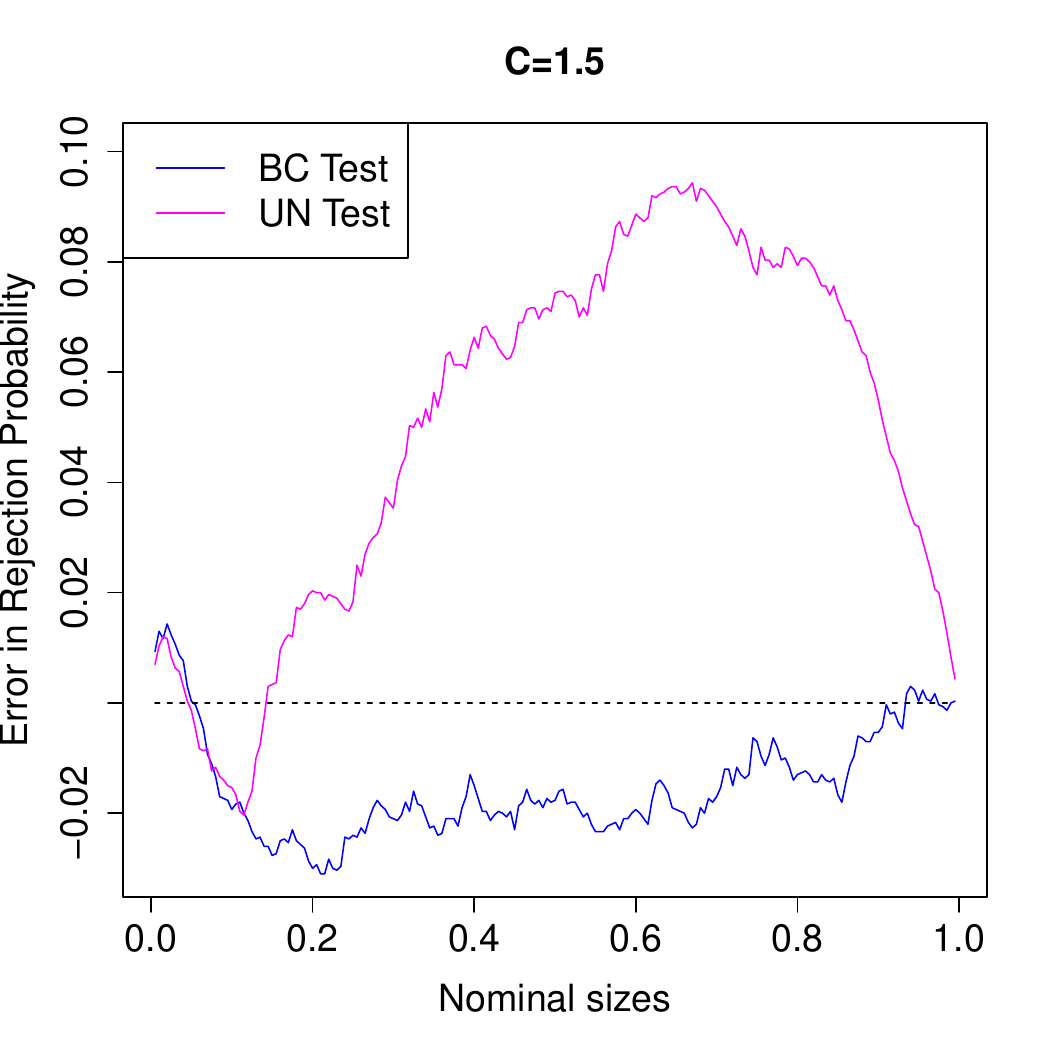}
\includegraphics[width=7cm,height=7.5cm]{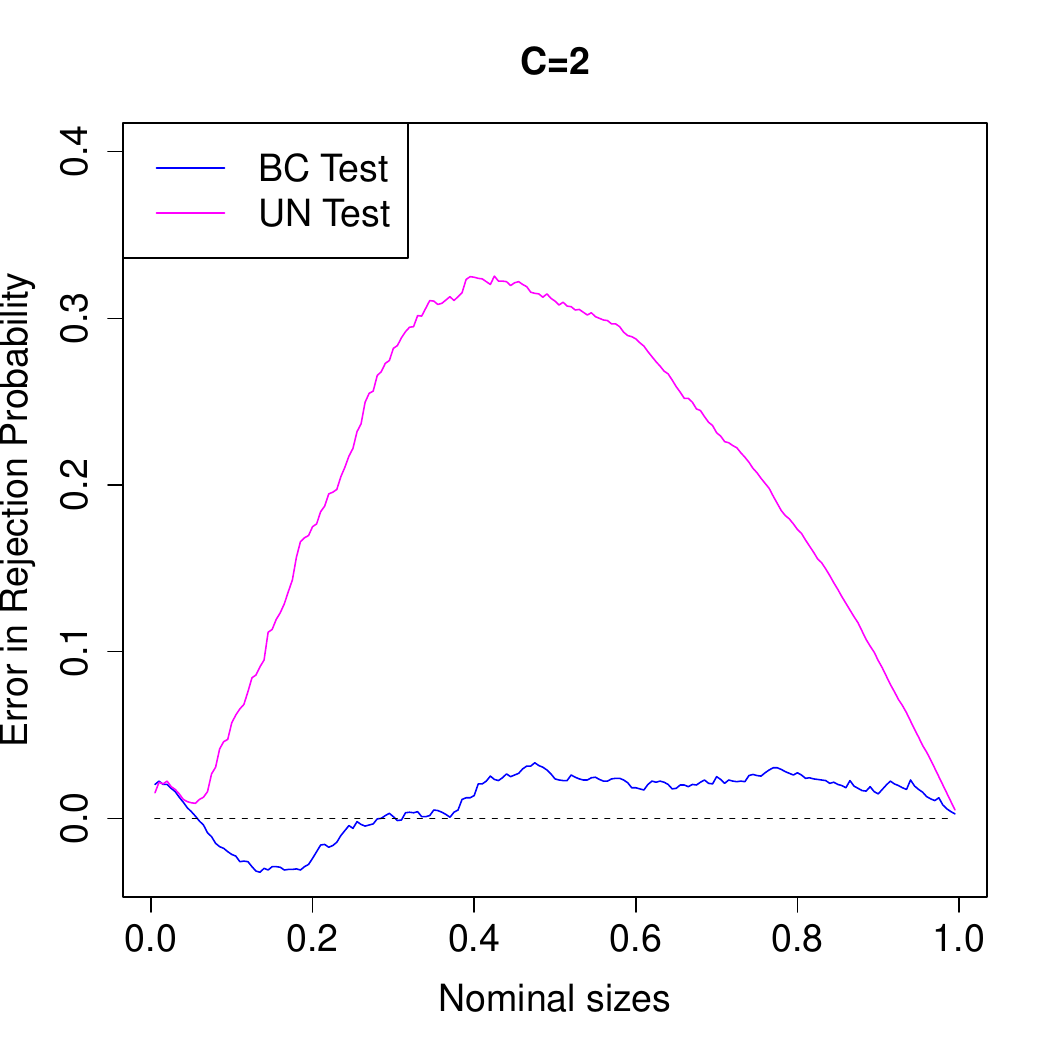}
\begin{minipage}{\textwidth}
\footnotesize
Notes: Error in Rejection Probability (=Empirical rejection frequency of the test - nominal size of the test) as a function of the nominal size for the Bias corrected test (BC Test) and the Uncorrected test (UN Test), each applied with the bandwidth $C\cdot h_S$, where $h_S$ is the Silverman's rule of thumb bandwidth and $C\in\{0.5,1,1.5, 2\}$. The sample size is $n=400$.
\end{minipage}
\caption{\small Errors in Rejection Probabilities for the Bias corrected and the Uncorrected test.}
\end{center}
\label{Fig 1}
\end{figure}

\begin{figure}[ht]
\begin{center}
\includegraphics[width=7cm,height=7.5cm]{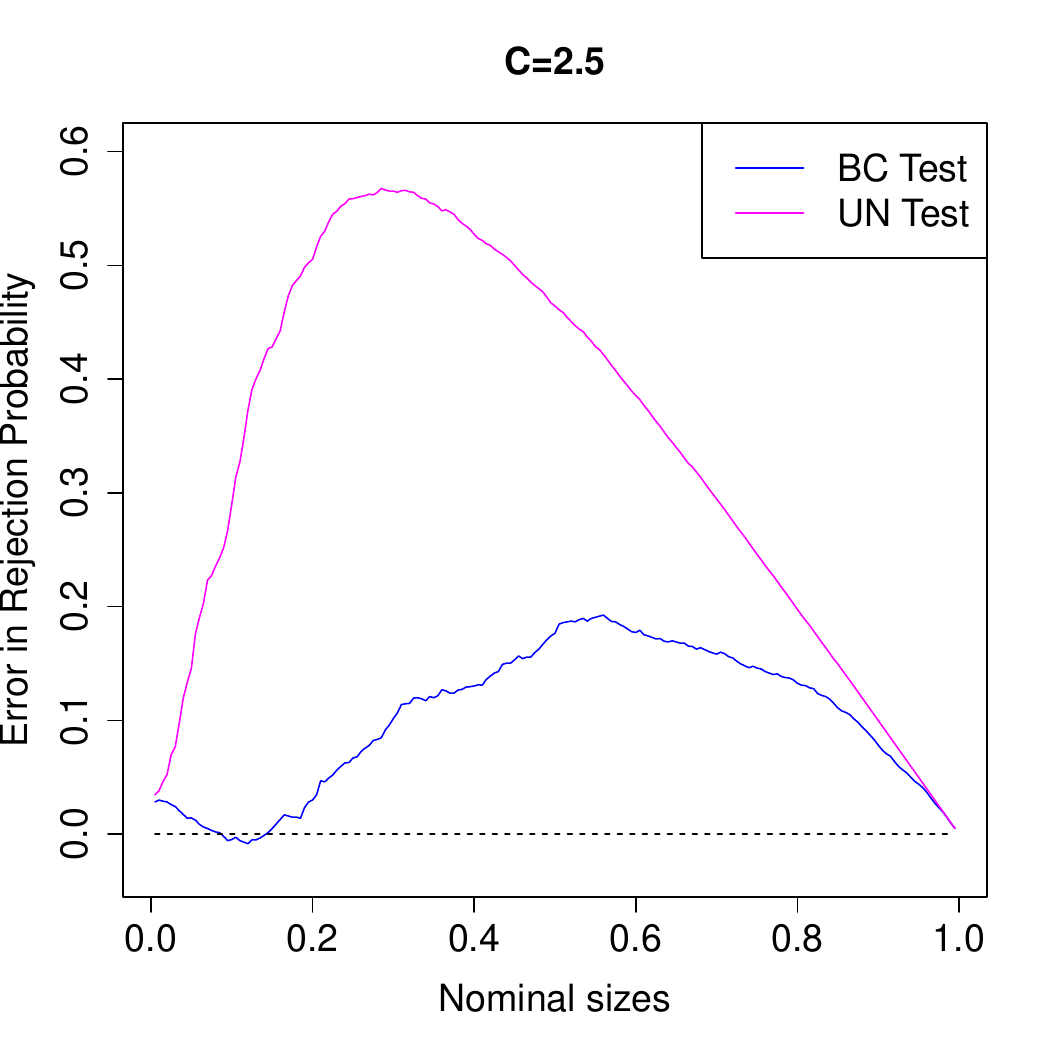}
\begin{minipage}{\textwidth}
\footnotesize
Notes: Error in Rejection Probability (=Empirical rejection frequency of the test - nominal size of the test) as a function of the nominal size for the Bias corrected test (BC Test) and the Uncorrected test (UN Test), each applied with the bandwidth $C\cdot h_S$, where $h_S$ is the Silverman's rule of thumb bandwidth and $C=2.5$. The sample size is $n=400$. 
\end{minipage}
\caption{\small Errors in Rejection Probabilities for the Bias corrected and the Uncorrected test.}
\end{center}
\label{Fig 2}
\end{figure}

\subsection{An Application to Real Data}
We illustrate the practical implementation of our procedure in a real-data example. We test the specification of a semiparametric model describing women's labor force participation, see \citet{wooldridge_control_2015}. Such a model studies the impact of non-labor income on women's labor force participation decisions. The data we use are taken from the 1991 wave of the Current Population Survey.\footnote{I am very grateful to Jeffrey Wooldridge for having shared his data set.} The sample consists of 2,947 women who do not have kids under the age of 6 and have a positive non labor income. We set up the model and the notation similarly as in 
Section \ref{sec: small sample behavior}. The dependent variable $Y$ is an indicator that equals one if the woman participates to the labor force and 0 otherwise. $D$ represents the household's ``other sources of income". The vector of controls $Z^{in}$ includes  the logarithm of the woman's experience and a dummy indicating college or above education for the woman. We treat $D$ (other sources of income) as endogenous. Following \citet{wooldridge_control_2015}, we set $Z^{ex}$ to the husband's level of education. This is a dummy indicating college or above education, and it is used as an instrument to control for the endogeneity of $D$. We normalize to unity the coefficient attached to $D$. \\
We apply our test to check the correct specification of a semiparametric index model similar to the one described in the previous section. Thus, we test the hypothesis in (\ref{eq: H0 in simulations}). To implement the test in this real data example, we apply the steps described in Section \ref{sec: Implementation of the test}. We consider the {\itshape BC Test}($\widehat h$) where the bandwidth is computed as in Equation (\ref{eq: bandwidth choice}). Notice that the nonlinear optimization for the bandwidth has to be done with the sample data and for each bootstrap replication. The number of bootstrap iterations is set to $B=999$. In Table \ref{table: empirical application}, we report the value of the test statistic and the 99\% bootstrap quantile. This is the quantile of the bootstrap distribution of $S_n$. Since the value of the statistic is well above the 99\% bootstrapped quantile, the correct specification of the model must be rejected at the 1\% nominal level. 

\vspace{5cm}
\section*{Acknowledgements}
This is a revised version of a chapter of my PhD thesis. I thank my supervisor Pascal Lavergne and the components of my PhD committee Jean-Pierre Florens, Ingrid Van Keilegom, and Juan Carlos Escanciano for their comments and suggestions. I also thank Xavier d'Haultfoeuille and Jad Beyhum for their comments. Finally, I thank three anonymous referees 
for their comments. 
This research has received financial support from the European Research Council under the European Community's Seventh Program FP7/2007-2013 grant agreement N. 295298.
\section*{Appendix}
\appendix 
This Appendix is divided into two parts. In Appendix \ref{sec:Asymptotic Analysis}, we prove the asymptotic expansion of Proposition \ref{prop: Asymptotic Test}. In Appendix \ref{sec: Bootstrap Analysis}, we prove Proposition \ref{prop: bootstrap test} and hence the validity of our wild bootstrap test. Auxiliary lemmas are gathered in Appendix \ref{sec:Auxiliary-Lemmas} contained in a supplementary material.
\section{Asymptotic Analysis}
\label{sec:Asymptotic Analysis}
We use the notation $\|g\|_{\infty,\mathcal{A}}=\sup_{a\in\mathcal{A}}|g(a)|$ for any function $g$ defined on a set $\cal A$. When the support of the argument of $g$ is clear from the context, we will simply denote $\sup_{a\in\mathcal{A}}|g(a)|$ with $\|g\|_\infty$. We also use the convention that $\widehat G _{\widetilde W (\widehat \beta)}(w)=T^Y_{\widetilde W (\widehat \beta)}(w)/\widehat f_{\widetilde W (\widehat \beta)}(w)$ is set to 0 whenever $\widehat f_{\widetilde W (\widehat \beta)}(w)=0$. The same will hold for all the other quantities involving a random denominator. We use the acronym ``wpa1" to mean ``with probability approaching one", the acronym ``RHS" to mean ``right hand side", and the acronym ``LHS" to mean ``left hand side".

\subsection{Proof of Proposition \ref{prop: Asymptotic Test}}
By Lemma \ref{lem: trimming implications}(ii), 
wpa1 $\widehat t (X_i,Z_i)\leq t^\delta(X_i,Z_i)$ uniformly in $i=1,\ldots,n$, where $\delta=1/2$. Assumption \ref{Assumption: trimming}(i) ensures that there exists $\eta(\delta)>0$ such that wpa1 $t^\delta(X_i,Z_i)\leq $ $\mathbb I \{f_{W(\widehat \beta)}(q(\widehat \beta,X_i,$ $H(Z_i))\geq \eta(\delta)\tau_n\}$ for all $i=1,\ldots,n$. Also, by Lemma \ref{lem: trimming implications}(iv) wpa1 $t^\delta(X_i,Z_i)$ $\mathbb I \{f_{W(\widehat \beta)}(q(\widehat \beta,$ $X_i,H(Z_i))\geq \eta(\delta)\tau_n\}$ $\leq t^\delta (X_i,Z_i)$ $\mathbb I \{f_{W(\widehat \beta)}(q(\widehat \beta,X_i,\widetilde H (Z_i)))\geq \eta(\delta) \tau_n/2\}$ for all $i=1,\ldots,n$, and 
Lemma \ref{lem: trimming implications}(iii) gives that wpa1 $\mathbb I \{f_{W(\widehat \beta)}(q(\widehat \beta,$ $X_i,\widetilde H (Z_i)))\geq \eta(\delta) \tau_n/2\}$ $\leq \mathbb I \{\widehat f_{\widetilde W(\widehat \beta)}(q(\widehat \beta,$ $X_i,\widetilde H (Z_i)))\geq \eta(\delta) \tau_n/4\}$ for all $i=1,\ldots,n$. Gathering results, wpa1 
\begin{equation}\label{eq: trimming inequality for simplification}
    \widehat t (X_i,Z_i)\leq \mathbb I \{\widehat f _{\widetilde W (\widehat \beta)}(\widetilde W _i (\widehat \beta))\geq \eta(\delta)\tau_n /4\}
    \end{equation}
uniformly in $i=1,\ldots,n$. So, by using the definition of $\widetilde{G}_{\widetilde W (\widehat \beta)}$ we have 
\begin{align}\label{eq: 1st decomposition of Asy Emp Process}
    \sqrt{n} \mathbb P _n [Y - \widetilde{G}_{\widetilde W (\widehat \beta)}(\widetilde W (\widehat \beta))]\, \widehat{f}_{\widetilde W (\widehat \beta)}(\widetilde W (\widehat \beta))\, & \varphi_s (X,\widetilde H (Z))\, \widehat t \nonumber  \\
    = \sqrt{n} \mathbb P _n & \widehat{\varepsilon}  \, \widehat{f}_{\widetilde W (\widehat \beta)}(\widetilde W (\widehat \beta))\, \varphi_s (X , \widetilde H (Z))\, \widehat{t} \nonumber \\
     -  \sqrt{n} \mathbb P _n & \widehat{T}^{\widehat{\varepsilon}}_{\widetilde W (\widehat \beta)}(\widetilde W (\widehat \beta))\, \varphi_s(X, \widetilde{H}(Z))\, \widehat{t}\, 
\end{align}
wpa1. Next, we introduce
\begin{align*}
\widehat{T}^{\varphi_s}_{\widetilde W (\widehat \beta)}(w):=&\frac{1}{n h^d}\sum_{i=1}^n \varphi_s(X_i,\widetilde{H}(Z_i))\, \widehat t _i\, K\left(\frac{w-\widetilde{W}_i (\widehat \beta)}{h}\right)\, ,\\
    \widehat{\iota}_s(w):=\widehat{T}^{\varphi_s}_{\widetilde W (\widehat \beta)}(w)/\widehat{f}_{\widetilde W (\widehat \beta)}(w)\, , &\text{ and }\iota_s(w):=\mathbb E \{\varphi_s(X,H(Z))|W(\beta_0)=w\}\, .
\end{align*}
In view of (\ref{eq: trimming inequality for simplification}), we can use the definition of $\widehat{T}^{\widehat{\varepsilon}}_{\widetilde W (\widehat \beta)}$ and exchange sums to get that the second term on the RHS of (\ref{eq: 1st decomposition of Asy Emp Process}) equals $\sqrt{n}\mathbb P _n\, \widehat{\varepsilon}\,\, \widehat{\iota}_s(\widetilde W (\widehat \beta))\, \widehat{f}_{\widetilde W (\widehat \beta)}(\widetilde W (\widehat \beta))\,\widehat{t}$ wpa1. Thus, wpa1 the RHS of (\ref{eq: 1st decomposition of Asy Emp Process}) is
\begin{align}\label{eq: 2nd decomposition of Asy Emp Process}
    \sqrt{n} \mathbb P _n \widehat{\varepsilon}\, \widehat{f}&_{\widetilde W (\widehat \beta)}(\widetilde W (\widehat \beta))\,[\varphi_s(X,\widetilde{H}(Z))-  \widehat{\iota}_s(\widetilde W (\widehat \beta))]\, \widehat{t} \nonumber \\
    = & \sqrt{n} \mathbb P _n \varepsilon \widehat{f}_{\widetilde W (\widehat \beta)}(\widetilde W (\widehat \beta))  \, [\varphi_s(X,\widetilde{H}(Z))-  \widehat{\iota}_s(\widetilde W (\widehat \beta))] \widehat t \\
    & + \sqrt{n} \mathbb{P}_n [G_{W(\beta_0)}(W(\beta_0))  -\widehat{G}_{\widetilde W (\widehat \beta)}(\widetilde W ( \widehat \beta)) ] \widehat{f}_{\widetilde W (\widehat \beta)}(\widetilde W (\widehat \beta)) [\varphi_s(X,\widetilde{H}(Z))-\widehat{\iota}_s(\widetilde W (\widehat \beta))] \widehat{t}\, . \nonumber 
    \end{align}
    To prove the desired result, we will obtain the IFR of the RHS of the above display. We will handle each term separately. Let us start from the first one. By Lemma \ref{lem: trimming and convergence rates for 1st step}(iii) $|\widetilde H(Z_i)|(\widehat t _i + t_i)$ is bounded in probability uniformly in $i=1,\ldots,n$. Thus, $|\varphi_s(X_i,\widetilde{H}(Z_i))|(\widehat{t}_i+t_i)$ is also bounded in probability, uniformly in $s\in\mathcal{S}$ and $i=1,\ldots,n$. Also, Lemma \ref{lem: IFR for betahat} and Lemma \ref{lem: convergence rates for Ghat fhat and iotahat}(xiii)(xv) ensure that $|\widehat f_{\widetilde W (\widehat \beta)}(\widetilde W _i (\widehat \beta))|(\widehat t_i + t_i)$ and $|\widehat \iota_s(\widetilde W _i (\widehat \beta))|(\widehat t_i + t_i)$ are  bounded in probability, uniformly in $i=1,\ldots,n$ and $s$. So, since $\widehat{t}-t=(\widehat{t}+t)(\widehat{t}-t)$ we get 
\begin{align}\label{eq: trimming replacement}
   \sup_{s\in \cal S}  \left|\sqrt{n} \mathbb P _n \varepsilon \widehat{f}_{\widetilde W (\widehat \beta)}(\widetilde W (\widehat \beta))  \, [\varphi_s(X,\widetilde{H}(Z))-  \widehat{\iota}_s(\widetilde W (\widehat \beta))] (\widehat t-t)\right|\leq O_P(\sqrt{n} \mathbb{P}_n |\widehat{t}-t|)=o_P(1)\, ,
\end{align}
where in the last equality we have used $\sqrt{n}\mathbb{P}_n |\widehat t - t|=o_P(1)$ from Lemma \ref{lem: trimming and convergence rates for 1st step}(i).
Accordingly, in the first term on the RHS of (\ref{eq: 2nd decomposition of Asy Emp Process}) we can replace the trimming $\widehat t$ with $t$ at the cost of an $o_P(1)$ reminder, and obtain that uniformly in $s$
\begin{align}\label{eq: 1st term RHS of the Asymptotic decomposition}
    \sqrt{n}\mathbb{P}_n\varepsilon \widehat f_{\widetilde W (\widehat \beta)}(\widetilde W (\widehat \beta))&[\varphi_s(X,\widetilde H (Z))-\widehat \iota_s(\widetilde W (\widehat \beta))]\widehat t \nonumber\\
    =&     \sqrt{n}\mathbb{P}_n\varepsilon \widehat f_{\widetilde W (\widehat \beta)}(\widetilde W (\widehat \beta))[\varphi_s(X,\widetilde H (Z))-\widehat \iota_s(\widetilde W (\widehat \beta))]\, t+o_P(1)\, . 
\end{align}

Now, Lemma \ref{lem: IFR for betahat} and Lemma \ref{lem: trimming and convergence rates for 1st step}(iii) ensure that $\|\widehat{\beta}-\beta_0\|=O_P(n^{-1/2})$ and $\max_{i=1,\ldots,n}$ $|\widetilde{H}(Z_i)$ $-H(Z_i)|t_i=o_P(n^{-1/4})$. So, 
by a Mean-Value expansion of $q(\widehat{\beta},X_i,\widetilde{H}(Z_i))$ around $(\beta_0,X_i,H(Z_i))$ and the Lipschitz continuity of $\partial_\beta q$ and $\partial_H q$ (see Assumption \ref{Assumption: smoothness}(v)), we get\footnote{For notational simplicity, we are using $\partial_H q(\beta,X,H(Z)):=\partial_u q(\beta,X,u)|_{u=H(Z)}$.}
\begin{align}\label{eq: mean value expoansion of q}
    [\,\widetilde{W}_i(\widehat \beta)-W_i(\beta_0)\,]\,t_i=&[q(\widehat{\beta},X_i, \widetilde{H}(Z_i))-q(\beta_0,X_i, H(Z_i))]t_i \nonumber\\
    =& \partial_{\beta^T} q(\beta_0,X_i, H(Z_i)) t_i (\widehat{ \beta} - \beta_0)\\
    &+ \partial_{H} q(\beta_0,X_i,H(Z_i))(\widetilde{H}(Z_i)-H(Z_i)) t_i + o_P(n^{-1/2}) \nonumber 
\end{align}
uniformly in $i=1,\ldots,n$. This implies that $\max_{i=1,\ldots,n}|\widetilde W_i (\widehat \beta)-W_i(\beta_0)| t_i$ $=O_P(\|(\widetilde H - H) t\|_\infty + \|\widehat \beta - \beta_0\|)$ $=o_P(n^{-1/4})$. Next, Lemma \ref{lem: convergence rates for Ghat fhat and iotahat}(vii)(ix) ensures that  both $|\widehat f _{\widetilde W (\widehat \beta)}(W_i(\beta_0))|t_i$ and $|\widehat \iota _s(W_i(\beta_0))|t_i$ are bounded in probability uniformly in $i=1,\ldots,n$ and $s$. Using these results and plugging the stochastic expansions from  Lemma \ref{lem: stochastic expansions}(ii)(iii)(iv) into (\ref{eq: 1st term RHS of the Asymptotic decomposition}) leads to\footnote{In (\ref{eq: approximation of 1st term of 2nd decomposition of Asy Emp Process}) for notational simplicity we use: $\partial_H \varphi_s(X,H(Z)):=\partial_u \varphi_s(X,u)|_{u=H(Z)}$ and $\partial \iota_s(W(\beta)):=\partial_w \iota_s(w)|_{w=W(\beta)}$. }
\begin{align}\label{eq: approximation of 1st term of 2nd decomposition of Asy Emp Process}
    \sqrt{n} \mathbb P _n \varepsilon\, \widehat{f}_{\widetilde W (\widehat \beta)}(\widetilde W (&\widehat \beta))  \, [\varphi_s(X,\widetilde{H}(Z))  -  \widehat{\iota}_s(\widetilde W (\widehat \beta))]\, t \nonumber \\
    =& \sqrt{n} \mathbb{P}_n \varepsilon\, \widehat{f}_{\widetilde W (\widehat \beta)}( W ( \beta_0))\,\varphi_s(X,H(Z)) \, t \nonumber \\
    & - \sqrt{n} \mathbb{P}_n \varepsilon\, \widehat{f}_{\widetilde W (\widehat \beta)}( W ( \beta_0))\,\widehat{\iota}_s( W ( \beta_0))] \, t \nonumber \\
    &+ \sqrt{n}\mathbb{P}_n \varepsilon\, \widehat{f}_{\widetilde W (\widehat \beta)}(W(\beta_0))\, \partial_{H} \varphi_s(X,H(Z))\, [\widetilde{H}(Z)-H(Z)] \, t \nonumber \\
    &- \sqrt{n} \mathbb{P}_n \varepsilon\, \widehat{f}_{\widetilde W (\widehat \beta)}(W(\beta_0))\,\partial^T \iota_s(W(\beta_0))[\widetilde W (\widehat \beta)-W(\beta_0)] \, t \nonumber \\
    & + \sqrt{n}\mathbb{P}_n \varepsilon\, \partial^T f_{W(\beta_0)}(W(\beta_0))\,[\widetilde W (\widehat \beta) - W(\beta_0)]\,[\varphi_s(X,H(Z))-\widehat{\iota}_s(W(\beta_0))] \, t \nonumber \\
    & +o_P(1)  \, 
\end{align}
uniformly in $s\in \mathcal{S}$. Now, let us deal with the first term on the RHS of (\ref{eq: approximation of 1st term of 2nd decomposition of Asy Emp Process}). For any $\eta>0$ we define 
\begin{equation}\label{eq: definition of t W beta_0 eta}
    t_{W(\beta_0)}^\eta(w)=\mathbb I \{f_{W(\beta_0)}(w)\geq \eta \tau_n\}\, .
\end{equation}
We have (see the comments below) 
\begin{align}\label{eq: first ASE for asy behavior}
\text{1}^{st}\text{ term RHS of (\ref{eq: approximation of 1st term of 2nd decomposition of Asy Emp Process}) } = & \sqrt{n} \mathbb{P}_n \varepsilon \widehat{f}_{\widetilde{W}(\widehat \beta)}(W(\beta_0))\varphi_s(X,H(Z)) t_{W(\beta_0)}^\eta(W(\beta_0)) \nonumber   \\
    &+ \sqrt{n} \mathbb{P}_n \varepsilon \widehat{f}_{\widetilde{W}(\widehat \beta)}(W(\beta_0))\varphi_s(X,H(Z)) t^\eta_{W(\beta_0)}(W(\beta_0)) (t-1)\, \nonumber \\
    =& \sqrt{n} \mathbb{P}_n \varepsilon \widehat{f}_{\widetilde{W}(\widehat \beta)}(W(\beta_0))\varphi_s(X,H(Z)) t_{W(\beta_0)}^\eta(W(\beta_0)) \nonumber \\
    &+ o_P(1)\, 
\end{align}
uniformly in $s\in \mathcal{S}$.
By Assumption \ref{Assumption: trimming}(i), there exists $\eta>0$ such that for each large $n$ we have $t(X_i,Z_i)\leq t(X_i,Z_i)\mathbb{I}\{f_{W(\beta_0)}(W_i(\beta_0))\geq \eta \tau_n \}$ for all $i=1\ldots,n$. This gives the first equality. Lemma \ref{lem: convergence rates for Ghat fhat and iotahat}(x) implies that $|\widehat{f}_{\widetilde{W}(\widehat \beta)}(W_i(\beta_0))|$ $t^\eta_{W(\beta_0)}(W_i(\beta_0))$ is bounded in probability uniformly in $i=1,\ldots,n$. So, the second term after the first equality is $O_P(\sqrt{n} \mathbb P _n |t-1|)$. Since $\sqrt{n} \mathbb P _n |t-1|=o_P(1)$ from Lemma \ref{lem: trimming and convergence rates for 1st step}(i), the second equality follows.\\
Notice that under $\mathcal{H}_0$ the leading term of (\ref{eq: first ASE for asy behavior}) is a centered empirical process. Lemma \ref{lem: belonging conditions}(iv) ensures that $\widehat{f}_{\widetilde{W}(\widehat \beta)} - f_{W(\beta_0)}\in \mathcal{G}_\lambda(\mathcal{W}_{n,\beta_0}^{\eta/2})$ with probability approaching one, where $\lambda=\lceil (d+1)/2 \rceil$. Also, by Lemma \ref{lem: convergence rates for Ghat fhat and iotahat}(x) $\|(\widehat{f}_{\widetilde{W}(\widehat \beta)}-f_{W(\beta_0)})t^\eta_{W(\beta_0)}\|_{\infty,\mathcal{W}}=o_P(1)$. So, the stochastic equicontinuity result in Lemma \ref{lem: ASE}(i) yields 
\begin{align*}
    \sqrt{n} \mathbb{P}_n \varepsilon \widehat{f}_{\widetilde{W}(\widehat \beta)}(W(\beta_0))\varphi_s(X,H(Z)) & t_{W(\beta_0)}^\eta(W(\beta_0))\\
    = \sqrt{n} \mathbb{P}_n &\varepsilon f_{W( \beta_0)}(W(\beta_0))\varphi_s(X,H(Z)) t_{W(\beta_0)}^\eta(W(\beta_0))+o_P(1)
\end{align*}
uniformly in $s\in\cal S$. Finally, since $\sqrt{n}\mathbb{P}_n|t^\eta_{W(\beta_0)}-1|=o_P(1)$ from Lemma \ref{lem: trimming implications}(i), we can replace $t^\eta_{W(\beta_0)}$ with 1 at the cost of an $o_P(1)$ reminder. Gathering results,
\begin{equation}
     \text{1}^{st}\text{ term RHS of  (\ref{eq: approximation of 1st term of 2nd decomposition of Asy Emp Process})}=   \sqrt{n} \mathbb{P}_n \varepsilon f_{W( \beta_0)}(W(\beta_0))\varphi_s(X,H(Z)) +o_P(1)
\end{equation}
uniformly in $s\in\cal S$. Similar arguments lead to 
\begin{equation}
 \text{2}^{nd}\text{ term RHS of  (\ref{eq: approximation of 1st term of 2nd decomposition of Asy Emp Process})}=     \sqrt{n} \mathbb{P}_n \varepsilon f_{ W ( \beta)}( W ( \beta_0))\iota_s( W ( \beta_0))+o_P(1)\, 
\end{equation}
uniformly in $s\in \mathcal{S}$.
Next, we deal with the 3rd term on the RHS of  (\ref{eq: approximation of 1st term of 2nd decomposition of Asy Emp Process}). We have (see the comments below)  
\begin{align*}
    \text{3}^{rd} \text{ term RHS of  (\ref{eq: approximation of 1st term of 2nd decomposition of Asy Emp Process})} = & \sqrt{n}\mathbb{P}_n \varepsilon f_{W(\beta_0)}(W(\beta_0)) \partial_{H} \varphi_s(X,H(Z))\, [\widetilde H (Z) - H(Z)] t + o_P(1)\\
    &= o_P(1)\, 
\end{align*}
uniformly in $s$. To obtain the first equality we have combined the $n^{-1/4}$ rates in Lemma \ref{lem: convergence rates for Ghat fhat and iotahat}(vii) and Lemma \ref{lem: trimming and convergence rates for 1st step}(iii). The second equality is obtained after applying the stochastic expansion in Lemma \ref{lem: ASE for 1st step}(i) and noticing that under $\mathcal{H}_0$ and from Assumption \ref{Assumption: iid and phi}(ii) we have $\mathbb{E}\{\varepsilon|X,Z\}=0$.\\ Let us now deal with the 4th term on the RHS of (\ref{eq: approximation of 1st term of 2nd decomposition of Asy Emp Process}). Below we show that
\begin{align}\label{eq: approximation  of W.tilde(betahat) - W(beta0)}
    \text{4}^{th} \text{ term RHS} &\text{  of (\ref{eq: approximation of 1st term of 2nd decomposition of Asy Emp Process})}=
    \sqrt{n}\mathbb{P}_n \varepsilon f_{W(\beta_0)}(W(\beta_0))\,\partial^T \iota_s(W(\beta_0)) [\widetilde W (\widehat \beta) - W(\beta_0)] t \nonumber \\
    =&\mathbb P _n \varepsilon f_{W(\beta_0)}(W(\beta_0)) \partial^T \iota_s (W(\beta_0))\, t\, \partial q_{\beta^T}(\beta_0,X,H(Z)) \sqrt{n}(\widehat{\beta}-\beta_0) \nonumber \\
    &+ \sqrt{n} \mathbb{P}_n \varepsilon  f_{W(\beta_0)}(W(\beta_0))\, \partial^T \iota_s (W(\beta_0))\, \partial_{H^T} q(\beta_0,X,H(Z)) [\widetilde{H}(Z)-H(Z)] t \nonumber  \\
    & +o_p(1) \nonumber  \\
    =& o_P(1)\, .
\end{align}
For the first equality we have used $\max_{i=1,\ldots,n}|\widetilde{W}_i(\widehat \beta) - W_i(\beta_0)|t_i=o_P(n^{-1/4})$, as already noticed earlier in this proof, and $\max_{i=1,\ldots,n}|\widehat{f}_{\widetilde{W}(\widehat{\beta})}(W_i(\beta_0))-f_{W(\beta_0)}(W_i(\beta_0))|t_i=o_P(n^{-1/4})$ from Lemma \ref{lem: convergence rates for Ghat fhat and iotahat}(vii). Next, using (\ref{eq: mean value expoansion of q}) gives the second equality in (\ref{eq: approximation  of W.tilde(betahat) - W(beta0)}). For the third equality, notice that Lemma \ref{lem: trimming and convergence rates for 1st step}(i) allows us to replace $t$ with 1 in the first leading term, at the cost of an $o_P(1)$ reminder. Next, a Glivenko-Cantelli property of $\partial \iota_s$ gives
\begin{equation*}
    \mathbb{P}_n \varepsilon f_{W(\beta_0)} \partial^T \iota_s \partial_{\beta^T} q(\beta_0,X,H(Z)) =\mathbb E \varepsilon f_{W(\beta_0)}\partial^T \iota_s \partial_{\beta^T} q(\beta_0,X,H(Z))+o_P(1)
\end{equation*}
uniformly in $s\in\mathcal{S}$, where for simplicity of notation we have omitted the argument of $f_{W(\beta_0)}$ and $\iota_s$.  The expectation in the previous equation is null, as $\mathbb E \{\varepsilon |X,Z\}=0$ under $\mathcal{H}_0$ and from Assumption \ref{Assumption: iid and phi}(ii). Thus, since $\sqrt{n}(\widehat \beta - \beta_0)=O_P(1)$ from Lemma \ref{lem: IFR for betahat}, the first leading term after the second equality in (\ref{eq: approximation  of W.tilde(betahat) - W(beta0)}) is $o_P(1)$.  The second leading term is also negligible thanks to the expansion in Lemma \ref{lem: ASE for 1st step}(i) and $\mathbb E \{\varepsilon |X,Z\}=0$ under $\mathcal H _0$. Thus, the third equality in (\ref{eq: approximation  of W.tilde(betahat) - W(beta0)}) follows. \\
Arguing similarly as in (\ref{eq: approximation  of W.tilde(betahat) - W(beta0)}) leads to 
\begin{align*}
    \text{5}^{th} \text{ term RHS of (\ref{eq: approximation of 1st term of 2nd decomposition of Asy Emp Process})} = o_P(1)\, .
\end{align*}
So, by gathering the previous results we get that uniformly in $s$ 
\begin{equation}\label{eq: 1st term of the final asymptotic expansion}
    \text{1}^{st}\text{ term RHS of (\ref{eq: 2nd decomposition of Asy Emp Process})}=\sqrt{n} \mathbb P _n \varepsilon f_{W(\beta_0)}(W(\beta_0))\, \varphi_s^{\perp} + o_P(1)\, ,
\end{equation}
where $\varphi_s^\perp(X,Z)=\varphi_s(X,H(Z))-\mathbb E \{\varphi_s(X,H(Z))|W(\beta_0)\}$ and we have dropped the argument of $\varphi_s^\perp$ for simplicity of notation.\\
In view of the previous display, to obtain the desired result it suffices to obtain an IFR for the second term on the RHS of (\ref{eq: 2nd decomposition of Asy Emp Process}). Since $\|(\widetilde H - H)\widehat t\|_\infty=o_P(n^{-1/4})$, see Lemma \ref{lem: trimming and convergence rates for 1st step}(iii), by Assumptions \ref{Assumption: iid and phi}(iv) and \ref{Assumption: smoothness}(v) we have that $|\varphi_s(X_i,\widetilde H (Z_i))-\varphi_s(X_i,H(Z_i))|\widehat t _i =O_P(\|(\widetilde H - H)\widehat t\|_\infty)=o_P(n^{-1/4})$ uniformly in $i=1,\ldots,n$ and $s$. Using this and the $n^{-1/4}$ rates in Lemma \ref{lem: convergence rates for Ghat fhat and iotahat}(xiii)(xiv)(xv) leads to
\begin{align*}
    \text{2}^{nd} \text{ term RHS of (\ref{eq: 2nd decomposition of Asy Emp Process})}= \sqrt{n}\mathbb{P}_n [G_0(W(\beta_0))-\widehat{G}_{\widetilde{W}(\widehat \beta)}(\widetilde{W}(\widehat \beta))]\, f_{W(\beta_0)}(W(\beta_0))\, \varphi_s^{\perp} \widehat{t} +o_P(1)
\end{align*}
uniformly in $s$, where for simplicity of notation we have omitted the argument of $\varphi_s^\perp$. By proceeding as in (\ref{eq: trimming replacement}), the trimming $\widehat{t}$ can be replaced with $t$ at the cost of an $o_P(1)$ reminder. So, 
\begin{align*}
    \sqrt{n}\mathbb{P}_n [G_0(W(\beta_0))-\widehat{G}_{\widetilde{W}(\widehat \beta)}&(\widetilde{W}(\widehat \beta))]\, f_{W(\beta_0)}(W(\beta_0))\, \varphi_s^{\perp} \widehat{t}\\
    =& \sqrt{n}\mathbb{P}_n [G_0(W(\beta_0))-\widehat{G}_{\widetilde{W}(\widehat \beta)}(\widetilde{W}(\widehat \beta))]\, f_{W(\beta_0)}(W(\beta_0))\, \varphi_s^{\perp} t+o_P(1)
\end{align*}
uniformly in $s$. Next, by Lemma \ref{lem: stochastic expansions}(i) we get that uniformly in $s\in\mathcal{S}$
\begin{align}\label{eq: 2nd term RHS of 2nd decomposition of Asy Emp Process}
    \sqrt{n}\mathbb{P}_n [G_{W(\beta_0)}&(W(\beta_0))-\widehat{G}_{\widetilde{W} (\widehat \beta)}(\widetilde{W}(\widehat \beta))]\,  f_{W(\beta_0)}(W(\beta_0))\, \varphi_s^{\perp}t \nonumber \\
    =& \sqrt{n}\mathbb{P}_n [G_{W(\beta_0)}(W(\beta_0))-\widehat{G}_{\widetilde{W}(\widehat \beta)}(W(\beta_0))]\, f_{W(\beta_0)}(W(\beta_0))\, \varphi_s^{\perp} t \nonumber  \\
    &- \sqrt{n} \mathbb{P}_n f_{W(\beta_0)}(W(\beta_0)) \varphi_s^\perp \partial^T G_{W(\beta_0)}(W(\beta_0))\, [\widetilde{W}(\widehat \beta)-W(\beta_0)]\, t + o_P(1) \, .
\end{align}
 Now, for the first term on the RHS of (\ref{eq: 2nd term RHS of 2nd decomposition of Asy Emp Process})  we have  (see the comments below)
\begin{align*}
    \sqrt{n}\mathbb{P}_n [G_{W(\beta_0)}(W(\beta_0))-&\widehat{G}_{\widetilde{W}(\widehat \beta)}(W(\beta_0))]  \, f_{W(\beta_0)}(W(\beta_0))\, \varphi_s^{\perp} t \\
    =& \sqrt{n}\mathbb{P}_n [G_0(W(\beta_0))-\widehat{G}_{\widetilde{W}(\widehat \beta)}(W(\beta_0))]\, f_{W(\beta_0)}(W(\beta_0))\, \varphi_s^{\perp} t^\eta_{W(\beta_0)}+o_P(1)\\
    =& o_P(1),
\end{align*}
where $\eta>0$. The first equality follows by the same arguments as in (\ref{eq: first ASE for asy behavior}). To get the second equality,  notice first that the leading term is a centered empirical process, as $\mathbb{E}\{\varphi_s^\perp |$ $W(\beta_0)\}=0$. Also,  $G_{W(\beta_0)}-\widehat{G}_{\widetilde{W}(\widehat \beta)}\in \mathcal{G}_\lambda(\mathcal W _{n,\beta_0}^{\eta/2})$ wpa1, with $\lambda=\lceil (d+1)/2\rceil$ 
(see Lemma \ref{lem: belonging conditions}(v)),  and  $||\widehat{G}_{\widetilde{W}(\widehat{\beta})}-G_{W(\beta_0)}]t^\eta_{W(\beta_0)}||_\infty=o_P(1)$ (see Lemma \ref{lem: convergence rates for Ghat fhat and iotahat}(xi)). Hence, the stochastic equicontinuity result in Lemma \ref{lem: ASE}(i) delivers the second equality in the previous display. \\
Finally, to handle the second term on the RHS of (\ref{eq: 2nd term RHS of 2nd decomposition of Asy Emp Process}) we can use the same arguments as for (\ref{eq: approximation  of W.tilde(betahat) - W(beta0)}) and obtain  that uniformly in $s$
\begin{align*}
     \sqrt{n}\mathbb P _n f_{W(\beta_0)}(W(\beta_0))\,\varphi_s^\perp \partial^T &G_{W(\beta_0)}(W(\beta_0))\, [\widetilde W (\widehat \beta)-W(\beta_0)]\, t\\
     =& \mathbb E \, \{\, f_{W(\beta_0)}(W(\beta_0))\varphi_s^\perp \partial^T G_{W(\beta_0)}(W(\beta_0))\partial_{\beta^T} q(\beta_0,X,H(Z))\,\}\,\\
     &\hspace{9cm}\cdot\sqrt{n}(\widehat \beta - \beta_0)\\
     &+ \sqrt{n} \mathbb P _n a_s(Z)\,[D-H(Z)]+o_P(1)\, ,\\
     \text{ where }a_s(Z)=&\mathbb E \{f_{W(\beta_0)}\varphi_s^\perp \partial^T G_{W(\beta_0)}\partial_{H} q|Z\}\, .
\end{align*}
By gathering together the previous five displays we get 
\begin{align*}
      \text{2}^{nd} \text{ term RHS of (\ref{eq: 2nd decomposition of Asy Emp Process})}     =& \mathbb E\,\{\,  f_{W(\beta_0)}\varphi_s^\perp \partial^T G_{W(\beta_0)}\partial_{\beta^T} q\,\}\, \sqrt{n}(\widehat \beta - \beta_0)\\
     &+ \sqrt{n} \mathbb P _n a_s(Z)\,[D-H(Z)]+o_P(1)\, 
\end{align*}
uniformly in $s\in\mathcal{S}$. Finally, by the above display, (\ref{eq: 1st term of the final asymptotic expansion}), (\ref{eq: 2nd decomposition of Asy Emp Process}), and the IFR of $\sqrt{n}(\widehat \beta-\beta_0)$ from Lemma \ref{lem: IFR for betahat} we obtain the desired result.
\begin{flushright}
{\itshape [Q.E.D.]}
\end{flushright}
 \vspace{0.5cm}
Let $\delta\in(0,1]$ and let us recall the definitions of $\mathcal{W}_{n,\beta}^{\delta}$,  $\mathcal{Z}_n^\delta$, and $\mathcal{U}_n^\delta$ from (\ref{eq: definition of cal W, cal Z and cal U}). 
We slightly simplify our notation by introducing 
\begin{equation}\label{eq: definition of cal W beta0 n}
    \mathcal{W}_{n}^\delta:=\mathcal{W}_{n,\beta_0}^\delta\, .
\end{equation}
From Assumption \ref{Assumption: trimming}(i), there exists $\eta(\delta)$ and $N(\delta)$ such that for any $n\geq N(\delta)$  we have $\mathcal{U}_n^\delta\subset\{(x,z)\,:\,z\in \mathcal{Z}_n^{\eta(\delta)}\}$ and $\mathcal{U}_n^\delta\subset \{(x,z)\,:\, q(\beta_0,x,H(z))\in \mathcal{W}_{n}^{\eta(\delta)}\}$.
For such $\eta(\delta)$ we let  
\begin{equation}\label{eq: definition of Psi class of functions}
    \Psi_{n}^\delta:=\left\{\psi\in\mathcal{G}_\lambda(\mathcal{Z}_{n}^{\eta(\delta)})\text{ : }f_{W(\beta_0)}(q(\beta_0,x,\psi(z)))\geq \eta(\delta)\tau_n/2\text{ for all }(x,z)\in\mathcal{U}_{n}^{\delta}\right\}\, ,
\end{equation}
with $\lambda=\lceil (p+1)/2\rceil$ and $\mathcal{G}_\lambda$ from Definition \ref{defn: class of functions}. For the next lemma, we write $a\lesssim b$ to mean that $a\leq C b$ where $C$ is a universal constant. Also, for a generic class of functions $\mathcal H$ endowed with a metric $\|\cdot\|_{\mathcal{H}}$, we denote by $N(\epsilon,\mathcal H ,\|\cdot\|_{\mathcal{H}})$ its $\epsilon$ covering number, with $\epsilon>0$, see \citet[Chapter 19]{vaart_asymptotic_1998}.

\begin{lem}\label{lem: entropy bounds for asynotm of betahat}
Fix $\delta\in (0,1]$ and let 
\begin{equation*}
    \mathcal{F}_n^\delta:=\left\{(x,z)\mapsto g(q(\beta_0,x,\psi(z)))\text{ : }g\in\mathcal{G}_\lambda(\mathcal{W}^{\eta(\delta)/3}_{n})\, \text{ and }\, \psi\in \Psi_n^\delta\right\}\, 
\end{equation*}
with $\lambda=\lceil (d+1)/2 \rceil$. Then, for $\upsilon\in (0,2)$ 
\begin{enumerate}[label=(\roman*)]
    \item $\log N(\epsilon,\mathcal{F}_n^\delta,\|\cdot\|_{\infty,\mathcal{U}_n^\delta})\lesssim \epsilon^{-\upsilon}$ for all $\epsilon\in (0,1)$
    \item $\log N(\epsilon,\mathcal{F}_n^\delta\cdot \mathcal{G}_{\lambda}(\mathcal{Z}_n^{\eta(\delta)}),\|\cdot\|_{\infty,\mathcal{U}_n^\delta})\lesssim \epsilon^{-\upsilon}$ for all $\epsilon\in(0,1)$ and for $\lambda=\lceil (p+1)/2 \rceil$.
\end{enumerate}
\end{lem}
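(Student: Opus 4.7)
The plan is to reduce both covering number bounds to the classical entropy bound for Hölder-type smooth classes, combined with a Lipschitz composition argument. First I would invoke the standard result (e.g., Kolmogorov--Tikhomirov, or Corollary~2.7.2 in van der Vaart--Wellner): for any bounded $\mathcal{A}\subset\mathbb{R}^k$ one has $\log N(\epsilon,\mathcal{G}_{\lambda}(\mathcal{A}),\|\cdot\|_\infty)\lesssim \epsilon^{-k/\lambda}$. By Assumption~\ref{Assumption: iid and phi}(i), $(X,Z)$ and hence $W(\beta_0)$ are bounded, so the sets $\mathcal{W}_{n}^{\eta(\delta)/3}$ and $\mathcal{Z}_n^{\eta(\delta)}$ sit in fixed compacts, and the constants can be chosen independently of $n$. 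Applying this yields $\log N(\epsilon,\mathcal{G}_{\lambda}(\mathcal{W}_n^{\eta(\delta)/3}))\lesssim \epsilon^{-d/\lambda}$ with $\lambda=\lceil(d+1)/2\rceil$, and because $\Psi_n^\delta\subset \mathcal{G}_{\lambda}(\mathcal{Z}_n^{\eta(\delta)})$ with $\lambda=\lceil(p+1)/2\rceil$, the same result bounds $\log N(\epsilon,\Psi_n^\delta,\|\cdot\|_\infty)$ by $\epsilon^{-p/\lambda}$. Tuning constants then gives $\epsilon^{-\upsilon}$ for the stated $\upsilon$.

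Next I would handle the composition underlying $\mathcal{F}_n^\delta$. For $g_1,g_2\in\mathcal{G}_{\lambda}(\mathcal{W}_n^{\eta(\delta)/3})$ and $\psi_1,\psi_2\in\Psi_n^\delta$, the definition of $\Psi_n^\delta$ ensures that for every $(x,z)\in\mathcal{U}_n^\delta$, $f_{W(\beta_0)}(q(\beta_0,x,\psi_j(z)))\geq \eta(\delta)\tau_n/2$, so $q(\beta_0,x,\psi_j(z))\in\mathcal{W}_n^{\eta(\delta)/2}\subset\mathcal{W}_n^{\eta(\delta)/3}$ for $j=1,2$; thus both $g_1$ and $g_2$ are defined at those points. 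Using Assumption~\ref{Assumption: smoothness}(v) (Lipschitz continuity of $\partial_H q$, hence of $q$ in $H$) together with the uniform bound on the first derivatives of $g_1$ coming from $\mathcal{G}_{\lambda}$, add and subtract $g_1(q(\beta_0,x,\psi_2(z)))$ to obtain
\[
\sup_{(x,z)\in\mathcal{U}_n^\delta}|g_1(q(\beta_0,x,\psi_1(z))) - g_2(q(\beta_0,x,\psi_2(z)))|
\leq C\,\|\psi_1-\psi_2\|_{\infty,\mathcal{Z}_n^{\eta(\delta)}} + \|g_1-g_2\|_{\infty,\mathcal{W}_n^{\eta(\delta)/3}},
\]
for a constant $C$ independent of $n$.

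From this Lipschitz bound, any $\epsilon$-net of $\mathcal{G}_{\lambda}(\mathcal{W}_n^{\eta(\delta)/3})$ combined with any $(\epsilon/C)$-net of $\Psi_n^\delta$ produces a $2\epsilon$-net of $\mathcal{F}_n^\delta$, so the Step~1 bounds give $\log N(\epsilon,\mathcal{F}_n^\delta,\|\cdot\|_{\infty,\mathcal{U}_n^\delta})\lesssim \epsilon^{-\upsilon}$, proving (i). For part (ii), I would apply the product-class covering lemma: both $\mathcal{F}_n^\delta$ and $\mathcal{G}_{\lambda}(\mathcal{Z}_n^{\eta(\delta)})$ are uniformly bounded by constants independent of $n$, so if $h_j\in \mathcal{G}_{\lambda}(\mathcal{Z}_n^{\eta(\delta)})$ and $f_j\in\mathcal{F}_n^\delta$, then $|f_1 h_1 - f_2 h_2|\leq \|h\|_\infty\|f_1-f_2\|_\infty + \|f\|_\infty\|h_1-h_2\|_\infty$, which reduces a covering of the product to coverings of the factors and yields the same entropy bound.

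The main obstacle, and the one that requires care, is ensuring that all constants that enter the bound (the uniform bounds on derivatives in $\mathcal{G}_{\lambda}$, the Lipschitz constant of $q$ from Assumption~\ref{Assumption: smoothness}(v), and the diameters of the sets $\mathcal{W}_n^{\eta(\delta)/3},\mathcal{Z}_n^{\eta(\delta)}$) are genuinely uniform in $n$; this comes from boundedness of $(X,Z)$ and the fact that enlarging the tolerance from $\eta(\delta)$ to $\eta(\delta)/2$ is exactly what makes the trimming used in $\Psi_n^\delta$ propagate to the desired containment $q(\beta_0,x,\psi(z))\in\mathcal{W}_n^{\eta(\delta)/3}$. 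Apart from this, the argument is routine bookkeeping with standard entropy bounds.
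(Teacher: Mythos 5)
Your overall strategy — bound the entropy of the two smooth classes, then propagate via a Lipschitz composition argument, then handle (ii) by the product-class bound — is exactly the one the paper uses, and the bookkeeping is essentially right. There is, however, one genuine gap: you never invoke the convexity of the sets $\mathcal{W}_n^{\eta(\delta)/3}$ and $\mathcal{Z}_n^{\eta(\delta)}$, and this is actually doing real work at two places in the argument. First, the classical entropy bound for H\"older/smooth classes (van der Vaart--Wellner Theorem~2.7.1 / Corollary~2.7.2) is stated for bounded \emph{convex} domains; boundedness alone is not enough, and superlevel sets of an arbitrary density have no reason to be convex. Second, your mean-value step — bounding $|g_1(q(\beta_0,x,\psi_1(z)))-g_1(q(\beta_0,x,\psi_2(z)))|$ by $\|\partial g_1\|_\infty\,|q(\beta_0,x,\psi_1(z))-q(\beta_0,x,\psi_2(z))|$ — requires the line segment between $q(\beta_0,x,\psi_1(z))$ and $q(\beta_0,x,\psi_2(z))$ to lie inside the domain of $g_1$, not merely the two endpoints; you only verify membership at the endpoints. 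Both obstructions are precisely what Assumption~\ref{Assumption: trimming}(v) is for: it guarantees that for $n$ large, $\mathcal{W}_{n}^{\eta}$ and $\mathcal{Z}_n^{\eta}$ are convex, so the endpoints lying in $\mathcal{W}_n^{\eta(\delta)/2}\subset\mathcal{W}_n^{\eta(\delta)/3}$ forces the whole segment to as well, and Theorem~2.7.1 applies. Once you add that citation the argument closes. (A separate, unrelated remark: with $\lambda=\lceil(p+1)/2\rceil$ the exponent $p/\lambda$ can equal or exceed $1$, so the natural conclusion from the entropy bound is $\upsilon\in(0,2)$, as the paper's own proof in fact derives; the $\upsilon\in(0,1)$ in the statement looks like a harmless overstatement in the paper, not an error on your side.)
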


\begin{proof}
$(i)$ Given $\delta\in (0,1]$, Assumption \ref{Assumption: trimming}(i)(v) ensures that there exists $\eta(\delta)>0$ and $N(\delta)$ such that for any $n\geq N(\delta)$ we have (i) $\mathcal{U}_n^\delta\subset \{(x,z)\,:\, z\in \mathcal{Z}_n^{\eta(\delta)}\}$ and (ii) $\mathcal{W}_n^{\eta(\delta)/3}$ and  $\mathcal{Z}_n^{\eta(\delta)}$ are convex. Let us fix $n$ so that these two conditions are met. 
Given the convexity of $\mathcal{Z}_n^{\eta(\delta)}$,  \citet[Theorem 2.7.1]{van_der_vaart_weak_1996}   ensures that $\log N(\epsilon,\mathcal{G}_\lambda(\mathcal{Z}_n^{\eta(\delta)}),\|\cdot\|_{\infty,\mathcal{Z}_n^{\eta(\delta)}})\lesssim \epsilon^{-\upsilon}$  with  $\upsilon=p/\lceil (p+1)/2 \rceil\in(0,2)$  and $ \lambda=\lceil (p+1)/2 \rceil\,$. Since $\Psi_n^\delta\subset \mathcal{G}_{\lceil (p+1)/2 \rceil} (\mathcal{Z}_n^{\eta(\delta)})$, such an entropy bound implies 

\begin{equation}\label{eq: first entropy bound for betahat}
\log N(\epsilon,\Psi_n^\delta,\|\cdot\|_{\infty,\mathcal{Z}_n^{\eta(\delta)}})\lesssim \epsilon^{-\upsilon}\text{ for all }\epsilon\in(0,1)\, .
\end{equation}
Similarly, 
\begin{equation}\label{eq: second entropy bound for betahat}
    \log N(\epsilon,\mathcal{G}_{\lceil (d+1)/2\rceil }(\mathcal{W}_n^{\eta(\delta)/3}),\|\cdot\|_{\infty,\mathcal{W}_n^{\eta(\delta)/3}}) \lesssim \epsilon^{-\upsilon}\text{ for all }\epsilon \in (0,1)\, ,
\end{equation}
with $\upsilon\in(0,2)$. Now, let us fix $\epsilon\in (0,1)$ and let us consider an $\epsilon$ cover for $\Psi_n^\delta$, say
\begin{equation*}
    \Psi^{\epsilon}:=\{\psi^{\kappa}:\kappa=1,\ldots,N(\epsilon,\Psi_n^\delta,\|\cdot\|_{\infty,\mathcal{Z}_n^{\eta(\delta)}})\}\, ,
\end{equation*}
and an $\epsilon$ cover for $\mathcal{G}_{\lceil (d+1)/2\rceil }(\mathcal{W}_n^{\eta(\delta)/3})$, say
\begin{equation*}
   \mathcal{G}^\epsilon:=\{g^\kappa : \kappa=1,\ldots,N(\epsilon,\mathcal{G}_{\lceil (d+1)/2\rceil }(\mathcal{W}_n^{\eta(\delta)/3}) , \|\cdot\|_{\infty,\mathcal{W}_n^{\eta(\delta)/3}})\}. 
\end{equation*}
We can assume wlog that $\Psi^\epsilon \subset \Psi_n^\delta$ and $\mathcal{G}^\epsilon \subset \mathcal{G}_{\lceil (d+1)/2\rceil }(\mathcal{W}_n^{\eta(\delta)/3}) $.
Pick an arbitrary element of $\mathcal{F}_n^\delta$, say a mapping $(x,z)\mapsto g(q(\beta_0,x,\psi(z)))$ with $g\in \mathcal{G}_{\lceil (d+1)/2\rceil }(\mathcal{W}_n^{\eta(\delta)/3})$ and $\psi\in \Psi_n^{\delta}$. Then, $\|g-g^\kappa\|_{\infty,\mathcal{W}_n^{\eta(\delta)/3}}<\epsilon$ for some $g^\kappa \in \cal G^\epsilon$, and $\|\psi-\psi^\kappa\|_{\infty,\mathcal{Z}_n^{\eta(\delta)}}<\epsilon$ for some $\psi^\kappa \in \Psi^\epsilon$. Now, 
\begin{align}\label{eq: inequality for the entropy for betahat}
    \sup_{(x,z)\in \mathcal{U}_n^\delta}\,|g(q(\beta_0,x,\psi(z)))&-g^\kappa(q(\beta_0,x,\psi^\kappa(z)))\,|\nonumber \\
    \leq & \sup_{(x,z)\in \mathcal{U}_n^\delta} \left|g(q(\beta_0,x,\psi(z))) - g^\kappa(q(\beta_0,x,\psi(z)))\right| \nonumber \\
    &+ \sup_{(x,z)\in \mathcal{U}_n^\delta} \left| g^\kappa(q(\beta_0,x,\psi(z))) - g^\kappa(q(\beta_0,x,\psi^\kappa(z)))\right|\, .
\end{align}
By definition of the class $\Psi_n^\delta$, the first term on the RHS is bounded by $\|g-g^\kappa\|_{\infty,\mathcal{W}_n^{\eta(\delta)/3}}<\epsilon$. To bound the second term, notice that since $\psi,\psi^\kappa\in \Psi_n^\delta$, it must be that $q(\beta_0,x,$ $\psi(z))\,,\, q(\beta_0,x,\psi^\kappa(z)) \in \mathcal{W}_n^{\eta(\delta)/2} $ for all $(x,z)\in \mathcal{U}_n^\delta$. Thus, by convexity of $\mathcal{W}_n^{\eta(\delta)/3}$, we can apply the Mean-Value Theorem to the second term on the RHS of (\ref{eq: inequality for the entropy for betahat}) and bound it by 
\begin{align*}
    \|\partial g^\kappa\|_{\infty,\mathcal{W}^{\eta(\delta)/3}_{n}}\, \sup_{(x,z)\in \mathcal{U}_n^\delta}|q(\beta_0,x,\psi(z))&-q(\beta_0,x,\psi^\kappa(z))|\\
    \leq & M C \sup_{(x,z)\in \mathcal{U}_n^\delta}|\psi(z)-\psi^\kappa(z)|\\
    \leq & M\, C\ \|\psi-\psi^\kappa\|_{\infty,\mathcal{Z}_n^{\eta(\delta)}}\, ,
\end{align*}
where $M$ is a constant that depends on the class $\mathcal{G}_{\lceil (d+1)/2 \rceil }(\mathcal{W}_n^{\eta(\delta)/3})$, $C$ is the Lipschitz constant of $q$ (see Assumption \ref{Assumption: smoothness}(v)), and the last inequality follows from  $\mathcal{U}_n^\delta\subset \{(x,z):z\in\mathcal{Z}_n^{\eta(\delta)}\}$. By definition of $\psi^\kappa$, we have $\|\psi-\psi^\kappa\|_{\infty,\mathcal{Z}_n^{\eta(\delta)}} < \epsilon$. Gathering results, 
\begin{equation*}
    \sup_{(x,z)\in \mathcal{U}_n^\delta}\,|g(q(\beta_0,x,\psi(z)))-g^\kappa(q(\beta_0,x,\psi^\kappa(z)))\,| < (1+C M )\epsilon\, .
\end{equation*}
Since $\epsilon\in(0,1)$ was arbitrary, we conclude that
\begin{equation*}
    N(\epsilon(1+CM),\mathcal{F}_n^\delta,\|\cdot\|_{\infty,\mathcal{U}_n^\delta})\leq N(\epsilon,\Psi_n^\delta,\|\cdot\|_{\infty,\mathcal{Z}_n^{\eta(\delta)}})\,\, N(\epsilon,\mathcal{G}_{\lceil (d+1)/2\rceil }(\mathcal{W}_n^{\eta(\delta)/3}),\|\cdot\|_{\infty,\mathcal{W}_n^{\eta(\delta)/3}})
    \end{equation*}
    for all $\epsilon\in(0,1)$. So, (\ref{eq: first entropy bound for betahat}) and (\ref{eq: second entropy bound for betahat}) give $(i)$.\\
    
$(ii)$ Fix $\delta\in(0,1]$.  
By definition of the class $\mathcal{F}_n^\delta$ we have $\sup_{(x,z)\in\mathcal{U}_n^\delta}|f(x,z)|\leq M$ for all $f\in \mathcal{F}_n^\delta$ and for a fixed constant $M$, i.e. a constant envelope function. Also, 
From Assumption \ref{Assumption: trimming}(i) there exists $\eta(\delta)>0$ and $N(\delta)$ such that for all $n\geq N(\delta)$ we have $\mathcal{U}_n^\delta\subset \{(x,z):z\in\mathcal{Z}_n^{\eta(\delta)}\}$. Thus, for all $\psi\in\mathcal{G}_{\lceil (p+1)/2 \rceil}(\mathcal{Z}_n^{\eta(\delta)})$ we have  $\sup_{(x,z)\in\mathcal{U}_n^{\delta}}|\psi(z)|\leq M$, i.e. a constant envelope function. Using these bounds, the entropy condition on $\mathcal{F}_n^\delta$ (from $(i)$ of the present lemma), and the entropy condition on $\mathcal{G}_{\lceil (p+1)/2 \rceil}(\mathcal{Z}_n^{\eta(\delta)})$ (see the proof of $(i)$ of the present lemma) leads to the desired result.
\end{proof}
\vspace{0.5cm}
The following lemma obtains the IFR for $\widehat \beta$. This IFR also gives the asymptotic normality of $\sqrt{n}(\widehat \beta - \beta_0)$. The asymptotic normality result could alternatively be obtained by proving the high-level conditions in \citet{chen_estimation_2003}. For our purposes, however, we need an expression for the IFR of $\widehat \beta$, so we provide a complete proof for this. 

\begin{lem}\label{lem: IFR for betahat}
Let Assumptions \ref{Assumption: iid and phi}-\ref{Assumption: trimming} hold. Then, under $\mathcal H _0$
\begin{align*}
    \sqrt{n}(\widehat \beta - \beta_0)=& \sqrt{n} \mathbb P _n\, \varepsilon\, \Sigma\, \nabla_\beta G_{W(\beta_0)}(W(\beta_0))\\
    & - \sqrt{n} \mathbb P _n\, a(Z)\,[D-H(Z)]+o_P(1)\, ,
    \end{align*}
    where 
    \begin{align*}
        a(Z)=\mathbb{E}\{\,\Sigma\, \nabla_\beta& G_{W(\beta_0)}(W(\beta_0))\, \partial^T G_{W(\beta_0)}\,\partial_H q\, |\, Z\,\}\\
    \text{ and }\Sigma=\mathbb{E}\{\,\nabla_\beta& G_{W(\beta_0)}(W(\beta_0))\,\, \nabla_{\beta^T} G_{W(\beta_0)}(W(\beta_0))\,\}\, .
\end{align*}

\end{lem}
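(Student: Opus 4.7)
My plan is to mimic the standard M-estimator argument for semiparametric least squares (SLS), with two non-routine modifications: (i) the regressor $\widetilde W(\beta)=q(\beta,X,\widetilde H(Z))$ is built from the first-step estimator $\widetilde H$ of the generated regressor, and (ii) both $\widetilde H$ and the inner estimator $\widehat G_{\widetilde W(\beta)}$ are boosted/bias-corrected, so their biases are handled by Assumption \ref{Assumption: bandwidth} as in Proposition \ref{prop: Asymptotic Test}. First I would establish consistency $\widehat\beta\stackrel{p}{\to}\beta_0$: since Lemma \ref{lem: trimming and convergence rates for 1st step} gives uniform consistency of $\widetilde H$ and the convergence-rate lemmas give uniform consistency of $\widehat G_{\widetilde W(\beta)}$ in $\beta$, the sample criterion $\mathbb P_n[Y-\widehat G_{\widetilde W(\beta)}(\widetilde W(\beta))]^2\widehat t$ converges uniformly in $\beta$ to $\mathbb E[Y-G_{W(\beta)}(W(\beta))]^2$, whose unique minimizer is $\beta_0$ under $\mathcal H_0$ by Assumption \ref{Assumption: parametric}.

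Having placed $\widehat\beta$ in a shrinking neighborhood of $\beta_0\in\mathrm{Int}(B)$, I use the first-order condition $\nabla_\beta \widehat M(\widehat\beta)=0$ and a mean-value expansion to write
\begin{equation*}
\sqrt n(\widehat\beta-\beta_0)=-\bigl[\tfrac{1}{2}\nabla^2_\beta\widehat M(\bar\beta)\bigr]^{-1}\tfrac{1}{2}\sqrt n\,\nabla_\beta\widehat M(\beta_0).
\end{equation*}
The Hessian term I would pin down by standard uniform-convergence arguments (using the smoothness of $q$ and $G_{W(\beta)}$ in $\beta$ from Assumption \ref{Assumption: smoothness}) to obtain $\nabla^2_\beta\widehat M(\bar\beta)/2=\Sigma+o_P(1)$, where under $\mathcal H_0$ the cross terms $[Y-G_{W(\beta_0)}(W(\beta_0))]\,\nabla^2 G$ have zero mean and are asymptotically negligible. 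The gradient is
\begin{equation*}
\tfrac{1}{2}\sqrt n\,\nabla_\beta\widehat M(\beta_0)=-\sqrt n\,\mathbb P_n[Y-\widehat G_{\widetilde W(\beta_0)}(\widetilde W(\beta_0))]\,\nabla_\beta\bigl[\widehat G_{\widetilde W(\beta)}(\widetilde W(\beta))\bigr]_{\beta=\beta_0}\widehat t.
\end{equation*}
This is the key object to expand. I would split the $\beta$-derivative of $\widehat G_{\widetilde W(\beta)}(\widetilde W(\beta))$ into the direct piece (differentiating the kernel weights in $\widehat G$) and the indirect piece $\partial\widehat G_{\widetilde W(\beta_0)}\,\partial_\beta q(\beta_0,X,\widetilde H(Z))$, and show via the convergence-rate lemmas that the gradient term reduces, up to $o_P(1)$, to $-\sqrt n\,\mathbb P_n[Y-\widehat G_{\widetilde W(\beta_0)}(\widetilde W(\beta_0))]\,\nabla_\beta G_{W(\beta_0)}(W(\beta_0))\,t$ plus an $\partial G_{W(\beta_0)}\partial_\beta q$ contribution that asymptotically vanishes under $\mathcal H_0$ (since it multiplies a mean-zero residual).

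The core step—and the main obstacle—is then the same analysis as in the proof of Proposition \ref{prop: Asymptotic Test}: expand $\sqrt n\,\mathbb P_n[Y-\widehat G_{\widetilde W(\beta_0)}(\widetilde W(\beta_0))]\,\nabla_\beta G_{W(\beta_0)}(W(\beta_0))\,t$ by linearizing around the infeasible $(\beta_0,H,G_{W(\beta_0)})$. One writes $[Y-\widehat G_{\widetilde W(\beta_0)}(\widetilde W(\beta_0))]=\varepsilon+[G_{W(\beta_0)}(W(\beta_0))-\widehat G_{\widetilde W(\beta_0)}(\widetilde W(\beta_0))]$; the $\varepsilon$ part produces $\sqrt n\,\mathbb P_n\varepsilon\,\nabla_\beta G_{W(\beta_0)}(W(\beta_0))$, and the nonparametric-error part is handled by a Taylor expansion in $(\widetilde W(\beta_0)-W(\beta_0))=\partial_H q(\beta_0,X,H(Z))[\widetilde H(Z)-H(Z)]+o_P(n^{-1/2})$ combined with the small-bias property of $\widehat G_{\widetilde W(\beta_0)}$ (Lemmas \ref{lem: belonging conditions}, \ref{lem: convergence rates for Ghat fhat and iotahat}, \ref{lem: stochastic expansions}, \ref{lem: ASE for 1st step}). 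The $\widetilde G$-type pieces vanish under $\mathcal H_0$ because of mean-zero $\varepsilon$ conditional on $(X,Z)$, while the piece in $\widetilde H-H$ produces, after an ASE and conditioning calculation, the weighting $a(Z)=\mathbb E\{\Sigma\nabla_\beta G_{W(\beta_0)}\partial^T G_{W(\beta_0)}\partial_H q\mid Z\}$ acting on $D-H(Z)$ via the first-step IFR of $\widetilde H$. Pre-multiplying by the inverse Hessian $\Sigma^{-1}$ (absorbed into the definition of $a$ and the first leading term as stated) and collecting terms yields the claimed expansion. The hardest part is keeping the bookkeeping tight for the trimming $\widehat t$ and the bias-corrected estimators so that all remainder terms are uniformly $o_P(1)$, which is exactly where the bandwidth conditions of Assumption \ref{Assumption: bandwidth} and the entropy bound of Lemma \ref{lem: entropy bounds for asynotm of betahat} are used.
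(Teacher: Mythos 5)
Your proposal follows the paper's own proof in both structure and ingredients: uniform convergence of the criterion to establish consistency, a first-order condition and mean-value expansion, convergence of the Hessian to $\Sigma$, and an IFR expansion of the gradient by splitting $Y-\widehat G_{\widetilde W(\beta_0)}(\widetilde W(\beta_0))$ into $\varepsilon$ plus the nonparametric error, with the trimming, small-bias, and stochastic-equicontinuity bookkeeping handled via Lemmas \ref{lem: trimming and convergence rates for 1st step}--\ref{lem: ASE}. One point worth making explicit, since it is the hinge that lets you re-use the Proposition \ref{prop: Asymptotic Test} machinery on the nonparametric-error piece of the gradient: the argument relies on the Klein--Spady orthogonality $\mathbb E\{\nabla_\beta G_{W(\beta_0)}(W(\beta_0))\mid W(\beta_0)\}=0$, so that $\nabla_\beta G_{W(\beta_0)}(W(\beta_0))$ plays exactly the role of $\varphi_s^\perp$ when applying Lemma \ref{lem: ASE}(i) and Lemma \ref{lem: ASE for 1st step} to isolate the $a(Z)[D-H(Z)]$ correction.
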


\begin{proof}
We will first prove the consistency of $\widehat \beta$. From Lemma \ref{lem: convergence rates for Ghat fhat and iotahat}(i) we get that $\max_{i=1,\ldots,n}$ $|\widehat{G}_{\widetilde{W}(\beta)}(\widetilde{W}_i(\beta))$ $-G_{W(\beta)}(W_i(\beta))|\widehat t _i=o_P(1)$ uniformly in $\beta\in B$. Thus,  $\mathbb P _n [Y-\widehat G _{\widetilde W (\beta)}(\widetilde W (\beta))]^2\,\widehat t$ $=\mathbb P _n [Y- G _{ W (\beta)}( W (\beta))]^2\,\widehat t$ $+o_P(1)$ uniformly in $\beta\in B$. Lemma \ref{lem: trimming and convergence rates for 1st step}(i) allows replacing $\widehat t$ with 1 on the RHS of the previous equation, at the cost of an $o_P(1)$ reminder. Assumption \ref{Assumption: smoothness}(i)(v) implies that $G_{W(\beta)}(q(\beta,x,H(z)))$ is Lipschitz in $\beta$ uniformly in $(x,z)\in Supp(X,Z)$. So, by a Glivenko-Cantelli theorem $\mathbb P _n [Y- G _{ W (\beta)}( W (\beta))]^2$ $=\mathbb E [Y- G _{ W (\beta)}( W (\beta))]^2$ $+o_P(1)$ uniformly in $\beta\in B$. Gathering results, 
\begin{equation*}
    \mathbb P _n [Y-\widehat G _{\widetilde W (\beta)}(\widetilde W (\beta))]^2\widehat t=\mathbb E [Y-G_{W(\beta)}(W(\beta))]^2+o_P(1)
\end{equation*}
uniformly in $\beta\in B$. By Assumption \ref{Assumption: parametric}(i), the RHS of the above display is uniquely minimized at $\beta_0$. So, by \citet[Theorem 5.7]{vaart_asymptotic_1998} 
$$\widehat \beta=\beta_0+o_P(1)\, .$$
Let us now obtain the influence function representation for $\widehat \beta$. By Lemma \ref{lem: trimming implications}(ii), wpa1 $\widehat t (X_i,Z_i)\leq t^\delta(X_i,Z_i)$ uniformly in $i=1,\ldots,n$, where $\delta=1/2$. Assumption \ref{Assumption: trimming}(i) ensures that there exists $\eta(\delta)>0$ such that wpa1 $t^\delta(X_i,Z_i)\leq $ $\mathbb I \{f_{W( \beta)}(q( \beta,X_i,$ $H(Z_i))\geq \eta(\delta)\tau_n\}$ for all $i=1,\ldots,n$ and $\beta\in B$. Also, by Lemma \ref{lem: trimming implications}(iv) wpa1 $t^\delta(X_i,Z_i)$ $\mathbb I \{f_{W( \beta)}(q( \beta,X_i,H(Z_i))\geq \eta(\delta)\tau_n\}$ $\leq t^\delta (X_i,Z_i)$ $\mathbb I \{f_{W( \beta)}(q( \beta,X_i,\widetilde H (Z_i)))\geq \eta(\delta) \tau_n/2\}$ for all $i=1,\ldots,n$ and $\beta\in B$. Finally, by using Lemma \ref{lem: trimming implications}(iii) we have that wpa1 $\mathbb I \{f_{W( \beta)}(q( \beta,X_i,\widetilde H (Z_i)))\geq \eta(\delta) \tau_n/2\}$ $\leq \mathbb I \{\widehat f_{\widetilde W( \beta)}(q( \beta,X_i,\widetilde H (Z_i)))\geq \eta(\delta) \tau_n/4\}$ for all $i=1,\ldots,n$ and $\beta\in B$. Gathering results gives that wpa1 
\begin{equation}\label{eq: trimming implication for foc of betahat}
    \widehat t (X_i,Z_i) \leq t^\delta (X_i,Z_i) \, \mathbb I \{\widehat f_{\widetilde W(\beta)}(q(\beta,X_i,\widetilde H(Z_i)))\geq \eta(\delta) \tau_n/4\}
\end{equation}
uniformly in $i=1,\ldots,n$ and $\beta\in B$, with $\delta=1/2$. This implies that wpa1 the objective function in (\ref{eq: definition of betahat}) is differentiable in $\beta$ over $Int(B)$. Thus, since $\widehat \beta= \beta_0+o_P(1)$ with $\beta_0\in Int(B)$ we get 
\begin{equation*}
    \mathbb P _n [Y-\widehat G _{\widetilde W (\widehat \beta)}(\widetilde W (\widehat \beta))]\,\nabla_\beta \widehat G _{\widetilde W (\widehat \beta)}(\widetilde W (\widehat \beta)) \,\widehat t=o_P(n^{-1/2})\, .
\end{equation*}
Now, from Equation (\ref{eq: trimming implication for foc of betahat}) we can use the Mean-Value Theorem to expand the LHS of the previous display around $\beta_0$. This gives 
\begin{align}\label{eq: mean-value expansion for betahat}
    \widehat \Sigma (\overline \beta)\, \sqrt{n}(\widehat \beta - \beta_0)=& \sqrt{n}\mathbb P _n [Y-\widehat G _{\widetilde W (\beta_0)}(\widetilde W (\beta_0))]\, \nabla_\beta \widehat G _{\widetilde W (\beta_0)}(\widetilde W (\beta_0))\, \,\widehat t+o_P(1)\,
\end{align}
where $\overline \beta$ lies on the segment joining $\widehat \beta$ and $\beta_0$ and 
\begin{align*}
        \widehat \Sigma (\beta)=&- \mathbb P _n [Y-\widehat G _{\widetilde W (\beta)}(\widetilde W (\beta))]\, \nabla_{\beta \beta^T}^2 \widehat G _{\widetilde W (\beta)}(\widetilde W (\beta))\, \,\widehat t \nonumber \\
    & + \mathbb P _n \nabla_{\beta } \widehat G _{\widetilde W (\beta)}(\widetilde W (\beta))\, \nabla_{\beta^T } \widehat G _{\widetilde W (\beta)}(\widetilde W (\beta)) \,\widehat t\, . \nonumber 
\end{align*}
Let us now obtain a limit for $\widehat \Sigma (\overline \beta)$. Combining Lemma \ref{lem: convergence rates for Ghat fhat and iotahat}(i) and Lemma \ref{lem: convergence of the derivatives at the data points}(i)(ii) with the arguments used at the beginning of this proof gives $\widehat \Sigma (\beta)$ $=-\mathbb E [Y-G_{W(\beta)}(W(\beta))]$ $\nabla^2_{\beta \beta^T}G_{W(\beta)}(W(\beta))$ $+ \mathbb E $ $\nabla_\beta G_{W(\beta)}(W(\beta))$ $ \nabla_{\beta^T} G_{W(\beta)}(W(\beta))$ $+o_P(1)$ uniformly in $\beta\in B$. By using this result,  $\overline \beta = \beta_0+o_P(1)$, and since $G_{W(\beta)}(W(\beta))$, $\nabla_\beta G_{W(\beta)}(W(\beta))$, and $\nabla^2_{\beta \beta^T} $ $G_{W(\beta)}(W(\beta))$ are Lipschitz in $\beta$ (see Assumption \ref{Assumption: smoothness}(i)(v)(vi)),  we get\footnote{Notice that under $\mathcal{H}_0$, by Assumption \ref{Assumption: iid and phi}(ii) we have $\mathbb{E}[Y-G_{W(\beta_0)}]\nabla^2_{\beta \beta^T} G_{W(\beta_0)}(W(\beta_0))=0$ .}
\begin{equation}\label{convergence of Sigmahat betahat}
    \widehat \Sigma (\overline \beta)=\mathbb E\, \nabla_{\beta} G_{W(\beta_0)}(W(\beta_0))\, \nabla_{\beta^T} G_{W(\beta_0)}(W(\beta_0)) + o_P(1)=\Sigma+o_P(1)\, .
\end{equation}
So, to show the result of the lemma it suffices to obtain an IFR for the leading term on the RHS of (\ref{eq: mean-value expansion for betahat}). To this end, let us decompose such a term as 
\begin{align}\label{eq: decomposition of FOC for betahat}
    \sqrt{n}\mathbb P _n [Y-\widehat G _{\widetilde W (\beta_0)}&(\widetilde W (\beta_0))]\, \nabla_\beta \widehat G _{\widetilde W (\beta_0)}(\widetilde W (\beta_0))\, \,\widehat t\nonumber \\
&= \sqrt{n}\mathbb P _n  \varepsilon\, \nabla_\beta \widehat G _{\widetilde W (\beta_0)}(\widetilde W (\beta_0)) \widehat t \nonumber \\
&+ \sqrt{n} \mathbb P _n [G_{W(\beta_0)}(W(\beta_0))-\widehat G _{\widetilde W (\beta_0)}(\widetilde W (\beta_0))]\, \nabla_\beta \widehat G _{\widetilde W (\beta_0)}(\widetilde W (\beta_0)) \, \widehat t\, .
\end{align}
Let us consider the first term on the RHS. From Lemma \ref{lem: convergence of the derivatives at the data points}(i) $\max_{i=1,\ldots,n}|\nabla_\beta \widehat{G}_{\widetilde W (\beta_0)}(\widetilde W_i(\beta_0)) |(\widehat t _i$ $ + t_i)=O_P(1)$. Then,  and arguing as in (\ref{eq: trimming replacement}), the trimming $\widehat t$ can be replaced with $t$ at the cost of an $o_P(1)$ reminder.  So, 
\begin{align}\label{eq: first term in decomposition of FOC for betahat}
    \text{1}^{st}\text{ term RHS of (\ref{eq: decomposition of FOC for betahat})}^T=& \sqrt{n}\mathbb P _n \varepsilon \partial_{\beta^T} \widehat G_{\widetilde W (\beta_0)}( q(\beta_0,X,\widetilde H (Z)))\, t \nonumber \\
    &+ \sqrt{n} \mathbb P _n \varepsilon \partial^T \widehat G _{\widetilde W (\beta_0)}(q(\beta_0,X,\widetilde H (Z)))\, \partial_{\beta^T} q(\beta_0,X,\widetilde H (Z))\, t \nonumber \\
    &+ o_P(1)\, .
\end{align}
To handle the first term on the RHS of the previous display we will prove the conditions of Lemma \ref{lem: ASE}(ii). So, let us fix $\delta=1/2$. By Assumption \ref{Assumption: trimming}(i) there exists $\eta(\delta)>0$ such that for each $n$ large enough $t^\delta(x,z)\leq t^\delta (x,z)\, \mathbb I \{f_{W(\beta_0)}(q(\beta_0,x,H(z)))\geq \eta(\delta)\tau_n\}$ for all $(x,z)\in Supp(X,Z)$. Also, Lemma \ref{lem: trimming implications}(iv) ensures that wpa1  $t^\delta(x,z)\, \mathbb I \{f_{W(\beta_0)}(q(\beta_0,x,H(z)))\geq \eta(\delta)\tau_n\}$ $\leq t^\delta(x,z)\, \mathbb I \{f_{W(\beta_0)}(q(\beta_0,x,\widetilde H(z)))\geq \eta(\delta)\tau_n/2\}$  for all $(x,z)\in Supp(X,Z)$. The previous two inequalities imply that wpa1 $t^\delta(x,z)\leq t^\delta(x,z)$
 $\mathbb I \{f_{W(\beta_0)}(q(\beta_0,x,\widetilde H(z)))\geq \eta(\delta)\tau_n/2\}$ for all $(x,z)\in Supp(X,Z)$. Also, from Lemma \ref{lem: belonging conditions}(xiii) $\Pr(\widetilde H \in \mathcal{G}_{\lambda}(\mathcal{Z}_n^{\eta(\delta)}))\rightarrow 1$, with $\lambda=\lceil (p+1)/2 \rceil$.  So, we obtain that \begin{equation*}
     \Pr(\widetilde H \in \Psi_n^\delta)\rightarrow 1\  ,
\end{equation*}
where $\Psi_n^\delta$ is the class of functions defined in (\ref{eq: definition of Psi class of functions}) and $\delta=1/2$. Moreover, from Lemma \ref{lem: belonging conditions}(iii) 
\begin{equation*}
    \Pr (\partial_{\beta_l}\widehat G_{\widetilde W (\beta_0)}\in \mathcal{G}_\lambda(\mathcal{W}_{n}^{\eta(\delta)/3}))\rightarrow 1\, \text{ with } \lambda=\lceil (d+1)/2 \rceil\, ,
\end{equation*}
$\mathcal{W}_{n}^{\eta(\delta)/3}:=\mathcal{W}_{n,\beta_0}^{\eta(\delta)/3}$\,, and $\delta=1/2$. The previous two displays imply that the function $(x,z)\mapsto \partial_{\beta_l} \widehat G _{\widetilde W (\beta_0)}(q(\beta_0,x,\widetilde H (z)))$ belongs  to the class $\mathcal{F}_n^\delta$ wpa1, where $\mathcal{F}_n^\delta$ is defined in Lemma \ref{lem: entropy bounds for asynotm of betahat} and $\delta=1/2$. From Lemma \ref{lem: entropy bounds for asynotm of betahat}(i), $\log N(\epsilon,\mathcal{F}_n^\delta,\|\cdot\|_{\infty,\mathcal{U}_n^\delta})\lesssim \epsilon^{-\upsilon}$ for all $\epsilon\in(0,1)$ with $\upsilon\in(0,2)$. This and $\sup_{x,z}|\partial_{\beta_l}\widehat G _{\widetilde{W}(\beta_0)}(q(\beta_0,x,\widetilde H (z)))- \partial_{\beta_l} G _{W(\beta_0)}(q(\beta_0,x,H (z)))|t=o_P(1)$ from Lemma \ref{lem: convergence of the derivatives at the data points}(iv) show that the conditions of Lemma \ref{lem: ASE}(ii) hold for the first term on the RHS of (\ref{eq: first term in decomposition of FOC for betahat}), with  $\mathcal I _n = \mathcal{F}_{n}^{1/2}$ and $t^{2\delta}=t$. Thus,  
\begin{equation}\label{eq: 1st component of the 1st addendum in IFR of betahat}
    \mathbb G _n \varepsilon \partial_\beta \widehat G_{\widetilde W (\beta_0)}(q(\beta_0,X,\widetilde H (Z)))\, t=     \mathbb G _n \varepsilon \partial_\beta  G_{ W (\beta_0)}(q(\beta_0,X,H (Z)))\, t+o_P(1)\, ,
\end{equation}
where for any function $g(Y,X,Z,D)$, $\mathbb{G}_n g(Y,X,Z,D):=\sqrt{n}(\mathbb{P}_n g-\int g(Y,X,Z,D)$ $dP(Y,$ $X,Z,D))$.
By a similar reasoning, 
we obtain\footnote{More in detail, by the Lipschitz continuity of $\partial_u \partial_{\beta^T}q(\beta,x,u)$ (see Assumption \ref{Assumption: smoothness}(v)), $\|(\widetilde H - H)t\|_\infty=o_P(n^{-1/4})$, and a Mean-Value expansion, we have $\partial_{\beta^T}q(\beta_0,X_i,\widetilde{H}(Z_i))\,t_i$ $=\partial_{\beta^T}q(\beta_0,X_i,H(Z_i))\,t_i$ $+\partial_H\partial_{\beta^T}q(\beta_0,X_i,H(Z_i))\,t_i\,[\widetilde{H}(Z_i)-H(Z_i)]$ $+o_P(n^{-1/2})$ uniformly in $i=1,\ldots,n$. We can then plug this expression into the LHS of (\ref{eq: 2nd component of the 1st addendum in IFR of betahat}), argue similarly as for (\ref{eq: 1st component of the 1st addendum in IFR of betahat}), and use Lemma \ref{lem: convergence of the derivatives at the data points}(iii), Lemma \ref{lem: trimming and convergence rates for 1st step}(iii), Lemma \ref{lem: belonging conditions}(ii), and Lemma \ref{lem: entropy bounds for asynotm of betahat}(i)(ii)   to finally get the equality in (\ref{eq: 2nd component of the 1st addendum in IFR of betahat}).  }
\begin{align}\label{eq: 2nd component of the 1st addendum in IFR of betahat}
    \sqrt{n} \mathbb G _n \varepsilon \partial^T \widehat G _{\widetilde W (\beta_0)}(q(\beta_0&,X,\widetilde H (Z)))\, \partial_{\beta^T} q(\beta_0,X,\widetilde H (Z))\, t \nonumber  \\
    & = \sqrt{n} \mathbb G _n \varepsilon \partial^T  G _{ W (\beta_0)}(q(\beta_0,X, H (Z)))\, \partial_{\beta^T} q(\beta_0,X, H (Z))\, t + o_P(1)\, .
\end{align}
Next, since under $\mathcal{H}_0$ and from Assumption \ref{Assumption: smoothness}(ii)  $\mathbb{E}\{\varepsilon|X,Z\}=0$, we can  replace the operator $\mathbb{G}_n$ with $\mathbb P _n$ in the previous two displays. Also, 
from Lemma \ref{lem: trimming and convergence rates for 1st step}(i), the trimming $t$ in the RHSs of the previous two displays can be replaced with 1 at the cost of an $o_P(1)$ reminder. 
Thus, plugging the previous two displays into (\ref{eq: first term in decomposition of FOC for betahat}) gives 
\begin{equation}\label{eq: first term in decomposition of FOC for betahat II}
    \text{1}^{st}\text{ term RHS of (\ref{eq: decomposition of FOC for betahat})}=\sqrt{n} \mathbb P _n \varepsilon \nabla_\beta G_{W(\beta_0)}(W(\beta_0))+o_P(1)\, .
\end{equation}
Let us now consider the second term RHS of (\ref{eq: decomposition of FOC for betahat}). We have (see the comments below)
\begin{align*}
    \text{2}^{nd}\text{ term RHS of (\ref{eq: decomposition of FOC for betahat})}=& \sqrt{n} \mathbb P _n [G_{W(\beta_0)}(W(\beta_0))- \widehat G _{\widetilde W (\beta_0)}(\widetilde W (\beta_0))]\, \nabla_\beta G_{W(\beta_0)}(W(\beta_0))\, \widehat t\\
 =& \sqrt{n} \mathbb P _n [G_{W(\beta_0)}(W(\beta_0))- \widehat G _{\widetilde W (\beta_0)}(\widetilde W (\beta_0))]\, \nabla_\beta G_{W(\beta_0)}(W(\beta_0))\,  t + o_P(1)\, .
\end{align*}
The first equality follows from the $n^{-1/4}$ rates in Lemma \ref{lem: convergence of the derivatives at the data points}(i) and Lemma \ref{lem: convergence rates for Ghat fhat and iotahat}(i). To obtain the second equality, we have used Lemma \ref{lem: convergence rates for Ghat fhat and iotahat}(i) and the same arguments as in (\ref{eq: trimming replacement}). Now, from \citet[pages 401-403]{klein_efficient_1993} $\mathbb E \{\nabla_\beta G_{W(\beta_0)}(W(\beta_0))|W(\beta_0)\}=0$. So, we can use the same reasoning as for the LHS of Equation (\ref{eq: 2nd term RHS of 2nd decomposition of Asy Emp Process}),  replacing $f_{W(\beta_0)}\varphi^\perp_s$ with $\nabla_\beta G_{W(\beta_0)}(W(\beta_0))$ and $\widehat \beta$ with $\beta_0$, to get 
\begin{align*}
    \sqrt{n} \mathbb P _n [G_{W(\beta_0)}(W(\beta_0))- \widehat G _{\widetilde W (\beta_0)}(\widetilde W (\beta_0))]&\, \nabla_\beta G_{W(\beta_0)}(W(\beta_0))\,  t \\
    &= -\sqrt{n}\mathbb P _n a(Z)[D-H(Z)]+o_P(1)\, ,\\
    \text{ where }a(Z):=\mathbb{E}\{\nabla_\beta G_{W(\beta_0)}(W(\beta_0))\, \partial^T G_{W(\beta_0)}&(W(\beta_0))\,\partial_Hq(\beta_0,X,H(Z))\, | \, Z\}\, .
\end{align*}
Gathering the previous three displays and (\ref{eq: decomposition of FOC for betahat}) concludes the proof.

\end{proof}

\section{Bootstrap Analysis}\label{sec: Bootstrap Analysis}
In this Appendix, we consider statements relative to the joint probability measure of the bootstrap weights and the sample data. 

\subsection{Proof of Proposition \ref{prop: bootstrap test}}
$(i)$ Let us define 
\begin{align}\label{eq: definition of iota hat star}
    \widehat{\iota}^*_s(w):=&\widehat{T}^{\varphi_s}_{\widetilde W^* (\widehat \beta^*)}(w)/\widehat{f}_{\widetilde W^* (\widehat \beta^*)}(w)\, \\
    \text{ and }\widehat{T}^{\varphi_s}_{\widetilde W^* (\widehat \beta^*)}(w):=&\frac{1}{n h^d}\sum_{i=1}^n \varphi_s(X_i,\widetilde{H}^*(Z_i))\,\widehat t _i \,K\left(\frac{w-\widetilde{W}_i^*(\widehat \beta^*)} {h}\right)\, . \nonumber
\end{align}
We have that wpa1(see the comments below)
\begin{align}\label{eq: decomposition of bootstrapped Emp process}
    \sqrt{n} \mathbb P _n [Y^* -& \widetilde{G}^*_{\widetilde W^* (\widehat \beta^*)}(\widetilde W^* (\widehat \beta^*))]\, \widehat{f}_{\widetilde W^* (\widehat \beta^*)}(\widetilde W^* (\widehat \beta^*))\, \varphi_s(X,\widetilde{H}^*(Z))\,\widehat t \nonumber \\
    = &\sqrt{n}\mathbb P _n \widehat{\varepsilon}^*  \widehat{f}_{\widetilde W^* (\widehat \beta^*)}(\widetilde W^* (\widehat \beta^*))\, \varphi_s(X,\widetilde{H}^*(Z))\, \widehat{t} \nonumber  \\
    &- \sqrt{n} \mathbb P _n \widehat{T}^{\widehat \varepsilon ^*}_{\widetilde W^* (\widehat \beta^*)}(\widetilde W^* (\widehat \beta^*))\,  \varphi_s(X,\widetilde{H}^*(Z))\, \widehat t \nonumber \\
    =& \sqrt{n}\mathbb{P}_n \widehat{\varepsilon}^* \widehat{f}_{\widetilde W^* (\widehat \beta^*)}(\widetilde W^* (\widehat \beta^*))\, [\varphi_s(X,\widetilde{H}^*(Z))-\widehat{\iota}^*_s(\widetilde W^* (\widehat \beta^*))]\, \widehat{t}\nonumber \\
    =& \sqrt{n} \mathbb P _n \xi [Y-\widehat{G}_{\widehat{W}(\widehat{\beta})}(\widehat W (\widehat \beta))]\, \widehat{f}_{\widetilde W^*(\widehat \beta^*)}(\widetilde W^*(\widehat \beta^*))\, [\varphi_s(X,\widetilde{H}^*(Z))-\widehat{\iota}_s^*(\widetilde W^* (\widehat \beta^*))] \widehat t \nonumber  \\
    &+ \sqrt{n} \mathbb P _n [\widehat{G}_{\widehat W (\widehat \beta)}(\widehat W (\widehat \beta))-\widehat{G}^*_{\widetilde W^* (\widehat \beta^*)}(\widetilde W^*(\widehat \beta^*))]\widehat{f}_{\widetilde W^*(\widehat \beta^*)}(\widetilde W^*(\widehat \beta^*)) \\
    &\qquad \hspace{7cm} \cdot [\varphi_s(X,\widetilde{H}^*(Z))-\widehat{\iota}_s^*(\widetilde W^*(\widehat \beta^*))]\, \widehat t\, . \nonumber 
\end{align}
Lemma \ref{lem: trimming implications}(ii) ensures that wpa1 $\widehat t (X_i,Z_i)\leq t^\delta(X_i,Z_i)$ uniformly in $i=1,\ldots,n$, where $\delta=1/2$. From Assumption \ref{Assumption: trimming}(i) there exists $\eta(\delta)>0$ such that wpa1 $t^\delta(X_i,Z_i)\leq $ $\mathbb I \{f_{W(\widehat \beta^*)}(q(\widehat \beta^*,$ $X_i,H(Z_i))\geq \eta(\delta)\tau_n\}$ for all $i=1,\ldots,n$. Lemma \ref{lem: trimming implications}(iv) implies that wpa1 $t^\delta(X_i,Z_i)$ $\mathbb I \{f_{W(\widehat \beta^*)}(q(\widehat \beta^*,X_i,H(Z_i))\geq \eta(\delta)\tau_n\}$ $\leq t^\delta (X_i,Z_i)$ $\mathbb I \{f_{W(\widehat \beta^*)}(q(\widehat \beta^*,X_i,\widetilde H^* (Z_i)))\geq \eta(\delta) \tau_n/2\}$ for all $i=1,\ldots,n$ , and using Lemma \ref{lem: trimming implications}(iii) we have that wpa1 $\mathbb I \{f_{W(\widehat \beta^*)}(q(\widehat \beta^*,X_i,\widetilde H^* (Z_i)))\geq \eta(\delta) \tau_n/2\}$ $\leq \mathbb I \{\widehat f_{\widetilde W^*(\widehat \beta^*)}(q(\widehat \beta^*,X_i,\widetilde H ^* (Z_i)))\geq \eta(\delta) \tau_n/4\}$ for all $i=1,\ldots,n$. Gathering results, wpa1 
\begin{equation}\label{eq: trimming inequality for simplification in Bootstrap}
 \widehat t(X_i,Z_i)\leq \mathbb I \{\widehat f _{\widetilde W ^*(\widehat \beta ^*)}(\widetilde W ^* _i(\widehat \beta ^*))\geq \eta(\delta)\tau_n/4\}   
\end{equation}
 uniformly in $i=1,\ldots,n$. This and the definition of $\widetilde G ^* _{\widetilde W ^* (\widehat \beta)}$ in (\ref{eq: definition of G tilde star beta hat star}) gives the first equality in (\ref{eq: decomposition of bootstrapped Emp process}). Turning to the second equality in (\ref{eq: decomposition of bootstrapped Emp process}), by the definition of $\widehat{T}^{\widehat{\varepsilon}^*}_{\widetilde W^* (\widehat \beta^*)}(\widetilde W^* (\widehat \beta^*))$, we can exchange sums and then use (\ref{eq: trimming inequality for simplification in Bootstrap}) and the definition of $\widehat \iota^*_s$ in (\ref{eq: definition of iota hat star}) to get that wpa1
\begin{align*}
    \sqrt{n} \mathbb P _n \widehat{T}^{\widehat \varepsilon ^*}_{\widetilde W^* (\widehat \beta^*)}(\widetilde W^* (\widehat \beta^*))\,  \varphi_s(X,\widetilde{H}^*(Z))\, \widehat t
     = \sqrt{n} \mathbb P _n \widehat{\varepsilon}^* \widehat{f}_{\widetilde W^* (\widehat \beta^*)}(\widetilde W^* (\widehat \beta^*))\, \widehat{\iota}^*_s(\widetilde W^* (\widehat \beta^*))\, \widehat t\, .
\end{align*}
Then, rearranging terms yields the second equality in (\ref{eq: decomposition of bootstrapped Emp process}). Finally, we obtain the third equality in (\ref{eq: decomposition of bootstrapped Emp process}) from the definition of $\widehat{\varepsilon}^*$ and $Y^*$. \\
To show the desired result, we will obtain the IFR of the RHS of (\ref{eq: decomposition of bootstrapped Emp process}). 
We will handle each term separately. Let us start from the first one. By Lemma \ref{lem: trimming and convergence rates for 1st step}(iv) $|\widetilde H ^*(Z_i)|(\widehat t _i + t_i)$ is bounded in probability uniformly in $i=1,\ldots,n$, so $|\varphi_s(X_i,\widetilde H ^* (Z_i))|(\widehat t_i + t_i)$ will also be bounded in probability uniformly in $i=1,\ldots,n$ and $s$. From Lemmas \ref{lem: IFR for betahat} and \ref{lem: IFR for betahat star}, $\widehat \beta-\beta_0=O_P(n^{-1/2})$ and $\widehat \beta^*-\beta_0=O_P(n^{-1/2})$. So, thanks to Lemma \ref{lem: convergence rates for Ghat fhat and iotahat}(i)(xxiii)(xxv), we have that $|\widehat G _{\widehat W (\widehat \beta)}(\widehat W _i (\widehat \beta))|(\widehat t_i + t_i)$ , $|\widehat f _{\widetilde W^* (\widehat \beta^*)}(\widetilde W^* _i(\widehat \beta^*))|(\widehat t_i + t_i)$, and $|\widehat \iota_s^*(\widetilde W ^*_i(\widehat \beta^*))|(\widehat t_i+t_i)$ are all bounded in probability uniformly in $i=1,\ldots,n$ and $s$. Then,
by using the same reasoning as in (\ref{eq: trimming replacement}) we can replace the trimming $\widehat{t}$ with $t$ at the cost of an $o_P(1)$ reminder and obtain that uniformly in $s$
\begin{align}\label{eq: trimming replacement in 1st term of the bootstrap decomposition}
    \sqrt{n} \mathbb P _n \xi &[Y-\widehat{G}_{\widehat{W}(\widehat{\beta})}(\widehat W (\widehat \beta))]\, \widehat{f}_{\widetilde W^*(\widehat \beta^*)}(\widetilde W^*(\widehat \beta^*))\, [\varphi_s(X,\widetilde{H}^*(Z))-\widehat{\iota}_s^*(\widetilde W^* (\widehat \beta^*))] \widehat t \\
    =&  \sqrt{n} \mathbb P _n \xi [Y-\widehat{G}_{\widehat{W}(\widehat{\beta})}(\widehat W (\widehat \beta))]\, \widehat{f}_{\widetilde W^*(\widehat \beta^*)}(\widetilde W^*(\widehat \beta^*))\, [\varphi_s(X,\widetilde{H}^*(Z))-\widehat{\iota}_s^*(\widetilde W^* (\widehat \beta^*))]  t+o_P(1)\, .  \nonumber 
\end{align}
Now, from Lemma \ref{lem: trimming and convergence rates for 1st step}(iv) $\max_{i=1,\ldots,n}|\widetilde{H}^*(Z_i)-H(Z_i)|t_i=o_P(n^{-1/4})$, while from Lemmas \ref{lem: IFR for betahat star} and \ref{lem: IFR for betahat} $\widehat{\beta}^*-\beta_0=O_P(n^{-1/2})$. Hence, by a Mean-Value expansion of $q(\widehat{\beta}^*,X_i,\widetilde{H}^*(Z_i))$ around $(\beta_0,X_i,H(Z_i))$ and the Lipschitz continuity of $\partial_\beta q$ and $\partial_H q $ (see Assumption \ref{Assumption: smoothness}(v)), we get
\begin{align}\label{eq: bootstrap mean value expansion of q}
    q(\widehat{\beta}^*,X_i,\widetilde{H}^*(Z_i))-q(\beta_0,X_i,H(Z_i))]t_i=&\partial_{\beta^T} q(\beta_0,X_i,H(Z_i))(\widehat \beta ^* - \beta_0) t_i \\
    &+ \partial_H q(\beta_0,X_i,H(Z_i)) [\widetilde{H}^*(Z_i)-H(Z_i)]t_i + o_P(n^{-1/2})\nonumber 
\end{align}
uniformly in $i=1,\ldots,n$. This yields
\begin{align*}
    \max_{i=1,\ldots,n}|\widetilde W ^*_i (\widehat \beta^*)-W_i(\beta_0)|t_i=&O_P\left(\max_{i=1,\ldots,n}|\widetilde H ^*(Z_i)-H(Z_i)|t_i\right)+O_P(\|\widehat \beta^* - \beta_0\|)\\ =&o_P(n^{-1/4})\, .\nonumber 
\end{align*}
Similarly, 
\begin{align*}
   \max_{i=1,\ldots,n}|\widehat W _i (\widehat \beta)-W_i(\beta_0)|t_i=o_P(n^{-1/4})\, . 
\end{align*}
 Also, Lemma \ref{lem: convergence rates for Ghat fhat and iotahat}(viii)(xvii)(xix) implies that $|\widehat G _{\widehat W (\widehat \beta)}( W_i ( \beta_0))|t_i$ , $|\widehat f _{\widetilde W ^* (\widehat \beta ^*)}(W_i(\beta_0))|t_i$, and $|\widehat \iota ^* _{s}(W_i(\beta_0))|t_i$ are bounded in probability uniformly in $i=1,\ldots,n$ and $s$. Hence, using these results, the previous two displays, and plugging the stochastic expansions from Lemma \ref{lem: stochastic expansions}(i)(vii)(viii)(ix) into (\ref{eq: trimming replacement in 1st term of the bootstrap decomposition}) gives 
\begin{align}\label{eq: decomposition of the 1st term of the bootstrap Emp Proc}
\sqrt{n} \mathbb P _n \xi [Y-&\widehat{G}_{\widehat{W}(\widehat{\beta})}(\widehat W (\widehat \beta))]\, \widehat{f}_{\widetilde W^*(\widehat \beta^*)}(\widetilde W^*(\widehat \beta^*))\, [\varphi_s(X,\widetilde{H}^*(Z))-\widehat{\iota}_s^*(\widetilde W^* (\widehat \beta^*))]  t \nonumber  \\ 
=& \sqrt{n} \mathbb P _n \xi[Y-\widehat{G}_{\widehat{W}(\widehat \beta)}(W(\beta_0))]\widehat{f}_{\widetilde W ^* (\widehat \beta ^*)}(W(\beta_0)) \varphi_s(X,H(Z)) t \nonumber \\
    &- \sqrt{n} \mathbb P _n \xi[Y-\widehat{G}_{\widehat{W}(\widehat \beta)}(W(\beta_0))]\widehat{f}_{\widetilde W ^* (\widehat \beta ^*)}(W(\beta_0)) \widehat{\iota}_s^*(W(\beta_0))  t \nonumber \\
 &+ \sqrt{n} \mathbb P_n \xi[Y-\widehat{G}_{\widehat{W}(\widehat \beta)}(W(\beta_0))]  [\varphi_s(X,H(Z))-\widehat{\iota}^*_s(W(\beta_0))] t \nonumber \\
 &\hspace{8cm}\partial^T f_{W(\beta_0)}(W(\beta_0))[\widetilde{W}^* (\widehat \beta ^*)-W(\beta_0)]\nonumber \\
     &+ \sqrt{n} \mathbb P_n \xi[Y-\widehat{G}_{\widehat{W}(\widehat \beta)}(W(\beta_0))] \widehat{f}_{\widetilde W ^* (\widehat \beta ^*)}(W(\beta_0)) \partial_H \varphi_s(X,H(Z))\, [\widetilde{H}^*(Z)-H(Z)] \,  t \nonumber \\
     &-\sqrt{n} \mathbb P_n \xi[Y-\widehat{G}_{\widehat{W}(\widehat \beta)}(W(\beta_0))] \widehat{f}_{\widetilde W ^* (\widehat \beta ^*)}(W(\beta_0)) \partial^T \iota_s(W(\beta_0))\, [\widetilde W ^* (\widehat \beta^*)-W(\beta_0)]  t \nonumber \\
     &- \sqrt{n} \mathbb{P}_n \xi  \widehat{f}_{\widetilde W ^* (\widehat \beta ^*)}(W(\beta_0)) [\varphi_s(X,H(Z))-\widehat{\iota}^*_s(W(\beta_0))]  t \nonumber \\
     &\hspace{8cm}\partial^T G_{W(\beta_0)}(W(\beta_0)) [\widehat W (\widehat \beta)-W(\beta_0)]\nonumber \\
     &+ o_P(1) .
\end{align}
Let us deal with each term on the RHS of (\ref{eq: decomposition of the 1st term of the bootstrap Emp Proc}). For the first one we have (see the comments below)
\begin{align}\label{eq: 1st term RHS of  decomposition of the 1st term of the bootstrap Emp Proc}
    \text{1}^{st} \text{ term RHS of (\ref{eq: decomposition of the 1st term of the bootstrap Emp Proc})}=& \sqrt{n}\mathbb{P}_n \xi [Y - \widehat{G}_{\widehat W (\widehat \beta)}(W(\beta_0))]\widehat{f}_{\widetilde{W}^*(\widehat \beta ^*)}(W(\beta_0))\, \varphi_s(X,H(Z)) t^\eta_{W(\beta_0)}+o_P(1)\nonumber\\
    =& \sqrt{n}\mathbb{P}_n \xi \varepsilon f_{W(\beta_0)} \varphi_s(X,H(Z))\,t^\eta_{W(\beta_0)}+o_P(1)\nonumber \\
    =& \sqrt{n}\mathbb{P}_n \xi \varepsilon f_{W(\beta_0)} \varphi_s(X,H(Z))\,+o_P(1)\, 
\end{align}
uniformly in $s\in\cal S$. The first equality follows from the same arguments as in (\ref{eq: first ASE for asy behavior}). To obtain the second equality, notice first that the leading term is a centered empirical process, as $\mathbb E \xi=0$ and the bootstrap weights $\{\xi_i:i=1,\ldots,n\}$ are independent from the sample data. By Lemma \ref{lem: convergence rates for Ghat fhat and iotahat}(xi)(xx) $\|[\widehat{G}_{\widehat W  (\widehat \beta)}-G_{W(\beta_0)}]t^\eta_{W(\beta_0)}\|_{\infty,\mathcal{W}}=o_P(1)$ and $\|[\widehat{f}_{\widetilde W ^* (\widehat \beta^*)}-f_{W(\beta_0)}]t^\eta_{W(\beta_0)}\|_{\infty, \mathcal{W}}=o_P(1)$. Also, $\widehat{G}_{\widehat W (\widehat \beta)}, \widehat{f}_{\widetilde W^* (\widehat \beta ^*)}\in \mathcal{G}_\lambda(\mathcal{W}_{n,\beta_0}^{\eta/2})$ wpa1 with $\lambda=\lceil (d+1)/2\rceil $, see Lemma \ref{lem: belonging conditions}(v)(xi). So, we can apply the stochastic equicontinuity result in Lemma \ref{lem: ASE}(i) to get the second equality. Finally, by Lemma \ref{lem: trimming implications}(i) we get the third equality.

By the same reasoning we get 
\begin{align}
    \text{2}^{nd} \text{ term RHS of (\ref{eq: decomposition of the 1st term of the bootstrap Emp Proc})}= \sqrt{n}\mathbb{P}_n \xi \varepsilon f_{W(\beta_0)} \iota_s(W(\beta_0))+o_P(1) \, .
\end{align}
Let us now consider the third term on the RHS of (\ref{eq: decomposition of the 1st term of the bootstrap Emp Proc}). Below we show that 
\begin{align}\label{eq: 3rd term RHS of 1st term of Bootstrap Emp proc}
    \text{3}^{rd}  \text{ term RHS of (\ref{eq: decomposition of the 1st term of the bootstrap Emp Proc})}=&\sqrt{n} \mathbb P _n \xi \varepsilon \partial^T f_{W(\beta_0)} [\widetilde W ^* (\widehat \beta ^*) - W(\beta_0)] t \varphi_s^\perp+o_P(1) \nonumber \\
    =& \mathbb P _n \xi \varepsilon \varphi_s^\perp \partial^T f_{W(\beta_0)} \partial_{\beta^T} q(\beta_0,X,H(Z)) t \sqrt{n}(\widehat \beta^* - \beta_0) \nonumber  \\
    & + \sqrt{n} \mathbb P _n \xi \varepsilon \varphi_s^\perp \partial^T f_{W(\beta_0)} \partial_H q(\beta_0,X,H(Z))\, [\widetilde{H}^*(Z)-H(Z)]\ t + o_P(1) \nonumber \\
    = & o_P(1) \, .
\end{align}
As already obtained earlier in this proof, $\max_{i=1,\ldots,n} |\widetilde W^*_i (\widehat \beta ^*)-W_i(\beta_0)| t_i=o_P(n^{-1/4})$. Also, by Lemma \ref{lem: convergence rates for Ghat fhat and iotahat}(viii)(xix) $|\widehat{G}_{\widehat W (\widehat \beta)}(W_i(\beta_0))-G_{0}(W_i(\beta_0))|t_i=o_P(n^{-1/4})$ and 
$|\widehat{\iota}_s^*(W_i(\beta_0))-\iota_s(W_i(\beta_0))|t_i=o_P(n^{-1/4})$ uniformly in $i=1,\ldots,n$ and $s$. This gives the first equality in (\ref{eq: 3rd term RHS of 1st term of Bootstrap Emp proc}). To get the second equality in (\ref{eq: 3rd term RHS of 1st term of Bootstrap Emp proc}), we use the expansion in (\ref{eq: bootstrap mean value expansion of q}). Let us now focus on the third equality in (\ref{eq: 3rd term RHS of 1st term of Bootstrap Emp proc}). By Lemma \ref{lem: trimming and convergence rates for 1st step}(i) $\mathbb P _n \xi \varepsilon \varphi^\perp _s $ $\partial^T f_{W(\beta_0)} \partial_{\beta^T} q(\beta_0,X,H(Z))t$ $=\mathbb P _n \xi \varepsilon $ $ \varphi^\perp _s \partial^T f_{W(\beta_0)} \partial_{\beta^T} q(\beta_0,X,H(Z))+o_P(1)$ uniformly in $s$. Then, a Glivenko-Cantelli property of $\varphi_s^\perp$ gives\footnote{ Assumption \ref{Assumption: iid and phi}(iv) ensures that $\varphi_s$ is uniformly Lipschitz in $s\in\mathcal{S}$, and hence so is $\varphi_s^\perp$. }
\begin{align*}
\mathbb P _n \xi \varepsilon \varphi^\perp _s \partial^T f_{W(\beta_0)} \partial_{\beta^T} q(\beta_0,X,H(Z))+o_P(1)=&\mathbb E \xi \varepsilon \varphi^\perp _s \partial^T f_{W(\beta_0)} \partial_{\beta^T} q(\beta_0,X,H(Z))+o_P(1)\\
=&o_P(1)  \, 
\end{align*}
uniformly in $s\in\mathcal{S}$,
where the last equality follows from $\mathbb E \xi=0$ and the independence of $\{\xi_i:i=1,\ldots,n\}$ from the sample data. Thus,  since $\sqrt{n}(\widehat \beta ^* - \beta_0)=O_P(1)$ from Lemmas \ref{lem: IFR for betahat} and \ref{lem: IFR for betahat star}, the first term after the second equality of (\ref{eq: 3rd term RHS of 1st term of Bootstrap Emp proc}) is $o_P(1)$. Finally, using the stochastic expansions in Lemma \ref{lem: ASE for 1st step}(ii)(iii) and $\mathbb E \xi =0$ gives that the second term after the second equality of (\ref{eq: 3rd term RHS of 1st term of Bootstrap Emp proc}) is also $o_P(1)$.   \\
Arguments similar to those used for (\ref{eq: 3rd term RHS of 1st term of Bootstrap Emp proc}) yield
\begin{align}
    \text{4}^{th}  \text{ term RHS of (\ref{eq: decomposition of the 1st term of the bootstrap Emp Proc})}=&o_P(1)\\
    \text{5}^{th}  \text{ term RHS of (\ref{eq: decomposition of the 1st term of the bootstrap Emp Proc})}=&o_P(1)\, .
\end{align}
Finally, for the sixth term on the RHS of (\ref{eq: decomposition of the 1st term of the bootstrap Emp Proc}) we have (see the comments below)
\begin{align}\label{eq: 6th term RHS of eq: 1st term RHS of  decomposition of the 1st term of the bootstrap Emp Proc}
    \text{6}^{th}  \text{ term RHS of (\ref{eq: decomposition of the 1st term of the bootstrap Emp Proc})}=& \sqrt{n} \mathbb P _n \xi \partial^T G_{W(\beta_0)} [\widehat W (\widehat \beta) - W(\beta_0)] f_{W(\beta_0)}\varphi_s^\perp +o_P(1) \nonumber \\
    =& o_P(1)\, .
\end{align}
As already noticed earlier, 
$\max_{i=1,\ldots,n}|\widehat W_i (\widehat \beta)-W_i(\beta_0)|t_i=o_P(n^{-1/4})$. Also, from Lemma \ref{lem: convergence rates for Ghat fhat and iotahat}(xvii)(xix) $|\widehat f _{\widetilde W^* (\widehat \beta ^*)}(W_i(\beta_0)) - f_{W(\beta_0)}(W_i(\beta_0))|t_i=o_P(n^{-1/4})$ and 
$|\widehat \iota_s^*(W_i(\beta_0)) - \iota_s(W_i(\beta_0))|t_i$ $=o_P(n^{-1/4})$ uniformly in $i=1,\ldots,n$ and $s$. Using these rates leads to the first equality of (\ref{eq: 6th term RHS of eq: 1st term RHS of  decomposition of the 1st term of the bootstrap Emp Proc}). To get the second equality of (\ref{eq: 6th term RHS of eq: 1st term RHS of  decomposition of the 1st term of the bootstrap Emp Proc}), we can proceed as in (\ref{eq: approximation  of W.tilde(betahat) - W(beta0)}) and use the fact that the bootstrap weights $\{\xi_i:i=1,\ldots,n\}$ are mean-zero and independent from the sample data.  

Plugging (\ref{eq: 1st term RHS of  decomposition of the 1st term of the bootstrap Emp Proc})-(\ref{eq: 6th term RHS of eq: 1st term RHS of  decomposition of the 1st term of the bootstrap Emp Proc}) into (\ref{eq: decomposition of the 1st term of the bootstrap Emp Proc}) and then using (\ref{eq: trimming replacement in 1st term of the bootstrap decomposition}) gives 
\begin{align}\label{eq: IFR of the 1st term of the bootstrap decomposition}
    \text{1}^{st} \text{ term RHS of (\ref{eq: decomposition of bootstrapped Emp process})}=\sqrt{n}\mathbb P _n \xi \varepsilon f_{W(\beta_0)} \varphi_s^\perp +o_P(1)\, .
\end{align}
In view of the above display, to prove the desired result we only need to get an expansion for the second term on the RHS of (\ref{eq: decomposition of bootstrapped Emp process}). By Lemma \ref{lem: trimming and convergence rates for 1st step}(iv) $\|(\widetilde H ^* - H)\widehat t\|_\infty=o_P(n^{-1/4})$. Thus,  by Assumptions \ref{Assumption: iid and phi}(iv) and \ref{Assumption: smoothness}(v) 
 $|\varphi_s(X_i,\widetilde H ^* (Z_i))-\varphi_s(X_i,H(Z_i))|\widehat t _i=O_P(\|(\widetilde H ^* - H)\widehat t\|_\infty)=o_P(n^{-1/4})$. Combining this with Lemma \ref{lem: convergence rates for Ghat fhat and iotahat}(i)(xxiii)(xxiv)(xxv) gives 
\begin{align}\label{eq: 2nd term RHS of eq: decomposition of bootstrapped Emp process}
    \text{2}^{nd} \text{ term RHS of (\ref{eq: decomposition of bootstrapped Emp process})}=& \sqrt{n} \mathbb P _n [\widehat G _{\widehat W (\widehat \beta)}(\widehat W (\widehat \beta))-\widehat G ^* _{\widetilde W ^* (\widehat \beta ^*)}(\widetilde W ^* (\widehat \beta ^*))]f_{W(\beta_0)} \varphi^\perp _s \widehat t + o_P(1)\, .
\end{align}
Next, using Lemma \ref{lem: convergence rates for Ghat fhat and iotahat}(i)(xxiv) and the same reasoning as in (\ref{eq: trimming replacement})
 allows us to replace the trimming $\widehat t$ with $t$ at the cost of an $o_P(1)$ reminder. Thus, uniformly in $s$
 \begin{align*}
     \sqrt{n} \mathbb P _n [\widehat G _{\widehat W (\widehat \beta)}(\widehat W (\widehat \beta)&)-\widehat G ^* _{\widetilde W ^* (\widehat \beta ^*)}(\widetilde W ^* (\widehat \beta ^*))]f_{W(\beta_0)} \varphi^\perp _s \widehat t\\
     &= \sqrt{n} \mathbb P _n [\widehat G _{\widehat W (\widehat \beta)}(\widehat W (\widehat \beta))-\widehat G ^* _{\widetilde W ^* (\widehat \beta ^*)}(\widetilde W ^* (\widehat \beta ^*))]f_{W(\beta_0)} \varphi^\perp _s  t+o_P(1)\, .
 \end{align*}
 Next, an application of Lemma \ref{lem: stochastic expansions}(i)(vi) gives
\begin{align}\label{eq: 2nd term of decomposition of bootstrapped Emp process II }
    \sqrt{n} \mathbb P _n [\widehat G _{\widehat W (\widehat \beta)}(\widehat W (\widehat \beta))-\widehat G ^* _{\widetilde W ^* (\widehat \beta ^*)}(\widetilde W ^*& (\widehat \beta ^*))]f_{W(\beta_0)} \varphi^\perp _s t \nonumber  \\
    = & \sqrt{n} \mathbb P _n  [\widehat G _{\widehat W (\widehat \beta)}( W ( \beta_0))-\widehat G ^* _{\widetilde W ^* (\widehat \beta ^*)}( W ( \beta_0))]f_{W(\beta_0)} \varphi^\perp _s t \nonumber \\
    &- \sqrt{n} \mathbb P _n \partial^T G_{W(\beta_0)}(W(\beta_0)) [\widetilde W ^* (\widehat{\beta}^*) - \widehat W(\widehat \beta)] t f_{W(\beta_0)} \varphi_s^\perp \nonumber \\
    &+ o_P(1)\, .
\end{align}
Let us consider the first leading term on the RHS of the above display. We have that uniformly in $s$ (see the comments below)
\begin{align*}
    \sqrt{n} \mathbb P _n  & [\widehat G _{\widehat W (\widehat \beta)}( W ( \beta_0))-\widehat G ^* _{\widetilde W ^* (\widehat \beta ^*)}( W ( \beta_0))]f_{W(\beta_0)} \varphi^\perp _s t\\
    =& \sqrt{n} \mathbb P _n  [\widehat G _{\widehat W (\widehat \beta)}( W ( \beta_0))-\widehat G ^* _{\widetilde W ^* (\widehat \beta ^*)}( W ( \beta_0))]f_{W(\beta_0)} \varphi^\perp _s t^\eta_{W(\beta_0)}+o_P(1)\\
    =& o_P(1)\, .
\end{align*}
The same arguments as in (\ref{eq: first ASE for asy behavior}) allow us to replace the trimming $t$ with $t^\eta_{W(\beta_0)}$ at the cost of an $o_P(1)$ reminder. This gives the first equality in the previous display.
For the second equality, notice that the leading term is a centered empirical process, as $\mathbb{E}\{ \varphi^\perp_s|W(\beta_0)\}=0$. By Lemma \ref{lem: belonging conditions}(v)(x) $\widehat G _{\widehat W (\widehat \beta)}-\widehat G ^* _{\widetilde W ^* (\widehat \beta ^*)}\in  \mathcal{G}_\lambda(\mathcal{W}_{n,\beta_0}^{\eta/2})$ wpa1, with $\lambda=\lceil (d+1)/2 \rceil$, while by Lemma \ref{lem: convergence rates for Ghat fhat and iotahat}(xi)(xxi)  $\|(\widehat G _{\widehat W (\widehat \beta)}-\widehat G ^* _{\widetilde W ^* (\widehat \beta ^*)})t^\eta_{W(\beta_0)}\|_{\infty,\mathcal{W}}=o_P(1)$. So, the conditions of Lemma \ref{lem: ASE}(i) are satisfied and the above empirical process is $o_P(1)$ uniformly in $s\in \cal S$. This proves the second equality of the previous display. 

Let us now consider the second term on the RHS of (\ref{eq: 2nd term of decomposition of bootstrapped Emp process II }). A Mean-Value expansion of $q(\widehat{\beta}^*,X_i,\widetilde{H}^*(Z_i))$ around $(\widehat{\beta},X_i, \widehat{H}(Z_i))$ gives
\begin{align*}
    [q(\widehat{\beta}^*,X_i,\widetilde{H}^*(Z_i)) - q(\widehat{\beta},X_i, \widehat{H}(Z_i)) ] t_i =& \partial_{\beta^T} q(\overline{\beta}, X_i, \overline{H(Z_i)}) (\widehat{\beta}^*-\widehat{\beta})t_i\\ 
    &+ \partial_H q(\overline{\beta}, X_i, \overline{H(Z_i)}) t_i [\widetilde{H}^*(Z_i) - \widehat{H}(Z_i)]
\end{align*}
where $\overline{H(Z_i)}$ lies on the segment between $\widetilde{H}^*(Z_i)$ and $\widehat{H}(Z_i)$, and $\overline{\beta}$ lies on the segment between $\widehat \beta ^*$ and $\widehat \beta$.  Now, 
$\max_{i=1,\ldots,n} |\widetilde{H}^*(Z_i)$ $- H(Z_i)|t_i$ $=o_P(n^{-1/4})$ and $ \max_{i=1,\ldots,n}|\widehat{H}(Z_i)$ $- H(Z_i)|t_i$ $=o_P(n^{-1/4})$, see Lemma \ref{lem: trimming and convergence rates for 1st step}(ii)(iv). Thus, $\max_{i=1,\ldots,n} |\overline{H(Z_i)}$ $- H(Z_i)|t_i$ $=o_P(n^{-1/4})$.
Also, $\widehat{\beta}^*-\beta_0=O_P(n^{-1/2})$ and $\widehat{\beta}-\beta_0=O_P(n^{-1/2})$, see Lemmas \ref{lem: IFR for betahat}-\ref{lem: IFR for betahat star}. Thus, $\overline{\beta}-\beta_0=O_P(n^{-1/2})$.
So, by the previous display and the Lipschitz continuity of $\partial_H q$ and $\partial_\beta q$ (see Assumption \ref{Assumption: smoothness}(v)) we have 
\begin{align*}
    [q(\widehat{\beta}^*,X_i,\widetilde{H}^*(Z_i)) - q(\widehat{\beta},X_i, \widehat{H}(Z_i)) ] t_i =& \partial_{\beta^T} q(\beta_0, X_i, H(Z_i)) (\widehat{\beta}^*-\widehat{\beta})t_i\\ 
    &+ \partial_H q(\beta_0, X_i, H(Z_i)) t_i [\widetilde{H}^*(Z_i) - \widehat{H}(Z_i)]+o_P(n^{-1/2})
\end{align*}
uniformly in $i=1,\ldots,n$. Such an expansion gives 
\begin{align*}
    \sqrt{n} \mathbb P _n \partial^T G_{W(\beta_0)} [\widetilde W ^*& (\widehat \beta ^*) - \widehat W (\widehat \beta)] t f_{W(\beta_0)} \varphi_s^\perp\\ =&  \mathbb P _n f_{W(\beta_0)} \varphi_s^\perp \partial^T G_{W(\beta_0)} \partial_{\beta^T} q(\beta_0,X,H(Z)) t \sqrt{n} (\widehat \beta^* - \widehat \beta) \nonumber \\
    &+ \sqrt{n} \mathbb P _n f_{W(\beta_0)}  \varphi_s^\perp \partial^T G_{W(\beta_0)} \partial_H q(\beta_0,X,H(Z)) t [\widetilde{H}^*(Z_i) - \widehat{H}(Z_i)]\\
    &+ o_P(1)\, 
\end{align*}
uniformly in $s\in\mathcal{S}$. 
By arguments used earlier in this proof, we obtain $\mathbb P _n f_{W(\beta_0)} \varphi_s^\perp \partial^T G_{W(\beta_0)} $ $\partial_{\beta^T} q$ $ t$ $=\mathbb E f_{W(\beta_0)} \varphi_s^\perp \partial^T G_{W(\beta_0)} \partial_{\beta^T} q$ $+o_P(1)$ uniformly in $s\in\mathcal{S}$. Then, using $\sqrt{n}(\widehat{\beta}^*-\widehat \beta)=O_P(1)$ for the first leading term (see  Lemma \ref{lem: IFR for betahat star}) and Lemma \ref{lem: ASE for 1st step}(iii) for the second leading term gives that, uniformly in $s\in\mathcal{S}$, the RHS of the previous display equals 
\begin{align*}
    &\mathbb E \{f_{W(\beta_0)}\varphi_s^\perp \partial^T G_{W(\beta_0)}\partial_{\beta^T} q (\beta_0,X,H(Z))\}\sqrt{n}(\widehat \beta ^* - \widehat \beta)\\
    &+\sqrt{n}\mathbb P _n a_s(Z)\xi[D-H(Z)]+o_P(1)\\
     \text{ with }a_s(Z)&=\mathbb E \{f_{W(\beta_0)}\varphi_s^\perp \partial^T G_{W(\beta_0)}\partial_H q |Z\}\, .
\end{align*}
Gathering results, 
\begin{align*}
    \text{2}^{nd} \text{ term RHS of (\ref{eq: decomposition of bootstrapped Emp process})}=& -\mathbb E \{f_{W(\beta_0)}\varphi_s^\perp \partial^T G_{W(\beta_0)}\partial_{\beta^T} q (\beta_0,X,H(Z))\}\sqrt{n}(\widehat \beta ^* - \widehat \beta)\\
    &-\sqrt{n}\mathbb P _n a_s(Z)\xi[D-H(Z)]+o_P(1)\, 
\end{align*}
uniformly in $s\in\mathcal{S}$. Finally, using this expansion, (\ref{eq: IFR of the 1st term of the bootstrap decomposition}), (\ref{eq: decomposition of bootstrapped Emp process}), and the IFR of $\sqrt{n}(\widehat \beta ^* - \widehat \beta)$ from Lemma \ref{lem: IFR for betahat star}  concludes the proof.\\

$(ii)$ From Assumption \ref{Assumption: iid and phi}(iii)(iv) $\varphi_s$ is Lipschitz in $s\in\mathcal{S}$. So, the class of functions indexing the IFR in Proposition \ref{prop: Asymptotic Test} is Donsker. Thus, by the expansion in Proposition \ref{prop: Asymptotic Test}, the empirical process at the basis of $S_n$ converges weakly to $\mathbb{G}_s$ in $\ell^\infty(\mathcal{S})$, where $\mathbb{G}_s$ is a zero-mean tight Gaussian process  characterized by the collection of covariances in (\ref{eq: collection of covariances}). This implies, by the  continuity of the Cramer-Von Mises functional, that $S_n$ converges in distribution to $\int_{ }|\mathbb{G}_s|^2 d\,\mu(s)$. 
As shown in  \citet{bentkus_asymptotic_1993}, $\int_{ }|\mathbb{G}_s|^2 d\,\mu(s)$ has a continuous distribution. \\
Thanks to \citet[Theorem 2.9.6]{van_der_vaart_weak_1996}, we obtain that, conditionally on the initial sample, the IFR in Proposition \ref{prop: bootstrap test}(i) converges weakly in probability to $\mathbb{G}_s$. So, using the expansion in Proposition \ref{prop: bootstrap test}(i), the bootstrapped statistic $S_n^*$ converges weakly in probability to $\int_{ }|\mathbb{G}_s|^2 d\,\mu(s)$ conditionally on the initial sample. Thus, (ii) follows.\\

$(iii)$ Let us begin by obtaining the asymptotic behavior of $\widehat \beta$ under $\mathcal{H}_1$. By Assumption \ref{Assumption: parametric} there exists a unique $\beta\in B$ that minimizes $\mathbb E [Y-G_{W(\beta)}(W(\beta))]^2$ under $\mathcal H _1$. Let us denote such a pseudo-true value with $\beta^{pt}$. Then,  the arguments used at the beginning of Lemma \ref{lem: IFR for betahat} give that $\widehat \beta = \beta^{pt}+o_P(1)$. Specifically, such arguments do not rely on $\mathcal{H}_0$ and thus remain valid also under $\mathcal{H}_1$. \\
We will now get the asymptotic behavior of $S_n/n$ under $\mathcal{H}_1$. Since the arguments in (\ref{eq: 1st decomposition of Asy Emp Process}) and (\ref{eq: 2nd decomposition of Asy Emp Process}) remain valid also under $\mathcal H _1$, we get that uniformly in $s\in\mathcal{S}$
\begin{align*}
    \mathbb P _n \widetilde{\varepsilon}\, \widehat{f}&_{\widetilde W (\widehat \beta)}(\widetilde W (\widehat \beta))\,\varphi_s(X,\widetilde{H}(Z))\, \widehat{t} \nonumber \\
    = &  \mathbb P _n [Y - \widehat G _{\widetilde W (\widehat \beta)}(\widetilde W (\widehat \beta))] \widehat{f}_{\widetilde W (\widehat \beta)}(\widetilde W (\widehat \beta))  \, [\varphi_s(X,\widetilde{H}(Z))-  \widehat{\iota}_s(\widetilde W (\widehat \beta))] \widehat t \, .
\end{align*}
Below we show that uniformly in $s\in\mathcal{S}$
\begin{align*}
    \mathbb P _n [Y-\widehat G _{\widetilde W (\widehat \beta)}&(\widetilde W (\widehat \beta))]\,\widehat f _{\widetilde W (\widehat \beta)}(\widetilde W (\widehat \beta))\,[\varphi_s(X,\widetilde H (Z))-\widehat \iota_s(\widetilde W (\widehat \beta))]\, \widehat t\\
    =& \mathbb P _n [Y-G_{W(\widehat \beta)}(W(\widehat \beta))]\,f_{W(\widehat \beta)}(W(\widehat \beta))\,[\varphi_s(X,H(Z))-\mathbb{E}\{\varphi_s|W(\beta)\}|_{\beta=\widehat \beta}]\, \widehat t + o_P(1)\\
    =& \mathbb P _n [Y-G_{W(\widehat \beta)}(W(\widehat \beta))]\,f_{W(\widehat \beta)}(W(\widehat \beta))\,[\varphi_s(X,H(Z))-\mathbb{E}\{\varphi_s|W(\beta)\}|_{\beta=\widehat \beta}] + o_P(1)\\
    =& \mathbb E [Y-G_{W(\widehat \beta)}(W(\widehat \beta))]\, f_{W(\widehat \beta)}(W(\widehat \beta))\,[\varphi_s(X,H(Z))-\mathbb{E}\{\varphi_s|W( \beta)\}|_{\beta=\widehat \beta}] + o_P(1)\\
    =& \mathbb E [Y-G_{W(\beta^{pt})}(W(\beta^{pt}))]\, f_{W(\beta^{pt})}(W(\beta^{pt}))\,[\varphi_s(X,H(Z))-\iota_s(W(\beta^{pt}))]\\
    & + o_P(1)\, ,
\end{align*}
where, with an abuse of notation, the expectation after the third equality considers as random only $(Y,X,Z)$ but not $\widehat \beta$.
 To obtain the first equality we have used $\sup_{s\in\mathcal{S}}\max_{i=1,\ldots,n}$ $|\varphi_s(X_i,$$\widetilde H (Z_i))-\varphi_s(X_i,H(Z_i))|\widehat t _i=O_P(\|(\widetilde H - H)\widehat t_i\|_\infty)$ $=o_P(n^{-1/4})$, as noticed earlier, and Lemma \ref{lem: convergence rates for Ghat fhat and iotahat}(i)(ii)(iii). For the second equality we have replaced $\widehat t$ with 1 at the cost of an $o_P(1)$ reminder, thanks to Lemma \ref{lem: trimming and convergence rates for 1st step}(i). Turning to the third equality, we have used a slight abuse of notation, as the expectation $\mathbb{E}$ only integrates with respect to $Y,X,Z$ but considers $\widehat \beta$ fixed. By Assumptions \ref{Assumption: smoothness}(i)(v), $G_{W(\beta)}(W(\beta))$ is Lipschitz in $\beta$, and by Assumption \ref{Assumption: smoothness}(v)(vi) $f_{W(\beta)}(W(\beta))$ and $\mathbb{E}\{\varphi_s|W(\beta)\}$ are also Lipschitz in $\beta,s$. Then, the third equality follows from a Glivenko-Cantelli Theorem.  
 Finally, the last equality is obtained from $\widehat \beta=\beta^{pt}+o_P(1)$ and the Lipschitz conditions just mentioned. \\ 
 Next, by the above display and since the Cramer-Von Mises functional is continuous on $\ell^\infty(\mathcal S)$ with respect to the uniform metric, a Continuous Mapping Theorem gives 
\begin{equation}\label{eq: convergence of S n under H1}
    \frac{S_n}{n} =\int_{ }\left|\mathbb E [Y-G_{W(\beta^{pt})}(W(\beta^{pt}))]\, f_{W(\beta^{pt})}(W(\beta^{pt}))\,[\varphi_s(X,H(Z))-\iota_s(W(\beta^{pt}))]\right|^2 d \mu(s) + o_P(1)\, .
\end{equation}
From \citet[Theorem 2.2]{bierens_econometric_2017}, under $\mathcal{H}_1$ the argument of the above integral is non-null for almost all $s\in \mathcal S$, so the integral will be strictly positive.\\
Let us now obtain the behavior of $\widehat \beta ^*$ under $\mathcal H _1$. We have that (see the comments below)
\begin{align}\label{eq: limit of the bootstrapped SLS criterion}
    \mathbb P _n [Y^*-\widehat G ^*_{\widetilde W ^* (\beta)}(\widetilde W ^* (\beta))]^2 \, \widehat t =& \mathbb P _n [G_{W(\beta^{pt})}(W(\beta^{pt})) + \xi (Y-G_{W(\beta^{pt})}(W(\beta^{pt})))\nonumber \\
    &\hspace{2cm}-\mathbb E \{G_{W(\beta^{pt})}(W(\beta^{pt}))|W(\beta)\}]^2\, \widehat t \, +o_P(1)\nonumber \\
    =& \mathbb E [G_{W(\beta^{pt})}(W(\beta^{pt})) + \xi (Y-G_{W(\beta^{pt})}(W(\beta^{pt})))\nonumber \\
    &\hspace{2cm}-\mathbb E \{G_{W(\beta^{pt})}(W(\beta^{pt}))|W(\beta)\}]^2\,  +o_P(1)\nonumber \\
    =& \mathbb E [G_{W(\beta^{pt})}(W(\beta^{pt})) - \mathbb E \{G_{W(\beta^{pt})}(W(\beta^{pt}))|W(\beta)\} ]^2 \nonumber \\
    &\hspace{2cm} + \mathbb E [Y-G_{W(\beta^{pt})}(W(\beta^{pt}))]^2+o_P(1)\nonumber \\
    =:& Q(\beta)+o_P(1)
\end{align}
uniformly in $\beta \in B$. Let us show the first equality. By Lemma \ref{lem: convergence rates for Ghat fhat and iotahat}(i), $\widehat \beta = \beta^{pt}+o_P(1)$, and since $G_{W(\beta)}(W(\beta))$ is Lipschitz in $\beta$ (as noticed earlier),  we obtain that $|\widehat G _{\widehat W (\widehat \beta )}(\widehat W _i (\widehat \beta))-G_{W(\beta^{pt})}(W_i(\beta^{pt})|\widehat t _i=o_P(1)$ uniformly in $i=1,\ldots,n$. Also, Lemma \ref{lem: convergence rates for Ghat fhat and iotahat}(v) implies that $|\widehat G ^*_{\widetilde W ^* (\beta)}(\widetilde W ^* _i (\beta))-\mathbb E \{G_{W(\widehat \beta)}(W (\widehat \beta))|W(\beta)=W_i(\beta)\}|\widehat t _i=o_P(1)$ uniformly in $i=1,\ldots,n$ and $\beta$. Since $\widehat \beta=\beta^{pt}+o_P(1)$,  we also get that $\mathbb E \{G_{W(\widehat \beta)}(W (\widehat \beta))|W(\beta)=W_i(\beta)\}$ $=\mathbb E \{G_{W(\beta^{pt})}(W (\beta^{pt}))|$ $W(\beta)=W_i(\beta)\}|$ $+o_P(1)$ uniformly in  $i=1,\ldots,n$ and $\beta$. Gathering results gives the first equality of the previous display. The second equality follows from replacing the trimming $\widehat t$ with 1, thanks to Lemma \ref{lem: trimming and convergence rates for 1st step}(i), and then using a Glivenko-Cantelli Theorem. Such a Glivenko-Cantelli Theorem applies thanks to the Lipschitz condition in Assumption \ref{Assumption: parametric}(ii). Finally, we obtain the third equality by developing the square and using the fact that $\mathbb E \xi=0$, $\mathbb E \xi^2=1$, and $\xi$ is independent from the sample data.\\ Next, we show that the leading term on the RHS of the previous display is uniquely minimized in $\beta^{pt}$. To this end, notice that
\begin{equation*}
    \arg \min_{\beta\in B}Q(\beta)=\{\beta\in B\,:\,G_{W(\beta^{pt})}(W(\beta^{pt}))=\mathbb E \{G_{W(\beta^{pt})}(W(\beta^{pt}))|W(\beta)\}\text{ almost surely }\}\, .
\end{equation*}
Now, pick up a generic $\beta$ in such a set. Then, since $G_{W(\beta)}(W(\beta))=\mathbb E \{Y|W(\beta)\}$, we have $\mathbb E |Y-G_{W(\beta)}(W(\beta))|^2$ $\leq \mathbb E |Y-\mathbb E \{G_{W(\beta^{pt})}(W(\beta^{pt})) | W (\beta)\}|^2 $ $=\mathbb E |Y-G_{W(\beta^{pt})}(W(\beta^{pt}))|^2$. So, by Assumption \ref{Assumption: parametric}(i) we obtain that $\beta=\beta^{pt}$. Hence, $\arg \min_{\beta\in B} Q(\beta)=\{\beta^{pt}\}$. By this and (\ref{eq: limit of the bootstrapped SLS criterion}) we can apply \citet[Theorem 5.7]{vaart_asymptotic_1998} to obtain that
\begin{equation*}
\widehat \beta^* = \beta^{pt}+o_P(1)\, .    
\end{equation*}
Next, let us obtain the limit of the bootstrapped statistic. Notice first that the arguments for (\ref{eq: decomposition of bootstrapped Emp process}) remain valid under $\mathcal H _1$. So, 
\begin{align}\label{eq: decomposition of bootstrapped Emp process 2}
    \mathbb P _n \widetilde \varepsilon ^* \widehat f_{\widetilde W ^* (\widehat \beta ^*)}&(\widetilde W ^* (\widehat \beta ^*))\, \varphi_s(X,\widetilde H ^* (Z))\, \widehat t \nonumber \\
    =&  \mathbb P _n \xi [Y-\widehat{G}_{\widehat{W}(\widehat{\beta})}(\widehat W (\widehat \beta))]\, \widehat{f}_{\widetilde W^*(\widehat \beta^*)}(\widetilde W^*(\widehat \beta^*))\, [\varphi_s(X,\widetilde{H}^*(Z))-\widehat{\iota}_s^*(\widetilde W^* (\widehat \beta^*))] \widehat t \nonumber  \\
    &+  \mathbb P _n [\widehat{G}_{\widehat W (\widehat \beta)}(\widehat W (\widehat \beta))-\widehat{G}^*_{\widetilde W^* (\widehat \beta^*)}(\widetilde W^*(\widehat \beta^*))]\widehat{f}_{\widetilde W^*(\widehat \beta^*)}(\widetilde W^*(\widehat \beta^*)) \\
    &\qquad \hspace{7cm} \cdot [\varphi_s(X,\widetilde{H}^*(Z))-\widehat{\iota}_s^*(\widetilde W^*(\widehat \beta^*))]\, \widehat t\, . \nonumber 
\end{align}
From Lemma \ref{lem: convergence rates for Ghat fhat and iotahat}(i) and since $G_{W(\beta)}(W(\beta))$ is Lipschitz in $\beta$ (as already noticed earlier), we get that $\max_{i=1,\ldots,n}|\widehat G _{\widehat{W}(\widehat \beta)}(\widehat{W}_i(\widehat \beta)) - G_{W(\beta^{pt})}(W_i(\beta^{pt}))|\widehat t_i=o_P(1)$. By Lemma \ref{lem: convergence rates for Ghat fhat and iotahat}(iv), $\widehat \beta ^*=\beta^{pt}+o_P(1)$, and since $f_{W(\beta)}(W(\beta))$ is Lipschitz in $\beta$ (see Assumption \ref{Assumption: smoothness}(v)(vi)) we get that $\max_{i=1,\ldots,n}|\widehat f_{\widetilde{W}^*(\widehat \beta ^*)}(\widetilde{W}^*_i(\widehat \beta ^*))-f_{W(\beta^{pt})}(W_i(\beta^{pt}))|\widehat t_i=o_P(1)$.
 As already obtained earlier, $\sup_{s\in\mathcal{S}}\max_{i=1,\ldots,n}|\varphi_s(X_i,\widetilde H ^* (Z_i))-\varphi_s(X_i,H(Z_i))|\widehat t_i=o_P(1)$. Also, by Lemma \ref{lem: convergence rates for Ghat fhat and iotahat}(v),  $\widehat \beta=\beta^{pt}+o_P(1)$, $\widehat \beta^* = \beta^{pt}+o_P(1)$, and the Lipschitz property in Assumption \ref{Assumption: parametric}(ii) we get that $\max_{i=1,\ldots,n}|\widehat G ^* _{\widetilde W ^* (\widehat \beta ^*)}(\widetilde W ^*_i (\widehat \beta ^*))- G_{W(\beta^{pt})}(W_i(\beta^{pt}))|\widehat t _i=o_P(1)$. 
  Finally, by Lemma \ref{lem: convergence rates for Ghat fhat and iotahat}(vi), $\widehat \beta ^*=\beta^{pt}+o_P(1)$, and the Lipschitz property from Assumption \ref{Assumption: smoothness}(v)(vi), we get that $\sup_{s\in\mathcal{S}}\max_{i=1,\ldots,n}|\widehat \iota_s^*(\widetilde{W}^*_i(\widehat \beta ^*)) - \mathbb{E}\{\varphi_s|W_i(\beta^{pt})\}|\widehat t _i=o_P(1)$. Gathering results,  
 
\begin{align}\label{eq: bootstrap empirical process under H1}
    \mathbb P _n \widetilde \varepsilon ^* \widehat f_{\widetilde W ^* (\widehat \beta ^*)}(\widetilde W ^* (\widehat \beta ^*))&\, \varphi_s(X,\widetilde H ^* (Z))\, \widehat t \nonumber \\
    =& \mathbb P _n \xi [Y-G_{W(\beta^{pt})}(W(\beta^{pt}))]\, f_{W(\beta^{pt})}(W(\beta^{pt}))\, [\varphi_s(X,H(Z))-\iota_s(W(\beta^{pt}))]\,\widehat t\nonumber \\
    &+ o_P(1)\, ,
\end{align}
uniformly in $s\in\mathcal{S}$. Next, we have (see the comments below)  
\begin{equation}\label{eq: bootstrap empirical process under H1 (ii)}
    \mathbb P _n \xi [Y-G_{W(\beta^{pt})}(W(\beta^{pt}))]\, f_{W(\beta^{pt})}(W(\beta^{pt}))\, [\varphi_s(X,H(Z))-\iota_s(W(\beta^{pt}))]\,\widehat t= o_P(1)
\end{equation}
uniformly in $s\in\mathcal{S}$. To see this, first use Lemma \ref{lem: trimming and convergence rates for 1st step}(i) to replace $\widehat t$ with 1. Then, since $\varphi_s$ is Lipschitz in $s\in\mathcal{S}$, by Glivenko-Cantelli's Theorem and $\mathbb E \xi=0$
\begin{align*}
    \mathbb P _n \xi [Y-G_{W(\beta^{pt})}&(W(\beta^{pt}))]\, f_{W(\beta^{pt})}(W(\beta^{pt}))\,  [\varphi_s(X,H(Z))-\iota_s(W(\beta^{pt}))] \\
    =& \mathbb E \xi [Y-G_{W(\beta^{pt})}(W(\beta^{pt}))]\, f_{W(\beta^{pt})}(W(\beta^{pt}))\, [\varphi_s(X,H(Z))-\iota_s(W(\beta^{pt}))]+o_P(1)\\
    =& o_P(1)\, 
\end{align*}
uniformly in $s\in\mathcal{S}$. By (\ref{eq: bootstrap empirical process under H1}), (\ref{eq: bootstrap empirical process under H1 (ii)}), and the continuity of the Cramer-Von Mises functional we obtain 
\begin{equation}\label{eq: convergence of S n star under H1}
    \frac{S_n^*}{n}=o_P(1)\, .
\end{equation}
We can finally show that $\Pr(S_n>\widehat c_{1-\alpha})=1+o(1)$ under $\mathcal H _1$.  
To this end, fix any $b>0$ smaller than the integral on the RHS of (\ref{eq: convergence of S n under H1}) (which is strictly positive under $\mathcal{H}_1$, as noticed earlier). For any such $b$ by (\ref{eq: convergence of S n star under H1}) we have $\Pr^*(S_n^*/n\leq b)=1+o_P(1)$, so that $\Pr^*(S_n^*/n\leq b)\geq 1-\alpha$ wpa1. Thus, by definition of $\widehat c_{1-\alpha}$ we also find that $\widehat c_{1-\alpha}/n\leq b$ wpa1.  Since by (\ref{eq: convergence of S n under H1}) $S_n/n> b$ wpa1, we also obtain that $S_n/n>\widehat c_{1-\alpha}/n$ wpa1. This concludes the proof.

\begin{flushright}
{\itshape [Q.E.D.]}
\end{flushright}
 \vspace{0.5cm}

\begin{lem}\label{lem: IFR for betahat star}
Under Assumptions \ref{Assumption: iid and phi}-\ref{Assumption: parametric} and $\mathcal H _0$ we have 
\begin{align*}
    \sqrt{n}(\widehat \beta^* - \widehat \beta)=& \sqrt{n} \mathbb P _n\, \xi\,\varepsilon\, \Sigma\, \nabla_\beta G_{W(\beta_0)}(W(\beta_0))\\
    & - \sqrt{n} \mathbb P _n\, a(Z)\,\xi\,[D-H(Z)]+o_P(1)\, ,
    \end{align*}
    where 
    \begin{align*}
        a(Z)=\mathbb{E}\{\,\Sigma\, \nabla_\beta& G_{W(\beta_0)}(W(\beta_0))\, \partial^T G_{W(\beta_0)}\,\partial_{\beta^T} q\, |\, Z\,\}\\
    \text{ and }\Sigma=\mathbb{E}\{\,\nabla_\beta& G_{W(\beta_0)}(W(\beta_0))\,\, \nabla_{\beta^T} G_{W(\beta_0)}(W(\beta_0))\,\}\, .
\end{align*}
\end{lem}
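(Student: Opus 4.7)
The plan is to mirror the proof of Lemma \ref{lem: IFR for betahat} in the bootstrap world and then subtract. First, I would establish the bootstrap consistency $\widehat{\beta}^* = \beta_0 + o_P(1)$. Because the resampling scheme in (\ref{eq: bootstrap resampling scheme}) imposes the null in the bootstrap world, uniform convergence rates of $\widehat{G}^*_{\widetilde{W}^*(\beta)}$ (bootstrap analog of Lemma \ref{lem: convergence rates for Ghat fhat and iotahat}) together with a Glivenko-Cantelli argument and $\mathbb{E}\xi=0$, $\mathbb{E}\xi^2=1$ show that $\mathbb{P}_n[Y^* - \widehat{G}^*_{\widetilde{W}^*(\beta)}(\widetilde{W}^*(\beta))]^2 \widehat{t} \cvp \mathbb{E}[Y - G_{W(\beta)}(W(\beta))]^2$ uniformly in $\beta \in B$; Assumption \ref{Assumption: parametric} then yields consistency.

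Second, using the trimming bound behind (\ref{eq: trimming implication for foc of betahat}) applied to the bootstrap quantities, the bootstrap objective is differentiable wpa1 and the first-order condition
\[
\mathbb{P}_n[Y^* - \widehat{G}^*_{\widetilde{W}^*(\widehat{\beta}^*)}(\widetilde{W}^*(\widehat{\beta}^*))]\,\nabla_\beta \widehat{G}^*_{\widetilde{W}^*(\widehat{\beta}^*)}(\widetilde{W}^*(\widehat{\beta}^*))\,\widehat{t} = o_P(n^{-1/2})
\]
holds. A Mean-Value expansion about $\beta_0$ produces $\widehat{\Sigma}^*(\overline{\beta}^*)\sqrt{n}(\widehat{\beta}^*-\beta_0)$ on the left-hand side, with $\widehat{\Sigma}^*(\overline{\beta}^*)\cvp \mathbb{E}[\nabla_\beta G_{W(\beta_0)}\nabla_{\beta^T} G_{W(\beta_0)}]$ by the bootstrap version of (\ref{convergence of Sigmahat betahat}). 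Substituting $Y^* = \widehat{G}_{\widehat{W}(\widehat{\beta})}(\widehat{W}(\widehat{\beta})) + \xi[Y - \widehat{G}_{\widehat{W}(\widehat{\beta})}(\widehat{W}(\widehat{\beta}))]$ into the right-hand side and paralleling the decomposition in (\ref{eq: decomposition of FOC for betahat}) yields two pieces: a $\xi$-weighted score $\sqrt{n}\mathbb{P}_n \xi[Y - \widehat{G}_{\widehat{W}(\widehat{\beta})}(\widehat{W}(\widehat{\beta}))]\nabla_\beta \widehat{G}^*_{\widetilde{W}^*(\beta_0)}(\widetilde{W}^*(\beta_0))\widehat{t}$, and a nonparametric-error piece involving $\widehat{G}_{\widehat{W}(\widehat{\beta})}(\widehat{W}(\widehat{\beta})) - \widehat{G}^*_{\widetilde{W}^*(\beta_0)}(\widetilde{W}^*(\beta_0))$. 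The first piece reduces to $\sqrt{n}\mathbb{P}_n \xi\,\varepsilon\,\nabla_\beta G_{W(\beta_0)}(W(\beta_0)) + o_P(1)$ via Lemmas \ref{lem: convergence of the derivatives at the data points} and \ref{lem: ASE}, with the deterministic slack coming from $\widehat{G}_{\widehat{W}(\widehat{\beta})} - G_{W(\beta_0)}$ vanishing thanks to $\mathbb{E}\xi = 0$ and a Glivenko-Cantelli step. The second piece is handled exactly as in the treatment of the analogous bootstrap terms in the proof of Proposition \ref{prop: bootstrap test}(i), producing $-\sqrt{n}\mathbb{P}_n \xi\,a(Z)[D-H(Z)] + o_P(1)$ through the bootstrap first-step expansion of $\widetilde{H}^* - H$, whose leading $\xi$-term is driven by $\xi(D - \widehat{H}(Z))$ (recall $D^* - \widehat H(Z) = \xi(D-\widehat H(Z))$).

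Premultiplying by $\Sigma$ (in the paper's notational convention) and subtracting the IFR of Lemma \ref{lem: IFR for betahat} — which carries no $\xi$ — yields the stated representation for $\sqrt{n}(\widehat{\beta}^* - \widehat{\beta})$. The principal obstacle is that $Y^*$ and $D^*$ share the common weight $\xi_i$, so the two $\xi$-weighted contributions — one arriving through the bootstrap score term and the other through the bootstrap first-step error of $\widetilde{H}^*$ — must be tracked jointly to obtain the correct signs and so that the ``deterministic background'' pieces (those not carrying $\xi$, generated by $\widehat{G}_{\widehat{W}(\widehat{\beta})}$ and $\widehat{H}$) cancel exactly against the sample IFR of Lemma \ref{lem: IFR for betahat}. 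This cancellation relies on the bootstrap uniform rates in Lemmas \ref{lem: trimming and convergence rates for 1st step} and \ref{lem: convergence rates for Ghat fhat and iotahat}, the asymptotic stochastic equicontinuity Lemma \ref{lem: ASE}, and the bootstrap first-step expansion Lemma \ref{lem: ASE for 1st step}(ii)(iii).
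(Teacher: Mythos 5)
Your plan differs from the paper's proof in one structurally important choice: you expand the bootstrap first-order condition around $\beta_0$, obtain an IFR for $\sqrt{n}(\widehat\beta^*-\beta_0)$, and then subtract the IFR of Lemma~\ref{lem: IFR for betahat}. The paper instead expands around $\widehat\beta$, producing $\widehat\Sigma^*(\overline\beta^*)\sqrt{n}(\widehat\beta^*-\widehat\beta)$ directly, and then decomposes $Y^*-\widehat G^*_{\widetilde W^*(\widehat\beta)}(\widetilde W^*(\widehat\beta))$ into three pieces $\xi\varepsilon$, $\xi[G_{W(\beta_0)}(W(\beta_0))-\widehat G_{\widehat W(\widehat\beta)}(\widehat W(\widehat\beta))]$, and $[\widehat G_{\widehat W(\widehat\beta)}(\widehat W(\widehat\beta))-\widehat G^*_{\widetilde W^*(\widehat\beta)}(\widetilde W^*(\widehat\beta))]$, as in display~(\ref{eq: leading term of mean value expansion of betahat star}). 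This choice of anchor buys a lot: because both the regressand center $\widehat G_{\widehat W(\widehat\beta)}$ and the regressor base point $\widehat W(\widehat\beta)$ (equivalently $\widehat H$) are \emph{fixed} in the bootstrap world, every nontrivial contribution is $\xi$-weighted at leading order --- the second piece dies by $\mathbb E\xi=0$, and the third piece reduces via Lemma~\ref{lem: stochastic expansions} and Lemma~\ref{lem: ASE for 1st step}(iii) to $\sqrt{n}\mathbb P_n\, a(Z)\,\xi\,[D-H(Z)]$ through the expansion of $\widetilde H^*-\widehat H$, which is purely $\xi$-driven. In contrast, expanding around $\beta_0$ makes the third piece $\widehat G_{\widehat W(\widehat\beta)}(\widehat W(\widehat\beta))-\widehat G^*_{\widetilde W^*(\beta_0)}(\widetilde W^*(\beta_0))$ compare objects evaluated at different $\beta$'s and forces the relevant first-step expansion to be of $\widetilde H^*-H=(\widetilde H^*-\widehat H)+(\widehat H-H)$, whose second component carries no $\xi$; this is precisely where the ``deterministic background'' pieces you mention come from, and you then need them to cancel exactly against Lemma~\ref{lem: IFR for betahat}'s non-$\xi$ IFR. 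Your statement that the second piece ``produces $-\sqrt{n}\mathbb P_n\,\xi\,a(Z)[D-H(Z)]+o_P(1)$'' is therefore imprecise --- it produces that plus a non-$\xi$ remainder --- and verifying the cancellation requires showing that the non-$\xi$ portion of the bootstrap NW estimator $\widehat G^*_{\widetilde W^*(\beta_0)}$ reproduces, to $o_P(n^{-1/2})$, the sample nonparametric error that drives the $-\sqrt{n}\mathbb P_n\,a(Z)[D-H(Z)]$ term in Lemma~\ref{lem: IFR for betahat}. That is doable but adds a genuine layer of bookkeeping that the paper sidesteps by its choice of anchor. The rest of your plan --- consistency via a Glivenko-Cantelli argument under the bootstrap null, convergence of $\widehat\Sigma^*$, reduction of the $\xi$-weighted score to $\sqrt{n}\mathbb P_n\,\xi\,\varepsilon\,\nabla_\beta G_{W(\beta_0)}(W(\beta_0))$, and the role of the common weight $\xi$ across the two bootstrap equations --- matches the paper.
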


\begin{proof}
We first show that $\widehat \beta ^* = \beta_0+o_P(1)$. To this end, notice that the arguments used for $\widehat \beta ^*=\beta^{pt}+o_P(1)$ in the proof of Proposition \ref{prop: bootstrap test}(iii) remain valid with $\beta^{pt}$ replaced by $\beta_0$. In particular, such arguments do not rely on $\mathcal{H}_1$. Thus, 
\begin{equation*}
    \widehat \beta ^* = \beta_0+o_P(1)\, .
\end{equation*}
Let us now obtain the influence function representation for $\widehat \beta^*$. The same arguments as in (\ref{eq: mean-value expansion for betahat}) lead to  
\begin{equation}\label{eq: mean value expansion of betahatstar}
    \widehat \Sigma ^* (\overline \beta ^*) \, \sqrt{n}(\widehat \beta ^* - \widehat \beta)=\sqrt{n} \mathbb P _n [Y^* - \widehat G ^*_{\widetilde W ^* (\widehat \beta)}(\widetilde W ^* (\widehat \beta))]\, \nabla_\beta \widehat G ^* _{\widetilde W ^* (\widehat \beta)}(\widetilde W ^* (\widehat \beta))\, \widehat t + o_P(1)\, ,
\end{equation}
where $\overline \beta ^*$ lies on the segments joining $\widehat \beta^*$ and $\widehat \beta$. For $\widehat \Sigma ^* (\beta)$ we have  (see the comments below)  
\begin{align*}
    \widehat \Sigma ^* (\beta)=&-\mathbb P _n [Y^*-\widehat G ^*_{\widetilde W ^*(\beta)}(\widetilde W ^* (\beta))]\,\nabla^2_{\beta \beta ^T}\widehat G ^*_{\widetilde W ^* (\beta)}(\widetilde W ^* (\beta))\, \widehat t\\
    &+ \mathbb P _n \nabla_{\beta}\widehat G ^*_{\widetilde W ^* (\beta)}(\widetilde W ^* (\beta))\, \nabla_{\beta}\widehat G ^*_{\widetilde W ^* (\beta)}(\widetilde W ^* (\beta))\, \widehat t\\
    =&  -\mathbb E [G_{W(\beta_0)}(W(\beta_0))+\xi (Y-G_{W(\beta_0)}(W(\beta_0))) -G_{W(\beta)}(W(\beta))]  \nabla^2_{\beta \beta^T} G_{W(\beta)}(W(\beta))\\
    &+ \mathbb E \nabla_\beta G_{W(\beta)}(W(\beta)) \nabla_{\beta^T} G_{W(\beta)}(W(\beta))+o_P(1)\\
    =&  -\mathbb E [G_{W(\beta_0)}(W(\beta_0)) -G_{W(\beta)}(W(\beta))]  \nabla^2_{\beta \beta^T} G_{W(\beta)}(W(\beta))\\
    &+ \mathbb E \nabla_\beta G_{W(\beta)}(W(\beta)) \nabla_{\beta^T} G_{W(\beta)}(W(\beta))+o_P(1)
\end{align*}
uniformly in $\beta\in B$. Now, the first equality is the definition of $\widehat \Sigma ^* (\beta)$. Let us obtain the second equality. Since under $\mathcal{H}_0$ $\mathbb{E}\{G_{W(\beta_0)}(W(\beta_0))|W(\beta)\}=G_{W(\beta)}(W(\beta))$,   by Lemma \ref{lem: convergence rates for Ghat fhat and iotahat}(xvi) $|\widehat G ^*_{\widetilde{W}^*(\beta)}(\widetilde{W}^*_i(\beta))-G_{W(\beta)}(W_i(\beta))|\widehat t _i=o_P(1)$ uniformly in $i=1,\ldots,n$ and $\beta \in B$. Since $Y^*_i=\widehat G _{\widehat W (\widehat \beta)}(\widehat W_i (\widehat \beta))+ \xi_i(Y_i-\widehat G _{\widehat W (\widehat \beta)}(\widehat W_i (\widehat \beta)))$, by using Lemma \ref{lem: convergence rates for Ghat fhat and iotahat}(i) and $\widehat \beta=\beta_0+O_P(n^{-1/2})$ we get that
$\max_{i=1,\ldots,n}|Y^*_i-G_{W(\beta_0)}(W_i(\beta_0))-\xi_i(Y_i-G_{W(\beta_0)}(W_i(\beta_0)))|\widehat t_i=o_P(1)
$. By Lemma \ref{lem: convergence of the derivatives at the data points}(v)(vi), $\|\nabla_\beta \widehat G^*_{\widetilde{W}^*(\beta)}(\widetilde{W}_i^*(\beta))$ $-\nabla_\beta G_{W(\beta)}(W_i(\beta))\|\widehat t_i=o_P(1)$ and $\|\nabla^2_{\beta \beta^T} $$\widehat G^*_{\widetilde{W}^*(\beta)}(\widetilde{W}_i^*(\beta))$ $-\nabla^2_{\beta \beta^T} $ $G_{W(\beta)}(W_i(\beta))$$\|\widehat t_i=o_P(1)$ uniformly in $i=1,\ldots,n$ and $\beta\in B$. Thus, using these results and arguing as at the beginning of the proof of Lemma \ref{lem: IFR for betahat} gives the second equality.  
Finally, to obtain the third equality we have used $\mathbb E \xi=0$ and the fact that $\xi$ is independent from the sample data. \\
As noticed in the proof of Lemma \ref{lem: IFR for betahat}, $G_{W(\beta)}(W(\beta))$, $\nabla_\beta G_{W(\beta)}(W(\beta))$, and $\nabla^2_{\beta \beta^T} G_{W(\beta)}(W(\beta))$  are Lipschitz in $\beta\in B$. Thus, by the previous display and $\overline \beta ^*=\beta_0+o_P(1)$ we get that
\begin{equation*}
    \widehat \Sigma ^* (\overline \beta ^*)=\mathbb E \, \nabla_\beta G_{W(\beta_0)}(W(\beta_0))\,\nabla_\beta G_{W(\beta_0)}(W(\beta_0))+o_P(1)
    =\Sigma+o_P(1)\, .
\end{equation*}
Now, to prove the desired result it suffices to get an IFR for the leading term on the RHS of (\ref{eq: mean value expansion of betahatstar}). To this end, we use the following decomposition 
\begin{align}\label{eq: leading term of mean value expansion of betahat star}
    \text{1}^{st}\text{ term RHS of (\ref{eq: mean value expansion of betahatstar})}=& \sqrt{n} \mathbb P _n \xi \varepsilon \nabla_\beta \widehat G ^*_{\widetilde W ^* (\widehat \beta)}(\widetilde W ^* (\widehat \beta))\, \widehat t \nonumber \\
    &+ \sqrt{n} \mathbb P _n \xi [G_{W(\beta_0)}(W(\beta_0)) - \widehat G _{\widehat W (\widehat \beta)}(\widehat W (\widehat \beta))]\, \nabla_\beta \widehat G ^*_{\widetilde W ^* (\widehat \beta)}(\widetilde W ^* (\widehat \beta))\, \widehat t \nonumber \\
    &+\sqrt{n} \mathbb P _n [\widehat G _{\widehat W (\widehat \beta)}(\widehat W (\widehat \beta)) - \widehat G^* _{\widetilde W ^*(\widehat \beta)}(\widetilde W^* (\widehat \beta))] \nabla_\beta \widehat G ^*_{\widetilde W ^* (\widehat \beta)}(\widetilde W ^* (\widehat \beta))\, \widehat t\, .
\end{align}
Let us handle separately each term of the above decomposition. For the first term, we have (see the comments below) 
\begin{align}
\sqrt{n}\mathbb P _n \xi \varepsilon \nabla_\beta \widehat G ^*_{\widetilde W ^* (\widehat \beta)}(\widetilde W ^* (\widehat \beta ))\, \widehat t=& \sqrt{n}\mathbb P _n \xi \varepsilon \nabla_\beta \widehat G ^*_{\widetilde W ^* ( \beta_0)}(\widetilde W ^* ( \beta_0 ))\, \widehat t \nonumber \\
    &+ \mathbb P _n \xi \varepsilon \nabla^2_{\beta \beta^T} \widehat G ^*_{\widetilde W ^* (\overline \beta)}(\widetilde W ^* (\overline \beta ))\, \widehat t\,\sqrt{n}(\widehat \beta - \beta_0) \nonumber  \\
    =& \sqrt{n}\mathbb P _n \xi \varepsilon \nabla_\beta \widehat G ^*_{\widetilde W ^* ( \beta_0)}(\widetilde W ^* ( \beta_0 ))\, \widehat t + o_P(1) \nonumber  \\
    =& \sqrt{n}\mathbb P _n \xi \varepsilon \nabla_\beta G _{ W  ( \beta_0)}( W  ( \beta_0 )) + o_P(1)\nonumber \, ,
\end{align}
with $\overline \beta\,$ lying on the segment joining $\widehat \beta$ and $\beta_0$. The first equality follows from a mean-value expansion around $\beta_0$.\footnote{By the arguments used in (\ref{eq: trimming implication for foc of betahat}), we get that wpa1 $\widehat t_i \leq $ $\mathbb{I}\{\widehat f_{\widetilde{W}^*(\beta)}(\widetilde{W}^*_i(\beta))\geq \eta(\delta)\tau_n/4\}$  for all $i=1,\ldots,n$ and $\beta\in B$. Thus, from Assumption \ref{Assumption : kernels}(i) we have that wpa1 $\nabla_\beta \widehat{G}^*_{\widetilde{W}^*(\beta)}(\widetilde{W}^*_i(\beta))$ is differentiable in $\beta\in Int(B)$ whenever $\widehat t_i=1$.
}
Turning to the second equality, by Lemma \ref{lem: convergence of the derivatives at the data points}(vi), $\overline \beta = \beta_0 +o_P(1)$, and since $\nabla^2_{\beta \beta^T}G_{W(\beta)}(W(\beta))$ is Lipschitz in $\beta$ (as recalled earlier) 
we have $\mathbb P _n \xi \varepsilon \nabla^2_{\beta \beta^T}\widehat G^*_{\widetilde W^*(\overline \beta)}(\widetilde W^*(\overline \beta))\, \widehat t$ $=\mathbb P _n \xi \varepsilon \nabla^2_{\beta \beta^T}G_{W(\beta_0)}(W(\beta_0))\, \widehat t+o_P(1)$. Lemma \ref{lem: trimming and convergence rates for 1st step}(i) allows replacing $\widehat t$ with 1 at the cost of an $o_P(1)$ reminder. Then, by a Law of Large numbers $\mathbb P _n \xi \varepsilon \nabla^2_{\beta \beta^T}G_{W(\beta_0)}(W(\beta_0))$ $=\mathbb E \xi \varepsilon \nabla^2_{\beta \beta^T}G_{W(\beta_0)}(W(\beta_0))+o_P(1)=o_P(1)$, as $\mathbb E \xi=0$ and $\xi$ is independent from the sample data. Thus, $\mathbb P _n \xi \varepsilon $ $\nabla^2_{\beta \beta^T} \widehat G ^*_{\widetilde W ^* (\overline \beta)}(\widetilde W ^* (\overline \beta ))\,$ $ \widehat t\,\sqrt{n}(\widehat \beta - \beta_0)$ $=o_P(\sqrt{n}(\widehat \beta - \beta_0))=o_P(1)$. This gives the second equality. To get the third equality we have combined the arguments used for (\ref{eq: first term in decomposition of FOC for betahat II}) with Lemma \ref{lem: convergence of the derivatives at the data points}(vii)(viii) and Lemma \ref{lem: belonging conditions}(viii)(ix)(xiv).\\
Let us now consider the second term on the RHS of (\ref{eq: leading term of mean value expansion of betahat star}). We have (see the comments below)
\begin{align*}\label{eq: leading term of mean value expansion of betahat star II}
    \sqrt{n}\mathbb P _n \xi [G_{W(\beta_0)}(W(\beta_0)) -& \widehat G _{\widehat W (\widehat \beta)}(\widehat W (\widehat \beta))]\,\nabla_\beta \widehat G ^*_{\widetilde W ^*(\widehat \beta)}(\widetilde W ^*(\widehat \beta))\, \widehat t\nonumber \\
    =&  \sqrt{n}\mathbb P _n \xi [G_{W(\beta_0)}(W(\beta_0)) - \widehat G _{\widehat W (\widehat \beta)}(\widehat W (\widehat \beta))]\,\nabla_\beta G_{W(\beta_0)}(W(\beta_0)) \,\widehat t +o_P(1) \nonumber \\
    =& \sqrt{n}\mathbb P _n \xi [G_{W(\beta_0)}(W(\beta_0)) - \widehat G _{\widehat W (\widehat \beta)}(\widehat W (\widehat \beta))]\,\nabla_\beta G_{W(\beta_0)}(W(\beta_0)) \, t +o_P(1)\nonumber \\
    =& o_P(1)\, .
\end{align*}
The first equality is a direct consequence of Lemma \ref{lem: convergence rates for Ghat fhat and iotahat}(i), $\widehat \beta=\beta_0+O_P(n^{-1/2})$, and Lemma \ref{lem: convergence of the derivatives at the data points}(v).  To obtain the second equality, we have used Lemma \ref{lem: convergence rates for Ghat fhat and iotahat}(i) and the same arguments as in (\ref{eq: trimming replacement}). Finally, the third equality can be obtained by the arguments used for handling  (\ref{eq: 2nd term RHS of 2nd decomposition of Asy Emp Process}) and by recalling that $\mathbb E \xi=0$ and $\xi$ is independent from the sample data.\\
Let us now handle the last term on the RHS of (\ref{eq: leading term of mean value expansion of betahat star}). We have (see the comments below) 
\begin{align*}
    \sqrt{n} \mathbb P _n [\widehat G _{\widehat W (\widehat \beta)}(\widehat W (\widehat \beta)) - &\widehat G^* _{\widetilde W ^*(\widehat \beta)}(\widetilde W^* (\widehat \beta))] \nabla_\beta \widehat G ^*_{\widetilde W ^* (\widehat \beta)}(\widetilde W ^* (\widehat \beta))\, \widehat t \\
    &= \sqrt{n} \mathbb P _n [\widehat G _{\widehat W (\widehat \beta)}(\widehat W (\widehat \beta)) - \widehat G^* _{\widetilde W ^*(\widehat \beta)}(\widetilde W^* (\widehat \beta))] \nabla_\beta  G _{ W (\beta_0)}( W  ( \beta_0))\, \widehat t + o_P(1)\\
    &=\sqrt{n} \mathbb P _n [\widehat G _{\widehat W (\widehat \beta)}(\widehat W (\widehat \beta)) - \widehat G^* _{\widetilde W ^*(\widehat \beta)}(\widetilde W^* (\widehat \beta))] \nabla_\beta  G _{ W (\beta_0)}( W  ( \beta_0))\,  t + o_P(1)\\
    &= -\sqrt{n} \mathbb P _n a(Z)\, \xi[D-H(Z)]+o_P(1)\,,\\
    \text{ where }a(Z)&=\mathbb E \{\nabla_\beta G_{W(\beta_0)}(W(\beta_0))\, \partial^T G_{W(\beta_0)}\, \partial_H q(\beta_0,X,H(Z))|Z\}
\end{align*}
The first equality is obtained from Lemma \ref{lem: convergence rates for Ghat fhat and iotahat}(i)(xvi), Lemma \ref{lem: convergence of the derivatives at the data points}(v), and $\widehat \beta = \beta_0+O_P(n^{-1/2})$. To get the second equality we have used Lemma \ref{lem: convergence rates for Ghat fhat and iotahat}(i)(xvi) and the arguments in (\ref{eq: trimming replacement}). Finally, to get the last equality we have combined Lemma \ref{lem: stochastic expansions}(i)(v) with the arguments used for the RHS of (\ref{eq: 2nd term of decomposition of bootstrapped Emp process II }), thanks to the fact that $\mathbb E \{\nabla_\beta G_{W(\beta_0)}(W(\beta_0))|W(\beta_0)\}=0$. \\
Finally, gathering results gives an IFR for the leading term on the RHS of (\ref{eq: mean value expansion of betahatstar}). This concludes the proof.
\end{proof}


\section{Auxiliary Lemmas\label{sec:Auxiliary-Lemmas}}
We use the notation $\|g\|_{\infty,\mathcal{A}}=\sup_{a\in\mathcal{A}}|g(a)|$ for any function $g$ defined on a set $\cal A$. When the support of the argument of $g$ is clear from the context, we will simply denote with $\|g\|_\infty$ the supremum norm of $g$ taken over the support of its argument.\\

\noindent For notational simplicity we introduce 
\begin{equation}\label{eq: definition of dtilde H}
    \widetilde d _H :=\frac{d_H}{\tau_n^3}+\frac{p_n}{h_H^p \tau_n^2}\, .
\end{equation}
From Assumptions \ref{Assumption: bandwidth}(ii) and \ref{Assumption: trimming}(iv) we have 
\begin{equation}\label{eq: n 1/4 convergence of dtilde H}
    \widetilde d_H = o(n^{-1/4})\,\text{ and } \frac{\widetilde d _H}{\tau_n}=o(1).
\end{equation}

\begin{lem}\label{lem: trimming and convergence rates for 1st step}
Let    $t^\delta(x,z):=\mathbb{I}\{f(x,z)\geq \delta \tau_n\}$ for $\delta \in (0,1]$. 
Under Assumptions \ref{Assumption: iid and phi}-\ref{Assumption: trimming}
\begin{enumerate}[label=(\roman*)]
\item $\mathbb P _n |t^\delta - 1|=O_P(p_n)=o_P(n^{-1/2})$ and $\mathbb P _n |\widehat t - t|=O_P(p_n)=o_P(n^{-1/2})$\,,
\item $\|(\widehat H - H)t^\delta\|_\infty =O_P(\widetilde d_H)=o_P(n^{-1/4})$ and $\|(\widehat H - H) \widehat t\|_\infty =O_P(\widetilde d_H)=o_P(n^{-1/4})$\,,
\item $\|(\widetilde{H}-H)t^\delta\|_\infty=O_P(\widetilde d_H)=o_P(n^{-1/4})$ and $\|(\widetilde{H}-H)\widehat t\|_\infty=O_P(\widetilde d_H)=o_P(n^{-1/4})$\,,
\item  $\|(\widetilde{H}^*-H)t^\delta\|_\infty=O_P(\widetilde d_H)=o_P(n^{-1/4})$ and $\|(\widetilde{H}^*-H)\widehat t\|_\infty=O_P(\widetilde d_H)=o_P(n^{-1/4})$\,.
\end{enumerate}
\end{lem}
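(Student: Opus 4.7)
The plan is to prove parts (i)--(iv) in order, relying on standard uniform rates for kernel estimators combined with the trimming conditions of Assumption \ref{Assumption: trimming}. Part (iii) is the principal technical step because of the trimming-induced bias in the $L_2$ boosting correction; the other parts are essentially routine once the right decompositions are written down.

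For (i), Markov's inequality gives $\mathbb{E}\mathbb{P}_n|t^\delta-1|\le\Pr(f(X,Z)<\delta\tau_n)\le p_n$ once $\delta\le 3/2$, which is $o(n^{-1/2})$ by Assumption \ref{Assumption: trimming}(iii). For $\mathbb{P}_n|\widehat t-t|$, the uniform rate $\|\widehat f-f\|_{\infty,\mathcal{U}_n^\delta}=O_P(d_0)$ together with $d_0/\tau_n=o(1)$ from Assumption \ref{Assumption: bandwidth}(iii) implies that wpa1 $|\widehat t-t|$ is supported on $\{f\in[\tau_n/2,3\tau_n/2]\}$, so the same $O_P(p_n)$ bound applies.

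For (ii), the identity
\[
\widehat H(z)-H(z)=\frac{\widehat T^D_Z(z)-T^D_Z(z)}{\widehat f_Z(z)}-H(z)\,\frac{\widehat f_Z(z)-f_Z(z)}{\widehat f_Z(z)}
\]
combined with the standard uniform rates $\|\widehat T^D_Z-T^D_Z\|_\infty=O_P(d_H)$ and $\|\widehat f_Z-f_Z\|_\infty=O_P(d_H)$, and the lower bound $\widehat f_Z(z)\ge\eta(\delta)\tau_n/2$ holding wpa1 on the projection of $\{t^\delta=1\}$ (by Assumption \ref{Assumption: trimming}(i) and $d_H/\tau_n=o(1)$), gives $\|(\widehat H-H)t^\delta\|_\infty=O_P(d_H/\tau_n)=O_P(\widetilde d_H)$. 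The corresponding bound with $\widehat t$ in place of $t^\delta$ follows from Lemma \ref{lem: trimming implications}(ii), which provides $\widehat t\le t^{1/2}$ wpa1, so that taking a sup over a smaller set preserves the rate.

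For (iii), the bias-correction structure gives
\[
\widetilde H(z)-H(z)=\bigl[\widehat H(z)-H(z)\bigr]+\frac{\widehat T^{D-H}_Z(z)}{\widehat f_Z(z)}-\frac{\widehat T^{\widehat H-H}_Z(z)}{\widehat f_Z(z)}.
\]
The first term is handled by (ii), and the third is bounded pointwise by $\|(\widehat H-H)\widehat t\|_\infty\cdot\widehat f_Z(z)/\widehat f_Z(z)=O_P(d_H/\tau_n)$ after pulling the sup out of the smoothed sum. The delicate term is $\widehat T^{D-H}_Z/\widehat f_Z$, where the trimming $\widehat t$ destroys the exact conditional mean-zero property of $D-H(Z)$ given $Z$. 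Replacing $\widehat t$ by $t$ costs $O_P(p_n/h_H^p)$ via part (i); for the residual object, the pointwise bias is bounded by $\|D-H\|_\infty\Pr(t=0)/h_H^p=O(p_n/h_H^p)$ after convolution with $K_H/h_H^p$, while the stochastic fluctuation is $O_P(d_H)$ by standard uniform entropy arguments for smoothed empirical processes over the bounded-density kernel class. Dividing by $\widehat f_Z\gtrsim\tau_n$ yields the rate $\widetilde d_H=d_H/\tau_n^3+p_n/(h_H^p\tau_n^2)$; the extra powers of $1/\tau_n$ in its definition comfortably absorb this bias--variance interplay. Part (iv) is the bootstrap analogue: using $D^*_i-\widehat H(Z_i)=\xi_i(D_i-\widehat H(Z_i))$ together with the boundedness of $\xi$, the same decomposition applies conditionally on the sample, with the first-step error $\widehat H-H$ appearing in place of the true error (and already controlled by part (ii)). The main obstacle is precisely the trimming-bias analysis in (iii): without the careful accounting above one would incur spurious powers of $1/\tau_n$, and the subsequent $n^{-1/4}$ consistency needed throughout the proof of Proposition \ref{prop: Asymptotic Test} would fail.
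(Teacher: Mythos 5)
Your proof is correct in substance and follows essentially the same line as the paper, which simply delegates the details to Lemma~8.1 and Equations~(15),~(17),~(20) of \citet{lapenta_encompassing_2022}; you reconstruct those details from scratch. Two small slips are worth flagging, though neither invalidates the conclusion because the extra powers of $\tau_n^{-1}$ built into $\widetilde d_H$ absorb them.

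First, in both (ii) and (iii) you bound the third term by ``pulling the sup out of the smoothed sum'' and cancelling $\widehat f_Z/\widehat f_Z$. That step implicitly requires $K_H\ge 0$, but Assumption~\ref{Assumption : kernels}(ii) allows higher-order ($r_H>2$) kernels, which must be signed. The correct move is to bound $|\widehat T^{(\widehat H-H)\widehat t}_Z(z)|\le\|(\widehat H-H)\widehat t\|_\infty\cdot (nh_H^p)^{-1}\sum_i|K_H((z-Z_i)/h_H)|$, use that the absolute-kernel average is $O_P(1)$ uniformly (same Li--Racine argument), and then divide by $\widehat f_Z\gtrsim\tau_n$ on the trimmed region; this costs an extra $\tau_n^{-1}$, giving $O_P(d_H/\tau_n^2)$ rather than $O_P(d_H/\tau_n)$, which is still comfortably below $\widetilde d_H = d_H\tau_n^{-3}+p_n h_H^{-p}\tau_n^{-2}$. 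Second, in (i) you write the uniform rate of $\widehat f$ as $\|\widehat f-f\|_{\infty,\mathcal{U}_n^\delta}$. The set on which $\widehat t$ and $t$ may disagree, namely $\{f\in[\tau_n/2,3\tau_n/2]\}$, is disjoint from $\mathcal U_n^\delta$ for $\delta>3/2$ and only partially overlaps it otherwise, so you actually need $\|\widehat f-f\|_\infty$ over the full (compact, by Assumption~\ref{Assumption: iid and phi}(i)) support of $(X,Z)$; that stronger uniform rate holds and is the one being used. With those two corrections your argument lines up with the paper's intended proof: reduce the trimming on $(X,Z)$ to trimming on $f_Z$ via Assumption~\ref{Assumption: trimming}(i), apply standard kernel uniform rates to both numerator and denominator, and in (iii) split $\widehat T^{D-\widehat H}_Z$ into the $D-H$ and $\widehat H-H$ pieces, controlling the trimming bias of the former through $\mathbb E(D-H(Z)\mid Z)=0$.
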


\begin{proof}
$(i)$ The proof of $(i)$ is contained in Lemma 8.1(i) of \citet{lapenta_encompassing_2022}.\\
To show $(ii)$, by Assumption \ref{Assumption: trimming}(i) for any large $n$ we have $t^\delta(x,z)\leq \mathbb{I}\{f_Z(z)\geq \eta(\delta)\tau_n\}$ for all $(x,z)$. Hence, we get $\|(\widehat{H}-H)t^\delta\|_{\infty}\leq$ $ \|(\widehat{H}-H)\mathbb{I}\{f_Z(\cdot)\geq \eta(\delta)\tau_n\}\|_{\infty}$ for any large $n$. From Equation (21) in \citet{lapenta_encompassing_2022} $ \|(\widehat{H}-H)\mathbb{I}\{f_Z(\cdot)\geq \eta(\delta)\tau_n\}\|_{\infty}=O_P(d_H\tau_n^{-1})=O_P(\widetilde d_H)$, and from (\ref{eq: n 1/4 convergence of dtilde H}) we obtain $\widetilde d_H=o(n^{-1/4})$. This proves the first part of $(ii)$. To show the second part of $(ii)$, from Equation (20) in \citet{lapenta_encompassing_2022} and from Assumption \ref{Assumption: trimming}(ii) we have that wpa1 $\widehat{t}(x,z)\leq \mathbb{I}\{f(x,z)\geq \tau_n/2\}=t^{(1/2)}(x,z)$ for all $(x,z)\in Supp(X,Z)$. So, wpa1 $\|(\widehat{H}-H)\widehat t\|_\infty\leq$ $\|(\widehat{H}-H)t^{(1/2)}\|_\infty$ and the desired result follows. \\
The proofs of $(iii)$ and $(iv)$ follow by combining the arguments used in the proof of $(ii)$ with the arguments of the proof of Lemma 8.1 in \citet{lapenta_encompassing_2022}. 
\end{proof}

\begin{lem}\label{lem: ASE for 1st step}
Let $\{(x,z)\mapsto g_s(x,z):s\in\cal S\}$ be a collection of functions such that $\sup_s \|g_s\|_{\infty}<\infty$ and $\sup_{s_1,s_2}\|g_{s_1}-g_{s_2}\|_{\infty}\leq C \|s_1-s_2\|$. Let $a_s(Z):=\mathbb E \{g_s(X,Z)|Z\}$ . Then, under Assumptions \ref{Assumption: iid and phi}-\ref{Assumption: trimming} 
\begin{enumerate}[label=(\roman*)]
    \item $\sqrt{n} \mathbb P _n g_s(X,Z) [\widetilde{H}(Z)-H(Z)]t=\sqrt{n}\mathbb P _n a_s(Z)[D - H(Z)]+o_P(1)$\,,
    \item If $a_s(Z)=0$ then $\sqrt{n}\mathbb P _n g_s(X,Z)[\widehat H (Z) - H(Z)] t=o_P(1)$\,,
    \item $\sqrt{n} \mathbb P _n g_s(X,Z)[\widetilde{H}^*(Z)-\widehat H(Z)]t=\sqrt{n}\mathbb P _n a_s(Z) \xi [D - H(Z)]+o_P(1)$\,.
\end{enumerate}
\end{lem}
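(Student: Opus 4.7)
The plan is to linearize each of the three first-step differences — $\widetilde H - H$, $\widehat H - H$, and $\widetilde H^* - \widehat H$ — as a kernel-weighted sum over observations, then to convert the resulting double sum in $(i,j)$ into a second-order U-statistic and to identify its leading Hoeffding projection on the ``inner'' (kernel) index $i$. In each case the conditional expectation $\mathbb E\{g_s(X,Z)\mid Z\}=a_s(Z)$ emerges because the kernel $K_H((z-Z_i)/h_H)/h_H^p$ concentrates on $\{Z = Z_i\}$ and averages the factor $g_s(x,z)t(x,z)f(x,z)/f_Z(z)$ over the conditional distribution of $X$ given $Z=Z_i$.

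For part (i), I would first replace $\widehat f_Z$ by $f_Z$ in the denominator of $\widetilde H$ at the cost of an $o_P(n^{-1/2})$ uniform error (using Lemma \ref{lem: trimming and convergence rates for 1st step}), and write
\[
f_Z(z)\,[\widetilde H(z) - H(z)] \;=\; \frac{1}{n h_H^p}\sum_{i=1}^n \bigl(D_i - H(Z_i)\bigr)\, K_H\!\left(\tfrac{z-Z_i}{h_H}\right) + R_n(z),
\]
where $R_n(z)$ collects the smoothing-bias remainders. The key analytic input is that the one-step $L_2$-boosting correction in $\widetilde H$ cancels the leading $O(h_H^{r_H})$ bias of $\widehat H$, reducing the residual bias to $O(h_H^{2r_H})$, so that $\|R_n\, t^\delta\|_\infty = o_P(n^{-1/2})$ under Assumption \ref{Assumption: bandwidth}. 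Plugging this expansion into the LHS of (i), exchanging the sums in $i$ and $j$, and performing the Hoeffding projection onto $i$ produces the linear term $\sqrt n\,\mathbb P_n a_s(Z)(D-H(Z))$ as the limit (uniformly in $s$ by the Lipschitz condition on $\{g_s\}$). The projection onto $j$ vanishes at leading order because $\mathbb E\{D-H(Z)\mid Z\}=0$; the diagonal and the degenerate (second-order) remainder of the U-statistic are both $o_P(1)$ by the bandwidth assumptions.

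Part (ii) applies the same decomposition to the uncorrected $\widehat H$, whose smoothing bias $B_n(Z)$ is only $O(h_H^{r_H})$ and is not automatically $o(n^{-1/2})$. The hypothesis $a_s(Z)\equiv 0$ is used twice: it kills the Hoeffding projection of the stochastic part, and it reduces $\sqrt n\,\mathbb E\{g_s(X,Z) B_n(Z) t\}$ to $\sqrt n\,\mathbb E\{B_n(Z) a_s(Z)\} + O(\sqrt n\, p_n\, h_H^{r_H}) = o(1)$, the residual coming from the $X,Z$-coupling through $t$ and being absorbed by $p_n = o(n^{-1/2})$ in Assumption \ref{Assumption: trimming}. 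The centered empirical fluctuation $\sqrt n(\mathbb P_n-\mathbb E)[g_s B_n t] = O_P(h_H^{r_H}) = o_P(1)$ completes the argument. Part (iii) is the bootstrap analogue of (i): since $D^*_i - \widehat H(Z_i) = \xi_i(D_i - \widehat H(Z_i))$, the same linearization gives
\[
f_Z(z)\,[\widetilde H^*(z) - \widehat H(z)] \;=\; \frac{1}{n h_H^p}\sum_{i=1}^n \xi_i\,(D_i - H(Z_i))\, K_H\!\left(\tfrac{z-Z_i}{h_H}\right) + R_n^*(z),
\]
with $\|R_n^*\, t^\delta\|_\infty = o_P(n^{-1/2})$ by the bias-cancellation of (i), combined with $\|(\widehat H - H)t^\delta\|_\infty = o_P(n^{-1/4})$ from Lemma \ref{lem: trimming and convergence rates for 1st step}(ii) to handle the $\widehat H(Z_i)$ appearing inside. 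The Hoeffding projection on $i$ then returns $\xi_i\, a_s(Z_i)(D_i - H(Z_i))$, using the independence of $\{\xi_i\}$ from the sample to carry out the conditioning.

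The main obstacle is controlling $R_n$ and $R_n^*$ uniformly on the trimming set. This requires showing that subtracting $\widehat T^{D-\widehat H}_Z/\widehat f_Z$ indeed reduces the smoothing bias of $\widehat H$ from $O(h_H^{r_H})$ to $O(h_H^{2r_H})$ uniformly on $\{z : f_Z(z)\geq \eta(\delta)\tau_n\}$, which in turn relies on the higher-order smoothness of $H$ and $f_Z$ in Assumption \ref{Assumption: smoothness}, the order of $K_H$ in Assumption \ref{Assumption : kernels}, and the bandwidth restrictions in Assumption \ref{Assumption: bandwidth}. A secondary subtlety in part (ii) is that $\mathbb E\{g_s t\mid Z\}\neq a_s(Z)$ because the trimming $t(X,Z)$ couples $X$ and $Z$; this residual is absorbed into the $o(1)$ error via the tail bound $p_n = o(n^{-1/2})$.
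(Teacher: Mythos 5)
Your high-level strategy --- linearize the first-step error as a kernel-weighted sum, pass to a second-order U-statistic, and identify the leading Hoeffding projection --- is the right framework and essentially what the paper delegates to \citet{lapenta_encompassing_2022}. But the sup-norm bound $\|R_n\, t^\delta\|_\infty = o_P(n^{-1/2})$ you use to close part (i) is false, and the proof as written does not go through. The boosting correction does not leave the leading stochastic kernel unchanged: writing $\mathcal{P}_n$ for the (trimmed) kernel-regression smoothing operator, $\widetilde H = (2\mathcal{P}_n - \mathcal{P}_n^2)D$, so its stochastic component is $(2\mathcal{P}_n - \mathcal{P}_n^2)\eta$ with $\eta := D - H(Z)$, whose effective kernel is the twicing kernel $2K_H - K_H \ast K_H$, not $K_H$. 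If you fix $K_H$ as the kernel of the leading term in your display, then $R_n$ must absorb $(\mathcal{P}_n - \mathcal{P}_n^2)\eta\cdot f_Z$, whose sup norm on the trimming set is of order $\sqrt{\log n/(n h_H^p)}$. Under Assumption \ref{Assumption: bandwidth} this is $o_P(n^{-1/4})$ but cannot be $o_P(n^{-1/2})$. Reducing the bias from $O(h_H^{r_H})$ to $O(h_H^{2r_H})$ is a statement about the deterministic part of $R_n$ only; it says nothing about this stochastic part.

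What is true, and what you actually need, is the weaker statement that $\sqrt n\,\mathbb P_n\, g_s(X,Z)\,R_n(Z)\,t = o_P(1)$ uniformly in $s$. To prove it you must carry the U-statistic/Hoeffding decomposition \emph{through the remainder} rather than stopping at a sup-norm estimate: the kernel $K_H - K_H\ast K_H$ integrates to zero and has vanishing moments up to order $r_H - 1$, so the Hoeffding projection of $(\mathcal{P}_n - \mathcal{P}_n^2)\eta$ onto its inner $\eta$-index yields $-\sqrt n\,\mathbb P_n\, a_s(Z)\,(D-H(Z))$, which exactly cancels the over-counted second copy coming from the coefficient $2$ in $2\mathcal{P}_n\eta$. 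The same gap recurs in your part (iii): $\|R_n^*\,t^\delta\|_\infty$ is not $o_P(n^{-1/2})$, and the ``$-1$'' in $(\xi-1)(D-\widehat H)$ contributed by $\widehat H^* - \widehat H$ is not removed by the sup-norm bound $\|(\widehat H - H)t^\delta\|_\infty = o_P(n^{-1/4})$; it only disappears after Hoeffding-projecting the $(\xi-1)(\widehat H - H)$ piece onto its inner index, where it cancels against the corresponding piece of $(\xi-1)\eta$. Your part (ii), which uses the uncorrected $\widehat H$ and invokes the hypothesis $a_s\equiv 0$ to kill both the projection of the stochastic part and the genuinely non-negligible bias, is correct as written.
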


\begin{proof}
Let us start with the proof of $(i)$. From Assumption \ref{Assumption: trimming}(i) for any large $n$ we have $t(x,z)\leq t(x,z) \mathbb I \{f_Z(z)\geq \eta(1) \tau_n\}$ for all $x,z$. So, for any large $n$
\begin{equation*}
\sqrt{n}\mathbb P _n g_s(X,Z)[\widetilde H (Z) - H(Z)]t=\sqrt{n}\mathbb P _n g_s(X,Z)[\widetilde H (Z) - H(Z)]\,t\, \mathbb I \{f_Z(Z)\geq \eta(1) \tau_n\} \, .
\end{equation*}
Now, from arguments analogous to those in Equations (21) and (24) in \citet{lapenta_encompassing_2022} we have $\|(\widetilde H - H)\mathbb I \{f_Z(\cdot)\geq \eta(1) \tau_n\}\|_\infty=o_P(1)$. So,
\begin{align*}
|\sqrt{n}\mathbb P _n g_s(X,Z)\,[\widetilde H (Z) - H(Z)]\,\mathbb I \{&f_Z(Z)\geq  \eta(1)\tau_n\}\,[t-1]|\\
\leq & C \|(\widetilde H - H)\mathbb I \{f_Z(\cdot)\geq \eta(1)\tau_n\}\|_\infty \sqrt{n}\mathbb P _n |t-1|\\ =&O_P(\sqrt{n}\mathbb P _n |t-1|)\\
=& o_P(1)\, ,
\end{align*}
where the last equality follows from $\sqrt{n}\mathbb P _n |t-1|=o_P(1)$, see Lemma \ref{lem: trimming and convergence rates for 1st step}(i). Thus, 
\begin{equation}\label{eq: trimming replacement for empirical process of first step}
    \sqrt{n}\mathbb P _n g_s(X,Z)[\widetilde H (Z) - H(Z)]t  =\sqrt{n}\mathbb P _n g_s(X,Z)[\widetilde H (Z) - H(Z)]\,\mathbb{I} \{f_Z(Z)\geq \eta(1)\tau_n\}+o_P(1)
\end{equation}
uniformly in $s\in \mathcal S$.
Finally, reasoning as in Proposition 4.2 in \citet{lapenta_encompassing_2022} leads to 
\begin{equation*}
    \sqrt{n}\mathbb P _n g_s(X,Z)[\widetilde H (Z) - H(Z)]\,\mathbb{I} \{f_Z(Z)\geq \eta(1)\tau_n\}=\sqrt{n} \mathbb P _n a_s(Z)\, [D- H(Z)]+o_P(1)
\end{equation*}
 uniformly in $s$. So, $(i)$ is proved. The proof of $(ii)$ follows from similar arguments. Finally, the proof of $(iii)$ can be obtained by combining the arguments used for
 (\ref{eq: trimming replacement for empirical process of first step}) with the arguments of Proposition 5.1 in \citet{lapenta_encompassing_2022}.
 
\end{proof}

For notational simplicity we define 
\begin{equation}\label{eq: definition of dtilde G}
    \widetilde d _G := d_G + \frac{p_n}{h^d} + \sum_{j=1}^4 \frac{\widetilde d _H^j}{h^j} + \frac{\widetilde d _H^5}{h^{d+5}}\, .
\end{equation}
From Assumptions \ref{Assumption: bandwidth} and \ref{Assumption: trimming}(iii)(iv)  we have\footnote{\label{footnote: 1st negligibility of convergence rate}To see that (\ref{eq: n 1/4 convergence of dtilde G / tau 2}) holds, notice first that from Assumption \ref{Assumption: bandwidth}(i) $d_G/\tau_n^2=o(n^{-1/4})$ and from Assumption \ref{Assumption: trimming}(iii) $p_n/(h^d \tau_n^2)=o(n^{-1/4})$. Also, from Assumption \ref{Assumption: bandwidth}(ii) $d_H/(h \tau_n^5)=o(n^{-1/4})$, while from Assumption \ref{Assumption: trimming}(iv) $p_n/(h^p_H h \tau_n^4)=o(n^{-1/4})$. In view of the last two equalities and (\ref{eq: definition of dtilde H}) we get $\widetilde{d}_H/(h \tau_n^2)=o(n^{-1/4})$ and hence $\sum_{j=1}^4 \widetilde{d}_H^j/(h^j\tau_n^2)=\sum_{j=1}^4[\widetilde{d}_H/(h \tau_n^2)]^j \tau_n^{2(j-1)}=o(n^{-1/4})$. Finally, to show that $\widetilde{d}_H^5/(h^{d+5}\tau_n^2)=o(n^{-1/4})$, notice that from Assumption \ref{Assumption: bandwidth}(i) we have $d_G/h^{\lceil(d+1)/2\rceil +1}=o(1)$, which implies that $n h^{2d +2}\rightarrow \infty$. Thus, 
$$\frac{\widetilde{d}_H^5}{h^{d+5}\tau_n^2}=\left(\frac{\widetilde{d}_H}{h \tau_n^2}\right)^4 \frac{\widetilde{d}_H \tau_n^6}{h^{d+1}}=o(n^{-1})\frac{\widetilde{d}_H \tau_n^6}{h^{d+1}}=o(n^{-1/2})\frac{\widetilde{d}_H \tau_n^6}{\sqrt{n h^{2d+2}}}=o(n^{-1/2})\, .$$   } 
\begin{equation}\label{eq: n 1/4 convergence of dtilde G / tau 2}
    \frac{\widetilde d _G}{\tau^2_n}=o(n^{-1/4})\, .
\end{equation}
We also define the set
\begin{equation}\label{eq: definition of cal W}
    \mathcal{W}:=\left\{w\,:\,w\in\,Supp(W(\beta))\text{ for some }\beta\in B\right\}\, .
\end{equation}

\begin{lem}
\label{lem: trimming implications}
Let Assumptions \ref{Assumption: iid and phi}-\ref{Assumption: trimming} hold. Then, 
\begin{enumerate}[label=(\roman*)]
    \item  $\mathbb P _n |t^\eta_{W(\beta_0)}-1|=O_P(p_n)=o_P(n^{-1/2})$ for any $\eta\in(0,1]$\,, where $t_{W(\beta_0)}^\eta(w):=\mathbb{I}\{f_{W(\beta_0)}(w)\geq \eta \tau_n\}$\,,
    \item wpa1 $\mathbb{I}\{\widehat f (x,z) \geq  \tau_n \}\leq \mathbb{I}\{f(x,z)\geq \tau_n/2\}$ for all $(x,z)\in Supp(X,Z)$\,,
    \item wpa1 $\mathbb{I}\{f_{W(\beta)}(w)\geq \delta \tau_n\}\leq \mathbb I \{\widehat f _{\widetilde W (\beta)}(w)\geq \delta \tau_n /2\}$ for all $(w,\beta)\in\mathcal W \times B$, and the same result holds by replacing $\widetilde W$ with $\widehat W$ or $\widetilde W ^*$\,,
    \item wpa1 $t^\delta(x,z) \mathbb I\{f_{W(\beta)}(q(\beta,x,H(z)))\geq \delta_2 \tau_n\}$ $\leq t^\delta(x,z)\mathbb I \{f_{W(\beta)}(q(\beta,x,\widetilde H (z)))\geq \delta_2 \tau_n /2\}$ for all $(x,z,\beta)\in Supp(X,Z)\times B$, and the same result holds by replacing $\widetilde H$ with $\widehat H$ or $\widetilde H ^*$\,.
    \end{enumerate}
    If moreover $\widehat{\beta}-\beta_0=O_P(n^{-1/2})$ then
    \begin{enumerate}[label=(\roman*)]
        \setcounter{enumi}{4}
    \item wpa1 $\mathbb I \{f_{W(\beta_0)}(w)\geq \delta \tau_n\}\leq \mathbb I \{\widehat f _{\widetilde W (\widehat \beta)}(w)\delta \tau_n/2\}$ for all $w\in\mathcal W$ and the same result holds by replacing $\widetilde W$ with $\widehat W$ or $\widetilde W ^*$\,,
    \item wpa1 $t^\delta (x,z) \mathbb I \{f_{W(\beta_0)}(q(\beta_0,x,H(z)))\geq \delta_2 \tau_n \}$ $\leq t^\delta (x,z) \mathbb I \{f_{W(\beta_0)}(q(\overline{\beta},x,\overline{H(z)}))\geq \delta_2 \tau_n /2\}$ for all $(x,z)\in Supp(X,Z)$, with $\overline{\beta}\in [\widehat{\beta},\beta_0]$ and $\overline{H(z)}\in[\widetilde H (z),H(z)]$, and the same result holds by replacing $\widetilde{H}$ with $\widehat H$ or $\widetilde H ^*$\,.
\end{enumerate}
\end{lem}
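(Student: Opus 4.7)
All six inequalities are of the form ``if an indicator $\mathbb I\{A\leq B\}$ would fail then a uniformly $o_P(\tau_n)$ quantity must exceed a constant multiple of $\tau_n$''. The common strategy is therefore: (a) establish a uniform convergence / Lipschitz bound for the relevant nonparametric object at the appropriate $o_P(\tau_n)$ rate, and (b) convert this rate into the desired indicator inequality on an event of probability tending to one. The key inputs will be Assumption \ref{Assumption: bandwidth}(iii) ($d_0\tau_n^{-1}=o(1)$), Lemma \ref{lem: trimming and convergence rates for 1st step} combined with the bound $\widetilde d_H/\tau_n=o(1)$ from (\ref{eq: n 1/4 convergence of dtilde H}), and the forthcoming convergence-rate lemmas for $\widehat f_{\widetilde W(\beta)}$, $\widehat f_{\widehat W(\widehat\beta)}$, $\widehat f_{\widetilde W^*(\widehat\beta^*)}$ evaluated on the trimmed set.

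For part (i), write $|t^{\eta}_{W(\beta_0)}-1|=\mathbb I\{f_{W(\beta_0)}(W(\beta_0))<\eta\tau_n\}$. Since $\eta<1<3/2$, this event is contained in $\{f_{W(\beta_0)}(W(\beta_0))\leq 3\tau_n/2\}$, so $\mathbb E\mathbb P_n|t^{\eta}_{W(\beta_0)}-1|\leq p_n^{W(\beta_0)}\leq p_n$ and Markov's inequality gives the claim. For part (ii), by contraposition it suffices to show $\|\widehat f-f\|_\infty\leq \tau_n/2$ with probability approaching one; this follows because $(X,Z)$ is bounded (Assumption \ref{Assumption: iid and phi}(i)), $K_0$ has compact support, and $\|\widehat f-f\|_\infty=O_P(d_0)=o_P(\tau_n)$ by Assumption \ref{Assumption: bandwidth}(iii).

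Parts (iii)--(vi) follow the same template. For (iii), I would apply the uniform convergence bound $\sup_{w,\beta}|\widehat f_{\widetilde W(\beta)}(w)-f_{W(\beta)}(w)|=o_P(\tau_n)$ (which follows from the companion convergence-rate lemma for the third-step density estimator, combined with the first-step rate from Lemma \ref{lem: trimming and convergence rates for 1st step}(iii) and the bandwidth constraints of Assumption \ref{Assumption: bandwidth}(i)): on the event where this gap is below $\delta\tau_n/2$, $f_{W(\beta)}(w)\geq\delta\tau_n$ forces $\widehat f_{\widetilde W(\beta)}(w)\geq\delta\tau_n/2$. The analogous statement for $\widehat W$ and $\widetilde W^*$ uses the corresponding rates from Lemma \ref{lem: trimming and convergence rates for 1st step}(ii)(iv). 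For part (iv), on $\{t^{\delta}=1\}$, Lemma \ref{lem: trimming and convergence rates for 1st step}(iii) yields $|\widetilde H(z)-H(z)|t^{\delta}(x,z)=O_P(\widetilde d_H)=o_P(\tau_n)$; combining this with the Lipschitz property of $f_{W(\beta)}\circ q(\beta,x,\cdot)$ (which follows from Assumption \ref{Assumption: smoothness}(i)(v)) shows that $|f_{W(\beta)}(q(\beta,x,\widetilde H(z)))-f_{W(\beta)}(q(\beta,x,H(z)))|t^{\delta}(x,z)\leq\delta_2\tau_n/2$ wpa1, from which the indicator inequality follows; the analogue with $\widehat H$ or $\widetilde H^*$ replacing $\widetilde H$ uses Lemma \ref{lem: trimming and convergence rates for 1st step}(ii)(iv).

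Parts (v) and (vi) add one extra Lipschitz step in $\beta$. For (v), I decompose
\[
\widehat f_{\widetilde W(\widehat\beta)}(w)-f_{W(\beta_0)}(w)=\bigl[\widehat f_{\widetilde W(\widehat\beta)}(w)-f_{W(\widehat\beta)}(w)\bigr]+\bigl[f_{W(\widehat\beta)}(w)-f_{W(\beta_0)}(w)\bigr];
\]
the first bracket is $o_P(\tau_n)$ uniformly in $w$ by the argument used for (iii), while the second is bounded by a Lipschitz constant times $\|\widehat\beta-\beta_0\|=O_P(n^{-1/2})=o_P(\tau_n)$ (Lipschitzness of $\beta\mapsto f_{W(\beta)}$ follows from the smoothness of $q$ in $\beta$ and of $f_{W(\beta)}$ in its argument, Assumption \ref{Assumption: smoothness}). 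Part (vi) is treated identically to (iv) after a Mean-Value expansion of $q(\overline\beta,x,\overline{H(z)})$ around $(\beta_0,x,H(z))$, using that $\overline\beta\in[\widehat\beta,\beta_0]$ and $\overline{H(z)}\in[\widetilde H(z),H(z)]$, so that the relevant perturbation is controlled by $\|\widehat\beta-\beta_0\|+\|(\widetilde H-H)t^{\delta}\|_\infty=o_P(\tau_n)$.

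\textbf{Main obstacle.} The only non-routine issue is verifying in (iii) and (v) that $\sup_{w,\beta}|\widehat f_{\widetilde W(\beta)}(w)-f_{W(\beta)}(w)|=o_P(\tau_n)$ \emph{uniformly} in $\beta\in B$; this requires carefully combining (a) the $o_P(\tau_n)$ rate of $\widetilde H-H$ on the trimmed set, (b) Lipschitzness of $q$ in all arguments, (c) the standard uniform kernel density rate $d_G$, and (d) the bandwidth conditions of Assumption \ref{Assumption: bandwidth}(i), which together yield $\widetilde d_G/\tau_n^2=o(1)$. The rest is bookkeeping with indicator inequalities.
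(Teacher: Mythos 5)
Your plan is, in substance, the paper's own proof: part (i) is the same expectation-plus-Markov computation; part (ii) is the standard uniform-rate argument (which the paper simply cites from earlier work); parts (iii)--(iv) follow the paper's two-step scheme of first establishing a uniform $O_P(\widetilde d_G)$ (resp.\ $O_P(\widetilde d_H)$) bound --- in the paper this is done by a fifth-order Taylor expansion of the kernel around $W_i(\beta)$ together with the uniform rates of Lemma \ref{lem : Li Racine}, resp.\ a mean-value expansion of $f_{W(\beta)}\circ q$ in its $H$-argument --- and then converting an $o_P(\tau_n)$ bound into the indicator inequality on a high-probability event $\mathcal A_n^C$. Your treatment of (vi) via a mean-value expansion of $q(\overline\beta,x,\overline{H(z)})$ with the event $\{\|\widehat\beta-\beta_0\|\le Cn^{-1/2}\}$ is also exactly what the paper does, and your ``main obstacle'' correctly identifies where the real work sits in (iii).

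The one step I would not sign off on as written is (v). You route the argument through the intermediate density $f_{W(\widehat\beta)}$ and bound $|f_{W(\widehat\beta)}(w)-f_{W(\beta_0)}(w)|$ by a Lipschitz constant times $\|\widehat\beta-\beta_0\|$, uniformly in $w$. Lipschitz continuity of $\beta\mapsto f_{W(\beta)}(w)$ is not among the stated assumptions: Assumption \ref{Assumption: smoothness}(i) controls derivatives of $f_{W(\beta)}$ in its argument $w$ uniformly over $\beta$, and Assumption \ref{Assumption: smoothness}(v) gives smoothness of $q$ in $(\beta,H)$, but neither implies smoothness of the \emph{density} in the index $\beta$ (that would require extra regularity of the law of $(X,Z)$ under $q(\beta,\cdot)$). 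The paper never introduces $f_{W(\widehat\beta)}$: it treats (v) exactly like (iii), expanding the kernel inside $\widehat f_{\widetilde W(\widehat\beta)}(w)$ around $W_i(\beta_0)$, so that $\widehat\beta-\beta_0$ enters only as an $O_P(n^{-1/2})$ perturbation of the kernel argument (alongside $\widetilde H-H$) and is absorbed into the same $O_P(\widetilde d_G)$ bound, cf.\ (\ref{eq: uniform convergence of fhat W tilde betahat}); only Lipschitzness of $q$ and smoothness of $f_{W(\beta_0)}$ in $w$ are used. Rewriting your (v) this way (or adding the Lipschitz-in-$\beta$ condition explicitly) closes the gap; the remainder of the proposal is correct.
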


\begin{proof}
$(i)$ Since $t^\eta_{W(\beta_0)}(W(\beta_0))-1=\mathbb I \{f_{W(\beta_0)}(W(\beta_0))< \eta \tau_n\}$, we have $\mathbb E\, \mathbb P _n |t^\eta_{W(\beta_0)}-1|=$ $\mathbb E \, \mathbb I \{f_{W(\beta_0)}(W(\beta_0))< \eta \tau_n\}$ $=P(f_{W(\beta_0)}(W(\beta_0))< \eta \tau_n )=O(p_n)=o(n^{-1/2})$, where the last equality follows from Assumption \ref{Assumption: trimming}(iii). Hence, by Markov's inequality we get $\mathbb P _n |t^\eta_{W(\beta_0)}-1|=O_P(p_n)=o(n^{-1/2})$.


$(ii)$ The proof follows from the same arguments as those for Equation (20) in \citet{lapenta_encompassing_2022}.

$(iii)$ Notice first that by the Lipschitz condition in Assumption \ref{Assumption: smoothness}(v) we have $\max_{i=1,\ldots,n}$ $\|\widetilde W _i(\beta)-W_i(\beta)\|\widehat t_i=O_P(\|(\widetilde H - H)\widehat t\|_\infty)$ uniformly in $\beta\in B$. So,
by a 4th order Taylor expansion and Lemma \ref{lem : Li Racine} we get
\begin{equation*}
    \widehat{f}_{\widetilde{W}(\beta)}(w)=h^{-d}\,\mathbb P _n \widehat t K_h(w-W(\beta)) + O_P\left(\frac{\|(\widetilde{H}-H)\widehat t\|^5 _\infty}{h^{d+5}}+\sum_{j=1}^4\frac{\|(\widetilde H - H)\widehat t\|^{j}_\infty}{h^j}\right)
\end{equation*}
uniformly in $(w,\beta)\in\mathcal W \times B$. By Lemma \ref{lem: trimming and convergence rates for 1st step}(i) the trimming $\widehat t$ in the first leading term can be replaced with $1$ at the cost of an $O_P(p_n h^{-d})$ reminder. Also, by Lemma \ref{lem : Li Racine} $h^{-d}\,\mathbb P _n  K_h(w-W(\beta))= h^{-d}\,P K_h(w-W(\beta)) + O_P(\sqrt{\log n / (n h ^{d})}) $ uniformly in $(w,\beta)\in\mathcal W \times B$, and by standard bias manipulations Assumption \ref{Assumption: smoothness}(i) $h^{-d}\,P K_h(w-W(\beta))=f_{W(\beta)}(w)+O(h^r)$ uniformly in $(w,\beta)\in\mathcal W \times B$. Gathering results and using Lemma \ref{lem: trimming and convergence rates for 1st step}(iii) gives 
\begin{equation}\label{eq: uniform convergence of f W tilde beta}
    \sup_{\beta\in B}\| \widehat f _{\widetilde{W}(\beta)}-f_{W(\beta)}\|_{\infty,\mathcal W} = O_P(\widetilde d_ G)\, ,
\end{equation}
with $\widetilde d _G$ defined in (\ref{eq: definition of dtilde G}). Now, let us define the event $\mathcal{A}_n^C:=\{\sup_{\beta\in B}\|\widehat f _{\widetilde{W}(\beta)}-f_{W(\beta)}\|_{\infty,\cal W}\leq C \widetilde{d}_G\}$. Since $\widetilde{d}_G/\tau_n=o(1)$ (see Equation (\ref{eq: n 1/4 convergence of dtilde G / tau 2})), over the event $\mathcal{A}_n^C$ for each $n$ large enough we have $1-2 [\widehat f _{\widetilde W (\beta)}(w)-f{W(\beta)}(w)]/(\delta \tau_n)\leq 2$ for all $(w,\beta)\in\mathcal{W}\times B$. Thus, 
\begin{align*}
    \mathbb{I}\{\widehat{f}_{\widetilde W (\beta)}(w)\geq \delta \tau_n /2\}= &\mathbb I \left\{f_{W(\beta)}(w)\geq \frac{\delta \tau_n}{2}\left[1-2\frac{\widehat{f} _{\widetilde W (\beta)}(w)-f_{W(\beta)}(w)}{\delta \tau_n}\right]\right\}\\
    \geq & \mathbb I \{f_{W}(\beta)(w)\geq \delta \tau_n\}
\end{align*}
for all $(w,\beta)\in\mathcal{W}\times B$. By (\ref{eq: uniform convergence of f W tilde beta}) and (\ref{eq: n 1/4 convergence of dtilde G / tau 2}) we can choose $C$ large enough to make $\Pr(\mathcal{A}_n^C)$ arbitrarily close to 1 for each large $n$. This and the previous display deliver the desired result. The proof for $\widehat W$ or $\widetilde W^*$ proceed along the same lines. 

$(iv)$ Define the event $\mathcal{A}^C_n:=\{\sup_{x,\beta,z}|q(\beta,x,\widetilde H (z))-q(\beta,x, H (z))|t^\delta(x,z) \leq C \widetilde d_H \}$. By a Mean-Value expansion of $f_{W(\beta)}(q(\beta,x,\widetilde H(z)))$ around $q(\beta,x, H(z))$, over the event $\mathcal{A}_n^C$ we obtain 

\begin{equation*}
    t^\delta (x,z) |f_{W(\beta)}(q(\beta,x,\widetilde H(z))) - f_{W(\beta)}(q(\beta,x,H(z)))| \leq C \sup_{\beta\in B}\|\partial f_{W(\beta)}\|_{\infty,\mathbb{R}^d} \,\,\widetilde d_H
\end{equation*}
uniformly in $(x,z,\beta)\in Supp(X,Z)\times B$. Moreover, from (\ref{eq: n 1/4 convergence of dtilde H}) we have $\widetilde d_H/\tau_n=o(1)$. So, for each large $n$ we have that
$$1-2 t^\delta (x,z)\frac{[f_{W(\beta)}(q(\beta,x,\widetilde H(z))) - f_{W(\beta)}(q(\beta,x, H(z))) ]}{\delta_2 \tau_n}\leq 2 $$ for all $(x,z,\beta)\in Supp(X,Z)\times B$. This implies that 
\begin{align*}
     t^\delta (x,z) \mathbb I & \{f_{W(\beta)}(q(\beta,x,\widetilde H (z)))\geq \delta_2 \tau_n / 2\}\\
    =&  t^\delta (x,z) \mathbb I  \Big\{f_{W(\beta)}(q(\beta,x,H(z)))\geq \\
    & \quad  \frac{\delta_2 \tau_n}{2}\Big[1-2  t^\delta (x,z) \frac{f_{W(\beta)}(q(\beta,x,\widetilde H (z)))- f_{W(\beta)}(q(\beta,x, H (z)))}{\delta_2 \tau_n}\Big]\Big\}\\
    \geq & t^\delta (x,z) \mathbb I \{f_{W(\beta)}(q(\beta,x, H(z)))\geq \delta_2 \tau_n\}
\end{align*}
for all $(x,z,\beta)\in Supp(X,Z)\times B$. Now, by the Lipschitz conditions in Assumption \ref{Assumption: smoothness}(v) we have that $|q(\beta,x,\widetilde H (z))-q(\beta,x, H (z))|t^\delta(x,z)$ $=O_P(\|(\widetilde H - H)t^\delta\|_\infty)$ uniformly in $(x,z,\beta)\in Supp(X,Z)\times B$, and from Lemma \ref{lem: trimming and convergence rates for 1st step}(iii)  $\|(\widetilde H - H)t^\delta\|_\infty=O_P(\widetilde d _H)$. So, by choosing $C$ large enough we can make $\Pr(\mathcal{A}_n^C)$ arbitrarily close to 1 for each large $n$. This and the previous display deliver the desired result. This proof for $\widehat H$ and $\widetilde H ^*$ proceeds along the same lines. 

$(v)-(vi)$ The proofs  proceed along the same lines as the proofs of $(iii)$ and $(iv)$ by using the event $\{\|\widehat \beta - \beta_0 \|\leq C n^{-1/2}\}$.

\end{proof}

\begin{lem}\label{lem: convergence rates for Ghat fhat and iotahat}
Let Assumptions \ref{Assumption: iid and phi}-\ref{Assumption: trimming} hold. Then, 

\begin{enumerate}[label=(\roman*)]
 \item $\sup_{\beta\in B}\max_{i=1,\ldots,n}|\widehat{G}_{\widetilde W (\beta)}(\widetilde W _i (\beta))-G_{W(\beta)}(W_i(\beta))|( t_i + \widehat t _i)=O_P(\widetilde d_G \tau_n^{-1})=o_P(n^{-1/4})$ \\
 and the same result also holds by replacing $\widetilde W$ with $\widehat W$\,,
 \item $\sup_{\beta\in B}\max_{i=1,\ldots,n}|\widehat{f}_{\widetilde W (\beta)}(\widetilde W _i (\beta))-f_{W(\beta)}(W_i(\beta))|( t_i + \widehat t _i)=O_P(\widetilde d_G)=o_P(n^{-1/4})$\,,
 \item 
 $\sup_{s\in \mathcal{S}}\max_{i=1,\ldots,n}\,|\,\widehat \iota_s(\widetilde W_i (\widehat \beta))-\mathbb{E}\{\varphi_s(X,H(Z))|W(\beta)=W_i(\beta)\}|_{\beta=\widehat\beta}\,|\,(t_i+\widehat t _i)=o_P(1)$\,,
 \item $\sup_{\beta\in B}\max_{i=1,\ldots,n}|\widehat{f}_{\widetilde W^* (\beta)}(\widetilde W _i^* (\beta))-f_{W(\beta)}(W_i(\beta))|( t_i + \widehat t _i)=O_P(\widetilde d_G )=o_P(n^{-1/4})$\,,
 \item $\sup_{\beta\in B}\max_{i=1,\ldots,n}|\widehat{G}^*_{\widetilde W^* (\beta)}(\widetilde W _i^* (\beta))-\mathbb E \{G_{W(\beta_1)}(W( \beta_1))|W(\beta)=W_i(\beta)\}|_{\beta_1=\widehat \beta}\,|\,( t_i + \widehat t _i)=o_P(1)$\,,
  \item 
  $\sup_{s\in \mathcal{S},}\max_{i=1,\ldots,n}|\widehat \iota_s^*(\widetilde W^* _i (\widehat \beta^*))-\mathbb{E}\{\varphi_s(X,H(Z))|W( \beta)=W_i(\beta)\}|_{\beta=\widehat \beta^*}\,|\,(t_i+\widehat t _i)=o_P(1)$\,.
\end{enumerate}
If moreover $\widehat \beta - \beta _0 = O_P(n^{-1/2})$ and $\mathcal{H}_0$ holds, we have 
\begin{enumerate}[label=(\roman*)]
        \setcounter{enumi}{6}
   \item $\max_{i=1,\ldots,n}|\widehat f _{\widetilde W (\widehat \beta)}(W_i(\beta_0))-f_{W(\beta_0)}(W_i(\beta_0))|(t_i+\widehat t _i)=O_P(\widetilde d_G)=o_P(n^{-1/4})$\,,
   \item  $\max_{i=1,\ldots,n}|\widehat G _{\widetilde W (\widehat \beta)}(W_i(\beta_0))-G_{W(\beta_0)}(W_i(\beta_0))|(t_i+\widehat t _i)=O_P(\widetilde d_G \tau_n^{-1})=o_P(n^{-1/4})$ and the same result also holds by replacing $\widetilde W$ with $\widehat W$\,,
   \item  $\sup_{s\in\mathcal{S}} \max_{i=1,\ldots,n}|\widehat \iota_s (W_i(\beta_0))-\iota_s(W_i(\beta_0))|(t_i+\widehat t _i)=O_P(\widetilde d_G \tau_n^{-1})=o_P(n^{-1/4})$\,,
   \item $\|(\widehat{f}_{\widetilde W (\widehat \beta)}-f_{W(\beta_0)})t^\eta_{W(\beta_0)}\|_{\infty,\mathcal{W}}=O_P(\widetilde d_G)=o_P(n^{-1/4})$\,,
   \item $\|(\widehat{G}_{\widetilde W (\widehat \beta)}-G_{W(\beta_0)})t^\eta_{W(\beta_0)}\|_{\infty,\mathcal{W}}=O_P(\widetilde d_G \tau_n^{-1})=o_P(n^{-1/4})$ and the same result also holds by replacing $\widetilde W $ with $\widehat W$\,,
  \item  $\sup_{s\in\mathcal{S}} \|(\widehat{\iota}_s-\iota_s)t^\eta_{W(\beta_0)}\|_{\infty,\mathcal{W}}=O_P(\widetilde d_G \tau_n^{-1})=o_P(n^{-1/4})$\,,
   \item $\max_{i=1,\ldots,n}|\widehat f _{\widetilde W (\widehat \beta)}(\widetilde W_i(\widehat \beta))-f_{W(\beta_0)}(W_i(\beta_0))|(t_i+\widehat t _i)=O_P(\widetilde d_G)=o_P(n^{-1/4})$\,,
   \item  $\max_{i=1,\ldots,n}|\widehat G _{\widetilde W (\widehat \beta)}(\widetilde W_i(\widehat \beta))-G_{W(\beta_0)}(W_i(\beta_0))|(t_i+\widehat t _i)=O_P(\widetilde d_G \tau_n^{-1})=o_P(n^{-1/4})$\,,
   \item  $\sup_{s\in\mathcal{S}} \max_{i=1,\ldots,n}|\widehat \iota_s (\widetilde W_i(\widehat \beta))-\iota_s(W_i(\beta_0))|(t_i+\widehat t _i)=O_P(\widetilde d_G \tau_n^{-1})=o_P(n^{-1/4})$\,,
   \item $\sup_{\beta\in B}\max_{i=1,\ldots,n}|\widehat G ^* _{\widetilde W ^* (\beta)}(\widetilde W ^* _i (\beta))- \mathbb E \{G_{W(\beta_0)}(W(\beta_0))|W(\beta)=W_i(\beta)\}|(t_i+\widehat t _i)=O_P(\widetilde d_G \tau_n^{-2})$ $=o_P(n^{-1/4})$\,.
\end{enumerate}
If furthermore $\widehat \beta ^* -\beta_0 = O_P(n^{-1/2})$, then 
\begin{enumerate}[label=(\roman*)]
        \setcounter{enumi}{16}
    \item $\max_{i=1,\ldots,n}|\widehat f _{\widetilde W ^* (\widehat \beta ^*)}(W_i(\beta_0))-f_{W(\beta_0)}(W_i(\beta_0))|(t_i+\widehat t_i)=O_P(\widetilde d_G)=o_P(n^{-1/4})$\,,
    \item $\max_{i=1,\ldots,n}|\widehat G^* _{\widetilde W ^* (\widehat \beta ^*)}(W_i(\beta_0))-G_{W(\beta_0)}(W_i(\beta_0))|(t_i+\widehat t_i)=O_P(\widetilde d_G \tau_n^{-2})=o_P(n^{-1/4})$\,,
    \item $\sup_{s\in\mathcal{S}} \max_{i=1,\ldots,n}|\widehat \iota _s^*(W_i(\beta_0))-\iota_s(W_i(\beta_0))|(t_i+\widehat t_i)=O_P(\widetilde d_G \tau_n^{-1})=o_P(n^{-1/4})$\,,
    \item $\|(\widehat f_{\widetilde W ^* (\widehat \beta ^*)}-f_{W(\beta_0)})t^\eta_{W(\beta_0)}\|_{\infty,\mathcal{W}}=O_P(\widetilde d_G)=o_P(n^{-1/4})$\,,
    \item $\|(\widehat G^*_{\widetilde W ^* (\widehat \beta ^*)}-G_{W(\beta_0)})t^\eta_{W(\beta_0)}\|_{\infty,\mathcal{W}}=O_P(\widetilde d_G \tau_n^{-2})=o_P(n^{-1/4})$\,,
    \item $\sup_{s\in\mathcal{S}}\|(\widehat \iota^*_s-\iota_s)t^\eta_{W(\beta_0)}\|_{\infty,\mathcal{W}}=O_P(\widetilde d_G \tau_n^{-1})=o_P(n^{-1/4})$\,,
    \item $\max_{i=1,\ldots,n}|\widehat f_{\widetilde W ^*(\widehat \beta ^*)}(\widetilde W ^* _i (\widehat \beta ^*))-f_{W(\beta_0)}(W_i(\beta_0))|(t_i+\widehat t _i)=O_P(\widetilde d_G)=o_P(n^{-1/4})$\,,
    \item $\max_{i=1,\ldots,n}|\widehat G^*_{\widetilde W ^*(\widehat \beta ^*)}(\widetilde W ^* _i (\widehat \beta ^*))-G_{W(\beta_0)}(W_i(\beta_0))|(t_i+\widehat t _i)=O_P(\widetilde d_G \tau_n^{-2})=o_P(n^{-1/4})$\,,
    \item $\sup_{s\in\mathcal{S}} \max_{i=1,\ldots,n}|\widehat \iota^*_s(\widetilde W ^* _i (\widehat \beta ^*))-\iota_s(W_i(\beta_0))|(t_i+\widehat t _i)=O_P(\widetilde d_G \tau_n^{-1})=o_P(n^{-1/4})$\,.
\end{enumerate}

\end{lem}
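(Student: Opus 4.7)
My plan is to treat this long lemma as a single chain of uniform-approximation arguments, built on two technical tools already developed in the paper: (a) the Li--Racine-type uniform convergence rate $\sup_{\beta\in B}\|\widehat f_{\widetilde W(\beta)}-f_{W(\beta)}\|_{\infty,\mathcal W}=O_P(\widetilde d_G)$ that already appears inside the proof of Lemma \ref{lem: trimming implications}(iii), and (b) the rate $\|(\widetilde H-H)\widehat t\|_\infty=O_P(\widetilde d_H)=o_P(n^{-1/4})$ from Lemma \ref{lem: trimming and convergence rates for 1st step}, together with its analogues for $\widehat H$ and $\widetilde H^*$. The key identity is that for any smooth kernel functional $\widehat T$, a multivariate Taylor expansion of the kernel in the $H$-argument gives
\[
\widehat T_{\widetilde W(\beta)}^Y(w)-\widehat T_{W(\beta)}^Y(w)
 = \sum_{1\le|l|\le 4}\frac{(-1)^{|l|}}{l!}\,\partial^{l}\!\widehat T^Y_{W(\beta)}(w)\,[\widetilde H - H]^{l} + R_n,
\]
with $R_n$ of order $h^{-(d+5)}\|(\widetilde H-H)\widehat t\|_\infty^{5}$, which feeds precisely the definition of $\widetilde d_G$ in (\ref{eq: definition of dtilde G}). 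This plus Lemma \ref{lem : Li Racine} applied to each partial derivative $\partial^l\widehat T^Y_{W(\beta)}$ yields, uniformly in $\beta$ and $w$, the rate $O_P(\widetilde d_G)$ for the numerators and denominators, and $O_P(\widetilde d_G\tau_n^{-1})$ after dividing by a denominator bounded below by $\tau_n$ on the trimming set.

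For parts (i)--(iii), I will first establish the ``anchored'' version $\sup_{\beta,w\in\mathcal W_n^{\delta}}|\widehat f_{\widetilde W(\beta)}(w)-f_{W(\beta)}(w)|=O_P(\widetilde d_G)$ and analogously for $\widehat T^Y$ and $\widehat T^{\varphi_s}$ (uniformly in $s$, using Lipschitz-in-$s$ and a covering argument). Then for the evaluation at $\widetilde W_i(\beta)$, I use Lemma \ref{lem: trimming implications}(iii)--(iv) to pass from a trimming on $(X_i,Z_i)$ to the statement ``$\widetilde W_i(\beta)\in\mathcal W_n^{\eta(\delta)/2}$ wpa1'' under $\widehat t_i+t_i=1$, so these sup bounds can be evaluated at the estimated points. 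The ratio structure $\widehat G=\widehat T^Y/\widehat f$ gives the extra $\tau_n^{-1}$ factor, and the rate transfer to $\widehat\iota_s-\iota_s$ follows identically with the Lipschitz-in-$s$ class handled by the same entropy bound used in Lemma \ref{lem: entropy bounds for asynotm of betahat}.

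For the bootstrap counterparts (iv)--(vi) the same template applies, but the limit identified in (v) is
$\mathbb{E}\{G_{W(\widehat\beta)}(W(\widehat\beta))\mid W(\beta)\}$ rather than $G_{W(\beta)}$; this reflects the bootstrap DGP $Y^*=\widehat G_{\widehat W(\widehat\beta)}(\widehat W(\widehat\beta))+\xi(Y-\widehat G_{\widehat W(\widehat\beta)}(\widehat W(\widehat\beta)))$, so that after decomposing $\widehat T^{Y^*}_{\widetilde W^*(\beta)}=\widehat T^{\widehat G_{\widehat W(\widehat\beta)}(\widehat W(\widehat\beta))}_{\widetilde W^*(\beta)}+\widehat T^{\xi\varepsilon^*}_{\widetilde W^*(\beta)}$, the mean-zero part is negligible (by $\mathbb{E}\xi=0$ and Li--Racine uniform rates conditional on the sample) and the first term converges uniformly to the stated conditional expectation thanks to parts (i)--(iii) already proved and a Glivenko--Cantelli argument in $\beta$.

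Parts (vii)--(xvi) add the extra ingredient $\widehat\beta-\beta_0=O_P(n^{-1/2})$: I plug in a Mean-Value expansion of $\widehat f_{\widetilde W(\widehat\beta)}(w)$ and $\widehat G_{\widetilde W(\widehat\beta)}(w)$ in $\widehat\beta$ around $\beta_0$, use Lemma \ref{lem: trimming implications}(v)(vi) to transport the trimming sets, and bound the derivatives using Lemma \ref{lem: belonging conditions} (which places the estimators in bounded smoothness classes wpa1). Parts (xvii)--(xxv) combine both the $O_P(n^{-1/2})$ expansion in $\widehat\beta^*$ and the $O_P(\widetilde d_H)$ expansion in $\widetilde H^*$, giving the $\tau_n^{-2}$ factor in (xviii) because $\widehat G^*$ now has both a random numerator (order $\widetilde d_G \tau_n^{-1}$ before dividing) and a denominator of order $\tau_n$; the final evaluations at $\widetilde W^*_i(\widehat\beta^*)$ are handled by once more transporting the trimming via Lemma \ref{lem: trimming implications}(iii)(v). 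I expect the main obstacle to be the careful bookkeeping of $\tau_n$-powers arising from iterated random-denominator divisions, and verifying that Assumption \ref{Assumption: bandwidth} together with (\ref{eq: n 1/4 convergence of dtilde G / tau 2}) and (\ref{eq: n 1/4 convergence of dtilde H}) makes each composite rate $o_P(n^{-1/4})$; once the five-term Taylor bound producing $\widetilde d_G$ is correctly set up, every remaining assertion is a routine propagation of these bounds through the ratio $\widehat T/\widehat f$ and through the additional $\widehat\beta$- and $\widehat\beta^*$-expansions.
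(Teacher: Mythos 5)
Your proposal matches the paper's proof strategy almost step for step: both rely on a fifth-order Taylor expansion of the kernel argument in the direction of $\widetilde H - H$ (this is exactly what produces the $\sum_{j\le 4}\widetilde d_H^j/h^j + \widetilde d_H^5/h^{d+5}$ structure inside $\widetilde d_G$), Li--Racine uniform-in-$(w,\beta)$ rates for each partial derivative of $\widehat T^Y$ and $\widehat f$, a ratio decomposition of $\widehat G = \widehat T^Y/\widehat f$ giving one $\tau_n^{-1}$ per random denominator, and the trimming transfers of Lemma~\ref{lem: trimming implications} to pass from the $(X,Z)$-level trimming to a $W$-level one. Your reading of the bootstrap parts is also right: (v) targets $\mathbb{E}\{G_{W(\widehat\beta)}(W(\widehat\beta))\mid W(\beta)\}$ precisely because $Y^*$ is centered at $\widehat G_{\widehat W(\widehat\beta)}(\widehat W(\widehat\beta))$, the $\xi$-weighted piece kills off by $\mathbb{E}\xi=0$, and the extra $\tau_n^{-1}$ in (xvi), (xviii), (xxi) comes from the rate $O_P(\widetilde d_G\tau_n^{-1})$ already carried by $\widehat G_{\widehat W(\widehat\beta)}$ entering the bootstrap numerator before a second division by $\widehat f$. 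The only cosmetic difference is that you package the $\widehat\beta-\beta_0 = O_P(n^{-1/2})$ contribution as a separate Mean-Value expansion in $\beta$, whereas the paper simply absorbs it into the same kernel-Taylor expansion (since $n^{-1/2} = o(\widetilde d_H)$ it is dominated by the $\widetilde H - H$ error); either route gives the stated rates. One small imprecision to tidy up in a write-up: your displayed Taylor identity applies the derivatives $\partial^l$ to $\widehat T^Y_{W(\beta)}$ against powers of $\widetilde H - H$, but the argument shift is really $\widetilde W_i(\beta)-W_i(\beta)=q(\beta,X_i,\widetilde H(Z_i))-q(\beta,X_i,H(Z_i))$, so the chain rule through $q$ (Lipschitz by Assumption~\ref{Assumption: smoothness}(v)) must be invoked to convert $[\widetilde W - W]^l$ into $O(\|\widetilde H - H\|^{|l|})$ uniformly on the trimming set.
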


\begin{proof}
$(i)$ Let us define 
\begin{equation}\label{eq: definition of T Y W(beta)}
    T^Y_{W(\beta)}(w):=\mathbb E \{Y|W(\beta)=w\}\,f_{W(\beta)}(w)\, .
\end{equation}
Arguments similar to those used for (\ref{eq: uniform convergence of f W tilde beta}) lead to \begin{equation}\label{eq: uniform convergence of T W tilde beta}
    \sup_{\beta \in B}\|\widehat T ^Y _{\widetilde W (\beta)}- T^Y_{W(\beta)}\|_{\infty ,\mathcal W}=O_P(\widetilde d_G)\, .
\end{equation}
From Lemma \ref{lem: trimming implications}(iii) wpa1 $t^\eta_{W(\beta)}(w)\leq \mathbb{I}\{\widehat{f}_{\widetilde W (\beta)}(w)\geq \eta \tau_n /2\}$ for all $(w,\beta)\in \mathcal{W}\times B$. So, by letting $\widehat A (w,\beta):=\widehat T^{Y}_{\widetilde W (\beta)}(w) - G_{W(\beta)}(w)\widehat f_{\widetilde W (\beta)}(w)$ we have that wpa1
\begin{align*}
    |\widehat{G}_{\widetilde{W}(\beta)}(w)-&G_{W(\beta)}(w)|t^\eta_{W(\beta)}(w)\\
    =& \Big| \frac{\widehat A (\beta,w)}{f_{W(\beta)}(w)} + \frac{\widehat A (\beta,w)}{f_{W(\beta)}(w)}\frac{f_{W(\beta)}(w)-\widehat f _{\widetilde W (\beta)}(w)}{\widehat f _{\widetilde W (\beta)}(w)}\Big|t^\eta_{W(\beta)}(w)\,\mathbb I \{\widehat f _{\widetilde W (\beta)}(w)\geq \eta \tau_n / 2\}
\end{align*}
for all $(w,\beta)\in \mathcal{W}\times B$. Combining (\ref{eq: uniform convergence of f W tilde beta}), (\ref{eq: uniform convergence of T W tilde beta}), and the above display leads to 
\begin{equation}\label{eq: uniform convergence of Ghat uniform in beta and w}
    \sup_{\beta\in B}\|(\widehat G _{\widetilde W (\beta)}-G_{W(\beta)})t^\eta_{W(\beta)}\|_{\infty,\mathcal{W}}=O_P(\widetilde d_G \tau^{-1}_n)\, .
\end{equation}
Now, wpa1 we have (see the comments ahead) $t^\delta(x,z)\leq t^\delta(x,z) \mathbb I \{f_{W(\beta)}(q(\beta,x,H(z)))\geq \eta(\delta) \tau_n\}$  $\leq t^\delta(x,z) \mathbb I \{f_{W(\beta)}(q(\beta,x,\widetilde H (z)))\geq \eta(\delta) \tau_n/2\}$ for all $(x,z,\beta)\in Supp(X,Z)\times B$, where the first inequality is ensured by Assumption \ref{Assumption: trimming}(i), while the second inequality is obtained from Lemma \ref{lem: trimming implications}(iv). Hence, wpa1\footnote{For notational simplicity we use $\partial G_{W(\beta)}(w):=\partial_w G_{W(\beta)}(w)$. }
\begin{align*}
    |\widehat G_{\widetilde W (\beta)}(\widetilde W _i (\beta))- G_{W(\beta)}(W_i (\beta))|t^\delta_i\leq & |\widehat G_{\widetilde W (\beta)}(\widetilde W _i (\beta)) - G_{W(\beta)}(\widetilde W _i (\beta))|\mathbb I \{f_{W(\beta)}(\widetilde W _i (\beta))\geq \eta(\delta)\tau_n/2\} \\
    & + |G_{W(\beta)}(\widetilde W _i (\beta))-G_{W(\beta)}(W_i (\beta))|t_i^\delta \\
    \leq & \|(\widehat G _{\widetilde W (\beta)}-G_{W(\beta)})t^{\eta(\delta)/2}_{W(\beta)}\|_{\infty,\mathcal{W}} \\
    &+ \|\partial G_{W(\beta)}\|_{\infty,\mathbb{R}^d} \|\widetilde W_i(\beta) - W_i(\beta) \|t^\delta _i\, .
\end{align*}
By (\ref{eq: uniform convergence of Ghat uniform in beta and w}) the first term on the RHS is $O_P(\widetilde d_G \tau_n^{-1})$ uniformly in $\beta$. For the second term on the RHS, by Assumption \ref{Assumption: smoothness}(i) $\|\partial G_{W(\beta)}\|_{\infty,\mathbb{R}^d}\leq C$ uniformly in $\beta\in B$, while by the Lipschitz condition in Assumption \ref{Assumption: smoothness}(v)  $\sup_{\beta\in B}\max_{i=1,\ldots,n}\|\widetilde W_i (\beta) - W_i (\beta)\|t^\delta _i =O_P(\|(\widetilde H - H)t^\delta\|)=O_P(\widetilde d_H)$, where the last equality follows from Lemma \ref{lem: trimming and convergence rates for 1st step}(iii). Hence, by using the definition of $\widetilde d_G$ in (\ref{eq: definition of dtilde G}) we get
$$\sup_{\beta \in B}\max_{i=1,\ldots,n}|\widehat G _{\widetilde W (\beta)}(\widetilde W _i (\beta))- G_{W(\beta)}(W_i (\beta))|t_i^{\delta} =O_P(\widetilde d_G \tau_n^{-1})=o(n^{-1/4})$$
where the last equality follows from (\ref{eq: n 1/4 convergence of dtilde G / tau 2}). 
Taking $\delta=1$ gives the part of result $(i)$ with $t_i$. To get the convergence rate with $\widehat t_i$, from Lemma \ref{lem: trimming implications}(ii) wpa1 $\widehat t(x,z)\leq t^{(1/2)}(x,z)$ for all $(x,z)\in Supp(X,Z)$. So, the result with $\widehat t_i$ follows from this inequality and the previous display evaluated at $\delta=1/2$.

$(ii)$ The result follows from (\ref{eq: uniform convergence of f W tilde beta}) and by the same arguments as in the proof of $(i)$.


$(iii)$ Let us recall that 
\begin{equation*}
    \widehat{\iota}_s(w)=\frac{\widehat T^{\varphi_s}_{\widetilde{W}(\widehat \beta)}(w)}{\widehat{f}_{\widetilde{W}(\widehat \beta)}(w)}\,\text{ with }\, \widehat{T}^{\varphi_s}_{\widetilde{W}(\widehat \beta)}(w)=\frac{1}{ n h^d}\sum_{i=1}^n \varphi_s(X_i,\widetilde{H}(Z_i))\,\widehat t_i\, K\left(\frac{w-\widetilde{W}_i(\widehat{\beta})}{h}\right)\, .
\end{equation*}
By the arguments used in the proof of Lemma \ref{lem: trimming implications}(iii) we get 
\begin{align*}
    \widehat{T}^{\varphi_s}_{\widetilde{W}(\widehat \beta)}(w)&= h^{-d}\,P\,K\left(\frac{w-W(\widehat \beta)}{h}\right)\,\varphi_s(X,H(Z))+O_P(\widetilde{d}_G)\\
    =& \int K(u)\,\mathbb{E}\{\varphi_s(X,H(Z))|W(\widehat \beta)=w+u h\}\,f_{W(\widehat \beta)}(w+ u h)\,du\\
    =& \int K(u)\,\Big[\,\mathbb{E}\{\varphi_s(X,H(Z))|W(\widehat \beta)=w+u h\}\,f_{W(\widehat \beta)}(w+ u h)\,\\
    &- \mathbb{E}\{\varphi_s(X,H(Z))|W(\widehat \beta)=w\}\,f_{W(\widehat \beta)}(w) \,\Big]\,du\\
    &+ \mathbb{E}\{\varphi_s(X,H(Z))|W(\widehat \beta)=w\}\,f_{W(\widehat \beta)}(w) + O_P(\widetilde{d}_G)\\
    =& \mathbb{E}\{\varphi_s(X,H(Z))|W(\widehat \beta)=w\}\,f_{W(\widehat \beta)}(w) + O_P(\widetilde{d}_G+h)\, ,
\end{align*}
uniformly in $(w,s)\in\mathcal{W}\times\mathcal{S}$,
where the expectation $\mathbb{E}\{\varphi_s(X,H(Z))|W(\widehat \beta)=w\}$ considers as random only $(X,Z)$ but not $\widehat \beta$, and in the last equality of the above display we have used the Lipschitz condition from Assumption \ref{Assumption: smoothness}(vi).  
Thus, the previous display gives
\begin{equation*}  \sup_{s\in\mathcal{S}}\|\widehat{T}^{\varphi_s}_{\widetilde{W}(\widehat \beta)}- \mathbb{E}\{\varphi_s(X,H(Z))|W(\widehat \beta)=\cdot\}\,f_{W(\widehat \beta)}\|_{\infty,\mathcal{W}}=O_P(\widetilde{d}_G+h)\, .
\end{equation*}
Next, by arguing similarly as for (\ref{eq: uniform convergence of Ghat uniform in beta and w}) we obtain 
\begin{equation}\label{eq: conv rate of iotahat without root n normality of betahat}
\sup_{s\in\mathcal{S}}\|(\widehat{\iota}_s-\mathbb{E}\{\varphi_s(X,H(Z))|W(\widehat \beta)=\cdot\})t_{W(\widehat \beta)}^{\eta}\|_{\infty,\mathcal{W}}=O_P\left(\frac{\widetilde{d}_G+h}{\tau_n}\right)\, .
\end{equation}
Also, wpa1 $t^{\delta}(x,z)\leq t^\delta(x,z)\,\mathbb{I}\{f_{W(\widehat \beta)}(q(\widehat \beta,x,H(z)))$ $\geq \eta(\delta)\tau_n\}$ $\leq t^\delta(x,z)\,\mathbb{I}\{f_{W(\widehat \beta)}(q(\widehat \beta,x,\widetilde H(z)))$ $\geq \eta(\delta)\tau_n/2\}$ for all $(x,z)\in Supp(X,Z)$, where the first inequality follows from Assumption \ref{Assumption: trimming}(i), while the second inequality is due to Lemma \ref{lem: trimming implications}(iv). By  these inequalities, we obtain that wpa1 
\begin{align*}
    \Big|\widehat \iota_s(\widetilde{W}_i(\widehat \beta))-\mathbb{E}\{\varphi_s&(X,H(Z))|W(\widehat \beta)=W_i(\widehat \beta)\}\Big|\,t_i^\delta\\
    \leq& \left|\widehat \iota_s(\widetilde{W}_i(\widehat \beta))-\mathbb{E}\{\varphi_s(X,H(Z))|W(\widehat \beta)=\widetilde{W}_i(\widehat \beta)\}\right|\,\mathbb{I}\{f_{W(\widehat \beta)}(\widetilde{W}_i(\widehat \beta))\geq \eta(\delta)\tau_n/2\}\\
    &+ \left|\mathbb{E}\{\varphi_s(X,H(Z))|W(\widehat \beta)=\widetilde{W}_i(\widehat \beta)\} - \mathbb{E}\{\varphi_s(X,H(Z))|W(\widehat \beta)=W_i(\widehat \beta)\}\right|\,t_i^\delta\\
    \lesssim & \|(\widehat \iota _s - \mathbb{E}\{\varphi_s(X,H(Z))|W(\widehat \beta)=\cdot\})\,t_{W(\widehat \beta)}^{\eta(\delta)/2}\|_{\infty,\mathcal{W}}\\
    &+ \|\widetilde{W}_i(\widehat \beta)-W_i(\widehat \beta)\|\,t_i^\delta\\
    \lesssim & \|(\widehat \iota _s - \mathbb{E}\{\varphi_s(X,H(Z))|W(\widehat \beta)=\cdot\})\,t_{W(\widehat \beta)}^{\eta(\delta)/2}\|_{\infty,\mathcal{W}}\\
    &+ \|(\widetilde H - H)t^\delta\|_\infty\\
    =& O_P\left(\frac{\widetilde d _G +h}{\tau_n}\right)\,
\end{align*}
uniformly in $i=1,\ldots,n$ and $s\in\mathcal{S}$,
where for the second inequality we have used the Lipschitz condition from Assumption \ref{Assumption: smoothness}(vi), for the third inequality we have used the Lipschitz condition from Assumption \ref{Assumption: smoothness}(v), and the for the last equality we have used (\ref{eq: conv rate of iotahat without root n normality of betahat}) and Lemma \ref{lem: trimming and convergence rates for 1st step}(iii). By (\ref{eq: n 1/4 convergence of dtilde G / tau 2}) and Assumption \ref{Assumption: trimming}(ii) 
we get that that the RHS of the above display is $o_P(1)$. So, result $(iii)$ with $t_i$ is obtained by taking $\delta=1$ in the previous display. To get the result with $\widehat t_i$, we first notice that wpa1 $\widehat t(x,z)\leq t^{(1/2)}(x,z)$   for all $(x,z)\in Supp(X,Z)$, as obtained earlier. Then, we evaluate the previous display at $\delta=1/2$.


$(iv)$ By the same reasoning as in (\ref{eq: uniform convergence of f W tilde beta}) we get 

\begin{equation}\label{eq: uniform convergence of fhat Wtilde star beta}
    \sup_{\beta\in B}\|\widehat f_{\widetilde W ^* (\beta)}-f_{W(\beta)}\|_{\infty,\mathcal W}=O_P(\widetilde d_G)\, .
\end{equation}
Combining the above display with the arguments used in the proof of $(i)$ leads to the desired result.

$(v)$ Let us recall that $\widehat G ^*_{\widetilde W ^*(\beta)}(\widetilde W ^* (\beta))=\widehat T ^{Y^*}_{\widetilde W (\beta)}/\widehat{f}_{\widetilde W ^* (\beta)}$. For the numerator we have 
    \begin{align*}
    \widehat T ^{Y^*}_{\widetilde W ^* (\beta)}(w)=& h^{-d}\mathbb P _n [G_{W(\widehat \beta)}(W(\widehat \beta)+\xi (Y-G_{W(\widehat \beta)}(W(\widehat \beta))]\,\widehat t\, K_h(w-\widetilde W ^*(\beta))\\
    &+ h^{-d}\mathbb P _n  (1-\xi)[\widehat G _{\widehat W (\widehat \beta)}(\widehat W (\widehat \beta))-G_{W(\widehat \beta)}(W(\widehat \beta))]\,\widehat t \, K_h(w-\widetilde W ^*(\beta))\\
    =&\mathbb E \{G_{W(\widehat \beta)}(W(\widehat \beta))|W(\beta)=w\}\, f_{W(\beta)}(w) \\
    &+ O_P(\widetilde d_G \tau_n^{-1}+h)\, , 
\end{align*}
uniformly in  $(w,\beta)\in \mathcal W \times  B$, where the expectation $\mathbb E \{G_{W(\widehat \beta)}(W(\widehat \beta))|W(\beta)=w\}$ considers as random only $(X,Z)$ but not $\widehat \beta$.
By reasoning as in the proof of Lemma \ref{lem: trimming implications}(iii), 
the second term after the first equality is $O_P(\max_{i=1,\ldots,n}$ $|\widehat G _{\widehat W (\widehat \beta)}(\widehat W _i (\widehat \beta))$ $-G_{W(\widehat \beta)}(W_i(\widehat \beta))|$ $\widehat t_i)$ which in turn is of order $\widetilde d_G \tau_n^{-1}$ by $(i)$ of the present lemma. To handle the first term after the first equality, we proceed similarly as in the proof of $(iii)$ of the present lemma.
So, we obtain that
\begin{align*}
 \sup_{\beta\in B}  \| \widehat T ^{Y^*}_{\widetilde W ^* (\beta)}-\mathbb E \{G_{W(\widehat \beta)}(W(\widehat \beta))|W(\beta)=w\}\|_{\infty,\mathcal W}=O_P(\widetilde d_G \tau_n^{-1}+h)\, .
\end{align*}
The desired result then follows by combining the above display, (\ref{eq: uniform convergence of fhat Wtilde star beta}), and the arguments in the proof of $(i)$.

$(vii)$ By arguing as in (\ref{eq: uniform convergence of f W tilde beta}), we get

\begin{equation}\label{eq: uniform convergence of fhat W tilde betahat}
\|\widehat f _{\widetilde W (\widehat \beta)}-f_{W(\beta_0)}\|_{\infty, \mathcal{W}}=O_P(\widetilde d_G)    
\end{equation}
 Also, from Assumption \ref{Assumption: trimming}(i) for all $n$ large enough $t^\delta(x,z)\leq \mathbb I \{f_{W(\beta_0)}(q(\beta_0,x,H(Z)))\geq \eta(\delta)\tau_n\}$ for all $(x,z)\in Supp(X,Z)$. Hence, 
\begin{equation}
    \max_{i=1,\ldots,n}|\widehat f _{\widetilde W (\widehat \beta)}( W _i (\beta_0))-f_{W(\beta_0)}(W_i(\beta_0))|t^\delta_i \leq \|\widehat f _{\widetilde W (\widehat \beta)}-f_{W(\beta_0)}\|_{\infty,\mathcal W}=O_P(\widetilde d_G)\, .
\end{equation}
Taking $\delta=1$ delivers the part of the result with $t_i$. To get the part of the result involving $\widehat t_i$, as already noticed earlier wpa1 $\widehat t(x,z)\leq t^{(1/2)}(x,z)$ for all $(x,z)\in Supp(X,Z)$. So the result follows from the previous display.

$(viii)\text{  and } (ix)$ Arguments similar to those used in (\ref{eq: uniform convergence of f W tilde beta}) lead to
\begin{equation}\label{eq: uniform convergence of That Wtilde betahat}
    \|\widehat T^{Y}_{\widetilde W (\widehat \beta)}-T^{Y}_{W(\beta_0)}\|_{\infty,\mathcal W}=O_P(\widetilde d_G)\, .
\end{equation}
Combining Lemma \ref{lem: trimming implications}(v) with (\ref{eq: uniform convergence of fhat W tilde betahat}), (\ref{eq: uniform convergence of That Wtilde betahat}), and the arguments used for (\ref{eq: uniform convergence of Ghat uniform in beta and w}) gives
\begin{equation}\label{eq: uniform convergene of Ghat Wtilde betahat}
    \|(\widehat G _{\widetilde W (\widehat \beta)}-G_{W(\beta_0)})t^\eta_{W(\beta_0)}\|_{\infty,\mathcal{W}}=O_P(\widetilde d_G \tau_n^{-1})\, .
\end{equation}
Moreover, for each $n$ large enough $t^\delta(x,z)$ $\leq \mathbb I \{f_{W(\beta_0)}(q(\beta_0,x,H(z)))\geq \eta(\delta) \tau_n\} $   for all $(x,z)\in Supp(X,Z)$, see Assumption \ref{Assumption: trimming}(i). Hence, 
\begin{align*}
    |\widehat G_{\widetilde W (\widehat \beta)}(W_i(\beta_0))-G_{w(\beta_0)}(W_i(\beta_0))|t^\delta _i \leq& |\widehat G_{\widetilde W (\widehat \beta)}(W_i(\beta_0))-G_{w(\beta_0)}(W_i(\beta_0))|t^{\eta(\delta)}_{W(\beta_0)}(W_i(\beta_0))\\
    \leq & \|\widehat G_{\widetilde W (\widehat \beta)}-G_{W(\beta_0)})t_{W(\beta_0)}^{\eta(\delta)}\|_{\infty,\mathcal{W}}=O_P(\widetilde d_G \tau_n^{-1})\, ,
\end{align*}
where  in the last equality we have used (\ref{eq: uniform convergene of Ghat Wtilde betahat}). Taking $\delta=1$ gives the desired result with $t_i$. To get the result with $\widehat t_i$, we recall that wpa1 $\widehat t (x,z)\leq t^{(1/2)}(x,z)$ for all $(x,z)\in Supp(X,Z)$. Using this and the previous display evaluated at $\delta=1/2$ gives the result for $\widehat t_i$. The result for $(ix)$ follows from analogous arguments.

$(x)\, , \, (xi)\, , \, \text{ and }(xii)$ These results have already been shown earlier in this proof.  

$(xiii)$ We have that wpa1 (see the comments ahead) $t^\delta(x,z)\leq  t^\delta(x,z) \mathbb I \{f_{W(\beta_0)}(q(\beta_0,x,H(z)))$ $\geq \eta(\delta) \tau_n\}$ $\leq t^\delta(x,z) \mathbb I \{f_{W(\beta_0)}(q(\widehat \beta, x , \widetilde H (z)))\geq \eta(\delta)\tau_n/2\}$ for all $(x,z)\in Supp(X,Z)$, where the first inequality follows from Assumption \ref{Assumption: trimming}(i), while the second inequality is obtained from Lemma \ref{lem: trimming implications}(vi). This implies
\begin{align*}
    |\widehat f _{\widetilde W (\widehat \beta)}(\widetilde W _i (\widehat \beta))-f_{W(\beta_0)}(W_i(\beta_0))|t^\delta_i \leq & |f _{\widetilde W (\widehat \beta)}(\widetilde W _i (\widehat \beta)) - f_{W(\beta_0)}(\widetilde W _i(\widehat \beta))|\mathbb I \{f_{W(\beta_0)}(\widetilde W _i(\widehat \beta))\geq \eta (\delta) \tau_n/2\}\\
    &+ |f_{W(\beta_0)}(\widetilde W _i(\widehat \beta)) - f_{W(\beta_0)}(W_i(\beta_0)) |t^\delta _i \\
    \leq & \|\widehat f _{\widetilde W (\widehat \beta)}-f_{W(\beta_0)})t^{\eta(\delta)/2}_{W(\beta_0)}\|_{\infty,\mathcal{W}} \\ 
    &+ \|\partial f_{W(\beta_0)}\|_{\infty,\mathbb{R}^d} t^\delta_i \|\widetilde W_i (\widehat \beta)-W_i(\beta_0)\|\, .
\end{align*}
By (\ref{eq: uniform convergence of fhat W tilde betahat}) the first term on the RHS is $O_P(\widetilde d_G)$. For the second term, by the Lipschitz condition in Assumption \ref{Assumption: smoothness}(v) we have $\max_{i=1,\ldots,n}\|\widetilde W_i (\widehat \beta) - W_i (\beta_0)\|t^\delta _i =O_P(\|(\widetilde H - H) t^\delta\|_\infty+\|\widehat \beta - \beta_0\|)$ that is of order $\widetilde{d}_H$ thanks to Lemma \ref{lem: trimming and convergence rates for 1st step}(iii) and $\widehat \beta-\beta_0=O_P(n^{-1/2})$. Gathering results and using the definition of $\widetilde{d}_G$ from (\ref{eq: definition of dtilde G}) gives  
\begin{equation*}
    \max_{i=1,\ldots,n}|\widehat f _{\widetilde W (\widehat \beta)}(\widetilde W _i (\widehat \beta))-f_{W(\beta_0)}(W_i(\beta_0))|t^\delta_i=O_P(\widetilde d_G)\, .
\end{equation*}
Taking $\delta=1$ gives the result for $t_i$. The result for $\widehat t _i$ is obtained by using the previous display and the fact that wpa1 $\widehat t(x,z)\leq t^{(1/2)}(x,z)$ for all $(x,z)\in Supp(X,Z)$, as already seen earlier.

$(xiv) \text{ and } (xv)$ As already noticed in the proof of $(xiii)$, wpa1 $t^\delta(x,z)\leq t^\delta(x,z)$ $\mathbb I \{f_{W(\beta_0)}(q(\widehat \beta,$ $x,\widetilde H(z)))\geq \eta(\delta) \tau_n/2\}$ for all $(x,z)\in Supp(X,Z)$. So, 
\begin{align*}
    |\widehat G _{\widetilde W (\widehat \beta)}(\widetilde W _i (\widehat \beta)  - G_W(\beta_0)(W_i (\beta_0))|&t^\delta _i \\
    =& |\widehat G _{\widetilde W (\widehat \beta)}(\widetilde W _i (\widehat \beta) - G_{W(\beta_0)}(\widetilde W_i(\widehat \beta))|\, \mathbb I \{f_{W(\beta_0)}(\widetilde W _i (\beta_0))\geq \eta(\delta)\tau_n/2\} \,  \\
    &+ | G_{W(\beta_0)}(\widetilde W_i(\widehat \beta)) - G_W(\beta_0)(W_i (\beta_0)) |\, t^\delta _i\\
    \leq & \|(\widehat G _{\widetilde W (\widehat \beta)})-G_{W(\beta_0)})t^{\eta(\delta)/2}_{W(\beta_0)}\|_{\infty,\mathcal{W}}\\
    &+ \|\partial G _{W(\beta_0)}\|_{\infty,\mathbb{R}^d}\, \|\widetilde W _i (\widehat \beta) - W_i(\beta_0)\|t^\delta_i\, .
\end{align*}
By (\ref{eq: uniform convergene of Ghat Wtilde betahat}), the first term on the RHS is $O_P(\widetilde d_G \tau_n^{-1})$, while as already obtained earlier in this proof the second term is $O_P(\widetilde d_H)$ uniformly in $i=1,\ldots,n$. Hence, setting $\delta=1$ gives the result for $t_i$. The result for $\widehat t_i$ can be obtained by the previous display and the fact that wpa1 $\widehat t(x,z)\leq t^{(1/2)}(x,z)$ for all $(x,z)\in Supp(X,Z)$. The proof for $(xv)$ follows from analogous arguments.

$(xvi)$ Let us recall that $\widehat G ^* _{\widetilde W ^* (\beta)}(w)=\widehat T ^{Y^*}_{\widetilde W ^* (\beta)}(w)/\widehat f _{\widetilde W ^* (\beta)}(w)$. 
For the numerator of $\widehat G ^* _{\widetilde W ^* (\beta)}(w)$, we have (see the comments below)
\begin{align*}
    \widehat T ^{Y^*}_{\widetilde W ^* (\beta)}(w)=& h^{-d}\mathbb P _n [G_{W(\beta_0)}(W(\beta_0)+\xi (Y-G_{W(\beta_0)}(W(\beta_0))]\,\widehat t\, K_h(w-\widetilde W ^*(\beta))\\
    &+ h^{-d}\mathbb P _n  (1-\xi)[\widehat G _{\widehat W (\widehat \beta)}(\widehat W (\widehat \beta))-G_{W(\beta_0)}(W(\beta_0))]\,\widehat t \, K_h(w-\widetilde W ^*(\beta))\\
    =&\mathbb E \{G_{W(\beta_0)}(W(\beta_0))|W(\beta)=w\}\, f_{W(\beta)}(w) \\
    &+ O_P(\widetilde d_G \tau_n^{-1})\, , 
\end{align*}
uniformly in  $(w,\beta)\in \mathcal W \times  B$.
By reasoning as in (\ref{eq: uniform convergence of f W tilde beta}) and using $(xiv)$ of the present lemma, the second term after the first equality is $O_P(\max_{i=1,\ldots,n}$ $|\widehat G _{\widehat W (\widehat \beta)}(\widehat W _i (\widehat \beta))$ $-G_{W(\beta_0)}(W_i(\beta_0))|$ $\widehat t_i)$, and hence of order $\widetilde d_G \tau_n^{-1}$. For the first term after the first equality, we proceed similarly as in (\ref{eq: uniform convergence of f W tilde beta}). So, we obtain that 
\begin{equation}\label{eq: uniform convergence of That star Wtilde star beta}
    \sup_{\beta\in B}\|\widehat T ^{Y^*}_{\widetilde W ^* (\beta)}-\mathbb{E}\{G_{W(\beta_0)}(W(\beta_0))|W(\beta)=\cdot\}\,f_{W(\beta)}\|_{\infty,\mathcal W}=O_P(\widetilde d_G \tau_n^{-1})\, .
\end{equation}
Finally, given (\ref{eq: uniform convergence of fhat Wtilde star beta}) and (\ref{eq: uniform convergence of That star Wtilde star beta}) we can follow the same reasoning as in the proof of $(i)$ to obtain the desired result. 

$(xvii)-(xxv)$ These remaining results follow from arguments analogous to the proof of $(i)-(xvi)$.

\end{proof}

Before introducing the following lemma, we show that 
\begin{equation}\label{eq: negligibility of convergence rate for the derivatives}
    \frac{\widetilde{d}_G}{h^{|l|} \tau_n^{|l|+1}}=o(1)\text{ for all }|l|=1,\ldots,\lceil(d+1)/2\rceil+1\, ,
\end{equation}
where $\widetilde{d}_G=d_G+p_n/h^d+\sum_{j=1}^4(\widetilde{d}_H/h)^j+\widetilde{d}_H^5/h^{d+5}$ from (\ref{eq: definition of dtilde G}). First, 
$$\frac{d_G}{h^{\lceil(d+1)/2\rceil+1}\tau_n^{\lceil(d+1)/2\rceil+2}}\leq \frac{d_G}{h^{\lceil(d+1)/2\rceil+1}\tau_n^{2(\lceil(d+1)/2\rceil+1)}}=o(1)\, ,$$
where the last equality follows from Assumption \ref{Assumption: bandwidth}(i).
Also, from Assumption \ref{Assumption: trimming}(iii)
$$\frac{p_n}{h^{d+\lceil(d+1)/2\rceil+1}\tau_n^{\lceil(d+1)/2\rceil+2}}=o(1)\, .$$
Next, since $\widetilde{d}_H=d_H/\tau_n^3+p_n/(h_H^p \tau_n^2)$ from (\ref{eq: definition of dtilde H}), we have 
\begin{align*}
    \frac{\widetilde{d}_H}{h^{\lceil(d+1)/2\rceil+2}\tau_n^{\lceil(d+1)/2\rceil+2}}=&\frac{d_H}{\tau_n^5 h^2}\cdot\frac{1}{\tau_n^{\lceil(d+1)/2\rceil} h^{\lceil(d+1)/2\rceil}}+ \frac{p_n}{h_H^p \tau_n^{\lceil(d+1)/2\rceil+4} h^{\lceil(d+1)/2\rceil+2}}\\
    =& o(n^{-1/4})\cdot \frac{1}{\tau_n^{\lceil(d+1)/2\rceil} h^{\lceil(d+1)/2\rceil}} +o(1)\\
    =& o(1)\, .
\end{align*}
where the second line is a direct consequence of  Assumptions \ref{Assumption: bandwidth}(ii) and \ref{Assumption: trimming}(iii), while the third line follows from Assumption \ref{Assumption: bandwidth}(i).\footnote{Specifically, to obtain the third line, it is sufficient to show that $n h^{4\lceil(d+1)/2\rceil}\tau_n^{4\lceil(d+1)/2\rceil}\rightarrow \infty$. Notice first that from Assumption \ref{Assumption: bandwidth}(i) $d_G h^{-\lceil(d+1)/2\rceil-1}\tau_n^{-2\lceil(d+1)/2\rceil-2}=o(1)$ which implies $n h^{d+2+2\lceil(d+1)/2\rceil}\tau_n^{4\lceil(d+1)/2\rceil+4}\rightarrow \infty$. Thus, when $d$ is odd, $n h^{4\lceil(d+1)/2\rceil}\tau_n^{4\lceil(d+1)/2\rceil}$ $=n h^{2 d +2} \tau_n^{2d+2}$ $\geq n h^{2 d +3} \tau_n^{2d + 6} $ $=n h^{d+2+2\lceil(d+1)/2\rceil}\tau_n^{4\lceil(d+1)/2\rceil+4}\rightarrow \infty$. When $d$ is even,  $n h^{4\lceil(d+1)/2\rceil}\tau_n^{4\lceil(d+1)/2\rceil}$  $=n h^{2 d + 4} \tau_n^{2 d + 4}$ $\geq n h^{2 d + 4} \tau_n^{2 d + 8}$ $=n h^{d+2+2\lceil(d+1)/2\rceil}\tau_n^{4\lceil(d+1)/2\rceil+4}\rightarrow \infty$.}
The above display also implies that 
\begin{align*}
    \sum_{j=1}^4\left(\frac{\widetilde{d}_H}{h}\right)^j\frac{1}{h^{\lceil(d+1)/2\rceil+1}\tau_n^{\lceil(d+1)/2\rceil+2}}=o(1)\, .
\end{align*}
Finally, since $\widetilde{d}_H/(h \tau_n^2)=o(n^{-1/4})$ (see Footnote \ref{footnote: 1st negligibility of convergence rate}), we have
\begin{align*}
    \frac{\widetilde{d}^5_H}{h^{d+6+\lceil(d+1)/2\rceil}\tau_n^{\lceil(d+1)/2\rceil+2}}=&\left(\frac{\widetilde{d}_H}{h \tau_n^2}\right)^4\cdot \frac{\widetilde{d}_H}{h \tau_n^2}\cdot \frac{\tau_n^8}{h^{d+\lceil(d+1)/2\rceil+1}\tau_n^{\lceil(d+1)/2\rceil}}\\
    =& \frac{o(n^{-1/4})}{ n h^{d+\lceil(d+1)/2\rceil+1}\tau_n^{\lceil(d+1)/2\rceil}}\\
    =&o(1)\, ,
\end{align*}
where the last equality is a direct consequence of $n h^{d+2+2\lceil(d+1)/2\rceil}\tau_n^{4\lceil(d+1)/2\rceil+4}\rightarrow \infty$ which is implied by $d_G h^{-\lceil(d+1)/2\rceil-1}\tau_n^{-2\lceil(d+1)/2\rceil-2}=o(1)$ (see Assumption \ref{Assumption: bandwidth}(i)). Gathering results gives Equation (\ref{eq: negligibility of convergence rate for the derivatives}).


\begin{lem}\label{lem: uniform convergence rates for derivatives}
Let Assumptions \ref{Assumption: iid and phi}-\ref{Assumption: trimming} hold. Then, for $|l|=1,\ldots,\lceil(d+1)/2\rceil+1$ we have
\begin{enumerate}[label=(\roman*)]
    \item $\sup_{\beta\in B}\|(\partial^l \widehat G _{\widetilde W ( \beta)}-\partial^l G_{W(\beta)})t^\eta_{W(\beta)}\|_{\infty,\mathcal W}=O_P(\widetilde d_G h^{-|l|}  \tau_n^{-|l|})=o_P(1)$\,,
\end{enumerate}
and for $|l|=0,1,\ldots,\lceil(d+1)/2\rceil$ we have
\begin{enumerate}[label=(\roman*)]
\setcounter{enumi}{1}
    \item $\sup_{\beta \in B}\|(\partial_\beta \partial^l \widehat G _{\widetilde W (\beta)}- \partial_\beta \partial^l G_{W(\beta)})t_{W(\beta)}^\eta\|_{\infty,\mathcal{W}}=O_P(\widetilde d_G h^{-|l|-1} \tau_n^{-|l|-1})=o_P(1)$\,, 
    \item $\sup_{\beta\in B}\|(\partial^2_{\beta,\beta^T}\widehat G _{\widetilde W (\beta)}-\partial^2_{\beta,\beta^T} G _{ W (\beta)})t^\eta_{W(\beta)}\|_{\infty,\mathcal{W}}=O_P(\widetilde d_G h^{-|2|} \tau_n^{-2})=o_P(1)$\,.
\end{enumerate}
If moreover $\widehat \beta-\beta_0=O_P(n^{-1/2})$ and $\mathcal{H}_0$ holds, then for $|l|=1,\ldots,\lceil (d+1)/2 \rceil$ we have
\begin{enumerate}[label=(\roman*)]
\setcounter{enumi}{3}
    \item $\|\partial^l \widehat f _{\widetilde W (\widehat \beta)}-\partial^l f_{W(\beta_0)}\|_{\infty,\mathcal W}=O_P(\widetilde d_G h^{-|l|})=o_P(1)$\,,
    \item $\|(\partial^l \widehat G _{\widetilde W (\widehat \beta)}-\partial^l G_{W(\beta_0)})t^\eta_{W(\beta_0)}\|_{\infty,\mathcal{W}}=O_P(\widetilde d_G h^{-|l|} \tau_n^{-|l|})=o_P(1)$\,,
    \item $\sup_{s\in\mathcal{S}} \|(\partial^l \widehat \iota _s-\partial^l \iota_s)t^\eta_{W(\beta_0)}\|_{\infty,\mathcal{W}}=O_P(\widetilde d_G h^{-|l|}\tau_n^{-|l|})=o_P(1)$\,,
    \end{enumerate}
    and for $|l|=0,1,\ldots,\lceil (d+1)/2 \rceil$ we have 
    \begin{enumerate}[label=(\roman*)]
\setcounter{enumi}{6}
    \item $\sup_{\beta\in B}\|(\partial^l \partial_\beta \widehat G ^*_{\widetilde W ^*(\beta)}-\partial^l \partial_\beta G_{W(\beta)})t^\eta_{W(\beta)}\|_{\infty,\mathcal{W}}=O_P(\widetilde d_G h^{-|l|-1}\tau_n^{-|l|-2})=o_P(1)$\,,
    \item $\|(\partial^l \partial_\beta \widehat G ^*_{\widetilde W ^*(\widehat \beta)}-\partial^l \partial_\beta G_{W(\beta_0)})t^\eta_{W(\beta_0)}\|_{\infty,\mathcal{W}}=O_P(\widetilde d_G h^{-|l|-1}\tau_n^{-|l|-2})=o_P(1)$\,,
    \item $\sup_{\beta\in B}\|\partial^2_{\beta \beta^T}\widehat G ^*_{\widetilde W ^*(\beta)}-\partial^2_{\beta \beta^T}G_{W(\beta)})t^\eta_{W(\beta)}\|_{\infty,\mathcal{W}}=O_P(\widetilde d_G h^{-2}\tau_n^{-3})=o_P(1)$\,,
    \end{enumerate}
    and for $|l|=1,\ldots,\lceil (d+1)/2\rceil+1$ we have 
    \begin{enumerate}[label=(\roman*)]
\setcounter{enumi}{9}
        \item $\sup_{\beta\in B}\|(\partial^l \widehat G ^*_{\widetilde W^*(\beta)}-\partial^l G_{W(\beta)})t^\eta_{W(\beta)}\|_{\infty,\mathcal{W}}=O_P(\widetilde d_G h^{-|l|}\tau_n^{-|l|-1})=o_P(1)$\,,
        \item $\|(\partial^l \widehat G^*_{\widetilde W ^*(\widehat \beta)}-\partial^l G_{W(\beta_0)})t^\eta_{W(\beta_0)}\|_{\infty,\mathcal{W}}=O_P(\widetilde d_G h^{-|l|}\tau_n^{-|l|-1})=o_P(1)$\,.
    \end{enumerate}
    If moreover $\widehat \beta^*-\beta_0=O_P(n^{-1/2})$ then for $|l|=1,\ldots,\lceil(d+1)/2\rceil+1$ we have

    \begin{enumerate}[label=(\roman*)]
    \setcounter{enumi}{11}
    \item $\|\partial^l \widehat f_{\widetilde W ^*(\widehat \beta^*)}-\partial^l f_{W(\beta_0)}\|_{\infty,\mathcal{W}}=O_P(\widetilde d_G h^{-|l|})=o_P(1)$\,,
     \item $\sup_{s\in\mathcal{S}} \|(\partial^l \widehat \iota^*_s-\partial^l \iota_s)\,t^{\eta}_{W(\beta_0)}\|_{\infty,\mathcal{W}}=O_P(\widetilde d_G h^{-|l|} \tau_n^{-|l|})=o_P(1)$\,,
     \item $\|(\partial^l \widehat G^* _{\widetilde W^* (\widehat \beta^*)}-\partial^l G_{W(\beta_0)})t^\eta_{W(\beta_0)}\|_{\infty,\mathcal{W}}=O_P(\widetilde d_G h^{-|l|} \tau_n^{-|l|-1})=o_P(1)$\,.
\end{enumerate}

\end{lem}

\begin{proof}
$(i)$ Let us define 
\begin{equation*}
    K_h^{(l)}(w):=\partial^{l}K\left(\frac{w}{h}\right)\, .
\end{equation*}
As already noticed in the proof of Lemma \ref{lem: trimming implications}(iii), $\sup_{\beta\in B}\max_{i=1,\ldots,n}\|\widetilde W _i (\beta) - W_i(\beta)\|\widehat t_i=O_P(\|(\widetilde H - H)\widehat t\|_\infty)$. So, for any $|l|=1,\ldots,\lceil (d+1)/2 \rceil$, by a 4th order Taylor expansion and Lemma \ref{lem : Li Racine}(ii) we have 
\begin{align*}
\partial^l \widehat T ^{Y}_{\widetilde W (\beta)}(w)=& h^{-(d+l)}\mathbb P _n Y \widehat t K_h^{(l)}(w-W(\beta)) \\
&+ O_P\left(\frac{\|(\widetilde H - H)\widehat t\|_\infty^5}{h^{d+l+5}}+\sum_{j=1}^4 \frac{\|(\widetilde H - H)\widehat t\|_\infty^j}{h^{j+l}}\right)
\end{align*}
uniformly in $(w,\beta)\in\mathcal W\times B $. By Lemma \ref{lem: trimming and convergence rates for 1st step}(iii) $\|(\widetilde H - H)\widehat t\|_\infty=O_P(\widetilde d _H)$. Thus, by Equation (\ref{eq: definition of dtilde G}) the reminder term is $O_P(\widetilde{d}_G/h^{|l|})$. Let us now consider the first leading term on the RHS. By Lemma \ref{lem: trimming and convergence rates for 1st step}(i) the trimming $\widehat t$ in the first leading term can be replaced with 1 at the cost of an $O_P(p_n h^{-(d+l)})$ reminder. Then, by Lemma \ref{lem : Li Racine}(i)
\begin{equation*}
    h^{-(d+l)}\mathbb P _n Y  K_h^{(l)}(w-W(\beta))=h^{-(d+l)} P  Y  K_h^{(l)}(w-W(\beta))+O_P\left(\sqrt{\frac{\log n}{n h^{d+2l}}}\right)\,
\end{equation*}
uniformly in $(w,\beta)\in\mathcal W\times B$. For the leading term we have (see the comments ahead)
\begin{align*}
    h^{-(d+l)} P  Y  K_h^l&(w-W(\beta))=  h^{-d} \partial^l P  Y  K_h(w-W(\beta))\\
    = &  h^{-d}\partial^l \int_{ } (G_{W(\beta)}f_{W(\beta)})(\overline w) K_h(w-\overline w)d \overline w
    = \partial^l \int_{ } (G_{W(\beta)}f_{W(\beta)})(w+u h) K(u) d u\\
    &= \int_{ } (G_{W(\beta)}f_{W(\beta)})^{(l)}(w+u h) K(u) d u
    = \partial^l T^{Y}_{W(\beta)}(w)+O(h^{r-|l|})\, 
\end{align*}
uniformly in $(w,\beta)\in\mathcal W\times B$, where $T^Y_{W(\beta_0)}$ is defined in (\ref{eq: definition of T Y W(beta)}). To obtain the first equality we have exchanged integral and differentiation, thanks to the Lebesgue Dominated Convergence Theorem. For the second equality we have applied the Law of Iterated Expectations. For the third equality we have made the change of variable  
typical of kernel bias manipulations. The fourth equality is obtained by exchanging again integral and differentiation. Finally, for the last equality we have used a Taylor expansion of order $r-|l|$. Gathering  results,
\begin{equation}\label{eq: uniform convergence of partial That Wtilde beta0}
   \sup_{\beta \in B} \|\partial^l \widehat T ^{Y}_{\widetilde W (\beta)}-\partial^l T ^{Y}_{W (\beta)} \|_{\infty,\mathcal W}=O_P(\widetilde d_G h^{-|l|} )\, .
\end{equation}
Similar arguments lead to
\begin{align}\label{eq: uniform convergence of partial fhat Wtilde beta0}
   \sup_{\beta \in B} \|\partial^l \widehat f_{\widetilde W (\beta)}-\partial^l f_{W (\beta)} \|_{\infty,\mathcal W}=&O_P(\widetilde d_G h^{-|l|})\,.
\end{align}
Let us now recall the set 
\begin{equation}
\mathcal{W}_{n,\beta}^{\eta/2}:=\left\{w\,:\, f_{W(\beta)}(w)\geq \eta \tau_n/2 \right\}\,.
\end{equation}
To prove $(i)$ it is sufficient to show that (a) wpa1 $\widehat G _{\widetilde W (\beta)}$ is $\lceil (d+1)/2\rceil+1$ times continuously differentiable over  $\mathcal{W}_{n,\beta}^{\eta/2}$ for all $\beta\in B$ and (b) for $|l|=1,\ldots,\lceil (d+1)/2 \rceil+1$ we have $\sup_{w\in\mathcal{W},\beta\in B}|\partial^l \widehat G _{\widetilde W (\beta)}(w)-\partial^l G_{W(\beta)}(w)|\,\mathbb I \{f_{W(\beta)}(w)\geq \eta \tau_n\}\leq C \widetilde d_G h^{-|l|}\tau_n^{-|l|}$ for a certain constant $C$, over an event having probability tending to one.
To show (a) we use Lemma \ref{lem: trimming implications}(iii) to get that
\begin{equation}\label{eq: inclusion of cal W n}
    \text{ wpa1 }\mathcal{W}_{n,\beta}^{\eta/2} \subset \left\{w\,:\, \widehat{f}_{\widetilde W (\beta)}(w)\geq \eta \tau_n/4 \right\}\text{ for all }\beta\in B\, .
\end{equation}
Given the smoothness of the kernel $K$ from Assumption \ref{Assumption : kernels}(i), $\widehat{G}_{\widetilde W (\beta)}$ is $\lceil (d+1)/2 \rceil+1$ times continuously differentiable over the set $\{w\,:\, \widehat{f}_{\widetilde W (\beta)}(w)\geq \eta \tau_n/2 \}$. So, (a) is proved. To prove (b), let us define the event
\begin{align}\label{eq: definition of cal D}
    \mathcal{D}^{C}_n :=& \{ \sup_{\beta \in B}\|(\widehat G _{\widetilde W (\beta)}-G_{W(\beta)})t_{W(\beta)}^\eta\|_{\infty,\mathcal{W}}\leq C \widetilde d_G\tau_n^{-1}\}\nonumber \\
    \cap & \Big\{\sup_{\beta \in B}\|\partial^l \widehat T ^{Y}_{\widetilde W (\beta)}-\partial^l T^{Y}_{W(\beta)}\|_{\infty,\mathcal W }\leq C \widetilde d_G h^{-|l|}  \text{ for }|l|\leq \lceil (d+1)/2 \rceil+1 \Big\}\nonumber \\
    \cap & \Big\{\sup_{\beta \in B}\| \partial^l \widehat f_{\widetilde W (\beta)}-\partial^l f_{W(\beta)}\|_{\infty,\mathcal W }\leq C \widetilde d_G h^{-|l|}\text{ for }|l|\leq \lceil (d+1)/2 \rceil +1\Big\}\nonumber \\
     \cap &  \Big\{\mathcal{W}_{n,\beta}^{\eta/2} \subset \{w:\widehat f _{\widetilde W (\beta)}(w)\geq \eta \tau_n/4\}\text{ for all }\beta\in B\Big\}
\end{align}
By (\ref{eq: uniform convergence of Ghat uniform in beta and w}),  (\ref{eq: uniform convergence of partial That Wtilde beta0}), (\ref{eq: uniform convergence of partial fhat Wtilde beta0}), and (\ref{eq: inclusion of cal W n}) by choosing $C$ large enough $\Pr(\mathcal{D}^C_n)$ can be made arbitrarily close to 1 for each large $n$. So, to prove (b) it is sufficient to show that $\mathcal{D}^C_n$ implies $\sup_{w\in\mathcal{W},\beta\in B}|\partial^l \widehat G _{\widetilde W (\beta)}(w)-\partial^l G_{W(\beta)}(w)|\,\mathbb I \{f_{W(\beta)}(w)\geq \eta \tau_n\}\leq C \widetilde d_G h^{-|l|} \tau_n^{-|l|}$ for all $|l|=1,\ldots,\lceil(d+1)/2\rceil+1$  for a certain constant $C$. To this end, let us assume that the event $\mathcal{D}^C_n$ holds. Then, by the Leibniz rule for derivation we get 
\begin{align}\label{eq: Leibniz rule with beta0}
    \partial^l \widehat G _{\widetilde W (\beta)}=& \frac{\partial^l \widehat T ^Y_{\widetilde W (\beta)}}{\widehat f _{\widetilde W (\beta)}} - \sum_{0\leq |j|\leq |l|-1}\binom{l_1}{j_1}\ldots \binom{l_d}{j_d}\frac{\partial^{l-j}\widehat f _{\widetilde W (\beta)}}{\widehat{f} _{\widetilde W (\beta)}}\partial^j\widehat G _{\widetilde W (\beta)} \nonumber \\
     \partial^l  G _{ W (\beta)}=& \frac{\partial^l  T ^Y_{ W (\beta)}}{ f _{ W (\beta)}} - \sum_{0\leq |j|\leq |l|-1}\binom{l_1}{j_1}\ldots \binom{l_d}{j_d}\frac{\partial^{l-j} f _{ W (\beta)}}{f _{ W (\beta)}}\partial^j G _{ W (\beta)} \,.
\end{align}
To show that $\sup_{w\in\mathcal{W},\beta\in B}|\partial^l \widehat G _{\widetilde W (\beta)}(w)-\partial^l G_{W(\beta)}(w)|\,\mathbb I \{f_{W(\beta)}(w)\geq \eta \tau_n\}\leq C \widetilde d_G h^{-|l|} \tau_n^{-|l|}$ for a certain constant $C$ we can now use (\ref{eq: definition of cal D}), (\ref{eq: Leibniz rule with beta0}), and $\widetilde{d}_G/(h^{|l|}\tau_n^{|l|})=o(1)$ for all $|l|=1,\ldots,\lceil(d+1)/2\rceil+1$ (see (\ref{eq: negligibility of convergence rate for the derivatives})), and proceed as in page 4  of the Supplementary Material of \citet{lapenta_encompassing_2022}. Finally, using (\ref{eq: negligibility of convergence rate for the derivatives}) concludes the proof of $(i)$. 

$(ii)$ For any $|l|=0,1,\ldots,\lceil (d+1)/2\rceil$ we have (see the comment ahead) 
\begin{align}\label{eq: partial l partial beta That Wtilde beta}
\partial^l \partial_\beta \widehat T ^{Y}_{\widetilde W (\beta)}(w)=& -h^{-(d+l+1)}\mathbb P _n Y \widehat t K_h^{(l+1)}(w-\widetilde W (\beta))\,\partial_\beta \widetilde W (\beta)\nonumber \\
=& -h^{-(d+l+1)}\mathbb P _n Y \widehat t K_h^{(l+1)}(w-W(\beta))\partial_\beta \widetilde W(\beta)\nonumber  \\
&+ O_P\left(\frac{\|(\widetilde H - H)\widehat t\|_\infty^5}{h^{d+l+6}}+\sum_{j=1}^4 \frac{\|(\widetilde H - H)\widehat t\|_\infty^j}{h^{j+l+1}}\right)\nonumber \\
=& -h^{-(d+l+1)}\mathbb P _n Y \widehat t K_h^{(l+1)}(w-W(\beta))\partial_\beta W(\beta) \nonumber \\
&+ O_P\left(\frac{\|(\widetilde H - H)\widehat t\|_\infty^5}{h^{d+l+6}}+\sum_{j=1}^4 \frac{\|(\widetilde H - H)\widehat t\|_\infty^j}{h^{j+l+1}}\right)\nonumber\\
=& -h^{-(d+l+1)}\mathbb P _n Y \widehat t K_h^{(l+1)}(w-W(\beta))\partial_\beta W(\beta) \nonumber \\
&+ O_P\left(\frac{\widetilde{d}_H^5}{h^{d+l+6}}+\sum_{j=1}^4 \frac{\widetilde{d}_H^j}{h^{j+l+1}}\right)\, 
\end{align}
uniformly in $(w,\beta)\in\mathcal W\times B$. The first equality is immediate. The second equality is obtained by a 4th order Taylor expansion of $K_h(w-\widetilde W _i (\beta))$ around $W_i(\beta)$, Lemma \ref{lem : Li Racine}(ii), $\sup_{\beta\in B}\max_{i=1,\ldots,n}| \widetilde W _i (\beta)- W_i(\beta)|\widehat t_i\leq C \|(\widetilde H - H )\widehat t\|_\infty$ (see Assumption \ref{Assumption: smoothness}(v)), and $\sup_{\beta\in B}\max_{i=1,\ldots,n}| \partial_\beta \widetilde W _i (\beta)-\partial_\beta W_i(\beta)|\widehat t_i\leq C \|(\widetilde H - H )\widehat t\|_\infty$ (see Assumption \ref{Assumption: smoothness}(v)). The third equality follows from $\sup_{\beta\in B}\max_{i=1,\ldots,n}| \partial_\beta \widetilde W _i (\beta)-\partial_\beta W_i(\beta)|\widehat t_i\leq C \|(\widetilde H - H )\widehat t\|_\infty$ and Lemma \ref{lem : Li Racine}(ii). Finally, the last equality is obtained from  $\|(\widetilde H - H)\widehat t\|_\infty=O_P(\widetilde d _H)$, see Lemma \ref{lem: trimming and convergence rates for 1st step}(iii). To handle the leading term on the RHS, we use arguments analogous to those used for (\ref{eq: uniform convergence of partial That Wtilde beta0}). This gives 
\begin{equation}\label{eq: uniform convergence of partial partial beta That Y Wtilde beta0}
   \sup_{\beta\in B} \|\partial^l \partial_\beta \widehat T ^Y _{\widetilde W (\beta)}-\partial^l \partial_\beta T^Y_{W(\beta)} \|_{\infty,\mathcal W }=O_P(\widetilde d_G h^{-|l|-1})\, \text{ for }|l|=0,1,\ldots,\lceil (d+1)/2\rceil\, .
\end{equation}
Similarly, 
\begin{equation}\label{eq: uniform convergence of partial partial beta fhat Y Wtilde beta0}
   \sup_{\beta \in B} \|\partial^l \partial_\beta \widehat f  _{\widetilde W (\beta)}-\partial^l \partial_\beta f_{W(\beta)} \|_{\infty,\mathcal W }=O_P(\widetilde d_G h^{-|l|-1})\, \text{ for }|l|=0,1,\ldots,\lceil (d+1)/2\rceil \, .
\end{equation}
From Lemma \ref{lem: trimming implications}(iii) wpa1 we have $t^\eta_{W(\beta)}(w)\leq \mathbb{I}\{\widehat f_{\widetilde W (\beta)}(w)\geq \eta \tau_n /2\}$ for all $(w,\beta)\in \mathcal W\times B$. So, wpa1
\begin{align*}
    |\partial_\beta \widehat G _{\widetilde W (\beta)}(w) -& \partial_\beta G_{W(\beta)}(w)|t^\eta_{W(\beta)}(w)= \Big| \frac{\partial_\beta \widehat T ^Y_{\widetilde W (\beta)}(w)}{\widehat f_{\widetilde W (\beta)}(w)}-\widehat G _{\widetilde W (\beta)}(w)\,\frac{\partial_\beta \widehat f _{\widetilde W (\beta)}(w)}{\widehat f _{\widetilde W (\beta)}(w)}\\
    &+ \frac{\partial_\beta  T ^Y_{ W (\beta)}(w)}{ f_{ W (\beta)}(w)}- G _{ W (\beta)}(w)\,\frac{\partial_\beta  f _{ W (\beta)}(w)}{ f _{ W (\beta)}(w)}\Big|t^\eta_{W(\beta)}(w)\,\mathbb I \{\widehat f_{\widetilde W (\beta)}(w)\geq \eta \tau_n/2\}\, 
\end{align*}
uniformly in $(w,\beta)\in\mathcal{W}\times B$. By the above equality, (\ref{eq: uniform convergence of f W tilde beta}), (\ref{eq: uniform convergence of Ghat uniform in beta and w}), (\ref{eq: uniform convergence of partial partial beta That Y Wtilde beta0}), and (\ref{eq: uniform convergence of partial partial beta fhat Y Wtilde beta0}) we obtain that 
\begin{equation}\label{eq: uniform convergence of partial beta Ghat Wtilde beta uniform in w and beta}
    \sup_{\beta\in B}\|(\partial_\beta \widehat G_{\widetilde W (\beta)}-\partial_\beta  G_{W(\beta)})t^\eta_{W(\beta)}\|_{\infty,\mathcal{W}}=O_P(\widetilde d_G \tau_n^{-1}h^{-1})\, .
\end{equation}
Now, the proof proceed similarly as the proof of $(i)$ by defining the set 
\begin{align*}
    \mathcal{E}^C_n :=&\{\sup_{\beta\in B}\|(\widehat G _{\widetilde W (\beta)}-G_{W(\beta)})t_{W(\beta)}^\eta \|_{\infty,\mathcal{W}} \leq C \widetilde d_G \tau_n^{-1}\}\\
    \cap&\left\{\sup_{\beta\in B}\|\widehat{f}_{\widetilde{W}(\beta)}-f_{W(\beta)}\|_{\infty,\mathcal{W}}\leq C \widetilde{d}_G\right\}\\
    \cap&\left\{\sup_{\beta\in B}\|(\partial_\beta \widehat G _{\widetilde W (\beta)}-\partial_\beta G_{W(\beta)})t^\eta_{W(\beta)}\|_{\infty,\mathcal{W}}\leq C \widetilde d_G h^{-1} \tau_n^{-1}\right\}\\
    \cap&\left\{\sup_{\beta\in B}\|(\partial^l \widehat G _{\widetilde W (\beta)}-\partial^l G_{W(\beta)})t^\eta_{W(\beta)}\|_{\infty,\mathcal W }\leq C \widetilde d_G h^{-|l|}\tau_n^{-|l|}\text{ for }1\leq |l|\leq \lceil (d+1)/2 \rceil\right\}\\
    \cap & \left\{\sup_{\beta\in B}\|\partial^l  \widehat f _{\widetilde W (\beta)}-\partial^l  f_{W(\beta)}\|_{\infty,\mathcal W }\leq C \widetilde d_G h^{-|l|}\text{ for }1\leq |l|\leq \lceil (d+1)/2 \rceil \right\}\\
    \cap & \left\{\sup_{\beta\in B}\|\partial^l \partial_\beta \widehat T ^Y _{\widetilde W (\beta)}-\partial^l \partial_\beta T ^Y_{W(\beta)}\|_{\infty,\mathcal W }\leq C \widetilde d_G h^{-|l|-1}\text{ for }|l|\leq \lceil (d+1)/2 \rceil \right\}\\
    \cap & \left\{\sup_{\beta \in B}\|\partial^l \partial_\beta \widehat f _{\widetilde W (\beta)}-\partial^l \partial_\beta f_{W(\beta)}\|_{\infty,\mathcal W }\leq C \widetilde d_G h^{-|l|-1}\text{ for }|l|\leq \lceil (d+1)/2 \rceil \right\}\\
    \cap& \left\{\mathcal{W}^{\eta/2}_{n,\beta} \subset \{w\,:\, \widehat f_{\widetilde W (\beta)}(w)\geq \eta \tau_n/4\}\text{ for all }\beta\in B\right\}
\end{align*}
and using the following equations (obtained by the Leibniz rule of derivation)
\begin{align*}
    \partial_\beta \partial^l \widehat G _{\widetilde W (\beta)}=& \frac{\partial_\beta \partial^l \widehat T^Y_{\widetilde W (\beta)}}{\widehat f_{\widetilde W (\beta)}} - \frac{\partial^l \widehat G _{\widetilde W (\beta)}\partial_\beta \widehat f _{\widetilde W (\beta)}}{\widehat f _{\widetilde W (\beta)}}\\
    & - \sum_{0\leq |j|\leq |l|-1}\binom{l_1}{j_1}\ldots\binom{l_d}{j_d}\frac{\partial_\beta \partial^{l-j}\widehat f _{\widetilde W (\beta)}\partial^j \widehat G _{\widetilde W (\beta)} + \partial^{l-j}\widehat f _{\widetilde W (\beta)}\partial_\beta \partial^j \widehat G _{\widetilde W (\beta)}}{\widehat f _{\widetilde W (\beta)}}\\
    \partial_\beta \partial^l  G _{ W (\beta)}=& \frac{\partial_\beta \partial^l  T^Y_{ W (\beta)}}{ f_{ W (\beta)}} - \frac{\partial^l  G _{ W (\beta)}\partial_\beta  f _{ W (\beta)}}{ f _{ W (\beta)}}\\
    & - \sum_{0\leq |j|\leq |l|-1}\binom{l_1}{j_1}\ldots\binom{l_d}{j_d}\frac{\partial_\beta \partial^{l-j} f _{ W (\beta)}\partial^j  G _{ W (\beta)} + \partial^{l-j} f _{ W (\beta)}\partial_\beta \partial^j  G _{ W (\beta)}}{ f _{ W (\beta)}}\, .
\end{align*}

$(iii)$ By proceeding similarly as for (\ref{eq: uniform convergence of partial partial beta That Y Wtilde beta0}) and (\ref{eq: uniform convergence of partial partial beta fhat Y Wtilde beta0}) we obtain
\begin{align*}
    \sup_{\beta\in B}\|\partial^2_{\beta \beta^T}\widehat T ^Y _{\widetilde W (\beta)}-\partial^2_{\beta \beta^T}T^Y_{W(\beta)}\|_{\infty,\mathcal W}=&O_P(\widetilde d_G h^{-2})\\     \sup_{\beta\in B}\|\partial^2_{\beta \beta^T}\widehat f _{\widetilde W (\beta)}-\partial^2_{\beta \beta^T}f_{W(\beta)}\|_{\infty,\mathcal W}=&O_P(\widetilde d_G h^{-2})\, .
\end{align*}
Using the above display and proceeding as for (\ref{eq: uniform convergence of partial beta Ghat Wtilde beta uniform in w and beta}) gives the desired result.

$(iv)$ The proof follows from arguments analogous to those used for Equation (\ref{eq: uniform convergence of partial That Wtilde beta0}).

$(v)$ This proof is similar to the proof of $(i)$, with some minor modifications. For completeness, we highlight such modifications here. By reasoning similarly as in (\ref{eq: uniform convergence of partial That Wtilde beta0}) and (\ref{eq: uniform convergence of partial fhat Wtilde beta0}) we obtain 
\begin{equation}\label{eq: uniform convergence of partial That Wtilde betahat and fhat Wtilde betahat}
    \|\partial^l \widehat T ^Y _{\widetilde W (\widehat \beta)}-\partial^l T^Y_{W(\beta_0)}\|_{\infty,\mathcal W}=O_P(\widetilde d_G h^{-|l|})\quad \text{ and }\quad \|\partial^l \widehat f _{\widetilde W (\widehat \beta)}-\partial^l f_{W(\beta_0)}\|_{\infty,\mathcal W}=O_P(\widetilde d_G h^{-|l|})\, 
\end{equation}
for $|l|=1,\dots,\lceil (d+1)/2 \rceil$.
Let us recall that
\begin{equation*}
    \mathcal{W}^{\eta/2}_{n,\beta_0}:=\{w\,:\,f_{W(\beta_0)}(w)\geq \eta\tau_n/2\}\, .
\end{equation*}
By Lemma \ref{lem: trimming implications}(v) we have that
\begin{equation}\label{eq: inclusion of Wtilde into fhat Wtilde betahat}
    \text{ wpa1 }\mathcal{W}^{\eta/2}_{n,\beta_0} \subset \{w:\widehat f_{\widetilde W (\widehat \beta)}(w)\geq \eta \tau_n /4\}\, .
\end{equation}
By the above inclusion and the smoothness of the kernel $K$ (see Assumption \ref{Assumption : kernels}(i)), $\widehat G _{\widetilde W (\widehat \beta)}$ is differentiable over $\mathcal{W}^{\eta/2}_{n,\beta_0}$ wpa1. Next, let us define the event 
\begin{align*}
    \mathcal{F}^{C}_n :=& \{ \|(\widehat G _{\widetilde W (\widehat \beta)}-G_{W(\beta_0)})t_{W(\beta_0)}^\eta\|_{\infty,\mathcal{W}}\leq C \widetilde d_G \tau_n^{-1}\}\\
    \cap & \Big\{\|\partial^l \widehat T ^{Y}_{\widetilde W (\widehat \beta)}-\partial^l T^{Y}_{W(\beta_0)}\|_{\infty,\mathcal W }\leq C \widetilde d_G h^{-|l|}  \text{ for }|l|\leq \lceil (d+1)/2 \rceil \Big\}\\
    \cap & \Big\{\| \partial^l \widehat f_{\widetilde W (\widehat \beta)}-\partial^l f_{W(\beta_0)}\|_{\infty,\mathcal W }\leq C \widetilde d_G h^{-|l|}\text{ for }|l|\leq \lceil (d+1)/2 \rceil \Big\}\\
     \cap &  \Big\{\mathcal{W}^{\eta/2}_{n,\beta_0} \subset \{w:\widehat f _{\widetilde W (\widehat \beta)}(w)\geq \eta \tau_n/4\}\Big\}\,.
\end{align*}
By (\ref{eq: uniform convergence of partial That Wtilde betahat and fhat Wtilde betahat}), (\ref{eq: inclusion of Wtilde into fhat Wtilde betahat}), and Lemma \ref{lem: convergence rates for Ghat fhat and iotahat}(xi), by choosing $C$ large enough we can make $\Pr(\mathcal{F}^C_n)$ arbitrarily close to 1 for each large $n$. So, to prove $(iii)$ it is sufficient to show that the event $\mathcal{F}_n^C$ implies $\sup_w |\partial^l \widehat G _{\widetilde W (\widehat \beta)}(w)-\partial ^l G_{W(\beta_0)}(w)|\mathbb I \{f_{W(\beta_0)}(w)\geq  \eta \tau_n \}\leq C \widetilde d_G h^{-|l|}\tau_n^{-|l|}$ for $|l|=1,\ldots,\lceil(d+1)/2\rceil$, for a certain constant $C$. This can be obtained by following the same line of reasoning as in the proof of $(i)$. 

$(vi)$ Let us define
\begin{equation*}
    T^{\varphi_s}_{W(\beta_0)}(w):=\mathbb E \{\varphi_s(X,H(Z)) | W(\beta_0)=w\}\, f_{W(\beta_0)}(w)\, .
\end{equation*}
By proceeding similarly as for (\ref{eq: uniform convergence of partial That Wtilde beta0}) we get 
\begin{equation*}
    \sup_s \|(\partial^l \widehat T^{\varphi_s}_{\widetilde W (\widehat \beta)} - \partial^l T^{\varphi_s}_{ W ( \beta_0)})t^\eta_{W(\beta_0)}\|_{\infty,\mathcal{W}}=O_P(\widetilde d_G h^{-|l|})\, 
\end{equation*}
for $l=1,\ldots,\lceil (d+1)/2\rceil$. So, the proof proceeds similarly as the proof of $(v)$.

$(vii)$ First, let us notice that 
\begin{align*}
    \partial^l \partial_\beta \widehat T ^{Y^*}_{\widetilde W ^* (\beta)}(w)=& -h^{-(d+l+1)}\mathbb P _n [G_{W(\beta_0)}(W(\beta_0))+\xi (Y-G_{W(\beta_0)}(W(\beta_0))]\\
    &\qquad\hspace{7cm} \cdot\widehat t\, K_h^{(l+1)}(w-\widetilde W ^*(\beta))\,\partial_\beta \widetilde W^* (\beta)\\
    &- h^{-(d+l+1)}\mathbb P _n  (1-\xi)[\widehat G _{\widehat W (\widehat \beta)}(\widehat W (\widehat \beta))-G_{W(\beta_0)}(W(\beta_0))]\,\\
    &\hspace{8cm}\cdot\widehat t \, K_h^{(l+1)}(w-\widetilde W ^*(\beta))\,\partial_\beta \widetilde W^* (\beta)\, 
\end{align*}
for $|l|=0,1,\ldots,\lceil (d+1)/2 \rceil$. Now, the first term on the RHS can be handled by arguments analogous to those used for (\ref{eq: uniform convergence of partial That Wtilde beta0}) and  (\ref{eq: partial l partial beta That Wtilde beta}). Also, by proceeding as in (\ref{eq: partial l partial beta That Wtilde beta}) and then using Lemma \ref{lem: convergence rates for Ghat fhat and iotahat}(i) and Lemma \ref{lem : Li Racine}(ii), we obtain that the second term on the RHS is $O_P(\widetilde d_G h^{-|l|-1}\tau^{-1})$. So, 
\begin{equation*}
    \sup_{\beta\in B}\|\partial^l \partial_\beta \widehat T ^{Y^*}_{\widetilde W ^* (\beta)}-\partial^l \partial_\beta T^Y_{W(\beta)}\|_{\infty,\mathcal{W}}=O_P(\widetilde d_G h^{-|l|-1}\tau^{-1})\, .
\end{equation*}
Similarly, we get 
\begin{equation*}
    \sup_{\beta\in B}\|\partial^l \partial_\beta \widehat f_{\widetilde W ^* (\beta)}-\partial^l \partial_\beta f_{W(\beta)}\|_{\infty,\mathcal{W}}=O_P(\widetilde d_G h^{-|l|-1})\, .
\end{equation*}
Finally, proceeding as in the proof of $(ii)$ leads to the desired result.

$(viii)$ The proof of $(viii)$ follows from arguments similar as the proof of $(vii)$.

$(ix)$ Similarly as in the proof of $(vii)$ we get 
\begin{align*}
    \sup_{\beta\in B}\|\partial^2_{\beta \beta^T}\widehat T ^{Y^*}_{\widetilde W ^* (\beta)}-\partial^2_{\beta \beta^T}T^{Y}_{W(\beta)}\|_{\infty,\mathcal W}=&O_P(\widetilde d_G h^{-2} \tau^{-1})\,,\\    \sup_{\beta\in B}\|\partial^2_{\beta \beta^T}\widehat f_{\widetilde W ^* (\beta)}-\partial^2_{\beta \beta^T}f_{W(\beta)}\|_{\infty,\mathcal W}&=O_P(\widetilde d_G h^{-2})\,.
\end{align*}
So, we can proceed as in the proof of (\ref{eq: uniform convergence of partial beta Ghat Wtilde beta uniform in w and beta}) to get the desired result.

$(x)-(xiv)$ The results in $(x)-(xiv)$ are  obtained from arguments analogous to those used in the proofs of the previous parts of this lemma.

\end{proof}
Before introducing the following lemma, we show that 
\begin{equation}\label{eq: n 1/4 convergence rate of dtildeG/taun 2 h}
    \frac{\widetilde{d}_G}{h \tau_n^2}=o(n^{-1/4})\,.
\end{equation}
To this end, let us recall that $\widetilde{d}_G=d_G+p_n/h^d+\sum_{j=1}^4(\widetilde{d}_H/h)^j+\widetilde{d}_H^5/h^{d+5}$, see Equation (\ref{eq: definition of dtilde G}). From Assumption \ref{Assumption: bandwidth}(i) $d_G/(h\tau_n^2)=o(n^{-1/4})$ and from Assumption \ref{Assumption: trimming}(iii) $p_n/(h^{d+1}\tau_n^2)=o(n^{-1/4})$. Since $d_H/(h^2\tau_n^5)=o(n^{-1/4})$ (see Assumption \ref{Assumption: bandwidth}(ii)) and $p_n/(h_H^ph^2\tau_n^4)=o(n^{-1/4})$ (see Assumption \ref{Assumption: trimming}(iv)), we also have that $\widetilde{d}_H/(h^2\tau_n^2)=d_H/(h^2\tau_n^5)+p_n/(h_H^ph^2\tau_n^4)=o(n^{-1/4})$. Thus, $\sum_{j=1}^4(\widetilde{d}_H/h)^j/(h\tau_n^2)=o(n^{-1/4})$. Finally, $\widetilde{d}^5_H/(h^{d+6}\tau_n^2)$$=[\widetilde{d}_H/(h^2\tau_n^2)]^4 [\widetilde{d}_H/(h^2\tau_n^2)]\tau_n^{8}/(h^{d-4})$ $=o(n^{-1}) o(n^{-1/4})\tau_n^8/h^{d-4}$ $=o(n^{-1/4})\tau_n^8h^4/(n h^d)=o(n^{-1/4})$. Gathering results gives  (\ref{eq: n 1/4 convergence rate of dtildeG/taun 2 h}).

\begin{lem}\label{lem: stochastic expansions}
Let Assumptions \ref{Assumption: iid and phi}-\ref{Assumption: trimming} hold, $\mathcal{H}_0$ hold, and assume that $\widehat \beta - \beta_0=O_P(n^{-1/2})$. Then, 
\begin{enumerate}[label=(\roman*)]
    \item $\max_{i=1,\ldots,n}|\widehat G _{\widetilde W (\widehat \beta)}(\widetilde W_i (\widehat \beta))- \widehat G _{\widetilde W (\widehat \beta)}(W_i (\beta_0))-\partial^T G_{W(\beta_0)}(W_i(\beta_0))\,[\widetilde W_i (\widehat \beta)-W_i(\beta_0)]|(t_i+\widehat t_i)=o_P(n^{-1/2})$  and the same result holds by replacing $\widetilde W$ with $\widehat W$\,,
    \item $\max_{i=1,\ldots,n}|\widehat f _{\widetilde W (\widehat \beta)}(\widetilde W_i (\widehat \beta))- \widehat f _{\widetilde W (\widehat \beta)}(W_i (\beta_0))-\partial^T f_{W(\beta_0)}(W_i(\beta_0))\,[\widetilde W_i (\widehat \beta)-W_i(\beta_0)]|(t_i+\widehat t_i)=o_P(n^{-1/2})$\,,
    \item $\sup_{s\in\mathcal{S}}\max_{i=1,\ldots,n}|\widehat \iota _s(\widetilde W_i (\widehat \beta))- \widehat \iota_s(W_i (\beta_0))-\partial^T \iota_s(W_i(\beta_0))\,[\widetilde W_i (\widehat \beta)-W_i(\beta_0)]|(t_i+\widehat t_i)=o_P(n^{-1/2})$\,,
    \item $\sup_{s\in\mathcal{S}} \max_{i=1,\ldots,n}|\varphi_s(X_i,\widetilde H (Z_i)) - \varphi_s(X_i,H(Z_i))-\partial_H^T \varphi_s (X_i,H(Z_i))\, [\widetilde H(Z_i) - H(Z_i)]|(\widehat t _i + t_i)=o_P(n^{-1/2})$\,,\footnote{For simplicity, we are using the notation $\partial_H \varphi_s(x,H(z)):=\partial_u \varphi_s(x,u)|_{u=H(z)}$.}
    \item $\max_{i=1,\ldots,n}|\widehat G^* _{\widetilde W^* (\widehat \beta)}(\widetilde W_i^* (\widehat \beta))- \widehat G^*_{\widetilde W^* (\widehat \beta)}(W_i (\beta_0))-\partial^T G_{W(\beta_0)}(W_i(\beta_0))\,[\widetilde W_i^* (\widehat \beta)-W_i(\beta_0)]|(t_i+\widehat t_i)=o_P(n^{-1/2})$\,.
\end{enumerate}
If in addition $\widehat \beta ^* - \beta_0 =O_P(n^{-1/2})$, then 
\begin{enumerate}[label=(\roman*)]
\setcounter{enumi}{5}
    \item $\max_{i=1,\ldots,n}|\widehat G^* _{\widetilde W^* (\widehat \beta^*)}(\widetilde W_i^* (\widehat \beta^*))- \widehat G^*_{\widetilde W^* (\widehat \beta^*)}(W_i (\beta_0))-\partial^T G_{W(\beta_0)}(W_i(\beta_0))\,[\widetilde W_i^* (\widehat \beta^*)-W_i(\beta_0)]|(t_i+\widehat t_i)=o_P(n^{-1/2})$\,, 
    \item $\max_{i=1,\ldots,n}|\widehat f _{\widetilde W^* (\widehat \beta^*)}(\widetilde W_i^* (\widehat \beta^*))- \widehat f _{\widetilde W^* (\widehat \beta^*)}(W_i (\beta_0))-\partial^T f_{W(\beta_0)}(W_i(\beta_0))\,[\widetilde W_i^* (\widehat \beta^*)-W_i(\beta_0)]|(t_i+\widehat t_i)=o_P(n^{-1/2})$\,,
    \item $\sup_{s\in\mathcal{S}} \max_{i=1,\ldots,n}|\widehat \iota^*_s(\widetilde W_i^* (\widehat \beta^*))- \widehat \iota^*_s(W_i (\beta_0))-\partial^T \iota_s(W_i(\beta_0))\,[\widetilde W_i^* (\widehat \beta^*)-W_i(\beta_0)]|(t_i+\widehat t_i)=o_P(n^{-1/2})$\,,
    \item $\sup_{s\in\mathcal{S}} \max_{i=1,\ldots,n}|\varphi_s(X_i,\widetilde H ^*(Z_i)) - \varphi_s(X_i,H(Z_i))-\partial_H^T \varphi_s(X_i,H(Z_i))\, [\widetilde H ^* (Z_i) - H(Z_i)]\, |\, (\widehat t_i $ $+ t_i)=o_P(n^{-1/2})$\,.
\end{enumerate}

\end{lem}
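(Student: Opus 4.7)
All nine items follow a single template, and I would prove item (i) in full detail, then indicate the trivial modifications for the others: items (ii), (iii) replace $\widehat G$ by $\widehat f$ or by $\widehat\iota_s$ (with uniformity in $s$ coming from the Lipschitzness of $\varphi_s$ in $s$); items (v)--(viii) are the starred analogues, relying on Lemma \ref{lem: uniform convergence rates for derivatives}(vii)--(xiv); item (iv) and item (ix) are purely deterministic Taylor expansions of $\varphi_s$ in its second argument, using Lipschitzness of $\partial_H\varphi_s$ from Assumption \ref{Assumption: smoothness}(v) together with $\|(\widetilde H - H)\widehat t\|_\infty = o_P(n^{-1/4})$ (resp.\ for $\widetilde H^*$) from Lemma \ref{lem: trimming and convergence rates for 1st step}(iii)(iv). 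The strategy for (i) is to perform a Taylor expansion of $w\mapsto \widehat G_{\widetilde W(\widehat\beta)}(w)$ around $W_i(\beta_0)$ to order $\lambda:=\lceil (d+1)/2\rceil+1$, replace random derivatives by their population counterparts using Lemma \ref{lem: uniform convergence rates for derivatives}, and show that the Taylor remainder is $o_P(n^{-1/2})$ via the bandwidth conditions in Assumption \ref{Assumption: bandwidth}(i).

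First I would verify that the expansion is valid uniformly in $i$. Invoking Lemma \ref{lem: trimming implications}(v)--(vi), together with Assumption \ref{Assumption: trimming}(v) (convexity of $\mathcal W_n^{\eta}$ for large $n$), the trimming factor $(t_i+\widehat t_i)$ confines both $W_i(\beta_0)$ and $\widetilde W_i(\widehat\beta)$, hence the entire segment joining them, to $\mathcal W_n^{\eta/2}$ with probability approaching one. Lemma \ref{lem: uniform convergence rates for derivatives} then ensures wpa1 that $\widehat G_{\widetilde W(\widehat\beta)}$ is $\lambda$-times continuously differentiable on $\mathcal W_n^{\eta/2}$, so a Lagrange-form Taylor expansion applies. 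Decomposing the order-one term as $\partial G_{W(\beta_0)}(W_i(\beta_0))[\widetilde W_i(\widehat\beta)-W_i(\beta_0)]+(\partial \widehat G_{\widetilde W(\widehat\beta)}-\partial G_{W(\beta_0)})(W_i(\beta_0))[\widetilde W_i(\widehat\beta)-W_i(\beta_0)]$, the first summand is the advertised leading term; the second is at most $\|(\partial\widehat G_{\widetilde W(\widehat\beta)}-\partial G_{W(\beta_0)})t^{\eta}_{W(\beta_0)}\|_\infty \cdot \max_i|\widetilde W_i(\widehat\beta)-W_i(\beta_0)|t_i = O_P(\widetilde d_G h^{-1}\tau_n^{-1})\cdot o_P(n^{-1/4})$ by Lemma \ref{lem: uniform convergence rates for derivatives}(v) and Lemma \ref{lem: trimming and convergence rates for 1st step}(iii), which is $o_P(n^{-1/2})$ thanks to Assumption \ref{Assumption: bandwidth}(i).

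The crux, and the step I expect to require the most care, is controlling the higher-order Taylor terms with $2\le|l|\le\lambda$: each has the form $\partial^l\widehat G(\cdot)[\widetilde W-W]^l$, and I would split it into $\partial^l G(\cdot)[\widetilde W-W]^l+(\partial^l\widehat G-\partial^l G)(\cdot)[\widetilde W-W]^l$. The first part is bounded (in $i$) by $\|\partial^l G\|_\infty \cdot \|( \widetilde W(\widehat\beta)-W(\beta_0))t\|_\infty^{|l|}=o_P(n^{-|l|/4})=o_P(n^{-1/2})$ because $|l|\geq 2$ and $\partial^l G\in\mathcal G_{r-|l|}$ by Assumption \ref{Assumption: smoothness}(i); the second part is bounded, via Lemma \ref{lem: uniform convergence rates for derivatives}(v), by $O_P(\widetilde d_G h^{-|l|}\tau_n^{-|l|})\cdot o_P(n^{-|l|/4})$. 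The main obstacle is that the inflation factor $h^{-|l|}\tau_n^{-|l|}$ of these higher-order derivatives grows, and the expansion must be pushed precisely to order $\lambda=\lceil(d+1)/2\rceil+1$ so that the bandwidth condition $d_G h^{-|l|}\tau_n^{-1-2|l|}=o(1)$ of Assumption \ref{Assumption: bandwidth}(i) (which is stated for this specific $|l|$) forces negligibility; for intermediate $|l|$ between $2$ and $\lambda-1$, the same estimate is more than sufficient since the corresponding power of $n^{-1/4}$ is stronger. Collecting these bounds yields item (i); items (ii), (iii), and (v)--(viii) require only that one swap the appropriate uniform-convergence statement from Lemma \ref{lem: uniform convergence rates for derivatives} and, for the $\iota_s$ case, the additional Lipschitzness of $s\mapsto\varphi_s$ to maintain uniformity in $s\in\mathcal S$.
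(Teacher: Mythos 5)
Your setup (the trimming implications, the convexity of the sets $\mathcal W_{n}^{\eta}$, and the differentiability of $\widehat G_{\widetilde W(\widehat\beta)}$ on the region where the trimming is active) and your treatment of the order-one term match the paper. The divergence, and the gap, lies in your decision to expand to order $\lambda=\lceil(d+1)/2\rceil+1$ and to control the terms with $2\le|l|\le\lambda$ through Lemma \ref{lem: uniform convergence rates for derivatives}. Your claim that the expansion ``must be pushed precisely to order $\lambda$'' is not correct: no higher-order expansion is needed. The paper uses a first-order mean-value expansion, $\widehat G_{\widetilde W(\widehat\beta)}(\widetilde W_i(\widehat\beta))-\widehat G_{\widetilde W(\widehat\beta)}(W_i(\beta_0))=\partial^T\widehat G_{\widetilde W(\widehat\beta)}(\overline{W_i(\beta)})\,[\widetilde W_i(\widehat\beta)-W_i(\beta_0)]$ with $\overline{W_i(\beta)}$ on the segment, and bounds the whole discrepancy by $\|\partial\widehat G_{\widetilde W(\widehat\beta)}(\overline{W_i(\beta)})-\partial G_{W(\beta_0)}(W_i(\beta_0))\|\,t_i^\delta\cdot\|\widetilde W_i(\widehat\beta)-W_i(\beta_0)\|\,t_i^\delta$. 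The first factor is $O_P(\widetilde d_G h^{-1}\tau_n^{-1})+O_P(\widetilde d_H)=o_P(n^{-1/4})$ by Lemma \ref{lem: uniform convergence rates for derivatives}(v) plus Lipschitz continuity of $\partial G_{W(\beta_0)}$, the second is $o_P(n^{-1/4})$, and their product is $o_P(n^{-1/2})$ --- exactly the two rates you already invoke for your first-order term, with no higher derivatives of $\widehat G$ entering at all.

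The higher-order terms you introduce are, moreover, not actually shown to be negligible. Your bound for them is $O_P(\widetilde d_G h^{-|l|}\tau_n^{-|l|})\cdot o_P(n^{-|l|/4})$, so for $|l|=\lambda$ (and already for $|l|=2$ when $d=1$, where $\lambda=2$ and $o_P(n^{-|l|/4})=o_P(n^{-1/2})$ with no slack) you need $\widetilde d_G h^{-\lambda}\tau_n^{-\lambda}=O_P(1)$. Assumption \ref{Assumption: bandwidth}(i), which you cite, controls $d_G h^{-|l|}\tau_n^{-1-2|l|}$, i.e.\ only the $d_G$ component of $\widetilde d_G$; the remaining components in (\ref{eq: definition of dtilde G}) --- the $p_n h^{-d}$ term and the $\widetilde d_H^{\,j}h^{-j}$ terms generated by the first-step estimation --- are controlled by Assumptions \ref{Assumption: bandwidth}(ii)--(iii) and \ref{Assumption: trimming}(iii)--(iv) only up to derivative order $\lceil(d+1)/2\rceil$ (which is what Lemma \ref{lem: belonging conditions} requires), not at order $\lambda$ with the additional $h^{-\lambda}\tau_n^{-\lambda}$ inflation; for instance, with $d=1$ the piece $p_n h_H^{-p}\tau_n^{-2}h^{-1}$ of $\widetilde d_G$ multiplied by $h^{-2}\tau_n^{-2}$ exceeds by a factor $h^{-1}$ anything the stated conditions bound. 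In addition, Lemma \ref{lem: uniform convergence rates for derivatives}(v) is stated only for $|l|\le\lceil(d+1)/2\rceil$, so your top-order coefficient would need a separate argument via part (i) at $\beta=\widehat\beta$ together with a comparison of $G_{W(\widehat\beta)}$ and the trimming set at $\widehat\beta$ with those at $\beta_0$. All of these difficulties disappear if you replace the higher-order expansion by the paper's first-order mean-value argument; the rest of your proposal (the treatment of items (ii)--(iii), (v)--(viii) by swapping in the corresponding rates, and the elementary expansions for (iv) and (ix)) is in line with the paper.
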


\begin{proof}
We will provide a detailed proof only for $(i)$, as the proof for the remaining parts follow from analogous arguments. Let us fix $\delta\in(0,1]$. Then, wpa1 (see the comments below)
\begin{align}\label{eq: trimming implication at the data points for stochastci expansion}
t^{\delta}(X_i,Z_i)\leq & t^{\delta}(X_i,Z_i)\,\,\mathbb I \{f_{W(\beta_0)}(W_i(\beta_0))\geq \eta(\delta)\tau_n\}\nonumber \\
\leq&  t^{\delta}(X_i,Z_i)\,\,\mathbb I \{f_{W(\beta_0)}(\widetilde W _i (\widehat \beta)\geq \eta(\delta)\tau_n/2\}    
\end{align}
 for all $i=1,\ldots,n$. The first inequality follows from Assumption \ref{Assumption: trimming}(i), while the second inequality from Lemma \ref{lem: trimming implications}(vi). Lemma \ref{lem: trimming implications}(v) ensures that wpa1
 \begin{align*}
     \{w:f_{W(\beta_0)}(w)\geq \eta(\delta)\tau_n / 2\}\subset \{w:\widehat{f}_{\widetilde W (\widehat \beta)}(w)\geq \eta(\delta)\tau_n/4\}\, .
 \end{align*}
Now, by the smoothness of the kernel $K$ (see Assumption \ref{Assumption : kernels}(i)) $\widehat G _{\widetilde W (\widehat \beta)}$ is differentiable over the set on the RHS, and for $n$ large enough the set on the LHS is convex (see Assumption \ref{Assumption: trimming}(v)). So, from (\ref{eq: trimming implication at the data points for stochastci expansion}) we have that wpa1 $t^\delta (X_i,Z_i)=1$ implies $\widetilde W _i(\widehat \beta) , W_i(\beta_0),\overline {W_i(\beta)}\in\{w:f_{W(\beta_0)}(w)\geq \eta(\delta)\tau_n/2\}$, for any $\overline{W_i(\beta)}$ lying on the segment joining  $\widetilde W_i (\widehat \beta)$ and $W_i(\beta_0)$. Thus, we can apply a Mean-Value expansion of $\widehat G_{\widetilde W (\widehat \beta)}(\widetilde W _i(\widehat \beta))$ around $W_i(\beta_0)$ and obtain
\begin{align*}
    \Big|\widehat G_{\widetilde W (\widehat \beta)}(\widetilde W _i(\widehat \beta)) - \widehat G_{\widetilde W (\widehat \beta)}( W _i(\beta_0))-&\partial^T G_{W(\beta_0)}(W_i(\beta_0))\,[\widetilde W_i (\widehat \beta)-W_i(\beta_0)]\Big|\,t_i^\delta\\
    \leq & \|\partial \widehat G_{\widetilde W (\widehat \beta)}(\overline{W_i(\beta)})-\partial G_{W(\beta_0)}(W_i(\beta_0))\|t_i^\delta \\
    &\cdot \|\widetilde W _i(\widehat \beta)-W_i(\beta_0)\|t^\delta_i\, .
\end{align*}
Next, from the Lipschitz condition on $q$ in Assumption \ref{Assumption: smoothness}(v), $\widehat \beta - \beta_0=O_P(n^{-1/2})$, and Lemma \ref{lem: trimming and convergence rates for 1st step}(iii) we have
\begin{equation}\label{eq: n 1/4 rate for Wtildei betahat - Wi beta0}
\max_{i=1,\ldots,n}\|\widetilde W _i(\widehat \beta)-W_i(\beta_0)\|t^\delta_i=O_P(\|(\widetilde H - H)\, t^\delta\|_\infty)=O_P(\widetilde{d}_H)=o_P(n^{-1/4})    \, .
\end{equation}
Also (see the comments below), wpa1
\begin{align*}
    \|\partial \widehat G_{\widetilde W (\widehat \beta)}(\overline{W_i(\beta)})-\partial G_{W(\beta_0)}&(W_i(\beta_0))\|t_i^\delta\\
    \leq & \|\partial \widehat G_{\widetilde W (\widehat \beta)}(\overline{W_i(\beta)})-\partial G_{W(\beta_0)}(\overline{W_i(\beta)})\|\,\mathbb I \{f_{W(\beta_0)}(\overline{W_i(\beta)})\geq \eta(\delta)\tau_n/2\}\\
    &+ \|\partial G_{W(\beta_0)}(\overline{W_i(\beta)})-\partial G_{W(\beta_0)}(W_i(\beta_0))\|t^\delta_i\\
    \leq & \|(\partial \widehat G_{\widetilde{W}(\widehat \beta)}-\partial G_{W(\beta_0)})t_{W(\beta_0)}^{\eta(\delta)/2}\|_{\infty,\mathcal{W}}\\
    &+ C \max_{i=1,\ldots,n}\|\widetilde W _i(\widehat \beta)-W_i(\beta_0)\|t^\delta_i\\
    =&o_P(n^{-1/4})\, 
\end{align*}
uniformly in $i=1,\ldots,n$. The first inequality follows from the fact that wpa1 $t^\delta_i=1$ implies $\overline{W_i(\beta)}\in\{w:f_{W(\beta_0)}(w)\geq \eta(\delta)\tau_n/2\}$, as noticed earlier. To obtain the second inequality, we have used 
$\|\partial G_{W(\beta_0)}(\overline{W_i(\beta)})-\partial G_{W(\beta_0)}(W_i(\beta_0))\|t^\delta_i\leq C \|\widetilde W _i(\widehat \beta)-W_i(\beta_0) \|t^\delta_i$ which is implied by Assumption \ref{Assumption: smoothness}(i). 
Finally, to obtain the last equality, we have used (\ref{eq: n 1/4 rate for Wtildei betahat - Wi beta0}), $\|(\partial \widehat G_{\widetilde{W}(\widehat \beta)}-\partial G_{W(\beta_0)})t_{W(\beta_0)}^{\eta(\delta)/2}\|_{\infty,\mathcal{W}}=O_P(\widetilde{d}_G/(h \tau_n))$ from Lemma \ref{lem: uniform convergence rates for derivatives}(v), and $\widetilde d _G/(h \tau_n)=o(n^{-1/4})$ from (\ref{eq: n 1/4 convergence rate of dtildeG/taun 2 h}). 
Gathering results, 
\begin{align*}
    \max_{i=1,\ldots,n}|\widehat G _{\widetilde W (\widehat \beta)}(\widetilde W_i (\widehat \beta))- \widehat G _{\widetilde W (\widehat \beta)}(W_i (\beta_0))-\partial G_{W(\beta_0)}(W_i(\beta_0))\,[\widetilde W_i (\widehat \beta)-W_i(\beta_0)]|t_i^\delta=o_P(n^{-1/2})\ .
\end{align*}
Taking $\delta=1$ gives the first part of $(i)$  with $t_i$. The part of $(i)$ with $\widehat t_i$ follows by noticing that wpa1 $\widehat t(x,z)\leq t^{(1/2)}(x,z)$ for all $(x,z)\in Supp(X,Z)$, see Lemma \ref{lem: trimming implications}(ii), and by evaluating the previous display at $\delta=1/2$. 

\end{proof}

For the following lemma, we recall that for any $\eta>0$ 
\begin{equation*}
    \mathcal{W}_{n,\beta_0}^\eta=\{w\,:\, f_{W(\beta_0)}(w)\geq \eta \tau_n /2\}\, 
\end{equation*}
and 
\begin{equation*}
\mathcal{G}_\lambda(\mathcal{W}_{n,\beta_0}^\eta):=\left\{g:\mathcal{W}_{n,\beta_0}^\eta\mapsto \mathbb R \text{ s. t.} \sup_{w\in\mathcal \mathcal{W}_{n,\beta_0}^\eta}|\partial^l g(w)|<M\text{ for all }|l|\leq \lambda\right \}
\end{equation*}

\begin{lem}\label{lem: belonging conditions}
Let Assumptions \ref{Assumption: iid and phi}-\ref{Assumption: trimming} hold. Then, for $\eta\in(0,1]$ and $\lambda=\lceil (d+1)/2\rceil$ we have
\begin{enumerate}[label=(\roman*)]
    \item $\Pr\left(\widehat G _{\widetilde W (\beta_0)}\in\mathcal{G}_\lambda(\mathcal{W}_{n,\beta_0}^\eta) \right)\rightarrow 1$\,,
    \item $\Pr \left(\partial \widehat G _{\widetilde W (\beta_0)}\in\mathcal{G}_\lambda(\mathcal{W}_{n,\beta_0}^\eta) \right)\rightarrow 1$\,,
    \item $\Pr \left(\partial_{\beta_j} \widehat G _{\widetilde W ( \beta_0)}\in\mathcal{G}_\lambda(\mathcal{W}_{n,\beta_0}^\eta) \right)\rightarrow 1$\,.
        \end{enumerate}
    If moreover $\widehat \beta - \beta_0=O_P(n^{-1/2})$ and $\mathcal{H}_0$ holds, then 
    \begin{enumerate}[label=(\roman*)]
    \setcounter{enumi}{3}
    \item $\Pr \left(\widehat f _{\widetilde W (\widehat \beta)}\in\mathcal{G}_\lambda(\mathcal{W}_{n,\beta_0}^\eta) \right)\rightarrow 1$\,,
    \item $\Pr \left(\widehat G _{\widetilde W (\widehat \beta)}\in\mathcal{G}_\lambda(\mathcal{W}_{n,\beta_0}^\eta) \right)\rightarrow 1$ and $\Pr \left(\widehat G _{\widehat W (\widehat \beta)}\in\mathcal{G}_\lambda(\mathcal{W}_{n,\beta_0}^\eta) \right)\rightarrow 1$\,,
    \item $\Pr \left(\widehat \iota _s\in\mathcal{G}_\lambda(\mathcal{W}_{n,\beta_0}^\eta)\text{ for all }s\in \cal S \right)\rightarrow 1$\,,
    
        \item $\Pr \left(\widehat G^* _{\widetilde W^* ( \beta_0)}\in\mathcal{G}_\lambda(\mathcal{W}_{n,\beta_0}^\eta)\right)\rightarrow 1$\,,
        \item $\Pr \left(\partial \widehat G^* _{\widetilde W^* ( \beta_0)}\in\mathcal{G}_\lambda(\mathcal{W}_{n,\beta_0}^\eta) \right)\rightarrow 1$\,,
        \item $\Pr \left(\partial_{\beta_j} \widehat G^* _{\widetilde W^* ( \beta_0)}\in\mathcal{G}_\lambda(\mathcal{W}_{n,\beta_0}^\eta) \right)\rightarrow 1$\,.
\end{enumerate}
If moreover $\widehat \beta ^* - \beta_0 = O_P(n^{-1/2})$, then 
\begin{enumerate}[label=(\roman*)]
    \setcounter{enumi}{9}
    \item $\Pr \left( \widehat G^* _{\widetilde W^* (\widehat \beta^*)}\in\mathcal{G}_\lambda(\mathcal{W}_{n,\beta_0}^\eta) \right)\rightarrow 1$\,,
    \item $\Pr \left( \widehat f _{\widetilde W^* (\widehat \beta^*)}\in\mathcal{G}_\lambda(\mathcal{W}_{n,\beta_0}^\eta) \right)\rightarrow 1$\,,
    \item $\Pr \left( \widehat \iota^* _s\in\mathcal{G}_\lambda(\mathcal{W}_{n,\beta_0}^\eta) \text{ for all }s\in\cal S\right)\rightarrow $ $1$\,.
\end{enumerate}
Finally, under Assumptions \ref{Assumption: iid and phi}-\ref{Assumption: trimming} we have 
\begin{enumerate}[label=(\roman*)]
    \setcounter{enumi}{12}
    \item $\Pr \left(\widetilde H \in \mathcal{G}_\lambda (\mathcal{Z}_{n}^\delta)\right)\rightarrow 1$ with $\lambda=\lceil (p+1)/2 \rceil $\,,
    \item $\Pr \left(\widetilde H^* \in \mathcal{G}_\lambda (\mathcal{Z}_{n}^\delta)\right)\rightarrow 1$ with $\lambda=\lceil (p+1)/2 \rceil $\,.
\end{enumerate}

\end{lem}

\begin{proof}
We will provide a proof only for $(i)$, as the results in $(ii)-(xii)$ follow from the same arguments. By Assumption \ref{Assumption: smoothness}(i) there exists $\delta>0$ and $M>0$ such that $\|\partial^l G_{W(\beta_0)}\|_{\infty,\mathbb{R}^d}+\delta< M$  for all $l$ such that $|l|\leq \lceil (d+1)/2 \rceil$. So, to show $(i)$ it suffices to prove that wpa1 (a) $\widehat G _{\widetilde W (\beta_0)}$ is $\lceil (d+1)/2 \rceil$ times continuously differentiable over $\mathcal{W}_{n,\beta_0}^\eta$ and (b) $\|(\partial^l \widehat G _{\widetilde W (\beta_0)}-\partial^l G_{W(\beta_0)})t^\eta_{W(\beta_0)}\|_{\infty,\mathcal{W}} \leq \delta$ for all $|l|\leq \lceil (d+1)/2 \rceil$.   From Lemma \ref{lem: trimming implications}(iii) we have that wpa1 $\mathcal{W}^\eta_{n,\beta_0}\subset \{w:\widehat f _{\widetilde W (\beta_0)}(w)\geq \eta \tau_n /4\}$, and by Assumption \ref{Assumption : kernels}(i) $\widehat G _{\widetilde W (\beta_0)}$ is $\lceil(d+1)/2\rceil$ times continuously differentiable over $\{w:\widehat f _{\widetilde W (\beta_0)}(w)\geq \eta \tau_n /4\}$.  So, (a) is proved. (b) is obtained from Lemma \ref{lem: uniform convergence rates for derivatives}(i). This gives the desired result.\\
The proof of $(xiii)-(xiv)$ is contained in the proof of Lemma 8.3 in the Supplementary material of \citet{lapenta_encompassing_2022}.
\end{proof}

\begin{lem}\label{lem: convergence of the derivatives at the data points}
Let Assumptions \ref{Assumption: iid and phi}-\ref{Assumption: trimming} hold. Then, 
\begin{enumerate}[label=(\roman*)]
    \item $\sup_{\beta\in B}\max_{i=1,\ldots,n}\|\nabla_\beta \widehat G_{\widetilde W (\beta)}(\widetilde W_i (\beta)) - \nabla_\beta G_{W(\beta)}(W_i(\beta))\|\,(\widehat t _i + t_i)=O_P(\widetilde d_G h^{-1}  \tau_n^{-1})=o_P(n^{-1/4})$\,,
    \item $\sup_{\beta\in B}\max_{i=1,\ldots,n}\|\nabla^2_{\beta \beta^T} \widehat G _{\widetilde W (\beta)}(\widetilde W _i (\beta))-\nabla^2_{\beta \beta^T} G _{W(\beta)}(W_i(\beta))\,\|(\widehat t _i + t_i)=O_P(\widetilde d_G h^{-2} \tau_n^{-2})=o_P(1)$\,,
    \item $\sup_{(x,z)\in Supp(X,Z)}\|\partial \widehat G _{\widetilde W (\beta_0)}(q(\beta_0,x,\widetilde H (z))) - \partial G _{W(\beta_0)}(q(\beta_0,x,H(z)))\|t^\delta(x,z)=o_P(1)$\,,
    \item $\sup_{(x,z)\in Supp(X,Z)}\|\partial_\beta \widehat G _{\widetilde W (\beta_0)}(q(\beta_0,x,\widetilde H (z))) - \partial_\beta G _{W(\beta_0)}(q(\beta_0,x,H(z)))\|t^\delta(x,z)=o_P(1)$.
\end{enumerate}
If moreover $\widehat \beta - \beta_0=O_P(n^{-1/2})$ and $\mathcal{H}_0$ holds, then 
\begin{enumerate}[label=(\roman*)]
    \setcounter{enumi}{4}
    \item $\sup_{\beta \in B}\max_{i=1,\ldots,n}\|\nabla_\beta \widehat G^*_{\widetilde W ^* (\beta)}(\widetilde W_i^* (\beta)) - \nabla_\beta G_{W(\beta)}(W_i(\beta))\|\,(\widehat t _i + t_i)=O_P(\widetilde{d}_G h^{-1}\tau_n^{-2})=o_P(n^{-1/4})$\,, 
    \item $\sup_{\beta\in B}\max_{i=1,\ldots,n} \|\nabla^2_{\beta \beta^T} \widehat G^* _{\widetilde W ^* (\beta)}(\widetilde W _i^* (\beta))-\nabla^2_{\beta \beta^T} G _{W(\beta)}(W_i(\beta))\,\|(\widehat t _i + t_i)=O_P(\widetilde{d}_G h^{-2}\tau_n^{-3})$ $=o_P(1)$\,, 
    \item $\sup_{(x,z)\in Supp(X,Z)}\|\partial \widehat G^* _{\widetilde W^* (\beta_0)}(q(\beta_0,x,\widetilde H ^*(z))) - \partial G _{W(\beta_0)}(q(\beta_0,x,H(z)))\|t^\delta(x,z)=o_P(1)$\,,
    \item $\sup_{(x,z)\in Supp(X,Z)}\|\partial_\beta \widehat G^* _{\widetilde W ^* (\beta_0)}(q(\beta_0,x,\widetilde H ^* (z))) - \partial_\beta G _{W(\beta_0)}(q(\beta_0,x,H(z)))\|t^\delta(x,z)=o_P(1)$\,.
\end{enumerate}

\end{lem}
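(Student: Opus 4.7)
The plan is to reduce every bound in the lemma to the uniform-in-$w$ convergence estimates of Lemma \ref{lem: uniform convergence rates for derivatives}, using the chain rule for differentiation together with the trimming-propagation mechanism of Lemma \ref{lem: trimming implications}. The key trimming observation is that, for each $\delta>0$, Assumption \ref{Assumption: trimming}(i) gives wpa1 $t_i \leq \mathbb{I}\{f_{W(\beta)}(W_i(\beta))\geq \eta(\delta)\tau_n\}$ uniformly in $\beta\in B$, and Lemma \ref{lem: trimming implications}(iv) then gives $t_i \leq \mathbb{I}\{f_{W(\beta)}(\widetilde W_i(\beta))\geq \eta(\delta)\tau_n/2\}$; Lemma \ref{lem: trimming implications}(ii) also yields $\widehat t_i \leq t_i^{1/2}$ wpa1, so the same chain applies to $\widehat t_i$. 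Hence on $\{t_i+\widehat t_i>0\}$ we have $\widetilde W_i(\beta) \in \mathcal{W}_n^{\eta/2,\beta}$ wpa1, which is exactly the set where Lemma \ref{lem: uniform convergence rates for derivatives} gives uniform rates.

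For part (i), I would expand by the chain rule
\[
\nabla_\beta[\widehat G_{\widetilde W(\beta)}(\widetilde W_i(\beta))] = (\partial_\beta \widehat G_{\widetilde W(\beta)})(\widetilde W_i(\beta)) + (\partial \widehat G_{\widetilde W(\beta)})(\widetilde W_i(\beta))^T\,\partial_\beta \widetilde W_i(\beta),
\]
and analogously for the population side, then add and subtract population objects evaluated at the same argument. The partial-in-$\beta$ piece splits as
\[
\|(\partial_\beta \widehat G_{\widetilde W(\beta)} - \partial_\beta G_{W(\beta)})\, t^\eta_{W(\beta)}\|_\infty + \|\partial^2_{w\beta} G_{W(\beta)}\|_\infty \cdot \|(\widetilde W(\beta)-W(\beta))t\|_\infty,
\]
whose leading term is $O_P(\widetilde d_G h^{-1}\tau_n^{-1})$ by Lemma \ref{lem: uniform convergence rates for derivatives}(ii) and whose remainder is $O_P(\widetilde d_H) = o_P(n^{-1/4})$ by Assumption \ref{Assumption: smoothness}(v) and Lemma \ref{lem: trimming and convergence rates for 1st step}(iii). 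The chain-rule piece is controlled similarly using Lemma \ref{lem: uniform convergence rates for derivatives}(i) for $\partial \widehat G_{\widetilde W(\beta)} - \partial G_{W(\beta)}$ and the identity $\partial_\beta \widetilde W_i(\beta) - \partial_\beta W_i(\beta) = \partial_H q\,[\widetilde H(Z_i)-H(Z_i)]$ of order $\widetilde d_H$. The dominant rate is $O_P(\widetilde d_G h^{-1}\tau_n^{-1}) = o_P(n^{-1/4})$ under Assumption \ref{Assumption: bandwidth}(i). Part (ii) proceeds identically via a second-order Fa\`a di Bruno expansion, invoking Lemma \ref{lem: uniform convergence rates for derivatives}(iii) and Assumption \ref{Assumption: smoothness}(v); the extra derivative costs a factor $h^{-1}\tau_n^{-1}$, producing the rate $O_P(\widetilde d_G h^{-2}\tau_n^{-2}) = o_P(1)$.

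Parts (iii)–(iv) repeat the argument uniformly in $(x,z)$ restricted by $t^\delta(x,z)$: the same trimming chain gives $q(\beta_0,x,\widetilde H(z)) \in \mathcal W_n^{\eta/2}$ wpa1 on $\{t^\delta=1\}$, and Lipschitz continuity of $\partial G_{W(\beta_0)}$ and $\partial_\beta G_{W(\beta_0)}$ (Assumption \ref{Assumption: smoothness}) absorbs the discrepancy between $q(\beta_0,x,\widetilde H(z))$ and $q(\beta_0,x,H(z))$. Parts (v)–(viii) are the bootstrap analogs: one replaces Lemma \ref{lem: uniform convergence rates for derivatives}(i)–(iii) by (vii)–(x), Lemma \ref{lem: trimming and convergence rates for 1st step}(iii) by (iv), and uses Lemmas \ref{lem: IFR for betahat}–\ref{lem: IFR for betahat star} to ensure $\widehat\beta,\widehat\beta^* - \beta_0 = O_P(n^{-1/2})$. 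The main obstacle I anticipate is the careful bookkeeping of powers of $\tau_n$ in the bootstrap case: since the bootstrap residuals $\widehat\varepsilon^*_i$ inherit an extra factor $\tau_n^{-1}$ from the preliminary estimator $\widehat G_{\widehat W(\widehat\beta)}$ entering $Y^*$, the bootstrap uniform rates in Lemma \ref{lem: uniform convergence rates for derivatives}(vii)–(ix) acquire an additional $\tau_n^{-1}$, which propagates through the chain-rule bounds and must be checked to still yield the required $o_P(n^{-1/4})$ or $o_P(1)$ conclusions under Assumption \ref{Assumption: bandwidth}.
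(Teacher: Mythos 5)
Your proposal follows essentially the same route as the paper's proof: chain-rule decomposition of $\nabla_\beta \widehat G_{\widetilde W(\beta)}(\widetilde W_i(\beta))$, trimming propagation via Assumption \ref{Assumption: trimming}(i) and Lemma \ref{lem: trimming implications}(ii)(iv) to land on the set where Lemma \ref{lem: uniform convergence rates for derivatives} applies, Lipschitz absorption of the $\widetilde H - H$ discrepancy via Lemma \ref{lem: trimming and convergence rates for 1st step}(iii)(iv), and the bootstrap analogs via Lemma \ref{lem: uniform convergence rates for derivatives}(vii)--(x). Your explicit flag about the extra $\tau_n^{-1}$ in the bootstrap rates is well taken --- those factors really do appear (the paper in fact leaves the rates blank in parts (v)--(vi)) and are handled by Assumption \ref{Assumption: bandwidth}, so your proposal is correct and matches the paper's argument.
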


\begin{proof}
$(i)-(ii)$. We start by proving $(i)$. Notice first that 
\begin{align*}
\nabla_{\beta^T} \widehat G _{\widetilde W (\beta)}(\widetilde W _i (\beta))=& \partial^T \widehat G _{\widetilde W (\beta)}(\widetilde W _i (\beta))\, \partial_{\beta^T} q(\beta,X_i, \widetilde H (Z_i)) \\
&+  \partial_{\beta^T} \widehat G _{\widetilde W (\beta)}(\widetilde W _i (\beta))\, .
\end{align*}
By using Lemma \ref{lem: uniform convergence rates for derivatives}(i)(ii) and arguing as in the proof of Lemma \ref{lem: convergence rates for Ghat fhat and iotahat}(i) we obtain that  
\begin{align*}
   \sup_{\beta\in B}\max_{i=1,\ldots,n}\| \partial \widehat G _{\widetilde W (\beta)}(\widetilde W _i (\beta)) - \partial G_{W(\beta)}(W_i(\beta)) \|\,(\widehat t _i + t_i) =&O_P(\widetilde d_G h^{-1}  \tau_n^{-1})\,,\\
   \sup_{\beta\in B}\max_{i=1,\ldots,n}\| \partial_\beta \widehat G _{\widetilde W (\beta)}(\widetilde W _i (\beta)) - \partial G_{W(\beta)}(W_i(\beta)) \|\,(\widehat t _i + t_i) =&O_P(\widetilde d_G h^{-1}  \tau_n^{-1})\,.
\end{align*}
By the Lipschitz continuity of $\partial_\beta q$ from Assumption \ref{Assumption: smoothness}(v) and  Lemma \ref{lem: trimming and convergence rates for 1st step}(iii), we get 
\begin{align*}
    \sup_{\beta\in B}\max_{i=1,\ldots,n} \|\partial_\beta q(\beta,X_i, \widetilde H (Z_i)) - \partial_\beta q(\beta,X_i, H(Z_i))\|\,(\widehat t _i + t_i)=O_P(\widetilde d _H)\, .
\end{align*}
Putting together the previous displays and using (\ref{eq: n 1/4 convergence rate of dtildeG/taun 2 h}) gives $(i)$. The proof of $(ii)$ proceeds along similar lines, by using Lemma \ref{lem: uniform convergence rates for derivatives}(i)(ii)(iii), the Lipschitz conditions on $\partial_\beta q$ and $\partial^2_{\beta \beta^T} q$ from Assumption \ref{Assumption: smoothness}(v), and Equation (\ref{eq: negligibility of convergence rate for the derivatives}).

$(iii)-(iv)$ Let us prove $(iii)$.  Wpa1 we have (see the comments ahead) $t^\delta(x,z)\leq t^\delta(x,z) \mathbb I \{f_{W(\beta_0)}(q(\beta_0,x,H(z)))\geq \eta(\delta) \tau_n\}$  $\leq t^\delta(x,z) \mathbb I \{f_{W(\beta_0)}(q(\beta_0,x,\widetilde H (z)))\geq \eta(\delta) \tau_n/2\}$ for all $(x,z,\beta)\in Supp(X,Z)\times B$, where the first inequality is ensured by Assumption \ref{Assumption: trimming}(i), while the second inequality is obtained from Lemma \ref{lem: trimming implications}(iv). Hence, wpa1
\begin{align*}
    \sup_{(x,z)\in Supp(X,Z)}|\partial \widehat G _{\widetilde W (\beta_0)}(q(\beta_0,x,\widetilde H (z))) -  \partial G _{W(\beta_0)}&(q(\beta_0,x,H(z)))|t^\delta(x,z) \\
    \leq  \sup_{(x,z)\in Supp(X,Z)}|\partial \widehat G _{\widetilde W (\beta_0)}&(q(\beta_0,x,\widetilde H (z))) - \partial G_{W(\beta_0)}(q(\beta_0,x,\widetilde H (z))) |\\ &\cdot \mathbb{I}\{f_{W(\beta_0)}(q(\beta_0,x,\widetilde H (z)))\geq \eta(\delta)\tau_n/2\}\\
    + \sup_{(x,z)\in Supp(X,Z)} |\partial G_{W(\beta_0)}(q(\beta_0,x,&\widetilde H (z)))  - \partial G_{W(\beta)}(q(\beta,x,H(z))) |\, t^\delta (x,z)\, .
\end{align*}
  The first term on the RHS is bounded by $\|(\partial \widehat G _{\widetilde W (\beta_0)} - \partial G_{W(\beta_0)})\,t^{\eta(\delta)/2}_{W(\beta_0)}\|_{\infty,\mathcal{W}}$  $=O_P(\widetilde d_G h^{-1}\tau^{-1}_n)$, see Lemma \ref{lem: uniform convergence rates for derivatives}(i). By using the Lipschitz continuity of $\partial G_{W(\beta_0)}$ from Assumption \ref{Assumption: smoothness}(i) and the Lipschitz continuity of $q$ from Assumption \ref{Assumption: smoothness}(v), the second term on the RHS is  
  $O_P(\|(\widetilde H - H)t^\delta\|_\infty)=O_P(\widetilde d_H)$, see Lemma \ref{lem: trimming and convergence rates for 1st step}(iii). Gathering results and using Equation (\ref{eq: negligibility of convergence rate for the derivatives}) gives $(iii)$.  $(iv)$ is obtained by a similar reasoning.
  
  $(v)-(viii)$ The results are obtained by combining the same arguments as in the previous parts of this proof with Lemma \ref{lem: uniform convergence rates for derivatives}(vii)(ix)(x) and Lemma \ref{lem: trimming and convergence rates for 1st step}(iv).

\end{proof}

\begin{lem}
\label{lem : Li Racine} Let $\{U_i\}_{i=1}^n$ be a sequence of i.i.d. random variables taking values in $\mathbb{R}^q$.
Let $\mathcal{S}$ be a compact set and let $K$ be a kernel satisfying Assumption \ref{Assumption : kernels}(i). Assume that
$\{\varphi_{n,s}\text{ : }s\in\mathcal{S}\}$ is
a sequence of classes of real-valued functions defined on the support of $U_1$
such that for any $n\in\mathbb{N}$ : $\sup_{s\in\mathcal{S}}\|\varphi_{n,s}\|_{\infty}<L_{\varphi}$
and $ \|\varphi_{n,s_{1}}-\varphi_{n,s_{2}}\|_{\infty}\leq L_{\varphi}\|s_{1}-s_{2}\|$
for all $s_{1},s_{2}\in\mathcal{S}$. Then, for any compact set $\mathcal{A}\subset\mathbb{R}^{d}$ and for $|l|=0,1,\ldots,\lceil (d+1)/2 \rceil + 6$ we have 
\begin{align*}
&(i) \sup_{(w,\beta,s)\in\mathcal{A}\times B \times \cal S} \left| h^{-d} (\mathbb{P}_{n}-P) \varphi_{n,s}(U) K^{(l)}\left(\frac{W(\beta)-w}{h}\right) \right|=O_{P}\Big(\sqrt{\frac{\log n}{n\,h^{d}}}\Big)\,, \\
&(ii) \sup_{(w,\beta,s)\in\mathcal{A}\times B \times \cal S}  h^{-d} (\mathbb{P}_{n}-P) \left|  K^{(l)}\left(\frac{W(\beta)-w}{h}\right) \right|=O_{P}\Big(\sqrt{\frac{\log n}{n\,h^{d}}}\Big)\, ,
\end{align*}
where $K^{(l)}:=\partial^l K$.
\end{lem}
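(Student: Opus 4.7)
The plan is to derive both (i) and (ii) as consequences of a standard Talagrand-type maximal inequality for empirical processes indexed by a Euclidean (VC-type) class of functions, applied to
\[
\mathcal{F}_n = \left\{u \mapsto \varphi_{n,s}(u)\, K^{(l)}\!\left(\frac{q(\beta,X,X^e-H(Z))-w}{h}\right) : (w,\beta,s)\in \mathcal{A}\times B \times \mathcal{S}\right\}.
\]
Under Assumption \ref{Assumption : kernels}, for every $|l|\le\lceil(d+1)/2\rceil+5$ the derivative $K^{(l)}$ is bounded, Lipschitz, and compactly supported, so each $f\in\mathcal{F}_n$ is dominated by the fixed envelope $F:=L_\varphi\|K^{(l)}\|_\infty$. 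The first ingredient I would compute is the variance proxy: by boundedness of $f_{W(\beta)}$ uniformly in $\beta$ (Assumption \ref{Assumption: smoothness}) and a standard change of variables on the support of the kernel, $\sup_{f\in\mathcal{F}_n}\E f(U)^2 \le C\,h^d$, so $\sigma_n^2 \asymp h^d$.

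Next I would establish a uniform entropy bound. Combining the Lipschitz continuity of $K^{(l)}$, of $q(\beta,x,\cdot)$ in $\beta$ (Assumption \ref{Assumption: smoothness}), and of $s\mapsto\varphi_{n,s}$ (hypothesis of the lemma), the map $(w,\beta,s)\mapsto f_{w,\beta,s}$ is Lipschitz from the compact parameter set $\mathcal{A}\times B\times\mathcal{S}$ into $L^\infty$ with Lipschitz constant of order $h^{-1}$. A standard covering argument in Euclidean space then yields
\[
\sup_Q N(\epsilon F,\mathcal{F}_n,L^2(Q)) \le (C/(h\epsilon))^{v_0},
\]
where $v_0$ depends only on $\dim(\mathcal{A})+\dim(B)+\dim(\mathcal{S})$. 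In particular $\mathcal{F}_n$ is VC-type with a polynomial-in-$1/h$ constant and fixed index.

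Given these two ingredients, I would invoke a Talagrand-type concentration inequality for VC-type classes (e.g.\ Gin\'e and Guillou, 2002, Proposition 2.1; or Einmahl and Mason, 2005, Theorem A.1) to conclude
\[
\sup_{f\in\mathcal{F}_n}|(\mathbb{P}_n-P)f| \;=\; O_P\!\left(\sqrt{\frac{h^d\,\log(1/h)}{n}} \;+\; \frac{\log(1/h)}{n}\right).
\]
Assumption \ref{Assumption: bandwidth} ensures that $n h^{d}/\log n\to\infty$, which in particular gives $\log(1/h)\lesssim \log n$ and makes the linear term $\log(1/h)/n$ negligible relative to the Gaussian term. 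Multiplying through by $h^{-d}$ then yields the claimed rate $\sqrt{\log n/(nh^d)}$ in (i). For (ii) the same argument applies with $\varphi_{n,s}\equiv 1$ and $K^{(l)}$ replaced by $|K^{(l)}|$, which inherits boundedness, Lipschitz continuity, and compact support, so all the steps carry over verbatim.

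The only non-mechanical point is verifying that the $\log(1/h)$ contribution coming from the polynomial-in-$1/h$ entropy constant is absorbed into $\log n$ under Assumption \ref{Assumption: bandwidth}; I expect this to be the main (and essentially only) obstacle, and it is handled directly by the bandwidth restrictions. Everything else follows the well-established template of Masry (1996) and the exposition in Li and Racine (2006, Chapter 1), from which the lemma takes its name.
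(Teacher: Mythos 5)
Your proposal is sound, but it follows a genuinely different technical route than the one the paper invokes. The paper's one-line proof refers the reader to Theorem~1.4 in Li and Racine (2006), whose argument (in the Masry (1996) tradition) is a covering-plus-Bernstein ``blocking'' proof: discretize the compact parameter set $\mathcal{A}\times B\times\mathcal{S}$ into $O(h^{-C})$ cells, control the oscillation of the kernel over each cell via its Lipschitz constant, apply Bernstein's inequality at each center, and take a union bound. You instead cast the problem as a supremum of an empirical process indexed by a VC-type class, compute the variance proxy $\sigma_n^2\asymp h^d$, verify a polynomial uniform covering-number bound, and appeal to a Talagrand-type maximal inequality (Gin\'e--Guillou / Einmahl--Mason). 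Both are standard and yield the same $O_P(\sqrt{\log n /(nh^d)})$ rate; yours is more modular and arguably cleaner, the paper's cited route is more elementary.

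One small point worth tightening. For the top-order case $|l|=\lceil(d+1)/2\rceil+5$, Assumption~\ref{Assumption : kernels}(i) only guarantees $K\in\mathcal{K}^{r}_{\lambda}$ with $\lambda\geq\lceil(d+1)/2\rceil+5$, so $K^{(l)}$ is continuous and bounded with compact support but not necessarily Lipschitz; your entropy bound leans on Lipschitz continuity of $K^{(l)}$. The clean fix is either to note that a compactly supported continuous function is of bounded variation (in the product-kernel setting each univariate factor is polynomial on its support, so this is immediate) and invoke the Nolan--Pollard (1987) VC-type result, or to require one additional degree of smoothness of $K$. This is a cosmetic repair; the rest of the argument, including the observation that the $\log(1/h)$ terms are absorbed into $\log n$ under Assumption~\ref{Assumption: bandwidth}, is correct.
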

\begin{proof} The result is a minor modification of the proof of Theorem 1.4 in \citet{li_nonparametric_2006}.
\end{proof}

For the next result, let us recall that the empirical process operator is $\mathbb{G}g:=\sqrt{n}(\mathbb{P}_n g-Pg)$, where $P g:=\int g (Y,X,Z,D) d P(Y,X,Z,D)$. Also, the set $\mathcal{U}_n^\delta$ is defined in  Equation (\ref{eq: definition of cal W, cal Z and cal U}) of the paper.

\begin{lem}
\label{lem: ASE}Let Assumptions \ref{Assumption: iid and phi}-\ref{Assumption: trimming} hold, let $\mathcal W _n^{\eta}:=\mathcal{W}_{n,\beta_0}^\eta$ for some $\eta>0$, let $\mathcal{W}$ be defined as in (\ref{eq: definition of cal W}), and let $\zeta$ be a bounded random variable.
\begin{enumerate}[label=(\roman*)]
    \item  If $\{\widehat{f}_{s}:s\in\mathcal{S}\}$ is a collection of stochastic real-valued functions defined on $\mathcal{W}$ such that $ \sup_{s\in\mathcal{S}}\| \widehat{f}_{s}t_{W(\beta_0)}^{2\eta}\|_{\infty,\mathcal{W}}=o_P(1)$ and $\Pr(\widehat{f}_{s}\in \mathcal{G}_{l}(\mathcal{W}_{n}^{\eta}) \text{ for all }s\in\mathcal{S})\rightarrow1$ for $l =\lceil (d+1)/2 \rceil$, then
\begin{equation*}
 \sup_{s\in\mathcal{S}}\Big| \mathbb{G}_{n} \zeta t^{2\eta}_{W(\beta_0)} \widehat{f}_{s} \varphi_{s}\Big|=o_{P}(1)\, .
\end{equation*}
The same result also holds when $\varphi_s$ is replaced by 
$\iota_s$, $\varphi_s-\iota_s$, 
or a fixed bounded random variable.
\item Let $\delta>0$. 
Let $(\mathcal I _n)_n$ be a sequence of classes of functions defined on $Supp(X,Z)$ such that $\log N(\epsilon,\mathcal I _n , \|\cdot\|_{\infty,\mathcal{U}_{n}^\delta})\lesssim \epsilon^{-\upsilon}$ for any $\epsilon \in (0,1)$, with $\upsilon\in (0,2)$. Assume that $\widehat f$ is a stochastic real-valued function defined on $Supp(X,Z)$ and that $f_0$ is a fixed bounded function on $Supp(X,Z)$. If $\|\widehat f -f_0\|_{\infty,\mathcal{U}_{n}^\delta}=o_P(1)$ and $\Pr (\widehat f \in \mathcal I _n)\rightarrow 1$, then 
\begin{equation*}
    \mathbb G _n \zeta (\widehat f-f_0) t^{2 \delta}=o_P(1)\, .
\end{equation*}
 \end{enumerate}
  
\end{lem}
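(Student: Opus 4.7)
\textbf{Proof Proposal for Lemma~\ref{lem: ASE}.}

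The plan is to recognize both (i) and (ii) as standard stochastic equicontinuity statements for a sequence of empirical processes indexed by a class whose envelope shrinks in probability, and to conclude via a maximal inequality for classes with polynomial uniform entropy. I would begin by controlling the relevant entropies. For (i), I would first combine Assumption~\ref{Assumption: trimming}(v) with the boundedness of the data (Assumption~\ref{Assumption: iid and phi}(i)) to note that $\mathcal W_n^{\eta}$ is contained in a fixed bounded convex set $\mathcal W$ for every large $n$. Applying \citet[Theorem 2.7.1]{van_der_vaart_weak_1996} to the smooth class $\mathcal G_{\lceil(p+1)/2\rceil}(\mathcal W_n^{\eta})$ (or $\mathcal G_{\lceil(d+1)/2\rceil}$; either way the smoothness index exceeds half the ambient dimension) then yields a uniform entropy bound of the form $\log N(\epsilon,\mathcal G_l(\mathcal W_n^\eta),\|\cdot\|_\infty)\lesssim \epsilon^{-\upsilon}$ with $\upsilon\in(0,2)$, uniformly in $n$. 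Since $\phi_s$ is Lipschitz in $s\in\mathcal S$ (a compact subset of a Euclidean space by Assumption~\ref{Assumption: iid and phi}(iii)), the class $\{\phi_s:s\in\mathcal S\}$ has polynomial covering number in $\epsilon$. Taking products and composing with the trimming factor $t^{2\eta}_{W(\beta_0)}$ (which is uniformly bounded and leaves the uniform entropy polynomial), the full class $\mathcal H_n:=\{t^{2\eta}_{W(\beta_0)}\, g\, \phi_s:g\in\mathcal G_l(\mathcal W_n^\eta),\,s\in\mathcal S\}$ inherits a uniform entropy bound $\sup_Q \log N(\epsilon\|F\|_{Q,2},\mathcal H_n,L_2(Q))\lesssim \epsilon^{-\upsilon}$ with integrable entropy integral $J(1,\mathcal H_n)\leq C<\infty$ uniformly in $n$.

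Next, I would localize via the envelope. On the event $\{\widehat f_s\in\mathcal G_l(\mathcal W_n^\eta)\ \forall s\}$ (whose probability tends to one), each member of the random class $\{t^{2\eta}_{W(\beta_0)}\widehat f_s\phi_s:s\in\mathcal S\}$ is bounded pointwise by the random envelope $F_n:=C_\phi\,\sup_{s\in\mathcal S}\|\widehat f_s\, t^{2\eta}_{W(\beta_0)}\|_\infty=o_P(1)$, where $C_\phi:=\sup_s\|\phi_s\|_\infty<\infty$ by Assumption~\ref{Assumption: iid and phi}(iv). To avoid a random envelope, for any fixed $\delta_n\downarrow 0$ with $\Pr(F_n\leq\delta_n)\to 1$, I replace $\mathcal H_n$ by the deterministic truncated subclass $\mathcal H_n^{\delta_n}:=\{h\in\mathcal H_n:\|h\|_\infty\leq\delta_n\}$. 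The maximal inequality for classes with polynomial uniform entropy \citep[Theorem~2.14.1]{van_der_vaart_weak_1996} then gives
\begin{equation*}
\mathbb E^*\sup_{h\in\mathcal H_n^{\delta_n}}|\mathbb G_n h|\;\leq\;C\,\delta_n\,J(1,\mathcal H_n)\;=\;O(\delta_n)\;=\;o(1).
\end{equation*}
Since with probability tending to one $t^{2\eta}_{W(\beta_0)}\widehat f_s\phi_s\in\mathcal H_n^{\delta_n}$ for all $s$, this yields $\sup_{s\in\mathcal S}|\mathbb G_n t^{2\eta}_{W(\beta_0)}\widehat f_s\phi_s|=o_P(1)$. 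The other cases ($\iota_s$, $\phi_s-\iota_s$, $\xi(\phi_s-\iota_s)$, bounded fixed function) are handled by the same argument: $\iota_s$ is Lipschitz in $s$ by Assumption~\ref{Assumption: smoothness}(ii); and inserting an independent bounded weight $\xi$ or a fixed bounded function preserves both the entropy bound and the smallness of the envelope.

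For part~(ii), the argument is parallel but in a finer mode. Let $\epsilon_n\downarrow 0$ slowly enough that $\Pr(\|\widehat f-f_0\|_{\infty,\mathcal U_{n,\delta}}\leq\epsilon_n)\to 1$, and define the deterministic subclass $\mathcal I_n^{\epsilon_n}:=\{g\in\mathcal I_n:\|g-f_0\|_{\infty,\mathcal U_{n,\delta}}\leq\epsilon_n\}$. On the event $\{\widehat f\in\mathcal I_n^{\epsilon_n}\}$, I have
\begin{equation*}
|\mathbb G_n\zeta(\widehat f-f_0)t^{2\delta}|\;\leq\;\sup_{g\in\mathcal I_n^{\epsilon_n}}|\mathbb G_n\zeta(g-f_0)t^{2\delta}|.
\end{equation*}
Because $t^{2\delta}$ is supported on $\mathcal U_{n,\delta}$ and $\zeta$ is bounded, the translated class $\{\zeta(g-f_0)t^{2\delta}:g\in\mathcal I_n^{\epsilon_n}\}$ inherits from the hypothesis on $\mathcal I_n$ a uniform entropy bound $\lesssim\epsilon^{-\upsilon}$ with $\upsilon\in(0,2)$, and it admits a deterministic envelope of sup-norm at most $2\|\zeta\|_\infty\epsilon_n$. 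The same maximal inequality then gives $\mathbb E^*\sup_{g\in\mathcal I_n^{\epsilon_n}}|\mathbb G_n\zeta(g-f_0)t^{2\delta}|=O(\epsilon_n)=o(1)$, completing the proof.

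The main technical obstacle is to justify replacing the random envelope $F_n$ (respectively, restricting to $\mathcal I_n^{\epsilon_n}$) by a deterministic shrinking envelope without losing probability mass: this requires picking $\delta_n,\epsilon_n\downarrow 0$ slowly enough that the containment events have probability tending to one, and simultaneously ensuring the entropy integral remains bounded. All of these steps follow the template of Lemma~8.4(i) in \citet{lapenta_encompassing_2022}, which is explicitly cited for the same conclusion; the only novelty here is the need to accommodate the random index set $\mathcal W_n^\eta$, which Assumptions~\ref{Assumption: smoothness}, \ref{Assumption: trimming}(v) and the boundedness of $Y,X,Z,D$ render harmless.
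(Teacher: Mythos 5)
Your proposal is correct in outline, but it does not follow the paper's route for the simple reason that the paper gives no proof at all: Lemma \ref{lem: ASE} is justified solely by the remark that it ``is contained in Lemma 8.4(i) of \citet{lapenta_encompassing_2022}''. What you have done is reconstruct the underlying argument — localize the random functions into a deterministic smooth class (using Assumption \ref{Assumption: trimming}(v) for convexity and \citet[Theorem 2.7.1]{van_der_vaart_weak_1996} for the $\epsilon^{-\upsilon}$ entropy, $\upsilon\in(0,2)$), exploit the Lipschitz dependence on $s$, and kill the supremum of $\mathbb G_n$ over the localized class via a maximal inequality with a shrinking envelope. That is the standard and essentially the intended argument, and your treatment of the ``random envelope'' issue (choosing $\delta_n,\epsilon_n\downarrow 0$ slowly so the containment events retain probability one) is the right device. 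One step is stated imprecisely: after truncating to $\mathcal H_n^{\delta_n}$, the entropy integral computed relative to the envelope $\delta_n$ is \emph{not} bounded uniformly in $n$; with sup-norm entropy $\lesssim \epsilon^{-\upsilon}$ one gets $J(1,\mathcal H_n^{\delta_n})\lesssim \delta_n^{-\upsilon/2}$, so the maximal inequality yields $O(\delta_n^{1-\upsilon/2})=o(1)$ because $\upsilon<2$, rather than the $O(\delta_n)$ you write with a ``uniformly bounded $J$''. Equivalently, you can keep the fixed envelope of the full class and invoke asymptotic equicontinuity, bounding $\sup\{|\mathbb G_n h|:h\in\mathcal H_n,\ \|h\|_{P,2}\le\delta_n\}=o_P(1)$, which is what the $\upsilon<2$ condition buys; either repair is routine and does not change the conclusion. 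A final small point on (ii): the trimming $t^{2\delta}$ is supported on $\mathcal U_n^{2\delta}\subset\mathcal U_n^{\delta}$, which is precisely why the hypothesis $\|\widehat f-f_0\|_{\infty,\mathcal U_{n,\delta}}=o_P(1)$ controls the envelope — worth making explicit rather than saying $t^{2\delta}$ is supported on $\mathcal U_{n,\delta}$.
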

\begin{proof} $(i)$ is Lemma 8.4(i) of \citet{lapenta_encompassing_2022}. The proof of $(ii)$ follows from \citet[Lemma 19.34]{vaart_asymptotic_1998}. 
    
\end{proof}

\bigskip
\bibliographystyle{ecta}
\footnotesize
\renewcommand{\baselinestretch}{1}
\bibliography{biblio}

\begin{thebibliography}{63}
\newcommand{\enquote}[1]{``#1''}
\expandafter\ifx\csname natexlab\endcsname\relax\def\natexlab#1{#1}\fi

\bibitem[\protect\citeauthoryear{Ahn}{Ahn}{1997}]{ahn_semiparametric_1997}
\textsc{Ahn, H.} (1997): \enquote{Semiparametric {Estimation} of a
  {Single}-{Index} {Model} with {Nonparametrically} {Generated} {Regressors},}
  \emph{Econometric Theory}, 13, 3--31, publisher: Cambridge University Press.

\bibitem[\protect\citeauthoryear{Ahn and Manski}{Ahn and
  Manski}{1993}]{ahn_distribution_1993}
\textsc{Ahn, H. and C.~F. Manski} (1993): \enquote{Distribution theory for the
  analysis of binary choice under uncertainty with nonparametric estimation of
  expectations,} \emph{Journal of Econometrics}, 56, 291--321.

\bibitem[\protect\citeauthoryear{Aradillas-Lopez}{Aradillas-Lopez}{2012}]{aradillas-lopez_pairwise-difference_2012}
\textsc{Aradillas-Lopez, A.} (2012): \enquote{Pairwise-difference estimation of
  incomplete information games,} \emph{Journal of Econometrics}, 168, 120--140.

\bibitem[\protect\citeauthoryear{Bentkus, G\"otze, and Zitikis}{Bentkus
  et~al.}{1993}]{bentkus_asymptotic_1993}
\textsc{Bentkus, V., F.~G\"otze, and R.~Zitikis} (1993): \enquote{Asymptotic
  {Expansions} in the {Integral} and {Local} {Limit} {Theorems} in {Banach}
  {Spaces} with {Applications} to $w$-{Statistics},} \emph{J Theor Probab}, 6,
  54.

\bibitem[\protect\citeauthoryear{Bierens}{Bierens}{1982}]{bierens_consistent_1982}
\textsc{Bierens, H.~J.} (1982): \enquote{Consistent Model Specification Tests,}
  \emph{J Econometrics}, 20, 105--134.

\bibitem[\protect\citeauthoryear{Bierens}{Bierens}{1990}]{bierens_consistent_1990}
---\hspace{-.1pt}---\hspace{-.1pt}--- (1990): \enquote{A {Consistent}
  {Conditional} {Moment} {Test} of {Functional} {Form},} \emph{Econometrica},
  58, 1443--1458.

\bibitem[\protect\citeauthoryear{Bierens}{Bierens}{2017}]{bierens_econometric_2017}
---\hspace{-.1pt}---\hspace{-.1pt}--- (2017): \emph{Econometric {Model}
  {Specification}}, World Scientific.

\bibitem[\protect\citeauthoryear{Bierens and Ploberger}{Bierens and
  Ploberger}{1997}]{bierens_asymptotic_1997}
\textsc{Bierens, H.~J. and W.~Ploberger} (1997): \enquote{Asymptotic {Theory}
  of {Integrated} {Conditional} {Moment} {Tests},} \emph{Econometrica}, 65,
  1129--1152.

\bibitem[\protect\citeauthoryear{Blundell and Powell}{Blundell and
  Powell}{2004}]{blundell_endogeneity_2004}
\textsc{Blundell, R.~W. and J.~L. Powell} (2004): \enquote{Endogeneity in
  {Semiparametric} {Binary} {Response} {Models},} \emph{The Review of Economic
  Studies}, 71, 655--679, publisher: [Oxford University Press, Review of
  Economic Studies, Ltd.].

\bibitem[\protect\citeauthoryear{Bravo, Escanciano, and Keilegom}{Bravo
  et~al.}{2020}]{bravo_two-step_2020}
\textsc{Bravo, F., J.~C. Escanciano, and I.~V. Keilegom} (2020):
  \enquote{Two-step semiparametric empirical likelihood inference,} \emph{The
  Annals of Statistics}, 48, 1--26, publisher: Institute of Mathematical
  Statistics.

\bibitem[\protect\citeauthoryear{Buhlmann and Yu}{Buhlmann and
  Yu}{2003}]{buhlmann_boosting_2003}
\textsc{Buhlmann, P. and B.~Yu} (2003): \enquote{Boosting {With} the {L2}
  {Loss},} \emph{Journal of the American Statistical Association}, 98,
  324--339.

\bibitem[\protect\citeauthoryear{Chen, Linton, and Van~Keilegom}{Chen
  et~al.}{2003}]{chen_estimation_2003}
\textsc{Chen, X., O.~Linton, and I.~Van~Keilegom} (2003): \enquote{Estimation
  of {Semiparametric} {Models} {When} the {Criterion} {Function} {Is} {Not}
  {Smooth},} \emph{Econometrica}, 71, 1591--1608.

\bibitem[\protect\citeauthoryear{Chernozhukov, Escanciano, Ichimura, Newey, and
  Robins}{Chernozhukov et~al.}{2022}]{chernozhukov_locally_2016}
\textsc{Chernozhukov, V., J.~C. Escanciano, H.~Ichimura, W.~K. Newey, and J.~M.
  Robins} (2022): \enquote{Locally {Robust} {Semiparametric} {Estimation},}
  \emph{Econometrica}, forthcoming.

\bibitem[\protect\citeauthoryear{Cragg}{Cragg}{1971}]{cragg_statistical_1971}
\textsc{Cragg, J.~G.} (1971): \enquote{Some {Statistical} {Models} for
  {Limited} {Dependent} {Variables} with {Application} to the {Demand} for
  {Durable} {Goods},} \emph{Econometrica}, 39, 829--844, publisher: [Wiley,
  Econometric Society].

\bibitem[\protect\citeauthoryear{Davidson and Flachaire}{Davidson and
  Flachaire}{2008}]{davidson_wild_2008}
\textsc{Davidson, R. and E.~Flachaire} (2008): \enquote{The Wild Bootstrap,
  Tamed at Last,} \emph{Journal of Econometrics}, 146, 162--169.

\bibitem[\protect\citeauthoryear{Davidson and MacKinnon}{Davidson and
  MacKinnon}{2007}]{davidson_improving_2007}
\textsc{Davidson, R. and J.~G. MacKinnon} (2007): \enquote{Improving the
  Reliability of Bootstrap Tests with the Fast Double Bootstrap,} \emph{Comput.
  Statist. Data Anal.}, 51, 3259--3281.

\bibitem[\protect\citeauthoryear{Davison and Hinkley}{Davison and
  Hinkley}{1997}]{davison1997bootstrap}
\textsc{Davison, A.~C. and D.~V. Hinkley} (1997): \emph{Bootstrap methods and
  their application}, Cambridge university press.

\bibitem[\protect\citeauthoryear{Delecroix, Hristache, and Patilea}{Delecroix
  et~al.}{2006}]{delecroix_semiparametric_2006}
\textsc{Delecroix, M., M.~Hristache, and V.~Patilea} (2006): \enquote{On
  semiparametric {M}-estimation in single-index regression,} \emph{Journal of
  Statistical Planning and Inference}, 136, 730--769.

\bibitem[\protect\citeauthoryear{Delgado, Dominguez, and Lavergne}{Delgado
  et~al.}{2006}]{delgado_consistent_2006}
\textsc{Delgado, M., M.~Dominguez, and P.~Lavergne} (2006): \enquote{Consistent
  {Tests} of {Conditional} {Moment} {Restrictions},} \emph{Annales d'Economie
  et de Statistique}, 81, 33--67.

\bibitem[\protect\citeauthoryear{Delgado and Manteiga}{Delgado and
  Manteiga}{2001}]{delgado_significance_2001}
\textsc{Delgado, M.~A. and W.~G. Manteiga} (2001): \enquote{Significance
  Testing in Nonparametric Regression Based on the Bootstrap,} \emph{Ann.
  Statist.}, 29, 1469--1507.

\bibitem[\protect\citeauthoryear{Delgado and Stute}{Delgado and
  Stute}{2008}]{delgado_distribution-free_2008}
\textsc{Delgado, M.~A. and W.~Stute} (2008): \enquote{Distribution-Free
  Specification Tests of Conditional Models,} \emph{J Econometrics}, 143,
  37--55.

\bibitem[\protect\citeauthoryear{Di~Marzio and Taylor}{Di~Marzio and
  Taylor}{2008}]{di_marzio_boosting_2008}
\textsc{Di~Marzio, M. and C.~C. Taylor} (2008): \enquote{On Boosting Kernel
  Regression,} \emph{J. Stat. Plan. Inference}, 138, 2483--2498.

\bibitem[\protect\citeauthoryear{Djogbenou, MacKinnon, and Nielsen}{Djogbenou
  et~al.}{2019}]{djogbenou_asymptotic_2019}
\textsc{Djogbenou, A.~A., J.~G. MacKinnon, and M.~O. Nielsen} (2019):
  \enquote{Asymptotic Theory and Wild Bootstrap Inference with Clustered
  Errors,} \emph{Journal of Econometrics}, 212, 393--412.

\bibitem[\protect\citeauthoryear{Einmahl and Van~Keilegom}{Einmahl and
  Van~Keilegom}{2008}]{einmahl_specification_2008}
\textsc{Einmahl, J. H.~J. and I.~Van~Keilegom} (2008): \enquote{Specification
  tests in nonparametric regression,} \emph{Journal of Econometrics}, 143,
  88--102, publisher: Elsevier.

\bibitem[\protect\citeauthoryear{Escanciano, Jacho-Chávez, and
  Lewbel}{Escanciano et~al.}{2016}]{escanciano_identification_2016}
\textsc{Escanciano, J.~C., D.~Jacho-Chávez, and A.~Lewbel} (2016):
  \enquote{Identification and estimation of semiparametric two-step models,}
  \emph{Quantitative Economics}, 7, 561--589.

\bibitem[\protect\citeauthoryear{Escanciano, Jacho-Chávez, and
  Lewbel}{Escanciano et~al.}{2014}]{escanciano_uniform_2014}
\textsc{Escanciano, J.~C., D.~T. Jacho-Chávez, and A.~Lewbel} (2014):
  \enquote{Uniform Convergence of Weighted Sums of Non and Semiparametric
  Residuals for Estimation and Testing,} \emph{J Econometrics}, 178, 426--443.

\bibitem[\protect\citeauthoryear{Escanciano, Pardo-Fernandez, and
  Keilegom}{Escanciano et~al.}{2018}]{escanciano_asymptotic_2018-2}
\textsc{Escanciano, J.~C., J.~C. Pardo-Fernandez, and I.~Keilegom} (2018):
  \enquote{Asymptotic distribution-free tests for semiparametric regressions
  with dependent data,} \emph{Ann. Statist.}, 46, 1167--1196.

\bibitem[\protect\citeauthoryear{Escanciano and Song}{Escanciano and
  Song}{2010}]{escanciano_testing_2010}
\textsc{Escanciano, J.~C. and K.~Song} (2010): \enquote{Testing single-index
  restrictions with a focus on average derivatives,} \emph{Journal of
  Econometrics}, 156, 377--391, publisher: Elsevier.

\bibitem[\protect\citeauthoryear{Fan and Li}{Fan and
  Li}{1996}]{fan_consistent_1996}
\textsc{Fan, Y. and Q.~Li} (1996): \enquote{Consistent {Model} {Specification}
  {Tests}: {Omitted} {Variables} and {Semiparametric} {Functional} {Forms},}
  \emph{Econometrica}, 64, 865--890.

\bibitem[\protect\citeauthoryear{Giacomini, Politis, and White}{Giacomini
  et~al.}{2013}]{giacomini_warp-speed_2013}
\textsc{Giacomini, R., D.~N. Politis, and H.~White} (2013): \enquote{A
  {Warp}-{Speed} {Method} {For} {Conducting} {Monte} {Carlo} {Experimens}
  {Involving} {Bootstrap} {Estimators},} \emph{Economet. Theor.}, 29, 567--589.

\bibitem[\protect\citeauthoryear{Gutknecht}{Gutknecht}{2016}]{gutknecht_testing_2016}
\textsc{Gutknecht, D.} (2016): \enquote{Testing for monotonicity under
  endogeneity: {An} application to the reservation wage function,}
  \emph{Journal of Econometrics}, 190, 100--114.

\bibitem[\protect\citeauthoryear{Hahn, Liao, and Ridder}{Hahn
  et~al.}{2018}]{hahn_nonparametric_2018}
\textsc{Hahn, J., Z.~Liao, and G.~Ridder} (2018): \enquote{Nonparametric
  Two-Step Sieve M Estimation and Inference,} \emph{Econometric Theory}, 34,
  1281--1324.

\bibitem[\protect\citeauthoryear{Hahn and Ridder}{Hahn and
  Ridder}{2013}]{hahn_asymptotic_2013}
\textsc{Hahn, J. and G.~Ridder} (2013): \enquote{Asymptotic {Variance} of
  {Semiparametric} {Estimators} {With} {Generated} {Regressors},}
  \emph{Econometrica}, 81, 315--340.

\bibitem[\protect\citeauthoryear{Huang, Sun, and White}{Huang
  et~al.}{2016}]{huang_flexible_2016}
\textsc{Huang, M., Y.~Sun, and H.~White} (2016): \enquote{A Flexible
  Nonparametric Test for Conditional Independence,} \emph{Econometric Theory},
  32, 1434--1482, publisher: Cambridge University Press.

\bibitem[\protect\citeauthoryear{Klein and Spady}{Klein and
  Spady}{1993}]{klein_efficient_1993}
\textsc{Klein, R.~W. and R.~H. Spady} (1993): \enquote{An {Efficient}
  {Semiparametric} {Estimator} for {Binary} {Response} {Models},}
  \emph{Econometrica}, 61, 387--421, publisher: [Wiley, Econometric Society].

\bibitem[\protect\citeauthoryear{Lapenta and Lavergne}{Lapenta and
  Lavergne}{2022}]{lapenta_encompassing_2022}
\textsc{Lapenta, E. and P.~Lavergne} (2022): \enquote{Encompassing {Tests} for
  {Nonparametric} {Regressions},} ArXiv:2203.06685.

\bibitem[\protect\citeauthoryear{Lavergne, Maistre, and Patilea}{Lavergne
  et~al.}{2015}]{lavergne_significance_2015}
\textsc{Lavergne, P., S.~Maistre, and V.~Patilea} (2015): \enquote{A
  Significance Test for Covariates in Nonparametric Regression,} \emph{Electron
  J Stat}, 9, 643--678.

\bibitem[\protect\citeauthoryear{Lavergne and Patilea}{Lavergne and
  Patilea}{2008}]{lavergne_breaking_2008}
\textsc{Lavergne, P. and V.~Patilea} (2008): \enquote{{Breaking the Curse of
  Dimensionality in Nonparametric Testing},} \emph{J Econometrics}, 143,
  103--122.

\bibitem[\protect\citeauthoryear{Lavergne and Patilea}{Lavergne and
  Patilea}{2013}]{lavergne_smooth_2013}
---\hspace{-.1pt}---\hspace{-.1pt}--- (2013): \enquote{Smooth minimum distance
  estimation and testing with conditional estimating equations: {Uniform} in
  bandwidth theory,} \emph{J Econometrics}, 177, 47--59.

\bibitem[\protect\citeauthoryear{Lewbel and Tang}{Lewbel and
  Tang}{2015}]{lewbel_identification_2015}
\textsc{Lewbel, A. and X.~Tang} (2015): \enquote{Identification and estimation
  of games with incomplete information using excluded regressors,}
  \emph{Journal of Econometrics}, 189, 229--244.

\bibitem[\protect\citeauthoryear{Li and Racine}{Li and
  Racine}{2006}]{li_nonparametric_2006}
\textsc{Li, Q. and J.~S. Racine} (2006): \emph{Nonparametric {Econometrics}:
  {Theory} and {Practice}}, Princeton University Press.

\bibitem[\protect\citeauthoryear{Li and Wooldridge}{Li and
  Wooldridge}{2002}]{li_semiparametric_2002}
\textsc{Li, Q. and J.~M. Wooldridge} (2002): \enquote{Semiparametric
  {Estimation} of {Partially} {Linear} {Models} for {Dependent} {Data} with
  {Generated} {Regressors},} \emph{Econometric Theory}, 18, 625--645,
  publisher: Cambridge University Press.

\bibitem[\protect\citeauthoryear{Maistre and Patilea}{Maistre and
  Patilea}{2018}]{maistre_nonparametric_2018}
\textsc{Maistre, S. and V.~Patilea} (2018): \enquote{Nonparametric model checks
  of single-index assumptions,} \emph{Statistica Sinica}, 29, 113--138.

\bibitem[\protect\citeauthoryear{Mammen, Rothe, and Schienle}{Mammen
  et~al.}{2012}]{mammen_nonparametric_2012}
\textsc{Mammen, E., C.~Rothe, and M.~Schienle} (2012): \enquote{Nonparametric
  Regression with Nonparametrically Generated Covariates,} \emph{Ann.
  Statist.}, 40, 1132--1170.

\bibitem[\protect\citeauthoryear{Mammen, Rothe, and Schienle}{Mammen
  et~al.}{2016}]{mammen_semiparametric_2016}
---\hspace{-.1pt}---\hspace{-.1pt}--- (2016): \enquote{Semiparametric
  {Estimation} with {Generated} {Covariates},} \emph{Economet Theor}, 32,
  1140--1177.

\bibitem[\protect\citeauthoryear{Neumeyer and Van~Keilegom}{Neumeyer and
  Van~Keilegom}{2010}]{neumeyer_estimating_2010}
\textsc{Neumeyer, N. and I.~Van~Keilegom} (2010): \enquote{Estimating the error
  distribution in nonparametric multiple regression with applications to model
  testing,} \emph{Journal of Multivariate Analysis}, 101, 1067--1078.

\bibitem[\protect\citeauthoryear{Newey}{Newey}{1994}]{newey_asymptotic_1994}
\textsc{Newey, W.~K.} (1994): \enquote{The {Asymptotic} {Variance} of
  {Semiparametric} {Estimators},} \emph{Econometrica}, 62, 1349--1382.

\bibitem[\protect\citeauthoryear{Newey, Hsieh, and Robins}{Newey
  et~al.}{2004}]{newey_twicing_2004}
\textsc{Newey, W.~K., F.~Hsieh, and J.~M. Robins} (2004): \enquote{Twicing
  {Kernels} and a {Small} {Bias} {Property} of {Semiparametric} {Estimators},}
  \emph{Econometrica}, 72, 947--962.

\bibitem[\protect\citeauthoryear{Newey, Powell, and Vella}{Newey
  et~al.}{1999}]{newey_nonparametric_1999}
\textsc{Newey, W.~K., J.~L. Powell, and F.~Vella} (1999):
  \enquote{Nonparametric {Estimation} of {Triangular} {Simultaneous}
  {Equations} {Models},} \emph{Econometrica}, 67, 565--603, publisher: [Wiley,
  Econometric Society].

\bibitem[\protect\citeauthoryear{Pagan}{Pagan}{1984}]{pagan_econometric_1984}
\textsc{Pagan, A.} (1984): \enquote{Econometric {Issues} in the {Analysis} of
  {Regressions} with {Generated} {Regressors},} \emph{International Economic
  Review}, 25, 221--247.

\bibitem[\protect\citeauthoryear{Park, Lee, and Ha}{Park
  et~al.}{2009}]{park_l2_2009}
\textsc{Park, B.~U., Y.~K. Lee, and S.~Ha} (2009): \enquote{L2 boosting in
  kernel regression,} \emph{Bernoulli}, 15, 599--613.

\bibitem[\protect\citeauthoryear{Rivers and Vuong}{Rivers and
  Vuong}{1988}]{rivers_limited_1988}
\textsc{Rivers, D. and Q.~H. Vuong} (1988): \enquote{Limited information
  estimators and exogeneity tests for simultaneous probit models,}
  \emph{Journal of Econometrics}, 39, 347--366.

\bibitem[\protect\citeauthoryear{Rothe}{Rothe}{2009}]{rothe_semiparametric_2009}
\textsc{Rothe, C.} (2009): \enquote{Semiparametric estimation of binary
  response models with endogenous regressors,} \emph{Journal of Econometrics},
  153, 51--64.

\bibitem[\protect\citeauthoryear{Song}{Song}{2010}]{song_testing_2010}
\textsc{Song, K.} (2010): \enquote{Testing semiparametric conditional moment
  restrictions using conditional martingale transforms,} \emph{Journal of
  Econometrics}, 154, 74--84.

\bibitem[\protect\citeauthoryear{Sperlich}{Sperlich}{2009}]{sperlich_note_2009}
\textsc{Sperlich, S.} (2009): \enquote{A note on non-parametric estimation with
  predicted variables,} \emph{The Econometrics Journal}, 12, 382--395.

\bibitem[\protect\citeauthoryear{Stinchcombe and White}{Stinchcombe and
  White}{1998}]{stinchcombe_consistent_1998}
\textsc{Stinchcombe, M.~B. and H.~White} (1998): \enquote{Consistent
  {Specification} {Testing} {With} {Nuisance} {Parameters} {Present} {Only}
  {Under} {The} {Alternative},} \emph{Econometric Theory}, 14, 295--325.

\bibitem[\protect\citeauthoryear{Stute}{Stute}{1997}]{stute_nonparametric_1997}
\textsc{Stute, W.} (1997): \enquote{Nonparametric {Model} {Checks} for
  {Regression},} \emph{The Annals of Statistics}, 25, 613--641, publisher:
  Institute of Mathematical Statistics.

\bibitem[\protect\citeauthoryear{van~der Vaart}{van~der
  Vaart}{1998}]{vaart_asymptotic_1998}
\textsc{van~der Vaart, A.~W.} (1998): \emph{Asymptotic Statistics}, Cambridge
  Series in Statistical and Probabilistic Mathematics, Cambridge University
  Press.

\bibitem[\protect\citeauthoryear{van~der Vaart and Wellner}{van~der Vaart and
  Wellner}{2000}]{van_der_vaart_weak_1996}
\textsc{van~der Vaart, A.~W. and J.~A. Wellner} (2000): \emph{Weak Convergence
  and Empirical Processes: with Applications to Statistics}, New York:
  Springer.

\bibitem[\protect\citeauthoryear{Vanhems and Van~Keilegom}{Vanhems and
  Van~Keilegom}{2019}]{vanhems_estimation_2019}
\textsc{Vanhems, A. and I.~Van~Keilegom} (2019): \enquote{{Estimation} {of} {a}
  {Semiparametric} {Transformation} {Model} {in} {the} {Presence} {of}
  {Endogeneity},} \emph{Econometric Theory}, 35, 73--110, publisher: Cambridge
  University Press.

\bibitem[\protect\citeauthoryear{Wooldridge}{Wooldridge}{2010}]{wooldridge_econometric_2010}
\textsc{Wooldridge, J.~M.} (2010): \emph{Econometric {Analysis} of {Cross}
  {Section} and {Panel} {Data}}, The MIT Press.

\bibitem[\protect\citeauthoryear{Wooldridge}{Wooldridge}{2015}]{wooldridge_control_2015}
---\hspace{-.1pt}---\hspace{-.1pt}--- (2015): \enquote{Control {Function}
  {Methods} in {Applied} {Econometrics},} \emph{The Journal of Human
  Resources}, 50, 420--445, publisher: [University of Wisconsin Press, Board of
  Regents of the University of Wisconsin System].

\bibitem[\protect\citeauthoryear{Xia, Li, Tong, and Zhang}{Xia
  et~al.}{2004}]{xia_goodness--fit_2004}
\textsc{Xia, Y., W.~K. Li, H.~Tong, and D.~Zhang} (2004): \enquote{A
  Goodness-of-Fit Test for Single-Index Models,} \emph{Stat Sinica}, 14, 1--28.

\end{thebibliography}
\normalsize

\end{document}